\newtheorem{assumption}[theorem]{Assumption}
\pgfplotsset{compat=newest}
\def\<{\langle}
\def\>{\rangle}
\def\Lout{L_{\text{out}}}
\newcommand{\normal}{\mathcal{N}}
\newcommand{\reals}{\mathbb{R}}
\newcommand{\iid}{{\rm i.i.d.}}
\newcommand{\beq}{\begin{equation}}
\newcommand{\eeq}{\end{equation}}
\newcommand{\sign}{\textrm{\sign}}
\newcommand{\E}{\mathbb{E}}
\newcommand{\Var}{\mathrm{Var}}
\newcommand{\Cov}{\mathrm{Cov}}
\newcommand{\bs}{\boldsymbol}
\newcommand{\mb}{\mathbb}
\newcommand\Tr{\mathrm{Tr}}
\newcommand\bX{\breve{X}}
\newcommand\bY{\breve{Y}}
\newcommand\tX{\widetilde{X}}
\newcommand\tY{\widetilde{Y}}
\newcommand{\hB}{\widehat{B}}
\newcommand{\hbeta}{\hat{\beta}}
\newcommand{\hR}{\widehat{R}}
\newcommand{\hz}{\hat{z}}
\newcommand{\tPsi}{\widetilde{\Psi}}
\newcommand{\bPsi}{\bar{\Psi}}
\newcommand{\tg}{\tilde{g}}
\newcommand\Tau{\mathrm{T}}
\newcommand\Mu{\mathrm{M}}
\newcommand\bu{\bar{u}}
\newcommand\bv{\bar{v}}
\newcommand\bc{\bar{c}}
\newcommand\bd{\bar{d}}
\newcommand\bh{\bar{h}}
\newcommand\be{\bar{e}}
\newcommand\tw{\tilde{w}}
\newcommand\tilz{\tilde{z}}
\newcommand{\tpartial}{\tilde{\partial}}
\newcommand\opt{\text{opt}}
\newcommand\AlaouiAMP{\text{ERKZJ-AMP}\;}
\setlist[enumerate]{leftmargin=.5in}
\setlist[itemize]{leftmargin=.5in}
\crefname{hypothesis}{Hypothesis}{Hypotheses}
\title{Approximate Message Passing with Rigorous Guarantees for \\ Pooled Data and Quantitative Group Testing\thanks{N.~Tan was supported by the Cambridge Trust and the Harding Distinguished Postgraduate Scholars Programme Leverage Scheme. P.~Pascual~Cobo was supported by a Engineering and Physical Sciences Research Council Doctoral Training Partnership. J.~Scarlett was supported by the Singapore National Research Foundation (NRF) under grant number A-0008064-00-00.}}
\author{Nelvin Tan\thanks{Department of Engineering, University of Cambridge (\email{tcnt2@cam.ac.uk}, \email{pp423@cam.ac.uk}, \email{rv285@cam.ac.uk}).}
\and Pablo Pascual Cobo\footnotemark[2]
\and Jonathan Scarlett\thanks{Department of Computer Science, Department of Mathematics, and Institute of Data Science, National University of Singapore (\email{scarlett@comp.nus.edu.sg}).}
\and Ramji Venkataramanan\footnotemark[2]}
\DeclareMathOperator{\diag}{diag}
\begin{document}

\maketitle

% REQUIRED
\begin{abstract}
    In the \textit{pooled data} problem, the goal is to identify the categories associated with a large collection of items via a sequence of pooled tests. Each pooled test reveals the number of items of each category within the pool. We study an approximate message passing (AMP) algorithm for estimating the categories and rigorously characterize its performance, in both the noiseless and noisy settings.
    For the noiseless setting, we show that the  AMP algorithm is equivalent to one recently proposed by El Alaoui et~al. Our results provide a rigorous version of their performance guarantees, previously obtained via non-rigorous techniques.
    For the case of pooled data with two categories, known as \textit{quantitative group testing} (QGT),  we use the AMP guarantees to compute precise limiting values of the false positive rate  and the false negative rate. Though the pooled data problem and QGT are both instances of estimation in a linear model, existing AMP theory cannot be directly applied since the design matrices  are binary valued.  The key technical ingredient in our analysis is a rigorous asymptotic characterization of AMP for generalized linear models defined via generalized white noise design matrices. This result, established using a recent universality result of Wang et al., is of independent interest.  Our theoretical results are validated by numerical simulations. 
    For comparison, we propose estimators based on convex relaxation and iterative thresholding, without providing theoretical guarantees. The simulations indicate that AMP consistently outperforms these estimators.
    %The simulations indicate that AMP  outperforms the convex estimator for noiseless pooled data and QGT, but the convex estimator performs slightly better for noisy pooled data with three categories when the number of observations is small.
\end{abstract}

% REQUIRED
\begin{keywords}
Pooled Data, Quantitative Group Testing, Boolean Group Testing, Approximate Message Passing, Universality,  Convex Programming, Iterative Thresholding
\end{keywords}

% REQUIRED
% \begin{MSCcodes}
%  62F12, 62F15, 65K05, 68P30 
% \end{MSCcodes}

\section{Introduction}

Consider a large population of items, each of which has an associated category. The associated categories are initially unknown, and are to be estimated based on \textit{pooled tests}. Each pool consists of some subset of the population, and the test outcome reveals the \textit{total number of items} corresponding to each category that are present in the pool (but not the individual categories of the items). This problem, which we refer to as the \textit{pooled data} problem, is of interest in applications such as machine learning \cite{Ach20}, computational biology \cite{Cao14}, multi-access communication \cite{Mar21}, traffic monitoring \cite{Wan15b}, and network tomography \cite{Che12}.

\subsection{Problem Setup} \label{subsec:pooled_setup}

We begin by introducing the noiseless setting of the problem. Writing $[n]$ for the set $\{1,\dots,n\}$, let $\tau:[p]\rightarrow[L]$ be an assignment of $p$ variables to $L$ categories. For example, $\tau(1)=L$ means item 1 is assigned to category $L$. The output of each test tells us the number of items from each category present in the test, which can be viewed as a histogram. For each test $i\in[n]$,  the queried sub-population is denoted by $S_i\subset[p]$, and the test outcome from the pooled sub-population $S_i$ is denoted by
\begin{align*}
    Y_{i,:}=\big(|\tau^{-1}(1)\cap S_i|,\dots,|\tau^{-1}(L)\cap S_i|\big),
\end{align*}
where $\tau^{-1}(1)$ represents the set of items in category 1, $|\tau^{-1}(1)\cap S_i|$ represents the number of items in $S_i$ of category 1, and so on.  (The notation $Y_{i,:}$ indicates that the vector is viewed as the $i$th row of a matrix $Y \in \reals^{n \times L}$.)

We denote the vector of proportions of assigned values (i.e., the empirical distribution of the categories) by
\begin{align}
    \hat{\pi}=\frac{1}{p}\big(|\tau^{-1}(1)|,\dots,|\tau^{-1}(L)|\big),
    \label{eq:true_proportion_vec}
\end{align}
The signal to be estimated is denoted as $B\in\mb{R}^{p\times L}$, with $j$th row $B_{j,:}=e_{\tau(j)}\in\mb{R}^L$ for $j\in[p]$. Here $e_{\tau(j)}$ is the vector with a $1$ in position $\tau(j)$ and $0$ elsewhere.   For example, if $\tau(j)=2$, then $B_{j,:}=[0,1,0,\dots,0]^\top$.

An equivalent linear-algebraic formulation will turn out to be more convenient to work with.  In this formulation, we have a design matrix $X\in\mb{R}^{n\times p}$, where $X_{ij}=\mathds{1}\{j\in S_i\}$, for $i \in [n]$, $j \in [p]$. Then our histogram query can be succinctly rewritten as
\begin{align}
    Y_{i,:}
    =\sum_{j=1}^pX_{ij}B_j
    =B^\top X_{i,:}
    \, \in\mb{R}^L, \quad i \in [n], \label{eq:noiseless_pooled_data}
\end{align}
where $Y_{i,:}\in\mb{R}^L$ is the $i$th row of $Y$ represented as a column vector, and $X_{i,:}\in\mb{R}^p$ is the $i$th row of $X$ represented as a column vector. More generally, the model can be written as $Y=XB\in\mb{R}^{n\times L}$.

For the noisy setting, we study the situation where the observed pooled measurements are corrupted by additive independent noise. Specifically, we have
\begin{align}
    Y_{i,:}&=B^\top X_{i,:}+\Psi_{i,:} \, \in\mb{R}^L, \quad i \in [n],\label{eq:noisy_pooled_data}
\end{align}
where $\Psi_{i,:}$ is  the zero-mean noise (e.g., Gaussian or uniform) for the $i$th test. From \eqref{eq:noiseless_pooled_data} and \eqref{eq:noisy_pooled_data},  we observe that the pooled data problem can be viewed as an instance of compressed sensing, with additional constraints on the design matrix and the signal. Specifically, in the pooled data model $Y = X B + \Psi$, the $n \times p$ design matrix $X$ is binary-valued, as is the  $p \times L$ signal matrix $B$  which has exactly one non -zero value in each row.  

The special case of pooled data with two categories, called \emph{quantitative group testing} (QGT), also known as the coin weighing problem \cite{Bsh09}, has been studied in a number of recent works \cite{Kar19a,Kar19b,Sol23,Mas23, Hah23}. QGT is typically represented using a binary signal vector $\beta \in \reals^{p}$, with ones in positions (items) corresponding to the first category, and zeros in positions corresponding to the second category. Therefore, the goal is to recover $\beta$ from the observed vector 
$Y=X\beta + \Psi$.

In this paper, we study the natural `linear category' regime \cite{Ala18} where no category is dominant nor negligible, i.e., as $p$ grows, the fraction of items assigned to each of the $L$ categories is $\Theta(1)$. We consider random design matrices, specifically,  the \textit{random dense} setting, where the  $X_{ij}\stackrel{\iid}{\sim}\text{Bernoulli}(\alpha)$ for some fixed $\alpha\in(0,1)$, for $i \in [n], j \in [p]$. We consider the high dimensional regime where both $n,p \to \infty$ with $n/p\rightarrow\delta$. Note that the noiseless version of the problem becomes trivial if $n=p$ and $\alpha\in(0,1/2]$, since the random binary square matrix $X$ will be invertible with high probability \cite[Theorem 4.8]{Gui21}. Hence, we assume $\delta<1$ for the noiseless setting.

\subsection{Approximate Message Passing}
In this paper, we consider Approximate Message Passing (AMP) techniques to recover the categories of each item (or equivalently, the signal matrix $B \in \reals^{p \times L}$) from $Y \in \reals^{n \times L}$ produced according to \eqref{eq:noiseless_pooled_data} or \eqref{eq:noisy_pooled_data}. AMP is a family of iterative algorithms that can be tailored to take advantage of structural information about the signals and the model, e.g., a known prior on the signal matrix, or on the proportion of observations that come from each signal. AMP algorithms were first proposed for the standard linear model (compressed sensing) \cite{Kab03,Bay11,Don09,Krz12,Ran19}, but have since been applied to a range of statistical  problems, including estimation in generalized linear models and their variants \cite{Ran11,Sch14,Bar19,Ma19,Sur19,Mai20,Pan20,Ven22},  and low-rank matrix estimation \cite{Des14,Fle18,Les17,Mont21,Fan22}. In all these settings, under suitable model assumptions, the performance of AMP in the high-dimensional limit is characterized by a succinct deterministic recursion called \emph{state evolution}. Furthermore, the state evolution characterization has been used to show that AMP achieves Bayes-optimal performance for some models \cite{Des14,Don13,Mont21,Cob23}. The monograph \cite{Fen21} contains a survey of AMP algorithms for various models defined via Gaussian matrices.

Though AMP for compressed sensing has been widely studied, including for matrix-valued signals \cite{Zin12, Jav13}, these results assume a Gaussian design matrix, and hence cannot be applied to the pooled data problem which has a binary-valued design matrix.  We address this issue in this paper, by establishing a universality result for an AMP algorithm and using it to obtain rigorous guarantees for the pooled data problem. The pooled data problem can be cast as a special case of a matrix generalized linear model (\emph{matrix GLM}) by centering and rescaling the data.  In a matrix GLM,  the goal is to estimate a signal matrix $B \in \reals^{p \times L}$ from an observed matrix $\tY := (\tY_1,\dots,\tY_n)^\top\in\mb{R}^{n\times \Lout}$, whose $i$th row $\tY_{i,:}$ is generated as:
\begin{align}
    \tY_{i,:}&=q(B^\top \tX_{i,:} \, , \, \tPsi_{i,:})\in\mb{R}^{\Lout},
    \quad i\in[n]. \label{eq:matrix-GLM}
\end{align}
Here $\tPsi \in \reals^{n \times L_\Psi}$ is a matrix of unobserved auxiliary variables (with $i$th row $\tPsi_{i,:}$), $\tX$ is a design matrix (with $i$th row $\tX_{i,:}$), and $q:\mb{R}^L \times\mb{R}^{L_\Psi}\rightarrow\mb{R}^{\Lout}$ is a known function. 

An AMP algorithm for estimation in the matrix GLM was recently studied in \cite{Tan23c}, for i.i.d.~Gaussian design matrices. In the pooled data setting, the modified design matrix $\tX$ is not Gaussian, but is an instance of a  \emph{generalized white noise} matrix \cite{Wan22}. Wang et al. \cite{Wan22} analyzed an abstract AMP recursion for generalized white noise matrices and proved that the state evolution remains valid in this setting.
We show that the AMP algorithm in \cite{Tan23c} for the matrix GLM can be reduced to the abstract AMP in \cite{Wan22}, using which we establish a rigorous state evolution for the pooled data problem.

\subsection{Main Contributions}

\paragraph{AMP universality} We establish a rigorous state evolution result for the AMP algorithm applied to the matrix GLM in \eqref{eq:matrix-GLM}, where $\tX$ is a  generalized white noise design matrix (see Definition \ref{def:gen_white_noise_matrix}). Theorem \ref{thm:GAMP} gives a rigorous characterization of the joint empirical distribution of the AMP iterates in the high-dimensional limit as $n,p \to \infty$ with  $n/p \to \delta$, for a constant $\delta >0$. This allows us to compute exact asymptotic formulas for performance measures such as the mean-squared error and overlap between the signals and their estimates. Theorem \ref{thm:GAMP} generalizes the state evolution result in \cite[Theorem 1]{Tan23c} for i.i.d.~Gaussian designs, guaranteeing that the same AMP algorithm and state evolution remain valid for a much broader class of GLMs. 

\paragraph{Pooled data} We show that after centering and suitable rescaling, the pooled data problem is a special case of the matrix-GLM with a generalized white noise design matrix. Therefore, rigorous performance guarantees for AMP can be readily obtained from  Theorem \ref{thm:GAMP}. Furthermore, we show that in the noiseless setting, a special case of our AMP state evolution is equivalent to the one given by El Alaoui et al. \cite{Ala18},  thereby making their performance guarantees rigorous; see Proposition \ref{prop:eqv_of_AMPs} and Appendix \ref{sec:eqv_of_AMP_and_SE}.

In Section \ref{sec:sim_pool_data}, we provide numerical simulations to validate the theoretical results. For comparison, we propose alternative estimators based on convex relaxation and iterative thresholding, without providing theoretical guarantees. Our simulations indicate that for both noiseless and noisy setting, the AMP algorithm outperforms the other estimators.
% Our simulations indicate that the optimization based method outperforms the AMP algorithm for lower sampling ratios $\delta$, and the thresholding method slightly outperforms the AMP algorithm for higher sampling ratios $\delta$ when the noise level is high.

\paragraph{Quantitative group testing} In Corollary \ref{cor:FPR_FNR}, we provide rigorous guarantees for the limiting false positive rate (FPR) and false negative rate (FNR) achieved by AMP. We also show that, with a simple modification to the AMP algorithm, it can be used to vary the trade-off between the FPR and the FNR. Numerical simulations show that AMP outperforms the other estimators for a wide range of QGT scenarios.

\subsection{Related Work} \label{subsec:related_work}

\paragraph{Information-theoretic results for pooled data} We first summarize works that also study the linear category regime, where each category constitutes a $\Theta(1)$ proportion of items.
\begin{itemize}
    \item For the noiseless setting, Grebinski and Kucherov \cite{Gre00} proved that with exponential time algorithms, the pooled data problem can be solved with $\Theta(\frac{p}{\log p})$ tests. More recently, under the condition that $\pi$ is the uniform distribution and $X_{ij}\sim\text{Bernoulli}(\alpha)$ for $\alpha\in(0,1)$, Wang et al. \cite{Wan16} presented a lower bound on the minimum number of tests $n$ for reliable recovery; they showed that if $n<\frac{\log L}{L-1}\frac{p}{\log p}$, then the signal $B$ cannot be uniquely determined. These results were later generalized and sharpened in \cite{Ala19},     where it was shown that  $\gamma_{\text{low}}\frac{p}{\log p}<n<\gamma_{\text{up}}\frac{p}{\log p}$ tests are necessary and sufficient for $B$ to be uniquely determined, where $\gamma_{\text{low}}$ and $\gamma_{\text{up}}$ are constants that depend on $\pi$ and $L$.      This gap was closed by Scarlett and Cevher \cite{Sca17}, who showed that $\gamma_{\text{up}}\frac{p}{\log p}$ tests are also necessary for $B$ to be unique.    %
    \item For the noisy setting, Scarlett and Cevher \cite{Sca17} developed a general framework for understanding variations of the pooled data problem with random noise. Specifically, under the noise model \eqref{eq:noisy_pooled_data} with Gaussian noise $\Psi_{i,:}\stackrel{\iid}{\sim}\normal_L(0,p\sigma^2 I_L)$, where $I_L$ is an $L\times L$ identity matrix, they showed that we require $n=\Omega(p\log p)$ for exact recovery, which is super-linear in the number of items $p$, in contrast with the sub-linear $\Theta\big(\frac{p}{\log p}\big)$ behaviour observed in the noiseless case. 
\end{itemize}

\paragraph{Algorithmic results for pooled data} Wang et al.~\cite{Wan16} proposed a deterministic design matrix and an algorithm that recovers $B$ with $n=\Omega\big(\frac{p}{\log p}\big)$ tests. In the sublinear category regime, there is one dominant category with $p-o(p)$ items, and the remaining categories have $k=o(p)$ items. For random designs, an efficient algorithm for the sublinear category regime was recently developed in \cite{Han22}, and shown to achieve exact recovery with $O(k)$ tests when $k=\Theta\big(p^{\kappa}\big)$, for a constant $\kappa \in (0,1)$. For the linear category regime, El Alaoui et al.~\cite{Ala18} proposed an AMP algorithm for a random dense design, and characterized the asymptotic behaviour of the algorithm in the limit as $n,p \to \infty$ with $n/p\rightarrow\delta$; As mentioned above, rigorous performance guarantees were not provided.

\paragraph{Algorithmic results for QGT} For the case of $L=2$, the existing work has largely focused on the sublinear category regime. Algorithms inspired from coding theory \cite{Kar19a,Kar19b,Sol23,Mas23} and thresholding \cite{Hah22b} require $\Omega(k\log p)$ tests for exact recovery. The number of tests can be reduced to $O(k)$ by specializing the algorithm from \cite{Han22} (for general $L$) to the case of $L=2$ when $k=\Theta\big(p^{\Omega(1)}\big)$. Recently, Li and Wang \cite{Li21} and Hahn-Klimroth et al.~\cite{Hah23} studied noisy versions of QGT.

\paragraph{Boolean group testing} In Boolean Group Testing (BGT), there are two categories of items, usually referred to as defectives and non-defectives, and the outcome of the pooled test is $1$ if it contains at least one defective, and zero otherwise. With the defectives corresponding ones in the binary signal vector $\beta$, the testing model is:
\begin{align}
    Y_i = \mathds{1}\{X_{i,:}^\top\beta>0\}, \quad i \in [n]. \label{eq:BGT}
\end{align} 
BGT can be viewed as a less informative version of QGT, where test yields at most one bit of information. BGT is of interest in a range of applications including medical testing, DNA sequencing, and communication protocols \cite[Section 1.7]{Ald19}, and more recently,  in testing for COVID-19 \cite{Ald21,Wan23}. BGT has been widely studied, including variants of the model \eqref{eq:BGT} under practical constraints \cite{Oli22, Tan21, Tan23b, Pri23, Gan19, Tan20}, and with noise \cite{Sca18a,Sca18b,Geb21,Pri23}.  We refer the interested reader to the survey by Aldridge et al. \cite{Ald19}. Belief propagation and AMP algorithms for noisy BGT with side information were recently proposed in \cite{Kar22} and \cite{Cao23}, without theoretical guarantees. In Section \ref{sec:disc} we discuss the challenges in extending the  AMP guarantees for pooled data and QGT to the BGT setting. 

\paragraph{AMP Universality} In addition to AMP universality results for generalized white noise matrices, Wang et al. \cite{Wan22} also gave similar results for generalized invariant ensembles. Other AMP universality results, for sub-Gaussian matrices and semi-random matrices,  were recently established in \cite{CheL21} and \cite{Dud22b}, respectively.

\section{Preliminaries} \label{sec:prelim}

\paragraph{Notation} We write $[n:m]$ for $[n,n+1,\dots,m]$ where $n<m$. All vectors (including those corresponding to rows of matrices) are assumed to be column vectors unless otherwise stated. For vectors $a,b\in\mb{R}^n$, $\langle a,b\rangle=a^\top b\in\mb{R}$ is the inner product, $a\odot b\in\mb{R}^n$ is their entry-wise product, and  $\langle a\rangle=\frac{1}{n}\sum_{i=1}^n a_i$ denotes the empirical average of the entries of $a$. Matrices are denoted by upper case letters, and given a matrix $A$, we write $A_{i,:}$ for its $i$th row and $A_{:,j}$ for its $j$th column. We write $A_{[r_1: r_2], [c_1 :c_2]}$ for the submatrix consisting of rows $r_1$ to $r_2$ and columns $c_1$ to $c_2$. The Frobenius norm is denoted by $\|A\|_F$ and the operator norm is denoted by  $\|A\|_{\text{op}}$. For $r\in[1,\infty)$ and a vector $a=(a_1,\dots,a_n)\in\mb{R}^n$, we write $\|a\|_r$ for the $\ell_r$-norm, so that $\|a\|_r=\big(\sum_{i=1}^n|a_i|^r\big)^{1/r}$.  We use $e_l$ to denote the one-hot vector with a 1 in position $l$, $1_p$ for the vector of $p$ ones, $0_p$ for the vector of $p$ zeros, and $I_p$ for the $p\times p$ identity matrix. Given random variables $U,V$, we write $U \stackrel{d}{=} V$ to denote equality in distribution. Throughout the paper, the function $\log(\cdot)$ has base $e$, and we use of Bachmann-Landau asymptotic notation (i.e., $O$, $o$, $\Omega$, $\omega$, $\Theta$).

\paragraph{Almost sure convergence} This is denoted  using the symbol $\stackrel{a.s.}{\rightarrow}$. Let $\{A^n\}$ be a sequence of random elements taking values in a Euclidean space $E$. We say that $A^n$ converges almost surely to a deterministic limit $a\in E$, and write $A^n\stackrel{a.s.}{\rightarrow}a$, if $\mb{P}[\lim_{n\rightarrow\infty}A^n=a]=1$.

\paragraph{Wasserstein convergence} We review the definition of \cite{Wan22}. For a vector $a\in\mb{R}^n$ and a random variable $A\in\mb{R}$, we write $a\stackrel{W_r}{\rightarrow}A$ as $n\rightarrow\infty$, for the Wasserstein-$r$ convergence of the empirical distribution of the entries of $a$ to the law of $A$. More generally, for vectors $a^1,\dots,a^k\in\mb{R}^n$ and a random vector $(A^1,\dots,A^k)\in\mb{R}^k$, we write
\begin{align*}
    (a^1,\dots,a^k) \, \stackrel{W_r}{\rightarrow} \, (A^1,\dots,A^k)
    \text{ as $n\rightarrow\infty$},
\end{align*}
for the Wasserstein-$r$ convergence of the empirical distribution of rows of $(a^1,\dots,a^k)\in\mb{R}^{n\times k}$ to the joint law of $(A^1,\dots,A^k)$. This means, for any continuous function $\phi:\mb{R}^k\rightarrow\mb{R}$ and input vector $(a_i^1,\dots,a_i^k)\in\mb{R}^k$ satisfying the \textit{polynomial growth} condition
\begin{align}
    |\phi(a_i^1,\dots,a_i^k)|
    \leq C\big(1+\|(a_i^1,\dots,a_i^k)\|_2^r\big)
    \text{ for a constant $C>0$},
    \label{eq:poly_growth_cond}
\end{align}
we have as $n\rightarrow\infty$
\begin{align}
    \frac{1}{n}\sum_{i=1}^n\phi(a^1_i,\dots,a^k_i)
    \rightarrow\E\big[\phi(A^1,\dots,A^k)\big].
    \label{eq:sum_to_exp}
\end{align}
We write $(a^1,\dots,a^k) \, \stackrel{W}{\rightarrow} \, (A_1,\dots,A_k)
    \text{ as $n\rightarrow\infty$}$
to mean that the above Wasserstein-$r$ convergences hold for every order $r\geq1$.

%%%%%%
\section{AMP for Matrix GLM with Generalized White Noise Design}\label{sec:AMP_GWN}

We begin with the definition of a generalized white noise matrix.

\begin{definition} \label{def:gen_white_noise_matrix} \textup{\cite[Definition 2.15]{Wan22}}
A generalized white noise matrix $\tX\in\mb{R}^{n\times p}$ with a (deterministic) variance profile $S\in\mb{R}^{n\times p}$ is one satisfying the following conditions:
\begin{itemize}
    \item All entries $\tX_{ij}$ are independent.
    \item Each entry $\tX_{ij}$ has mean 0, variance $n^{-1}S_{ij}$, and higher moments satisfying, for each integer $m\geq3$,
    \begin{align*}
        \lim_{n,p\rightarrow\infty} p\cdot\max_{i\in[n]}\max_{j\in[p]}\E\Big[|\tX_{ij}|^m\Big]=0.
    \end{align*}
    \item For a constant $C>0$,
    \begin{align*}
        \max_{i\in[n]}\max_{j\in[p]}S_{ij}\leq C,\quad
        \lim_{n,p\rightarrow\infty}\max_{i\in[n]}\Big|\frac{1}{p}\sum_{j=1}^pS_{ij}-1\Big|=0,\quad
        \lim_{n,p\rightarrow\infty}\max_{j\in[p]}\Big|\frac{1}{n}\sum_{i=1}^nS_{ij}-1\Big|=0.
    \end{align*}
\end{itemize}
\end{definition}
Note that Definition \ref{def:gen_white_noise_matrix} simplifies greatly for the case where $S_{ij}=1$, for all $(i,j)\in[n]\times[p]$. In this case, the entries are all i.i.d.~with variance $1/n$, the third condition in the definition  is trivially satisfied, and  the second condition requires  moments of order 3 and higher to decay faster than $1/p$.

\paragraph{Model assumptions}  Consider the matrix GLM model \eqref{eq:matrix-GLM} defined via a generalized white noise design matrix $\tX$. The signal matrix $B \in \mb{R}^{p \times L}$ and the auxiliary variable matrix $\tPsi \in \mb{R}^{n \times L_{\Psi}}$ are both independent of $\tX$. As $p\to\infty$, we assume that $n/p \to \delta$, for some positive constant $\delta$. As $p \to \infty$, the empirical distributions of the  rows of the signal matrix and the auxiliary variable matrix both converge in Wasserstein distance to well-defined limits. More precisely, there exist random variables $\bar{B}\sim P_{\bar{B}}$ (where $\bar{B}\in\mb{R}^{L}$) and $\bar{\Psi}\sim P_{\bar{\Psi}}$ (where $\bar{\Psi}\in\mb{R}^{L_\Psi}$) with $B\stackrel{W}{\rightarrow}\bar{B}$ and $\tPsi\stackrel{W}{\rightarrow}\bar{\Psi}$, respectively. \label{page:model_assump}

In this section, we allow general priors $P_{\bar{B}}$ and $P_{\bar{\Psi}} $ for the signal and auxiliary matrices, before specializing to the pooled data setting in the following section.

\subsection{Algorithm} \label{sec:AMP_algo} 

We present the algorithm below, before stating the main result in the next subsection. For the rest of this paper, we refer to it as the \textit{matrix-AMP} algorithm.

In each iteration $k \ge 1$, the matrix-AMP algorithm iteratively produces estimates $\hB^k$ and $\Theta^k$ of $B\in\mb{R}^{p\times L}$ and $\Theta:=\tX B\in\mb{R}^{n\times L}$, respectively. Starting with an initialization $\hB^0\in\mb{R}^{p\times L}$ and defining $\hR^{-1} := 0\in\mb{R}^{n\times L}$, for iteration $k \ge 0$ the algorithm computes:
\begin{align}
\begin{split}
        & \Theta^k=\tX\hB^k-\sum_{m=0}^{k-1}\hR^{m}(F^{k,m+1})^\top, \quad \hR^k=g_k\big(\Theta^0,\dots,\Theta^k,\tY\big), \\ 
        & B^{k+1}=\tX^\top\hR^k-\sum_{m=0}^k\hB^m (C^{k,m+1})^\top, \quad \hB^{k+1}=f_{k+1}\big(B^1,\dots,B^{k+1}\big).
\end{split}
\label{eq:GAMP}
\end{align}
Here the functions $g_k:\mb{R}^{L(k+1)}\times\mb{R}^{\Lout}\rightarrow\mb{R}^{L}$ and  $f_{k+1}:\mb{R}^{L(k+1)}\rightarrow\mb{R}^{L}$ act row-wise on their matrix inputs, and the matrices $C^{k,m+1}, F^{k,m+1} \in \reals^{L \times L}$ are defined as
\begin{align*}
    C^{k,m+1}= \frac{1}{n}\sum_{i=1}^n\partial_{m+1}g_k(\Theta_{i,:}^0 \, ,\dots ,\Theta_{i,:}^k \, ,\tY_{i,:}), \quad
    F^{k,m+1}=\frac{1}{n}\sum_{j=1}^p\partial_{m+1} f_{k}(B_{j,:}^1 \, ,\dots,B_{j,:}^{k}), 
    \label{eq:CkF_k1_def}
\end{align*}
where $\partial_{m+1}g_k, \partial_{m+1}f_{k}$ denote the Jacobians of $g_k, f_{k}$ with respect to their $(m+1)$th arguments. We note that the time complexity of each iteration of \eqref{eq:GAMP} is $O(npL)$.

\begin{remark}
The matrix-AMP algorithm in \eqref{eq:GAMP} is a general form of the following matrix-AMP algorithm presented in \cite{Tan23c,Fen21} with one step of memory:
\begin{align}
\begin{split}
   &  \Theta^k
    =\tX\hB^k-\hR^{k-1}(F^k)^\top,
    \quad
    \hR^k=g_k(\Theta^k,\tY),
    \quad
    C^k=\frac{1}{n}\sum_{i=1}^ng_k'(\Theta_{i,:}^k,\tY_{i,:}),
    \\
   &  B^{k+1}
    =\tX^\top\hR^k-\hB^k(C^k)^\top,
    \quad
    \hB^{k+1}
    =f_{k+1}(B^{k+1}),
    \quad
    F^{k+1}
    =\frac{1}{n}\sum_{j=1}^pf_{k+1}'(B_{j,:}^{k+1}).
\end{split}
\label{eq:memoryless_GAMP}    
\end{align}
In \eqref{eq:memoryless_GAMP}, the functions $g_k$ and $f_{k+1}$ only act on the current iterates $\Theta^k$ and $B^{k+1}$, respectively, whereas in the more general version \eqref{eq:GAMP},   these functions may use all the past iterates.
\end{remark}
\paragraph{State evolution} The `memory' terms $-\sum_{m=0}^{k-1}\hR^{m}(F^{k,m+1})^\top$ and $-\sum_{m=0}^k\hB^m(C^{k,m+1})^\top$ in \eqref{eq:GAMP} debias the iterates $\Theta^k$ and $B^{k+1}$, ensuring that their joint empirical distributions are accurately captured by state evolution (SE) in the high-dimensional limit. Theorem \ref{thm:GAMP} below shows that for each $k \ge 1$,  the joint empirical distribution of the rows of $B^1,\dots,B^k$ converges to the joint distribution of $\Mu^{1}_B\bar{B}+G^1_B, \dots ,\Mu^{k}_B\bar{B}+G^k_B \in \reals^L$, where $\bar{B}$ is the random variable representing the limiting distribution of the rows of the signal matrix $B$, and 
$(G^1_B,\dots,G^k_B)\sim \normal(0,\Tau^k_B)$ are jointly Gaussian and independent of $\bar{B}$. The deterministic matrices $\Mu^{k}_B \in \reals^{L\times L}$ and $\Tau^k_B\in\mb{R}^{Lk\times Lk}$ are defined below. The result implies that the joint  empirical distribution of the rows of $\hB^1,\dots,\hB^k$ converges to the joint distribution of 
$$
f_1\big(\Mu^{1}_B\bar{B}+G^1_B\big),
\dots,
f_k\big(\Mu^{1}_B\bar{B}+G^1_B,\dots,\Mu^{k}_B\bar{B}+G^k_B\big).
$$
Thus, $f_k$ can be viewed as a denoising function that can be tailored to take advantage of the prior on $\bar{B}$. Theorem \ref{thm:GAMP} also shows that the joint empirical distribution of the rows of $\Theta, \Theta^0,\dots,\Theta^k$ converges to $\normal(0, \Sigma^k)$, where $\Sigma^{k} \in \reals^{L(k+2) \times L(k+2)}$ is defined below.

We now describe the state evolution recursion defining the matrices $\Mu^{k}_B\in\reals^{L\times L}$, $\Tau^k_B\in\mb{R}^{Lk\times Lk}$, and $\Sigma^{k} \in \reals^{L(k+2) \times L(k+2)}$. Recalling  that the observation $\tY$ is generated via the function $q$ according to \eqref{eq:matrix-GLM}, it is convenient to rewrite $g_k$ in \eqref{eq:GAMP} in terms of another function $\tg_k: \reals^L \times \reals^{L(k+1)} \times \reals^{L_{\Psi}} \to \reals^L$ defined as:
\begin{align}
    \tg_k(z, z^0, \dots, z^k, v) := g_k(z^0, \dots, z^k, q(z, v)).
    \label{eq:g_tilde_k_def}
\end{align}
We write $\partial_1 \tg_k$ for the partial derivative (Jacobian) of $\tg_k$ with respect to its first argument $Z\in\mb{R}^{L}$, so it is an $L\times L$ matrix.  State evolution is initialized with  $\Sigma^0 \in \reals^{2L \times 2L}$, defined below in \eqref{eq:Sig0_def}. Then, for $k \ge 0$, given $\Sigma^k \in \reals^{L(k+2) \times L(k+2)}$, we take $(Z,Z^0,\dots,Z^k)^\top \sim \normal(0,\Sigma^k)$ to be independent of $\bar{\Psi}\sim P_{\bar{\Psi}}$ and compute:
\begin{align}
    &\Mu^{k+1}_{B} =\E[\partial_1 \tg_k(Z,Z^0,\dots,Z^k,\bar{\Psi})], 
    \label{eq:SE_Mk1B} \\ 
    & \Tau^{k+1}_{B}=
    \begin{bmatrix}
        \E[\tg_0\tg_0^\top]  & \dots & \E[\tg_0\tg_k^\top] \\
        \vdots & \dots & \vdots \\
        \E[\tg_k\tg_0^\top]  & \dots & \E[\tg_k\tg_k^\top] 
    \end{bmatrix},
    \label{eq:SE_Tk1B}
\end{align}
where $\tg_0:=\tg_0(Z,Z^0,\bar{\Psi})$ and $\tg_r:=\tg_r(Z,Z^0,\dots,Z^r,\bar{\Psi})$ for $r\in\{1,\dots,k\}$, and 
\begin{align}
    &\Sigma^{k+1} =
    \begin{bmatrix}
    \Sigma_{(1,1)}^{k+1} & \dots & \Sigma_{(1,k+3)}^{k+1} \\
    \vdots & \dots & \vdots \\
    \Sigma_{(k+3,1)}^{k+1} & \dots & \Sigma_{(k+3,k+3)}^{k+1}
    \end{bmatrix},
    \label{eq:SE_Sigk1}
\end{align}
where the $L \times L$ submatrices constituting $\Sigma^{k+1} \in \reals^{L(k+3) \times L(k+3)}$ are given by
\begin{align*}
    \begin{bmatrix}
        \Sigma_{(1,1)}^{k+1} & \Sigma_{(1,2)}^{k+1} \\
        \Sigma_{(2,1)}^{k+1} & \Sigma_{(2,2)}^{k+1}
    \end{bmatrix}
    =\Sigma^0,
\end{align*}
where $\Sigma^0$ is defined in \eqref{eq:Sig0_def}, and the following where $r,s\in\{3,\dots,k+3\}$:
\begin{align}
        \Sigma_{(1,r)}^{k+1} 
        = \big( \Sigma_{(r,1)}^{k+1} \big)^{\top} 
        &= \frac{1}{\delta} \E\Big[\bar{B}f_{r-2}(\Mu^{1}_{B}\bar{B}+G^{1}_B,\dots,\Mu^{r-2}_{B}\bar{B}+G^{r-2}_B)^\top\Big],
        \label{eq:Sigma_12_21_def} \\
        \Sigma_{(r,s)}^{k+1}
        =\big(\Sigma_{(s,r)}^{k+1}\big)^\top
        &=\frac{1}{\delta}  \E\Big[f_{r-2}(\Mu^{1}_{B}\bar{B}+G^{1}_B,\dots,\Mu^{r-2}_{B}\bar{B}+G^{r-2}_B) \nonumber \\
        &\qquad\qquad f_{s-2}(\Mu^{1}_{B}\bar{B}+G^{1}_B,\dots,\Mu^{s-2}_{B}\bar{B}+G^{s-2}_B)^\top\Big]. \label{eq:Sigma_22_def}
\end{align}
Here $G^{1}_B,\dots,G^{k+1}_B\sim \normal(0,\Tau^{k+1}_{B})$ is independent of $\bar{B}\sim P_{\bar{B}}$. 

\begin{remark}
For $(Z,Z^0,\dots,Z^k)^\top \sim \normal(0,\Sigma^k)$, using standard properties of Gaussian random vectors, we have 
\begin{equation}
    (Z,Z^0,\dots,Z^k,\bar{\Psi})\stackrel{d}{=}(Z,\Mu^{0}_{\Theta}Z+G^0_\Theta,\dots,\Mu^{k}_{\Theta}Z+G^k_\Theta,\bar{\Psi}),
    \label{eq:ZZk_joint}
\end{equation}
where $G_\Theta^0,\dots,G_\Theta^k\sim \normal(0,\Tau_\Theta^k)$. Here, for $t\in\{0,\dots,k\}$, 
\begin{align}
    \Mu^{t}_{\Theta} 
    =\Sigma_{(t+2,1)}^k \big(\Sigma_{(1,1)}^k \big)^{-1}, 
\end{align}
and the covariance matrix $\Tau^{k}_{\Theta} \in \reals^{L(k+1) \times L(k+1)}$ has the following block-wise representation: 
\begin{align}
    &\Tau^{k}_{\Theta}  =
    \begin{bmatrix}
    \{ \Tau_{\Theta}^{k} \}_{(1,1)} & \dots &  \{ \Tau_{\Theta}^{k} \}_{(1,k+1)}\\
    \vdots & \dots & \vdots \\
    \{ \Tau_{\Theta}^{k} \}_{(k+1,1)} & \dots & \{ \Tau_{\Theta}^{k} \}_{(k+1,k+1)}
    \end{bmatrix},
    \label{eq:SE_Tauk_Theta}
\end{align}
where 
\begin{align}
    & \big\{\Tau^{k}_{\Theta}\big\}_{(t+1, t+1)} %_{[Lt+1:L(t+1)], \, [Lt+1:L(t+1)]} 
    = \Sigma_{(t+2,t+2)}^k \, - \, \Sigma_{(t+2,1)}^k \big(\Sigma_{(1,1)}^k \big)^{-1} \,  \Sigma_{(1,t+2)}^k. \label{eq:Tk_theta_def}
\end{align}
\label{rem:alt_dist_rep}
\end{remark}

\subsection{Main result} 
We begin with the assumptions required for the main result. The first is on the matrix-AMP initializer $\hB^0 \in \reals^{p \times L}$, the second is on the functions $g_k, f_{k+1}$ used to define the matrix-AMP algorithm in \eqref{eq:GAMP}, and the third is on the design matrix $\tX \in \reals^{n \times p}$. 

    \textbf{(A1)} Almost surely as $n,p\rightarrow\infty$, $(B,\hB^0)\stackrel{W}{\rightarrow}(\bar{B},\bar{B}^0)$, with the 
   joint law of $[\bar{B}^{\top},(\bar{B}^0)^{\top}]^{\top} \in \reals^{2L}$ having finite moments of all orders. 
     Multivariate polynomials are dense in the real $L^2$-spaces of functions $f:\mb{R}^{L_\Psi}\rightarrow\mb{R}$, $g:\mb{R}^{2L}\rightarrow\mb{R}$ with the inner-products
    \begin{align*}
        \big\langle f,\tilde{f}\big \rangle
        :=\E\big[f\big(\bar{\Psi}\big)\tilde{f}\big(\bar{\Psi}\big)\big]
        \text{ and }
        \big\langle g,\tilde{g}\big\rangle
        :=\E\big[g\big(\bar{B},\bar{B}^0\big)\tilde{g}\big(\bar{B},\bar{B}^0\big)\big].
    \end{align*}
    \label{assump:A1}
    
     \textbf{(A2)} For $k \ge 0$, the functions $f_{k+1}$ and $\tg_k$ (defined in \eqref{eq:g_tilde_k_def}) are continuous and Lipschitz, with respect to the all the arguments for $f_{k+1}$ and the first $(k+2)$ arguments for  $\tg_k$ (i.e., $Z,Z^0,\dots,Z^k$). Furthermore, these functions  satisfy the polynomial growth condition in \eqref{eq:poly_growth_cond} for some order $r\geq 1$.
     
      \textbf{(A3)} $\|\tX\|_{\text{op}}<C$ almost surely  for  sufficiently large $n,p$, for some constant $C$. For any fixed polynomial functions $f^\dag:\mb{R}^{L\rightarrow\mb{R}}$ and $f^\ddag:\mb{R}^{2L}\rightarrow\mb{R}$, as $n,p\rightarrow\infty$,
    \begin{align*}
        \max_{i\in[n]}\left|\langle f^\dag(\tPsi)\odot S_{i,:}\rangle-\langle f^\dag(\tPsi)\rangle\cdot\langle S_{i,:}\rangle\right|
        \stackrel{a.s.}{\rightarrow}0 \\
        \max_{j\in[p]}\left|\langle f^\ddag(B,\hB^0)\odot S_{:,j}\rangle-\langle f^\ddag(B,\hB^0)\rangle\cdot\langle S_{:,j}\rangle\right|
        \stackrel{a.s.}{\rightarrow}0,
    \end{align*}
    where $S$ is the variance profile of $\tX$ (see Definition \ref{def:gen_white_noise_matrix}).
%\end{itemize}

Assumption \textbf{(A1)}  implies that we have
    \begin{align}
             &  \frac{1}{n}
            \begin{bmatrix}
                B^\top B & B^\top\widehat{B}^0 \\
                (\widehat{B}^0)^\top B & (\widehat{B}^0)^\top\widehat{B}^0
            \end{bmatrix}
            \stackrel{a.s.}{\rightarrow}
            \Sigma^0:=            \frac{1}{\delta} \begin{bmatrix}
                \E[\bar{B} \bar{B}^\top ] & \E[\bar{B} (\bar{B}^0)^\top] \\
                \E[ (\bar{B}^0)^\top \bar{B}] & \E[ (\bar{B}^0)^\top \bar{B}^0]
            \end{bmatrix} \ \in \reals^{2L \times 2L}
            .  \label{eq:Sig0_def}
    \end{align}

\begin{theorem} \label{thm:GAMP}
Consider the matrix-AMP algorithm in \eqref{eq:GAMP} for the matrix GLM model in \eqref{eq:matrix-GLM}, defined via a generalized white noise design matrix (Definition \ref{def:gen_white_noise_matrix}). Suppose that the model assumptions on p.\pageref{page:model_assump} and assumptions \textbf{(A1)}, \textbf{(A2)}, and \textbf{(A3)} are satisfied, and that $\Tau^{1}_{B}$ is positive definite. Then for each $k \ge 0$, we have 
\begin{align}
     \big(B,\hB^0,B^1,\dots,B^{k+1}\big)  &\stackrel{W_2}{\rightarrow}
     \big(\bar{B},\bar{B}^0,\Mu^{1}_{B}\bar{B}+G^{1}_B,\dots,\Mu^{k+1}_{B}\bar{B}+G^{k+1}_B\big),
    \label{eq:matrix-GAMP_SE_result1} \\
    \big(\tPsi,\Theta,\Theta^0,\dots,\Theta^k\big)
    &\stackrel{W_2}{\rightarrow}
     \big(\bar{\Psi},Z,\Mu^{0}_{\Theta} \, Z+G^{0}_\Theta,\dots,\Mu^{k}_{\Theta} \, Z+G^{k}_\Theta\big), 
    \label{eq:matrix-GAMP_SE_result2}
\end{align}
almost surely as $n,p\rightarrow\infty$ with $n/p\rightarrow\delta$. In the above, $G^{1}_B,\dots,G^{k+1}_B\sim \normal(0,\Tau^{k+1}_{B})$ is independent of $\bar{B}$, and $G^{0}_\Theta,\dots,G^{k}_\Theta\sim \normal(0,\Tau^{k}_{\Theta})$ is independent of $(Z,\bar{\Psi})$.
\end{theorem}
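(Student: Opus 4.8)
The plan is to reduce the matrix-AMP recursion \eqref{eq:GAMP} to the abstract AMP iteration over generalized white noise matrices analyzed in \cite{Wan22}, transfer that paper's state evolution, and then verify that the resulting deterministic parameters coincide with \eqref{eq:SE_Mk1B}--\eqref{eq:SE_Sigk1}. The first step is to absorb the observation model into the nonlinearity: since $\tY_{i,:}=q(B^\top\tX_{i,:},\tPsi_{i,:})$, replacing $g_k(\Theta^k,\tY)$ by $\tg_k(\Theta,\Theta^k,\tPsi)$ as in \eqref{eq:g_tilde_k_def} turns \eqref{eq:GAMP} into a recursion whose only randomness is $\tX$ together with the side-information matrices $\tPsi$, $B$, $\hB^0$ --- exactly the structure (design matrix plus auxiliary disorder) accommodated by the abstract AMP of \cite{Wan22}. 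I would then vectorize, viewing the matrix iterates $\hB^k\in\reals^{p\times L}$ and $\Theta^k\in\reals^{n\times L}$ as $L$-tuples of columns on which $\tX$ acts separately, and handle the rectangular shape by passing to the symmetric dilation $\begin{bmatrix}0&\tX\\ \tX^\top&0\end{bmatrix}$ (or invoking the rectangular version of the result directly, if available), checking that this dilation is again a generalized white noise matrix in the sense of Definition~\ref{def:gen_white_noise_matrix} with the block variance profile built from $S$, the operator-norm bound in \textbf{(A3)} supplying the required boundedness. The matrices $C^k,F^{k+1}$, being averages of the Jacobians of $g_k,f_{k+1}$, play the role of the Onsager/memory coefficients in the abstract recursion, and identifying them is the same bookkeeping as in \cite[Theorem 1]{Tan23c}.

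With the reformulation in place, the second step is to verify the hypotheses of the abstract theorem of \cite{Wan22}: the moment and averaging conditions on the variance profile (Definition~\ref{def:gen_white_noise_matrix}); the joint Wasserstein convergence of $(B,\hB^0,\tPsi)$ to $(\bar B,\bar B^0,\bar\Psi)$ with finite moments of all orders (the model assumptions and \textbf{(A1)}); the delocalization-type conditions in \textbf{(A3)} coupling the variance profile to polynomial statistics of the side information; and the regularity and polynomial-growth conditions on $f_{k+1}$ and $\tg_k$ in \textbf{(A2)}. Applying the abstract state evolution then yields joint $W_2$ convergence of the empirical distribution of the rows of $(\tPsi,B,\hB^0,\Theta^0,\dots,\Theta^k,B^1,\dots,B^{k+1})$, and of $\Theta=\tX B$ itself --- obtained by including $B$ as an extra initial iterate with the identity as its nonlinearity, which is also what produces the limiting law $Z\sim\normal(0,\tfrac1\delta\E[\bar B\bar B^\top])$ --- to a Gaussian-plus-signal vector governed by an abstract SE. The assumption that $\Tau^1_B$ is positive definite is used here to keep the Gaussian components non-degenerate and the SE maps well-conditioned along the induction on $k$.

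The third step is purely algebraic: check that the abstract SE parameters equal $\Mu^{k}_{\Theta},\Tau^k_\Theta,\Mu^{k}_B,\Tau^k_B$. Since the matrix-AMP algorithm and its state evolution are literally those of \cite{Tan23c} and \cite{Wan22} reproduces the Gaussian-design conclusion as a special case, the abstract SE must coincide with \eqref{eq:SE_Mk1B}--\eqref{eq:SE_Sigk1}; concretely one verifies by induction that the abstract second-moment recursion reproduces $\Sigma^k$, using the Gaussian conditioning identity \eqref{eq:ZZk_joint} to move between the $(Z,Z^k)$ and $(Z,\Theta\text{-iterate})$ parametrizations.

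The main obstacle I anticipate is the treatment of non-polynomial Lipschitz nonlinearities: the universality result of \cite{Wan22} is most naturally stated for polynomial AMP, so $f_{k+1}$ and $\tg_k$ must be approximated by polynomials, the polynomial recursion run, and the approximation error controlled uniformly in $n,p$ --- including the error in the memory terms $C^k,F^{k+1}$, which are averages of derivatives and hence require a Stein-type identity or $C^1$-control rather than mere $L^2$-closeness, together with a continuity/stability argument for the state evolution recursion under perturbation of its nonlinearities. This is exactly where the $L^2$-density of multivariate polynomials posited in \textbf{(A1)} (for the inner products defined by $\bar\Psi$ and by $(\bar B,\bar B^0)$) and the polynomial-test-function conditions in \textbf{(A3)} are consumed: they let one choose good polynomial approximants and guarantee the perturbed SE converges back to the nominal one, after which a uniform-integrability argument upgrades convergence in distribution to the claimed $W_2$ convergence.
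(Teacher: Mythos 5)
Your proposal follows essentially the same route as the paper: reduce the matrix-AMP recursion \eqref{eq:GAMP} (with $g_k$ rewritten as $\tg_k$ so that the only randomness is $\tX$ plus the side information $B,\hB^0,\tPsi$) to the abstract AMP for generalized white noise matrices in \cite{Wan22}, verify its hypotheses from the model assumptions and \textbf{(A1)}--\textbf{(A3)}, and match the abstract state evolution to \eqref{eq:SE_Mk1B}--\eqref{eq:SE_Sigk1}, including the device of feeding $B$ in through identity-type nonlinearities to generate $\Theta=\tX B$ and the limiting $Z$. The differences are executional rather than strategic: the paper uses the rectangular abstract iteration of \cite{Wan22} directly (no symmetric dilation) and resolves the matrix-versus-vector iterate mismatch by spending $L$ abstract iterations (with scalar-output nonlinearities that read off previously produced columns) per matrix-AMP iterate, and your anticipated main obstacle is largely moot because the cited result (Theorem \ref{thm:abs_AMP}) already covers Lipschitz, polynomial-growth nonlinearities and yields $W_2$ convergence outright, so the polynomial-density and test-function conditions in \textbf{(A1)} and \textbf{(A3)} are consumed simply by verifying that theorem's hypotheses rather than by a polynomial-approximation and SE-stability argument of your own.
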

The proof of the theorem is given in Appendix \ref{sec:GAMP_thm_proof}. We provide a brief outline here.
We start with an abstract AMP iteration (with vector-valued iterates) for generalized white noise matrices. A state evolution result for this  iteration was recently established by Wang et al. \cite{Wan22}. Using induction, we show that the matrix-AMP algorithm in \eqref{eq:GAMP} can be reduced to the abstract AMP iteration via a suitable change of variables and iteration indices. This allows us to translate the state evolution of the abstract AMP iteration to the matrix-AMP algorithm. One challenge in establishing the reduction is that the abstract AMP iteration is defined for vector-valued iterates, whereas the matrix-AMP algorithm in \eqref{eq:GAMP} has matrix-valued iterates, with $L$ columns per iterate. We handle this mismatch by using $L$ iterations of the abstract AMP to produce each matrix-valued iterate of the matrix-AMP.

\paragraph{Performance measures} Theorem \ref{thm:GAMP} allows us to compute the limiting values of performance measures, such as the mean-squared error (MSE) and the squared overlap, via the  laws of the random vectors on the RHS of \eqref{eq:matrix-GAMP_SE_result1} and \eqref{eq:matrix-GAMP_SE_result2}. Let $\bar{B}^{k} := \Mu^{k}_{B}\bar{B}+G^k_B$, and recall  that $Z^k := \Mu^{k}_{\Theta} Z + G^k_\Theta$.
Then, for $k \ge 1$, Theorem \ref{thm:GAMP} and \eqref{eq:sum_to_exp} together imply  that the limiting MSE of the matrix-AMP estimates $\hB^k \in \reals^{p \times L}$ is given by
\begin{align}
    \frac{1}{p}\|\hB^k-B\|_F^2
    =
    \frac{1}{p}\sum_{j=1}^p\|\hB_{j,:}^k-B_{j,:}\|_2^2 \, \stackrel{a.s.}{\rightarrow}  \, 
    \E\Big[\|f_k(\bar{B}^{1},\dots,\bar{B}^{k})-\bar{B}\|_2^2\Big].
\end{align}
where $\|\cdot\|_F$ is the Frobenius norm. Similarly, the limiting squared overlap of the matrix-AMP estimates is
\begin{align}
    \frac{\Tr\big((\hB^k)^\top B\big)^2}{\|\hB^k\|_F^2\cdot\|B\|_F^2}
    =\frac{\big(\frac{1}{p}\sum_{j=1}^p\langle \hB_{j,:}^k,B_{j,:}\rangle\big)^2}{\big(\frac{1}{p}\sum_{j=1}^p(\hB_{j,:}^k)^2\big)\cdot\big(\frac{1}{p}\sum_{j=1}^pB_{j,:}^2\big)}
    \,\stackrel{a.s.}{\rightarrow}\,
    \frac{\big(\E[\langle f_k(\bar{B}^1,\dots,\bar{B}^k),\bar{B}\rangle]\big)^2}{\E[f_k(\bar{B}^1,\dots,\bar{B}^k)^2]\cdot\E[\bar{B}^2]}.
    \label{eq:sq_overlap}
\end{align}

% Similarly, the limiting normalized correlation of the matrix-AMP estimates is 
% \begin{align}
%     \frac{1}{p}\sum_{j=1}^p\langle \hB^k_{j,:},B_{j,:}\rangle
% \,     \stackrel{a.s.}{\rightarrow} \, 
%     \E\Big[\langle f_k(\bar{B}^{1},\dots,\bar{B}^{k}),\Bar{B}\rangle\Big].
%     \label{eq:norm_corr}
% \end{align}

\paragraph{Choosing the AMP denoising functions}

Theorem \ref{thm:GAMP} and Remark \ref{rem:alt_dist_rep} show that the effective noise covariance matrices of the  random vectors  $\big(\Mu^{k}_{\Theta}\big)^{-1} Z^k \stackrel{d}{=} Z + \big(\Mu^{k}_{\Theta}\big)^{-1} G^{k}_{\Theta}$ and $\big( \Mu^{k+1}_{B}\big)^{-1} \bar{B}^{k+1} \stackrel{d}{=} \bar{B} +  \big(\Mu^{k+1}_{B}\big)^{-1} G^{k+1}_{B}$ are
\begin{align}
    \begin{split}
    N_{\Theta}^k&:=\left(\Mu^{k}_{\Theta}\right)^{-1} \left\{\Tau^{k}_{\Theta}\right\}_{(k+1, k+1)}
    \left(\big(\Mu^{k}_{\Theta}\big)^{-1}\right)^\top
    \text{ and } \\
    N_{B}^{k+1}&:=\left( \Mu^{k+1}_{B}\right)^{-1}
    \left\{\Tau^{k+1}_{B}\right\}_{(k+1, k+1)} \left( \big(\Mu^{k+1}_{B}\big)^{-1} \right)^\top,
    \end{split}
    \label{eq:eff_cov_def}
\end{align}
respectively. 
Here the $L \times L$ matrix $\left\{\Tau^{k}_{\Theta}\right\}_{(k+1, k+1)}$ is as defined in \eqref{eq:SE_Tauk_Theta}, and $\big\{\Tau^{k+1}_{B}\big\}_{(k+1, k+1)}$ is defined similarly from \eqref{eq:SE_Tk1B}.
A reasonable approach is to choose $f_k$ and $g_k$ such that the trace of each effective noise covariance matrix is minimized. Assuming that the signal prior $P_{\bar{B}}$ and the distribution of auxiliary variables $P_{\Psi}$ are known, the optimal choices for $f_k, g_k$ were derived in \cite[Section 3.1]{Tan23c}. Specifically, %it was shown that
 the following statements hold for $k \ge 1$:

1) Given $\Mu^{k}_{B}$, $\Tau^{k}_{B}$, the quantity $\Tr(N_{\Theta}^k)$ is minimized when $f_k = f_k^*$, where
\begin{equation}
    f_k^*(s^1,\dots,s^k)=\E[\bar{B}\mid \Mu^1_B\bar{B}+G_B^1 = s^1, \dots, \Mu^k_B\bar{B}+G_B^k = s^k], \label{eq:fk_opt_def}
\end{equation}
where $G_B^1,\dots,G_B^k \sim \normal(0, \Tau^{k}_{B})$ and $\bar{B} \sim P_{\bar{B}}$ are independent. 

2) Given $\Mu^{k}_{\Theta}$, $\Tau^{k}_{\Theta}$, the quantity $\Tr(N_B^{k+1})$ is minimized when $g_k =g_k^*$, where 
\begin{align}
    g_k^*(z^0,\dots,z^k,y)  
    =\E[Z \mid Z^0=z^0,\dots,Z^k=z^k, \bar{Y}=y]  - \, \E[Z \mid Z^0=z^0,\dots,Z^k=z^k]. \label{eq:gk_opt_def}
\end{align}
Here $(Z,Z^0,\dots,Z^k)^\top \sim \normal(0,\Sigma^k)$ and $\bar{Y}=q(Z, \bar{\Psi})$, with $\bar{\Psi} \sim P_{\bar{\Psi}}$ independent of $Z$. We remark that our $g_k^*$ in \eqref{eq:gk_opt_def} differs from that of \cite[Section 3.1]{Tan23c} by a left multiplication of a positive definite matrix (namely, $\Cov\big[Z \mid Z^0=z^0,\dots,Z^k=z^k\big]^{-1}$). It can be shown that omitting this matrix (or indeed, left multiplying any other positive definite one) still makes no difference to $\Tr(N_B^{k+1})$ being minimized.  For the matrix-AMP algorithm  with memoryless denoisers, given in \eqref{eq:memoryless_GAMP}, the optimal choices $f_k^*(s^k)$ and $g_k^*(z^k,y)$ have the same form as in \eqref{eq:fk_opt_def} and \eqref{eq:gk_opt_def}, with the past iterates omitted from the conditional expectations \cite{Tan23c}.

%%%%%
\section{AMP for Pooled Data} \label{sec:AMP_pooled_data}

We now consider the pooled data problem described in Section \ref{subsec:pooled_setup}. We use a  random dense design matrix $X \in \reals^{n \times p}$, where $X_{ij}\stackrel{\iid}{\sim}\text{Bernoulli}(\alpha)$ for some fixed $\alpha\in(0,1)$, for $i \in [n], j \in [p]$. This is a reasonable choice since in the noiseless pooled data setting, it has been proven \cite{Sca17} that the achievability for Bernoulli testing matches the converse for arbitrary designs, so Bernoulli testing is asymptotically optimal. Since the entries of $X$ have non-zero mean, they need to be recentered and scaled to obtain a generalized white noise matrix $\tX$. Specifically, let
\begin{align}
      \tX_{ij} =  \frac{(X_{ij} - \alpha)}{\sqrt{n\alpha(1-\alpha)}} \, , \quad i \in [n], \, j \in [p]. \label{eq:X_ij_decomp}
\end{align}
The entries of $\tX$ are i.i.d., with
\begin{align}
    \tX_{ij}
    &=
    \begin{cases}
        -\alpha\sqrt{\frac{1}{n\alpha(1-\alpha)}}\text{ with probability $1-\alpha$} \\
        (1-\alpha)\sqrt{\frac{1}{n\alpha(1-\alpha)}}\text{ with probability $\alpha$}, 
    \end{cases}
    \quad \quad i \in [n], \, j \in [p]. 
    \label{eq:txij_dist}
\end{align}
It is easy to check that $\tX$ satisfies the conditions in Definition \ref{def:gen_white_noise_matrix}, with  $S_{ij}=1$, for all $(i,j)$. Hence, $\tX$ is a generalized white noise matrix. 

From \eqref{eq:X_ij_decomp}, we have 
\begin{align}
X = \alpha\bs{1}_n\bs{1}_p^\top+\sqrt{n\alpha(1-\alpha)} \, \tX.
\label{eq:X_decomp}    
\end{align}
Therefore, 
the test outcome matrix   
$Y = X B + \Psi$ in \eqref{eq:noisy_pooled_data} can be rewritten as 
    $Y=\alpha\bs{1}_n\bs{1}_p^\top B+\sqrt{n\alpha(1-\alpha)}\tX B +\Psi$, which gives
\begin{align}
    \frac{Y-\alpha\bs{1}_n\bs{1}_p^\top B}{\sqrt{n\alpha(1-\alpha)}}=\tX B+\frac{\Psi}{\sqrt{n\alpha(1-\alpha)}}.
    \label{eq:mod_test_model}
\end{align}
Observe that 
\begin{align*}
\bs{1}_p^\top B=p \, \left(\frac{1}{p} \bs{1}_p^\top B \right)=p[\hat{\pi}_1,\dots,\hat{\pi}_L],
\end{align*}
where $\hat{\pi} \in\mb{R}^L$ is the vector of proportions defined in \eqref{eq:true_proportion_vec}. This allows us to rewrite  \eqref{eq:mod_test_model} row-wise as follows:
\begin{align*}
    \frac{Y_{i,:}-\alpha p\hat{\pi}}{\sqrt{n\alpha(1-\alpha)}}
    =B^\top\tX_{i,:} \,  + \, \frac{\Psi_{i, :}}{\sqrt{n\alpha(1-\alpha)}}, \qquad i \in [n].
\end{align*}
Defining 
\begin{align}
    \tY_{i,:} = \frac{Y_{i,:}-\alpha p\hat{\pi}}{\sqrt{n\alpha(1-\alpha)}} \, , \quad \tPsi_{i,:}=\frac{\Psi_{i, :}}{\sqrt{n\alpha(1-\alpha)}}, 
    \label{eq:tYtPsi_def}
\end{align}
we have 
\begin{align}
     \tY_{i,:} =B^\top\tX_{i,:} \,  + \, \tPsi_{i,:}, \ \ i \in [n], \ \  \text{ or }  \ 
 \tY = \tX B \,  + \, \tPsi,
 \label{eq:mod_model}
\end{align}
where $\tY, \tPsi \in \reals^{n\times L}$ are matrices whose $i$th rows are $ \tY_{i,:}$ and $\tPsi_{i,:}$. The modified model \eqref{eq:mod_model} is an instance of the matrix GLM in \eqref{eq:matrix-GLM} (with $\Psi = \tPsi =0_{n \times L}$ for the noiseless case). 

If the vector of proportions $\hat{\pi}$ is known, we can compute the modified test outcome matrix $\tY$ and run the matrix-AMP algorithm in \eqref{eq:GAMP} using the modified design $\tX$. Theorem \ref{thm:GAMP} then directly gives a rigorous asymptotic characterization for the matrix-AMP algorithm. We now describe how the assumptions for Theorem \ref{thm:GAMP} can be satisfied for the pooled data problem. 

\paragraph{Model Assumptions} We require that there exist random variables $\bar{B}\sim P_{\bar{B}}$ and $\bar{\Psi}\sim P_{\bar{\Psi}}$ (where $\bar{B}, \bar{\Psi} \in\mb{R}^{L}$) with $B\stackrel{W}{\rightarrow}\bar{B}$ and $\tPsi\stackrel{W}{\rightarrow}\bar{\Psi}$, respectively. For $B$, this means that the vector of proportions $\hat{\pi}$ in \eqref{eq:true_proportion_vec} converges in Wasserstein distance (of all orders) to a well-defined categorical distribution $\pi$ on the set $[L]$. For the limiting noise distribution $\bar{\Psi}$ to be well-defined, we need each entry of $\tPsi$ to be of constant order with high probability.
(Equivalently, from \eqref{eq:tYtPsi_def}, each entry of $\Psi$ need to be of order $\sqrt{p}$ with high probability.) This means that $\bar{\Psi}$ has to follow a distribution with constant order variance (e.g., all sub-Gaussian distributions). This is consistent with the fact that the entries of $\tX B$ are of constant order with high probability, as shown in Appendix \ref{sec:scaling_deriv}.

\paragraph{Verifying Assumption (A1)} We can initialize the matrix-AMP algorithm with $\widehat{B}^0\in \reals^{p \times L}$ whose rows are chosen i.i.d.~according to the limiting signal distribution $\pi$.  With this initialization, the initial covariance $\Sigma_0 $ and the convergence in \eqref{eq:Sig0_def} becomes:
\begin{align*}
 \frac{1}{n}
            \begin{bmatrix}
                B^\top B & B^\top\widehat{B}^0 \\
                (\widehat{B}^0)^\top B & (\widehat{B}^0)^\top\widehat{B}^0
            \end{bmatrix}
    \, \stackrel{a.s.}{\rightarrow} \,  \Sigma_0 =
    \frac{1}{\delta} \begin{bmatrix}
        \text{Diag}(\pi) & \pi \pi^{\top}  \\
        \pi \pi^{\top} & \text{Diag}(\pi)
    \end{bmatrix},
\end{align*}
where the convergence holds since $\hat{\pi} \to \pi$ as $p \to \infty$, by the assumption above.

\paragraph{Verifying Assumption (A2)} The optimal choices for the matrix-AMP denoising functions, given by $f_k^*$ and $g_k^*$ in \eqref{eq:fk_opt_def} and \eqref{eq:gk_opt_def}, can be explicitly computed for noiseless pooled data or the noisy setting with Gaussian noise; the expressions are given in \eqref{eq:Bayes_AMP_fk}-\eqref{eq:Bayes_AMP_gk} in Appendix \ref{sec:eqv_of_AMP_and_SE}. It can be verified from the expressions that these functions are continuous, Lipschitz, and satisfy the polynomial growth condition.

\paragraph{Verifying Assumption (A3)}  Since the entries of $\tX$ are i.i.d.~according to \eqref{eq:txij_dist}, the matrix $\sqrt{n} \tX$ has i.i.d.~sub-Gaussian entries. Using a concentration inequality for the operator norm of sub-Gaussian matrices \cite[Theorem 4.4.5]{Ver18} together with the Borel-Cantelli lemma, we obtain that $\|\tX\|_{\text{op}}<C$ almost surely for sufficiently large $n$. Since the variance profile $S_{ij}=1$ for all $(i,j)$, the second condition in (A3) is trivially satisfied.

\paragraph{Knowledge of $\hat{\pi}$} The matrix-AMP algorithm requires knowledge of  the vector of proportions $\hat{\pi}$ to compute $\tY$, as given in \eqref{eq:tYtPsi_def}. While the assumption of knowing $\hat{\pi}$ may seem strong, it can be easily obtained in the noiseless setting by using an additional (non-random) test, where we include every item. To obtain $\hat{\pi}$, take the output of this non-random test and divide it by $p$. For the noisy setting, one can estimate $\hat{\pi}$, either via a known prior $\pi$, or by averaging the outcomes of multiple non-random tests including every item. We explore the effect of mismatch in estimating $\hat{\pi}$ via simulations in the next subsection. 

\paragraph{Choice of $\alpha$}  We observe from \eqref{eq:txij_dist} that  the entries of $\tX_{i,:}$ have zero-mean and variance $1/n$, regardless of the value of the Bernoulli parameter $\alpha$. Therefore,  $\alpha$ influences the asymptotic performance of AMP (via state evolution) only through the variance of the rescaled noise $\tPsi$. From \eqref{eq:tYtPsi_def}, we see that choosing $\alpha=0.5$ minimizes the rescaled noise variance.

\paragraph{Comparison with El Alaoui et al.~\cite{Ala18}} El Alaoui et al.~\cite{Ala18} proposed an AMP algorithm for the noiseless setting, and characterized its performance via a state evolution, without providing rigorous guarantees. To connect our results with theirs, we prove the following proposition regarding the state evolution recursion.

\begin{proposition}\label{prop:eqv_of_AMPs}
Consider the AMP algorithm in \eqref{eq:memoryless_GAMP} with memoryless denoisers, i.e., $g_k$ depends only on $\Theta^k$ and $\tY$, and $f_{k+1}$ depends only on $B^{k+1}$. Under the Bayes-optimal choice of AMP denoising functions given by \eqref{eq:fk_opt_def}-\eqref{eq:gk_opt_def}, the state evolution recursion in \eqref{eq:SE_Mk1B}-\eqref{eq:SE_Sigk1} is equivalent to the state evolution recursion presented in \cite{Ala18}. 
\end{proposition} 
The proposition is proved in Appendix \ref{sec:eqv_of_AMP_and_SE} (Sections \ref{subsec:ElA_AMP}-\ref{subsec:SE_equiv}). Furthermore, we show in Section \ref{sec:eqv_of_AMPs} that the special case of the AMP with memoryless denoisers in \eqref{eq:memoryless_GAMP} can be obtained from the one in El Alaoui et al.~\cite{Ala18} using  large-sample approximations that are standard in the AMP literature. Theorem \ref{thm:GAMP} and Proposition \ref{prop:eqv_of_AMPs} together make the AMP performance guarantees in \cite{Ala18} rigorous.

\subsection{Numerical Simulations} \label{sec:sim_pool_data}

We compare the matrix-AMP algorithm with two popular methods for compressed sensing: iterative hard thresholding and convex programming.  Due to the additional constraints in the pooled data problem, some work is required to extend these algorithms to pooled data.  We first describe these algorithms, and then present numerical simulation results for the pooled data model given by \eqref{eq:noisy_pooled_data} (or equivalently \eqref{eq:tYtPsi_def}), in both the noiseless and noisy settings. Python code for all the simulations is available at \cite{CobCode24}.

\paragraph{Optimization-based methods} We obtain these from the convex relaxation of the maximum a posteriori (MAP) estimator for pooled data.
We assume that the categorical prior $\pi$ is known, i.e., the probability of  each item belonging to category $l \in [L]$ is $\pi_l$.
We first reformulate our variables to ones that are more suitable for optimization formulations. We define
\begin{align}
    Y_{\opt}&=
    \begin{bmatrix}
        Y_{:,1} \\
        \vdots \\
        Y_{:,L}
    \end{bmatrix}\in\mb{R}^{nL};
    \quad
    B_{\opt}=
    \begin{bmatrix}
        B_{:,1} \\
        \vdots \\
        B_{:,L}
    \end{bmatrix}\in\{0,1\}^{pL}; \nonumber \\
    X_{\opt}&=\text{diag}(X,\dots,X)\in\{0,1\}^{nL\times pL};
    \quad
    C_{\opt}=[I_p,\dots,I_p]\in\{0,1\}^{p\times pL}. \label{eq:opt_defs}
\end{align}
Write $I_{pL}^{(pl)}\in\mb{R}^{p\times pL}$ for the sub-matrix of $I_{pL}$ obtained by taking the $\big(p(l-1)+1\big)$-th to $pl$-th rows of $I_{pL}$. The convex program for the noisy setting (CVX), with noise $\Psi_{i,:} \stackrel{\iid}{\sim} \normal(0,p\sigma^2I_L)$, is
\begin{align*}
    \text{minimize}&\quad \frac{1}{2p\sigma^2}\|Y_{\opt}-X_{\opt}B_{\opt}\|_2^2-\sum_{l=1}^L(\log\pi_l)1_p^\top I_{pL}^{(pl)}B_\opt
    \quad(\text{w.r.t.~ $B_\opt$}) \\
    \text{subject to}&\quad 0_{pL}\leq B_\opt \leq 1_{pL},
    \text{ and }
    C_{\opt}B_{\opt}=1_p.
\end{align*}
For the noiseless pooled data setting, the convex program above can be simplified to the following linear program (LP):
\begin{align*}
    \text{minimize}&\quad -\sum_{l=1}^L(\log\pi_l)1_p^\top I_{pL}^{(pl)}B_\opt\quad(\text{w.r.t.~ $B_\opt$}) \\
    \text{subject to}&\quad 0_{pL}\leq B_\opt \leq 1_{pL},\,
    Y_{\opt}=X_{\opt}B_{\opt},\,
    C_{\opt}B_{\opt}=1_p.
\end{align*}
The derivations of the convex and linear programs are given in Appendix \ref{sec:optm_derivations}.

\paragraph{Iterative hard thresholding} We extend the IHT algorithm \cite[Section 3.3]{Fou13} for compressed sensing to pooled data. The algorithm is as follows:
\begin{itemize}
    \item \textbf{Input.} $\tY$, $\tX$, and the sparsity levels of the columns of $B$, namely $\{\hat{\pi}_1p,\dots,\hat{\pi}_Lp\}$.
    \item \textbf{Initialization.} $B^0=0\in\mb{R}^{p\times L}$.
    \item For iteration $k=1,\dots,k_{\max}$: Execute
    \begin{align*}
        B^{k+1}=H_{\{\hat{\pi}_1p,\dots,\hat{\pi}_Lp\}}(B^k+\tX^\top(\tY-\tX B^k)),
    \end{align*}
    where $H_{\{\hat{\pi}_1p,\dots,\hat{\pi}_Lp\}}(\cdot)$ is the hard thresholding function adapted to matrix signals, and assigns each item to a category as follows:
    
        1. Look for the largest entry in the input matrix and make a hard decision on the corresponding (item, category) pair.
        
        2.  Once an item is allocated, it is removed from further consideration when looking for the next item. Once a class $l$ has been allocated $\hat{\pi}_lp$ items, we also remove it from further consideration when looking for the next (item, category) pair.
        
        3. Repeat the two steps above until all $p$ items have been categorized
    \item Return $B^{k_{\max}}$.
\end{itemize}

\paragraph{Matrix-AMP implementation} We use the matrix-AMP algorithm  in  \eqref{eq:memoryless_GAMP}, with memoryless denoisers, and implement with $g_k=g_k^*$ and $f_k=f_k^*$, the optimal choices given by \eqref{eq:gk_opt_def} and \eqref{eq:fk_opt_def}.   For the pooled data problem with additive Gaussian noise, the optimal memoryless $g_k^*(Z^k,\bar{Y})=Z^k-\bar{Y}$; see Appendix \ref{sec:imp_details_pooled_data}. With this $g_k^*$, it can be shown that
\begin{align}
f_k^*(s^1,\dots,s^k)
&=\E[\bar{B}\mid \Mu^1_B\bar{B}+G_B^1 = s^1, \dots, \Mu^k_B\bar{B}+G_B^k = s^k] \nonumber \\
&=\E[\bar{B}\mid \Mu^k_B\bar{B}+G_B^k = s^k]
=f_k^*(s^k). \label{eq:memory=memoryless}
\end{align}
This provides a partial justification for using  memoryless Bayes-optimal denoisers. However, we note that  denoisers with memory could help improve the performance of matrix-AMP if the signal prior is not known. 
The performance of matrix-AMP can be tracked via a simplified version of the state evolution equations, described in \eqref{eq:memoryless_Mu}-\eqref{eq:memoryless_Thetak_conv} in Appendix \ref{sec:eqv_of_AMP_and_SE}. Full details on the derivation of the denoisers and their derivatives are given in Appendix \ref{sec:imp_details_pooled_data}.

\paragraph{Simulation set-up}
We take $X_{ij}\stackrel{\iid}{\sim}\text{Bernoulli}(\alpha)$ for $i,j\in[n]\times[p]$, and the rows of the signal $B_{j,:} \stackrel{\iid}{\sim}\text{Categorical}(\pi)$ for $j\in[p]$, where $\pi$ is a vector of probabilities that sum to one. We use $\alpha=0.5$, since we find that the choice of $\alpha$ has very little effect on the performance of each algorithm. The rows of the AMP initializer $\hB^0\in\mb{R}^{p\times2}$ are chosen i.i.d.~according to the same distribution $\pi$, independently of the signal. We set the signal dimension $p=500$ and vary the value of $n$ in our experiments. 

In the noisy setting, we consider Gaussian noise, with $\Psi_{i,:}\stackrel{\iid}{\sim}\normal(0,p\sigma^2I_L)$ for $i\in[n]$. While it may seem unusual to add continuous noise to discrete observations, this still captures the essence of the noisy pooled data problem, and simplifies the implementation of our algorithm. Furthermore, since the noise standard deviation is of order $\sqrt{p}$, rounding the noise values to the nearest integer does not noticeably affect performance. This noisy model has been previously considered in \cite[Section 2.2., Example 3]{Sca17}.

The performance in all the plots is measured via the squared overlap between the matrix-AMP estimate and the signal (see \eqref{eq:sq_overlap}).  Each point on the plots is obtained from 100 independent runs, where in each run, the matrix-AMP algorithm is executed for 200 iterations. We report the average and error bars at 1 standard deviation of the final iteration. In our implementation, in the final iteration of the matrix-AMP algorithm, we quantize the estimates of the signal $B$, e.g., a row estimate of $[0.9,0.1]$ would get quantized to $[1,0]$.   In all our plots, curves labeled `AMP' refers to the empirical performance of the matrix-AMP algorithm, and points labeled `SE' refers to the theoretical performance of the  algorithm, calculated from the state evolution parameters in \eqref{eq:memoryless_Mu}-\eqref{eq:memoryless_Bk_conv}.

\begin{figure}[t]
\centering
\begin{subfigure}[b]{0.45\textwidth}
  \centering
  % This file was created with tikzplotlib v0.10.1.
\begin{tikzpicture}[scale=0.8]

\definecolor{coral}{RGB}{255,127,80}
\definecolor{darkgray176}{RGB}{176,176,176}
\definecolor{green}{RGB}{0,128,0}
\definecolor{lightblue}{RGB}{173,216,230}
\definecolor{lightgray204}{RGB}{204,204,204}
\definecolor{lightgreen}{RGB}{144,238,144}

\begin{axis}[
legend cell align={left},
legend style={
  fill opacity=0.8,
  draw opacity=1,
  text opacity=1,
  at={(0.97,0.03)},
  anchor=south east,
  draw=lightgray204
},
tick align=outside,
tick pos=left,
x grid style={darkgray176},
xlabel={\(\displaystyle \delta=n/p\)},
xmajorgrids,
xmin=0.055, xmax=1.045,
xtick style={color=black},
y grid style={darkgray176},
ylabel={Overlap},
ymajorgrids,
ymin=0.522356825857179, ymax=1.06727808898093,
ytick style={color=black}
]
\addplot [semithick, blue, dashed]
table {%
0.1 0.550201375280418
0.11 0.555266762590216
0.12 0.560374659509058
0.13 0.565509244476453
0.14 0.570592251310777
0.15 0.575815316380244
0.16 0.581116891921158
0.17 0.586182017478122
0.18 0.591600238535889
0.19 0.596590103137101
0.2 0.601984849609947
0.21 0.607179706177404
0.22 0.61269513565549
0.23 0.618301731703739
0.24 0.623576050137707
0.25 0.629177966472223
0.26 0.634997825462344
0.27 0.64047212454451
0.28 0.64616124570698
0.29 0.652029251046211
0.3 0.658058348661263
0.31 0.663938550025396
0.32 0.670028370276254
0.33 0.67606819238132
0.34 0.681983308208247
0.35 0.688572134406358
0.36 0.695183809985034
0.37 0.702354185791474
0.38 0.709106208327109
0.39 0.716407399025501
0.4 0.723140142675157
0.41 0.7306249356203
0.42 0.739140328110779
0.43 0.747166415894574
0.44 0.756499064672146
0.45 0.765975983293653
0.46 0.776388562855627
0.47 0.788441960025806
0.48 0.799635463589955
0.49 0.820200915196293
0.5 1
0.51 1
0.52 1
0.53 1
0.54 1
0.55 1
0.56 1
0.57 1
0.58 1
0.59 1
0.6 1
0.61 1
0.62 1
0.63 1
0.64 1
0.65 1
0.66 1
0.67 1
0.68 1
0.69 1
0.7 1
0.71 1
0.72 1
0.73 1
0.74 1
0.75 1
0.76 1
0.77 1
0.78 1
0.79 1
0.8 1
0.81 1
0.82 1
0.83 1
0.84 1
0.85 1
0.86 1
0.87 1
0.88 1
0.89 1
0.9 1
0.91 1
0.92 1
0.93 1
0.94 1
0.95 1
0.96 1
0.97 1
0.98 1
0.99 1
1 1
};
\addlegendentry{SE, [0.5,0.5]}
\addplot [semithick, red, dashed]
table {%
0.1 0.623402838161695
0.11 0.628084393456963
0.12 0.632688407255048
0.13 0.637216955195428
0.14 0.641782767576675
0.15 0.646620077173608
0.16 0.651237617370556
0.17 0.655937632417231
0.18 0.661194626191247
0.19 0.666065183502921
0.2 0.671014043667133
0.21 0.675891891016578
0.22 0.680627081768141
0.23 0.685838936410226
0.24 0.691024524291523
0.25 0.696608138053634
0.26 0.701790758339961
0.27 0.707165848656769
0.28 0.712811271114158
0.29 0.718549384474472
0.3 0.724401262771794
0.31 0.73055353920852
0.32 0.736763612072675
0.33 0.743168105103477
0.34 0.749163626690785
0.35 0.756322646883502
0.36 0.763193202881287
0.37 0.770553940095273
0.38 0.778287142332186
0.39 0.786780710505564
0.4 0.795226264748853
0.41 0.806192750440335
0.42 0.816971357722601
0.43 0.830463230780756
0.44 0.854718501765403
0.45 1
0.46 1
0.47 1
0.48 1
0.49 1
0.5 1
0.51 1
0.52 1
0.53 1
0.54 1
0.55 1
0.56 1
0.57 1
0.58 1
0.59 1
0.6 1
0.61 1
0.62 1
0.63 1
0.64 1
0.65 1
0.66 1
0.67 1
0.68 1
0.69 1
0.7 1
0.71 1
0.72 1
0.73 1
0.74 1
0.75 1
0.76 1
0.77 1
0.78 1
0.79 1
0.8 1
0.81 1
0.82 1
0.83 1
0.84 1
0.85 1
0.86 1
0.87 1
0.88 1
0.89 1
0.9 1
0.91 1
0.92 1
0.93 1
0.94 1
0.95 1
0.96 1
0.97 1
0.98 1
0.99 1
1 1
};
\addlegendentry{SE, [0.3, 0.7]}
\addplot [semithick, green, dashed]
table {%
0.1 0.844233828272526
0.11 0.847337176238311
0.12 0.850525491760371
0.13 0.854096356460707
0.14 0.857535505479883
0.15 0.861488119132835
0.16 0.865600036396713
0.17 0.869768994464795
0.18 0.874334431677965
0.19 0.878754421181344
0.2 0.883893257676079
0.21 0.889848630290007
0.22 0.896208821619196
0.23 0.904219283345129
0.24 0.91410697820521
0.25 1
0.26 1
0.27 1
0.28 1
0.29 1
0.3 1
0.31 1
0.32 1
0.33 1
0.34 1
0.35 1
0.36 1
0.37 1
0.38 1
0.39 1
0.4 1
0.41 1
0.42 1
0.43 1
0.44 1
0.45 1
0.46 1
0.47 1
0.48 1
0.49 1
0.5 1
0.51 1
0.52 1
0.53 1
0.54 1
0.55 1
0.56 1
0.57 1
0.58 1
0.59 1
0.6 1
0.61 1
0.62 1
0.63 1
0.64 1
0.65 1
0.66 1
0.67 1
0.68 1
0.69 1
0.7 1
0.71 1
0.72 1
0.73 1
0.74 1
0.75 1
0.76 1
0.77 1
0.78 1
0.79 1
0.8 1
0.81 1
0.82 1
0.83 1
0.84 1
0.85 1
0.86 1
0.87 1
0.88 1
0.89 1
0.9 1
0.91 1
0.92 1
0.93 1
0.94 1
0.95 1
0.96 1
0.97 1
0.98 1
0.99 1
1 1
};
\addlegendentry{SE, [0.1, 0.9]}
\path [draw=lightblue, very thick]
(axis cs:0.1,0.538710182736136)
--(axis cs:0.1,0.558476866805619);

\path [draw=lightblue, very thick]
(axis cs:0.2,0.584951759867322)
--(axis cs:0.2,0.613271266931962);

\path [draw=lightblue, very thick]
(axis cs:0.3,0.63821634522487)
--(axis cs:0.3,0.674508055304011);

\path [draw=lightblue, very thick]
(axis cs:0.4,0.696539716627099)
--(axis cs:0.4,0.751567885480837);

\path [draw=lightblue, very thick]
(axis cs:0.5,0.834948278010405)
--(axis cs:0.5,1.04326233738088);

\path [draw=lightblue, very thick]
(axis cs:0.6,1)
--(axis cs:0.6,1);

\path [draw=lightblue, very thick]
(axis cs:0.7,1)
--(axis cs:0.7,1);

\path [draw=lightblue, very thick]
(axis cs:0.8,1)
--(axis cs:0.8,1);

\path [draw=lightblue, very thick]
(axis cs:0.9,1)
--(axis cs:0.9,1);

\path [draw=lightblue, very thick]
(axis cs:1,1)
--(axis cs:1,1);

\path [draw=coral, very thick]
(axis cs:0.1,0.601911018265218)
--(axis cs:0.1,0.634380503631947);

\path [draw=coral, very thick]
(axis cs:0.2,0.649154467577925)
--(axis cs:0.2,0.69252210806554);

\path [draw=coral, very thick]
(axis cs:0.3,0.697570423301754)
--(axis cs:0.3,0.745117142089108);

\path [draw=coral, very thick]
(axis cs:0.4,0.752511396356644)
--(axis cs:0.4,0.865947060804138);

\path [draw=coral, very thick]
(axis cs:0.5,0.950365396735377)
--(axis cs:0.5,1.03480912515988);

\path [draw=coral, very thick]
(axis cs:0.6,1)
--(axis cs:0.6,1);

\path [draw=coral, very thick]
(axis cs:0.7,1)
--(axis cs:0.7,1);

\path [draw=coral, very thick]
(axis cs:0.8,1)
--(axis cs:0.8,1);

\path [draw=coral, very thick]
(axis cs:0.9,1)
--(axis cs:0.9,1);

\path [draw=coral, very thick]
(axis cs:1,1)
--(axis cs:1,1);

\path [draw=lightgreen, very thick]
(axis cs:0.1,0.817671186694918)
--(axis cs:0.1,0.860877189387429);

\path [draw=lightgreen, very thick]
(axis cs:0.2,0.856895283123274)
--(axis cs:0.2,0.921386440849669);

\path [draw=lightgreen, very thick]
(axis cs:0.3,1)
--(axis cs:0.3,1);

\path [draw=lightgreen, very thick]
(axis cs:0.4,1)
--(axis cs:0.4,1);

\path [draw=lightgreen, very thick]
(axis cs:0.5,1)
--(axis cs:0.5,1);

\path [draw=lightgreen, very thick]
(axis cs:0.6,1)
--(axis cs:0.6,1);

\path [draw=lightgreen, very thick]
(axis cs:0.7,1)
--(axis cs:0.7,1);

\path [draw=lightgreen, very thick]
(axis cs:0.8,1)
--(axis cs:0.8,1);

\path [draw=lightgreen, very thick]
(axis cs:0.9,1)
--(axis cs:0.9,1);

\path [draw=lightgreen, very thick]
(axis cs:1,1)
--(axis cs:1,1);

\addplot [semithick, blue, mark=asterisk, mark size=3, mark options={solid}, only marks]
table {%
0.1 0.548593524770878
0.2 0.599111513399642
0.3 0.656362200264441
0.4 0.724053801053968
0.5 0.939105307695644
0.6 1
0.7 1
0.8 1
0.9 1
1 1
};
\addlegendentry{AMP, [0.5, 0.5]}
\addplot [semithick, red, mark=asterisk, mark size=3, mark options={solid}, only marks]
table {%
0.1 0.618145760948582
0.2 0.670838287821733
0.3 0.721343782695431
0.4 0.809229228580391
0.5 0.992587260947627
0.6 1
0.7 1
0.8 1
0.9 1
1 1
};
\addlegendentry{AMP, [0.3, 0.7]}
\addplot [semithick, green, mark=asterisk, mark size=3, mark options={solid}, only marks]
table {%
0.1 0.839274188041173
0.2 0.889140861986472
0.3 1
0.4 1
0.5 1
0.6 1
0.7 1
0.8 1
0.9 1
1 1
};
\addlegendentry{AMP, [0.1, 0.9]}
\end{axis}

\end{tikzpicture}
  \vspace{-2\baselineskip}
  \caption{$L=2$}
\end{subfigure}
\begin{subfigure}[b]{0.45\textwidth}
  \centering
  % This file was created with tikzplotlib v0.10.1.
\begin{tikzpicture}[scale=0.8]

\definecolor{coral}{RGB}{255,127,80}
\definecolor{darkgray176}{RGB}{176,176,176}
\definecolor{lightblue}{RGB}{173,216,230}
\definecolor{lightgray204}{RGB}{204,204,204}

\begin{axis}[
legend cell align={left},
legend style={
  fill opacity=0.8,
  draw opacity=1,
  text opacity=1,
  at={(0.97,0.03)},
  anchor=south east,
  draw=lightgray204
},
tick align=outside,
tick pos=left,
x grid style={darkgray176},
xlabel={\(\displaystyle \delta=n/p\)},
xmajorgrids,
xmin=0.055, xmax=1.045,
xtick style={color=black},
y grid style={darkgray176},
ylabel={Overlap},
ymajorgrids,
ymin=0.367602704651861, ymax=1.06534811168587,
ytick style={color=black}
]
\addplot [semithick, blue, dashed]
table {%
0.1 0.403319716936278
0.11 0.410517406209138
0.12 0.418140704402027
0.13 0.425524278394248
0.14 0.433144770816875
0.15 0.440961819477646
0.16 0.448791884869155
0.17 0.45682262875895
0.18 0.464896290081386
0.19 0.473204563977109
0.2 0.481170095160277
0.21 0.489602481816547
0.22 0.498300145744635
0.23 0.506813954403572
0.24 0.516190743134289
0.25 0.525045372522357
0.26 0.534518496022504
0.27 0.544300266960825
0.28 0.554040049595179
0.29 0.56380810276276
0.3 0.574767335168934
0.31 0.584803534891108
0.32 0.596171390278946
0.33 0.608001752026031
0.34 0.620598847644601
0.35 0.633682235095207
0.36 0.647389732913688
0.37 0.663421452678805
0.38 0.680763873282104
0.39 0.6997136592506
0.4 0.726567013658599
0.41 1
0.42 1
0.43 1
0.44 1
0.45 1
0.46 1
0.47 1
0.48 1
0.49 1
0.5 1
0.51 1
0.52 1
0.53 1
0.54 1
0.55 1
0.56 1
0.57 1
0.58 1
0.59 1
0.6 1
0.61 1
0.62 1
0.63 1
0.64 1
0.65 1
0.66 1
0.67 1
0.68 1
0.69 1
0.7 1
0.71 1
0.72 1
0.73 1
0.74 1
0.75 1
0.76 1
0.77 1
0.78 1
0.79 1
0.8 1
0.81 1
0.82 1
0.83 1
0.84 1
0.85 1
0.86 1
0.87 1
0.88 1
0.89 1
0.9 1
0.91 1
0.92 1
0.93 1
0.94 1
0.95 1
0.96 1
0.97 1
0.98 1
0.99 1
1 1
};
\addlegendentry{SE, [1/3, 1/3, 1/3]}
\addplot [semithick, red, dashed]
table {%
0.1 0.520477319856719
0.11 0.527368854636416
0.12 0.534444349081812
0.13 0.541385531317628
0.14 0.548609719132491
0.15 0.556343056464612
0.16 0.563917337284345
0.17 0.572106157287056
0.18 0.580344313741773
0.19 0.588996470538151
0.2 0.598072400395377
0.21 0.607643650048102
0.22 0.61793887620742
0.23 0.629382899522962
0.24 0.642172245178448
0.25 0.661153627819928
0.26 0.727514399362799
0.27 0.733995999856694
0.28 0.739453298046533
0.29 0.745716228120032
0.3 0.752244872046176
0.31 0.758362560201668
0.32 0.764616768078667
0.33 0.771905306365239
0.34 0.77947037662042
0.35 0.787048007943083
0.36 0.79482383390543
0.37 0.804241330475803
0.38 0.813776058453096
0.39 0.824498950946014
0.4 0.836955411882745
0.41 0.858092188345577
0.42 1
0.43 1
0.44 1
0.45 1
0.46 1
0.47 1
0.48 1
0.49 1
0.5 1
0.51 1
0.52 1
0.53 1
0.54 1
0.55 1
0.56 1
0.57 1
0.58 1
0.59 1
0.6 1
0.61 1
0.62 1
0.63 1
0.64 1
0.65 1
0.66 1
0.67 1
0.68 1
0.69 1
0.7 1
0.71 1
0.72 1
0.73 1
0.74 1
0.75 1
0.76 1
0.77 1
0.78 1
0.79 1
0.8 1
0.81 1
0.82 1
0.83 1
0.84 1
0.85 1
0.86 1
0.87 1
0.88 1
0.89 1
0.9 1
0.91 1
0.92 1
0.93 1
0.94 1
0.95 1
0.96 1
0.97 1
0.98 1
0.99 1
1 1
};
\addlegendentry{SE, [0.1, 0.3, 0.6]}
\path [draw=lightblue, very thick]
(axis cs:0.1,0.387247112761564)
--(axis cs:0.1,0.408251641017717);

\path [draw=lightblue, very thick]
(axis cs:0.2,0.459138443311345)
--(axis cs:0.2,0.49582632179935);

\path [draw=lightblue, very thick]
(axis cs:0.3,0.54344759531526)
--(axis cs:0.3,0.595038002234662);

\path [draw=lightblue, very thick]
(axis cs:0.4,0.687377431227771)
--(axis cs:0.4,1.00758632079689);

\path [draw=lightblue, very thick]
(axis cs:0.5,1)
--(axis cs:0.5,1);

\path [draw=lightblue, very thick]
(axis cs:0.6,1)
--(axis cs:0.6,1);

\path [draw=lightblue, very thick]
(axis cs:0.7,1)
--(axis cs:0.7,1);

\path [draw=lightblue, very thick]
(axis cs:0.8,1)
--(axis cs:0.8,1);

\path [draw=lightblue, very thick]
(axis cs:0.9,1)
--(axis cs:0.9,1);

\path [draw=lightblue, very thick]
(axis cs:1,1)
--(axis cs:1,1);

\path [draw=coral, very thick]
(axis cs:0.1,0.497957855197413)
--(axis cs:0.1,0.535945889067374);

\path [draw=coral, very thick]
(axis cs:0.2,0.557749857405617)
--(axis cs:0.2,0.622300492724523);

\path [draw=coral, very thick]
(axis cs:0.3,0.715050815508672)
--(axis cs:0.3,0.775202151208381);

\path [draw=coral, very thick]
(axis cs:0.4,0.79468710619121)
--(axis cs:0.4,0.983524014970584);

\path [draw=coral, very thick]
(axis cs:0.5,1)
--(axis cs:0.5,1);

\path [draw=coral, very thick]
(axis cs:0.6,1)
--(axis cs:0.6,1);

\path [draw=coral, very thick]
(axis cs:0.7,1)
--(axis cs:0.7,1);

\path [draw=coral, very thick]
(axis cs:0.8,1)
--(axis cs:0.8,1);

\path [draw=coral, very thick]
(axis cs:0.9,1)
--(axis cs:0.9,1);

\path [draw=coral, very thick]
(axis cs:1,1)
--(axis cs:1,1);

\addplot [semithick, blue, mark=asterisk, mark size=3, mark options={solid}, only marks]
table {%
0.1 0.397749376889641
0.2 0.477482382555348
0.3 0.569242798774961
0.4 0.847481876012328
0.5 1
0.6 1
0.7 1
0.8 1
0.9 1
1 1
};
\addlegendentry{AMP, [1/3, 1/3, 1/3]}
\addplot [semithick, red, mark=asterisk, mark size=3, mark options={solid}, only marks]
table {%
0.1 0.516951872132393
0.2 0.59002517506507
0.3 0.745126483358526
0.4 0.889105560580897
0.5 1
0.6 1
0.7 1
0.8 1
0.9 1
1 1
};
\addlegendentry{AMP, [0.1, 0.3, 0.6]}
\end{axis}

\end{tikzpicture}
  \vspace{-2\baselineskip}
  \caption{$L=3$}
\end{subfigure}
\caption{AMP for pooled data: squared overlap vs.~$\delta$, with $\sigma=0$  and varying $\pi$.}
\label{fig:noiseless}
\end{figure}

\begin{figure}[t]
\centering
\begin{subfigure}[b]{0.34\textwidth}
  \centering
  % This file was created with tikzplotlib v0.10.1.
\begin{tikzpicture}[scale=0.62]

\definecolor{coral}{RGB}{255,127,80}
\definecolor{darkgray176}{RGB}{176,176,176}
\definecolor{green}{RGB}{0,128,0}
\definecolor{lightblue}{RGB}{173,216,230}
\definecolor{lightgray204}{RGB}{204,204,204}
\definecolor{lightgreen}{RGB}{144,238,144}

\begin{axis}[
legend cell align={left},
legend style={
  fill opacity=0.8,
  draw opacity=1,
  text opacity=1,
  at={(0.97,0.03)},
  anchor=south east,
  draw=lightgray204
},
tick align=outside,
tick pos=left,
x grid style={darkgray176},
xlabel={\(\displaystyle \delta=n/p\)},
xmajorgrids,
xmin=0.055, xmax=1.045,
xtick style={color=black},
y grid style={darkgray176},
ylabel={Overlap},
ymajorgrids,
ymin=0.0476403344186668, ymax=1.11159215177566,
ytick style={color=black}
]
\addplot [semithick, blue, dashed]
table {%
0.1 0.403230617663388
0.11 0.41054217791051
0.12 0.418072646119556
0.13 0.42586601711218
0.14 0.433049684112449
0.15 0.441068112897524
0.16 0.448677486355195
0.17 0.456796115890354
0.18 0.464965565224308
0.19 0.473060411748787
0.2 0.481383952640562
0.21 0.489768081710178
0.22 0.49849278309212
0.23 0.507063825756621
0.24 0.515892837028079
0.25 0.52529028713855
0.26 0.53471322546817
0.27 0.543685117125823
0.28 0.553979541774879
0.29 0.563521208474632
0.3 0.574534881337513
0.31 0.584950335194167
0.32 0.596399280594805
0.33 0.60824276728242
0.34 0.620328597456081
0.35 0.633617102706393
0.36 0.647536846843159
0.37 0.663010132666371
0.38 0.680425387426825
0.39 0.700418484800851
0.4 0.72536240571466
0.41 1
0.42 1
0.43 1
0.44 1
0.45 1
0.46 1
0.47 1
0.48 1
0.49 1
0.5 1
0.51 1
0.52 1
0.53 1
0.54 1
0.55 1
0.56 1
0.57 1
0.58 1
0.59 1
0.6 1
0.61 1
0.62 1
0.63 1
0.64 1
0.65 1
0.66 1
0.67 1
0.68 1
0.69 1
0.7 1
0.71 1
0.72 1
0.73 1
0.74 1
0.75 1
0.76 1
0.77 1
0.78 1
0.79 1
0.8 1
0.81 1
0.82 1
0.83 1
0.84 1
0.85 1
0.86 1
0.87 1
0.88 1
0.89 1
0.9 1
0.91 1
0.92 1
0.93 1
0.94 1
0.95 1
0.96 1
0.97 1
0.98 1
0.99 1
1 1
};
\addlegendentry{SE}
\path [draw=lightblue, very thick]
(axis cs:0.1,0.384017728651882)
--(axis cs:0.1,0.41217274696714);

\path [draw=lightblue, very thick]
(axis cs:0.2,0.459912686836364)
--(axis cs:0.2,0.497467841272017);

\path [draw=lightblue, very thick]
(axis cs:0.3,0.535270398570356)
--(axis cs:0.3,0.599753249883027);

\path [draw=lightblue, very thick]
(axis cs:0.4,0.672658451359126)
--(axis cs:0.4,1.00816826899011);

\path [draw=lightblue, very thick]
(axis cs:0.425,0.884123379609285)
--(axis cs:0.425,1.06323070553216);

\path [draw=lightblue, very thick]
(axis cs:0.45,0.957112355453903)
--(axis cs:0.45,1.03505419493878);

\path [draw=lightblue, very thick]
(axis cs:0.5,1)
--(axis cs:0.5,1);

\path [draw=lightblue, very thick]
(axis cs:0.6,1)
--(axis cs:0.6,1);

\path [draw=lightblue, very thick]
(axis cs:0.7,1)
--(axis cs:0.7,1);

\path [draw=lightblue, very thick]
(axis cs:0.8,1)
--(axis cs:0.8,1);

\path [draw=lightblue, very thick]
(axis cs:0.9,1)
--(axis cs:0.9,1);

\path [draw=lightblue, very thick]
(axis cs:1,1)
--(axis cs:1,1);

\path [draw=coral, very thick]
(axis cs:0.1,0.165768443411676)
--(axis cs:0.1,0.194492292303596);

\path [draw=coral, very thick]
(axis cs:0.2,0.251535184798191)
--(axis cs:0.2,0.29456356723569);

\path [draw=coral, very thick]
(axis cs:0.3,0.360971413163041)
--(axis cs:0.3,0.424200204223362);

\path [draw=coral, very thick]
(axis cs:0.4,0.585292683911154)
--(axis cs:0.4,0.824213725461843);

\path [draw=coral, very thick]
(axis cs:0.425,0.810377783160044)
--(axis cs:0.425,1.0134420734853);

\path [draw=coral, very thick]
(axis cs:0.45,1)
--(axis cs:0.45,1);

\path [draw=coral, very thick]
(axis cs:0.5,1)
--(axis cs:0.5,1);

\path [draw=coral, very thick]
(axis cs:0.6,1)
--(axis cs:0.6,1);

\path [draw=coral, very thick]
(axis cs:0.7,1)
--(axis cs:0.7,1);

\path [draw=coral, very thick]
(axis cs:0.8,1)
--(axis cs:0.8,1);

\path [draw=coral, very thick]
(axis cs:0.9,1)
--(axis cs:0.9,1);

\path [draw=coral, very thick]
(axis cs:1,1)
--(axis cs:1,1);

\path [draw=lightgreen, very thick]
(axis cs:0.1,0.0960017806621666)
--(axis cs:0.1,0.118899339337833);

\path [draw=lightgreen, very thick]
(axis cs:0.2,0.111723617805319)
--(axis cs:0.2,0.133978702194681);

\path [draw=lightgreen, very thick]
(axis cs:0.3,0.127310921743377)
--(axis cs:0.3,0.152098678256623);

\path [draw=lightgreen, very thick]
(axis cs:0.4,0.14144509108244)
--(axis cs:0.4,0.16983290891756);

\path [draw=lightgreen, very thick]
(axis cs:0.425,0.14759324986449)
--(axis cs:0.425,0.17404659013551);

\path [draw=lightgreen, very thick]
(axis cs:0.45,0.152016487872327)
--(axis cs:0.45,0.178844632127672);

\path [draw=lightgreen, very thick]
(axis cs:0.5,0.162652265840351)
--(axis cs:0.5,0.191578534159649);

\path [draw=lightgreen, very thick]
(axis cs:0.6,0.178901668550148)
--(axis cs:0.6,0.209507211449852);

\path [draw=lightgreen, very thick]
(axis cs:0.7,0.204948044287118)
--(axis cs:0.7,0.231097075712882);

\path [draw=lightgreen, very thick]
(axis cs:0.8,0.223159092349352)
--(axis cs:0.8,0.256573627650648);

\path [draw=lightgreen, very thick]
(axis cs:0.9,0.251969611526324)
--(axis cs:0.9,0.286221908473676);

\path [draw=lightgreen, very thick]
(axis cs:1,0.280293265617695)
--(axis cs:1,0.321276974382305);

\addplot [semithick, blue, mark=asterisk, mark size=3, mark options={solid}, only marks]
table {%
0.1 0.398095237809511
0.2 0.478690264054191
0.3 0.567511824226692
0.4 0.840413360174617
0.425 0.973677042570724
0.45 0.996083275196342
0.5 1
0.6 1
0.7 1
0.8 1
0.9 1
1 1
};
\addlegendentry{AMP}
\addplot [semithick, red, dotted, mark=asterisk, mark size=3, mark options={solid}]
table {%
0.1 0.180130367857636
0.2 0.273049376016941
0.3 0.392585808693201
0.4 0.704753204686499
0.425 0.911909928322673
0.45 1
0.5 1
0.6 1
0.7 1
0.8 1
0.9 1
1 1

};
\addlegendentry{LP}

\addplot [semithick, green, dotted, mark=asterisk, mark size=3, mark options={solid}]
table {%
0.1 0.10745056
0.2 0.12285116
0.3 0.1397048
0.4 0.155639
0.425 0.16081992
0.45 0.16543056
0.5 0.1771154
0.6 0.19420444
0.7 0.21802256
0.8 0.23986636
0.9 0.26909576
1 0.30078512
};
\addlegendentry{IHT}

\end{axis}

\end{tikzpicture}
  \vspace{-2\baselineskip}
  \caption{$\sigma=0$}
\end{subfigure}
\begin{subfigure}[b]{0.32\textwidth}
  \centering
  % This file was created with tikzplotlib v0.10.1.
\begin{tikzpicture}[scale=0.62]

\definecolor{coral}{RGB}{255,127,80}
\definecolor{darkgray176}{RGB}{176,176,176}
\definecolor{green}{RGB}{0,128,0}
\definecolor{lightblue}{RGB}{173,216,230}
\definecolor{lightgray204}{RGB}{204,204,204}
\definecolor{lightgreen}{RGB}{144,238,144}

\begin{axis}[
legend cell align={left},
legend style={
  fill opacity=0.8,
  draw opacity=1,
  text opacity=1,
  at={(0.03,0.97)},
  anchor=north west,
  draw=lightgray204
},
tick align=outside,
tick pos=left,
x grid style={darkgray176},
xlabel={\(\displaystyle \delta=n/p\)},
xmajorgrids,
xmin=0.055, xmax=1.045,
xtick style={color=black},
y grid style={darkgray176},
ylabel={},
ymajorgrids,
ymin=0.0511230254246438, ymax=1.0494485105155,
ytick style={color=black}
]
\addplot [semithick, blue, dashed]
table {%
0.1 0.394750737594008
0.11 0.401102734110985
0.12 0.407504153598908
0.13 0.413854222643159
0.14 0.420364847034653
0.15 0.426902972688321
0.16 0.433450928510796
0.17 0.440007290063962
0.18 0.44653433682457
0.19 0.453503026577744
0.2 0.460026958964305
0.21 0.467109578107782
0.22 0.474015935026378
0.23 0.480839640215684
0.24 0.487996909198077
0.25 0.494723851247812
0.26 0.501809424111564
0.27 0.509026433893795
0.28 0.516577607089385
0.29 0.523442005374094
0.3 0.530533958525853
0.31 0.538676684901148
0.32 0.546121055487821
0.33 0.553706457989931
0.34 0.561366399610509
0.35 0.569460507844272
0.36 0.577257023659157
0.37 0.585554348288076
0.38 0.59346380290169
0.39 0.602074825643166
0.4 0.61050569179641
0.41 0.619097841984781
0.42 0.628040493932219
0.43 0.637029933253247
0.44 0.646141256579013
0.45 0.655522714202737
0.46 0.664336668902073
0.47 0.674620676190409
0.48 0.685171556998934
0.49 0.695963801990186
0.5 0.706039363440961
0.51 0.718046350675684
0.52 0.729131498585593
0.53 0.743074228604873
0.54 0.756420842173124
0.55 0.770201256846159
0.56 0.785121548945657
0.57 0.801040628799922
0.58 0.822976636435099
0.59 0.846251808204573
0.6 0.874352779245844
0.61 0.914950215335003
0.62 0.941770149614612
0.63 0.955886331726243
0.64 0.963037388013565
0.65 0.968581720023253
0.66 0.972622658949717
0.67 0.975919453528258
0.68 0.978526786192333
0.69 0.980631232869335
0.7 0.982058846376454
0.71 0.983848053915333
0.72 0.985089542556667
0.73 0.986280393336366
0.74 0.987330709541969
0.75 0.988318616967567
0.76 0.989164882720643
0.77 0.989865570628637
0.78 0.990659738398747
0.79 0.991143825213413
0.8 0.991753461024542
0.81 0.99233267591852
0.82 0.992809769126776
0.83 0.993220881541492
0.84 0.993620550629024
0.85 0.994012237339405
0.86 0.994316536405089
0.87 0.994662734556923
0.88 0.994921783241566
0.89 0.995182276538951
0.9 0.995407849775595
0.91 0.995646721248128
0.92 0.995899352133109
0.93 0.996140190390667
0.94 0.996356021650168
0.95 0.996503550348595
0.96 0.996662531844595
0.97 0.996866923679183
0.98 0.997010307123754
0.99 0.997148081202414
1 0.997274212075975
};
\addlegendentry{SE}
\path [draw=lightblue, very thick]
(axis cs:0.1,0.377759112886967)
--(axis cs:0.1,0.400320214505027);

\path [draw=lightblue, very thick]
(axis cs:0.2,0.439858368346888)
--(axis cs:0.2,0.471900080134251);

\path [draw=lightblue, very thick]
(axis cs:0.3,0.503122292541632)
--(axis cs:0.3,0.549398303857115);

\path [draw=lightblue, very thick]
(axis cs:0.4,0.579059944228449)
--(axis cs:0.4,0.633367807502222);

\path [draw=lightblue, very thick]
(axis cs:0.5,0.645286055663266)
--(axis cs:0.5,0.744216083730753);

\path [draw=lightblue, very thick]
(axis cs:0.6,0.853391308612245)
--(axis cs:0.6,1.00012368808223);

\path [draw=lightblue, very thick]
(axis cs:0.65,0.954430755591379)
--(axis cs:0.65,1.00407007937501);

\path [draw=lightblue, very thick]
(axis cs:0.7,0.984743361014746)
--(axis cs:0.7,0.99948634701178);

\path [draw=lightblue, very thick]
(axis cs:0.8,0.994301201636112)
--(axis cs:0.8,0.999953100385583);

\path [draw=lightblue, very thick]
(axis cs:0.9,0.996907578441162)
--(axis cs:0.9,1.00037380099218);

\path [draw=lightblue, very thick]
(axis cs:1,0.997939722298454)
--(axis cs:1,1.00059747207646);

\path [draw=coral, very thick]
(axis cs:0.1,0.381717772977455)
--(axis cs:0.1,0.401249314793814);

\path [draw=coral, very thick]
(axis cs:0.2,0.427540991043901)
--(axis cs:0.2,0.456761144332656);

\path [draw=coral, very thick]
(axis cs:0.3,0.468833200838628)
--(axis cs:0.3,0.511509197100767);

\path [draw=coral, very thick]
(axis cs:0.4,0.537382093236265)
--(axis cs:0.4,0.602051877793751);

\path [draw=coral, very thick]
(axis cs:0.5,0.665145946217567)
--(axis cs:0.5,0.743904704237514);

\path [draw=coral, very thick]
(axis cs:0.6,0.796262446673691)
--(axis cs:0.6,0.8518568372361);

\path [draw=coral, very thick]
(axis cs:0.65,0.837269938399195)
--(axis cs:0.65,0.881970085818851);

\path [draw=coral, very thick]
(axis cs:0.7,0.870638162032521)
--(axis cs:0.7,0.898901834996415);

\path [draw=coral, very thick]
(axis cs:0.8,0.90870617406306)
--(axis cs:0.8,0.929647210394011);

\path [draw=coral, very thick]
(axis cs:0.9,0.928609906249488)
--(axis cs:0.9,0.94284695155216);

\path [draw=coral, very thick]
(axis cs:1,0.941222778509346)
--(axis cs:1,0.953017355822782);

\path [draw=lightgreen, very thick]
(axis cs:0.1,0.0965014565651374)
--(axis cs:0.1,0.117639423434863);

\path [draw=lightgreen, very thick]
(axis cs:0.2,0.110204666976724)
--(axis cs:0.2,0.133270213023276);

\path [draw=lightgreen, very thick]
(axis cs:0.3,0.127172959567874)
--(axis cs:0.3,0.150462480432126);

\path [draw=lightgreen, very thick]
(axis cs:0.4,0.145682237203235)
--(axis cs:0.4,0.173230642796765);

\path [draw=lightgreen, very thick]
(axis cs:0.5,0.161675119172374)
--(axis cs:0.5,0.190399200827626);

\path [draw=lightgreen, very thick]
(axis cs:0.6,0.178813102825892)
--(axis cs:0.6,0.216251617174108);

\path [draw=lightgreen, very thick]
(axis cs:0.65,0.193079609482456)
--(axis cs:0.65,0.221907910517543);

\path [draw=lightgreen, very thick]
(axis cs:0.7,0.20260276566847)
--(axis cs:0.7,0.23584347433153);

\path [draw=lightgreen, very thick]
(axis cs:0.8,0.227902754710607)
--(axis cs:0.8,0.263177885289393);

\path [draw=lightgreen, very thick]
(axis cs:0.9,0.250651663226786)
--(axis cs:0.9,0.285275456773213);

\path [draw=lightgreen, very thick]
(axis cs:1,0.274531328749673)
--(axis cs:1,0.320184591250327);

\addplot [semithick, blue, mark=asterisk, mark size=3, mark options={solid}, only marks]
table {%
0.1 0.389039663695997
0.2 0.45587922424057
0.3 0.526260298199373
0.4 0.606213875865336
0.5 0.69475106969701
0.6 0.926757498347237
0.65 0.979250417483194
0.7 0.992114854013263
0.8 0.997127151010848
0.9 0.998640689716669
1 0.999268597187455
};
\addlegendentry{AMP}
\addplot [semithick, red, dotted, mark=asterisk, mark size=3, mark options={solid}]
table {%
0.1 0.391483543885635
0.2 0.442151067688278
0.3 0.490171198969698
0.4 0.569716985515008
0.5 0.704525325227541
0.6 0.824059641954895
0.65 0.859620012109023
0.7 0.884769998514468
0.8 0.919176692228536
0.9 0.935728428900824
1 0.947120067166064
};
\addlegendentry{CVX}
\addplot [semithick, green, dotted, mark=asterisk, mark size=3, mark options={solid}]
table {%
0.1 0.10707044
0.2 0.12173744
0.3 0.13881772
0.4 0.15945644
0.5 0.17603716
0.6 0.19753236
0.65 0.20749376
0.7 0.21922312
0.8 0.24554032
0.9 0.26796356
1 0.29735796
};
\addlegendentry{IHT}
\end{axis}

\end{tikzpicture}
  \vspace{-2\baselineskip}
  \caption{$\sigma=0.1$}
\end{subfigure}
\begin{subfigure}[b]{0.32\textwidth}
  \centering
  % This file was created with tikzplotlib v0.10.1.
\begin{tikzpicture}[scale=0.62]

\definecolor{coral}{RGB}{255,127,80}
\definecolor{darkgray176}{RGB}{176,176,176}
\definecolor{green}{RGB}{0,128,0}
\definecolor{lightblue}{RGB}{173,216,230}
\definecolor{lightgray204}{RGB}{204,204,204}
\definecolor{lightgreen}{RGB}{144,238,144}

\begin{axis}[
legend cell align={left},
legend style={
  fill opacity=0.8,
  draw opacity=1,
  text opacity=1,
  at={(0.97,0.03)},
  anchor=south east,
  draw=lightgray204
},
tick align=outside,
tick pos=left,
x grid style={darkgray176},
xlabel={\(\displaystyle \delta=n/p\)},
xmajorgrids,
xmin=0.125, xmax=8.375,
xtick style={color=black},
y grid style={darkgray176},
ylabel={},
ymajorgrids,
ymin=0.12844508404101, ymax=1.04301350686402,
ytick style={color=black}
]
\addplot [semithick, blue, dashed]
table {%
0.1 0.365277923861676
0.2 0.396878700378293
0.3 0.427862276062578
0.4 0.458235948119386
0.5 0.487982075748194
0.6 0.516919561318509
0.7 0.545134094728138
0.8 0.572411889498814
0.9 0.599142147956491
1 0.624361140371856
1.1 0.648627924040528
1.2 0.671757811632503
1.3 0.694833888529464
1.4 0.715991197378406
1.5 0.736272319572272
1.6 0.754665816371956
1.7 0.772879391357914
1.8 0.789780557907651
1.9 0.805411607764069
2 0.819710631537266
2.1 0.833050004553917
2.2 0.845984773540095
2.3 0.857026550471233
2.4 0.867999887147377
2.5 0.878036703541913
2.6 0.887044958898585
2.7 0.895214394311874
2.8 0.902884764815302
2.9 0.910265015208059
3 0.916755158238744
3.1 0.922649565686425
3.2 0.928304195273914
3.3 0.933056114425877
3.4 0.937881794186309
3.5 0.942300539553774
3.6 0.946256318838111
3.7 0.949999045956711
3.8 0.953376365501665
3.9 0.956533969953421
4 0.959308776677712
4.1 0.962045918116405
4.2 0.964537191090653
4.3 0.966755423588338
4.4 0.968980958971377
4.5 0.970962819520954
4.6 0.972859315840536
4.7 0.974404012624887
4.8 0.975979639742785
4.9 0.977692983937694
5 0.978961647892365
5.1 0.980200065110014
5.2 0.981377834273689
5.3 0.982413989385294
5.4 0.983564311620661
5.5 0.984430676539658
5.6 0.985442998559089
5.7 0.986212768317988
5.8 0.987067015409794
5.9 0.98772808355753
6 0.988469337213461
6.1 0.989073881213885
6.2 0.989741700174506
6.3 0.990256085915156
6.4 0.990767973642877
6.5 0.991334270437752
6.6 0.991831247205402
6.7 0.992269144164146
6.8 0.99266348819405
6.9 0.99299525299524
7 0.993388030894329
7.1 0.993727630292459
7.2 0.994035859005326
7.3 0.994335877258922
7.4 0.994704150275344
7.5 0.994928277213971
7.6 0.99516938858466
7.7 0.995488098370274
7.8 0.995652707158074
7.9 0.995878644435148
8 0.996084407462083
};
\addlegendentry{SE}
\path [draw=lightblue, very thick]
(axis cs:0.5,0.467059314286518)
--(axis cs:0.5,0.503000083798745);

\path [draw=lightblue, very thick]
(axis cs:1,0.603263332124828)
--(axis cs:1,0.65053941976902);

\path [draw=lightblue, very thick]
(axis cs:1.5,0.721108189022945)
--(axis cs:1.5,0.767436375721971);

\path [draw=lightblue, very thick]
(axis cs:2,0.808008878305889)
--(axis cs:2,0.849673940675011);

\path [draw=lightblue, very thick]
(axis cs:2.5,0.875358854800623)
--(axis cs:2.5,0.911503589226701);

\path [draw=lightblue, very thick]
(axis cs:3,0.915921844677431)
--(axis cs:3,0.942647121455138);

\path [draw=lightblue, very thick]
(axis cs:3.5,0.944609628244464)
--(axis cs:3.5,0.965305641536539);

\path [draw=lightblue, very thick]
(axis cs:4,0.961746892388531)
--(axis cs:4,0.977186692936406);

\path [draw=lightblue, very thick]
(axis cs:4.5,0.973569896970004)
--(axis cs:4.5,0.9888772457423);

\path [draw=lightblue, very thick]
(axis cs:5,0.980730364470341)
--(axis cs:5,0.993083867642158);

\path [draw=lightblue, very thick]
(axis cs:5.5,0.987855762091749)
--(axis cs:5.5,0.996305035817669);

\path [draw=lightblue, very thick]
(axis cs:6,0.990492073638534)
--(axis cs:6,0.998155277084026);

\path [draw=lightblue, very thick]
(axis cs:6.5,0.993260277618384)
--(axis cs:6.5,0.999166015327874);

\path [draw=lightblue, very thick]
(axis cs:7,0.994698193097563)
--(axis cs:7,1.00007706859367);

\path [draw=lightblue, very thick]
(axis cs:7.5,0.995494886170005)
--(axis cs:7.5,1.00029792865341);

\path [draw=lightblue, very thick]
(axis cs:8,0.996615537305598)
--(axis cs:8,1.00024687541595);

\path [draw=coral, very thick]
(axis cs:0.5,0.382404240777499)
--(axis cs:0.5,0.435506643777366);

\path [draw=coral, very thick]
(axis cs:1,0.582701285786804)
--(axis cs:1,0.640351657920853);

\path [draw=coral, very thick]
(axis cs:1.5,0.720284375236177)
--(axis cs:1.5,0.763049191741931);

\path [draw=coral, very thick]
(axis cs:2,0.796890828191543)
--(axis cs:2,0.831046731528439);

\path [draw=coral, very thick]
(axis cs:2.5,0.847914745020423)
--(axis cs:2.5,0.873108895288515);

\path [draw=coral, very thick]
(axis cs:3,0.876512619329428)
--(axis cs:3,0.896530848076491);

\path [draw=coral, very thick]
(axis cs:3.5,0.89876368271148)
--(axis cs:3.5,0.915488115062978);

\path [draw=coral, very thick]
(axis cs:4,0.912867633319886)
--(axis cs:4,0.925357340811593);

\path [draw=coral, very thick]
(axis cs:4.5,0.924560336456959)
--(axis cs:4.5,0.938485243428912);

\path [draw=coral, very thick]
(axis cs:5,0.934144485442574)
--(axis cs:5,0.944872601657897);

\path [draw=coral, very thick]
(axis cs:5.5,0.941417904387318)
--(axis cs:5.5,0.950343186566512);

\path [draw=coral, very thick]
(axis cs:6,0.94616212185044)
--(axis cs:6,0.955058795391017);

\path [draw=coral, very thick]
(axis cs:6.5,0.952278915289811)
--(axis cs:6.5,0.959523855408836);

\path [draw=coral, very thick]
(axis cs:7,0.955194073698554)
--(axis cs:7,0.961766384366315);

\path [draw=coral, very thick]
(axis cs:7.5,0.957996479385211)
--(axis cs:7.5,0.964703939110249);

\path [draw=coral, very thick]
(axis cs:8,0.961161101055174)
--(axis cs:8,0.967071835295805);

\path [draw=lightgreen, very thick]
(axis cs:0.5,0.170016375987511)
--(axis cs:0.5,0.201946584012489);

\path [draw=lightgreen, very thick]
(axis cs:1,0.276598653504844)
--(axis cs:1,0.331764386495156);

\path [draw=lightgreen, very thick]
(axis cs:1.5,0.44110564167471)
--(axis cs:1.5,0.53173483832529);

\path [draw=lightgreen, very thick]
(axis cs:2,0.657342617517199)
--(axis cs:2,0.742981222482801);

\path [draw=lightgreen, very thick]
(axis cs:2.5,0.796662358701322)
--(axis cs:2.5,0.864836281298678);

\path [draw=lightgreen, very thick]
(axis cs:3,0.862399141553541)
--(axis cs:3,0.911923418446459);

\path [draw=lightgreen, very thick]
(axis cs:3.5,0.912041311072068)
--(axis cs:3.5,0.947569008927932);

\path [draw=lightgreen, very thick]
(axis cs:4,0.935745019328055)
--(axis cs:4,0.965937380671945);

\path [draw=lightgreen, very thick]
(axis cs:4.5,0.953462789825911)
--(axis cs:4.5,0.982766730174089);

\path [draw=lightgreen, very thick]
(axis cs:5,0.966818462697112)
--(axis cs:5,0.988932337302888);

\path [draw=lightgreen, very thick]
(axis cs:5.5,0.976828132279351)
--(axis cs:5.5,0.994275947720649);

\path [draw=lightgreen, very thick]
(axis cs:6,0.982344563251175)
--(axis cs:6,0.998135036748825);

\path [draw=lightgreen, very thick]
(axis cs:6.5,0.986034685267847)
--(axis cs:6.5,0.999294594732153);

\path [draw=lightgreen, very thick]
(axis cs:7,0.989590945359631)
--(axis cs:7,1.00099657464037);

\path [draw=lightgreen, very thick]
(axis cs:7.5,0.991934665082485)
--(axis cs:7.5,1.00144221491752);

\path [draw=lightgreen, very thick]
(axis cs:8,0.993053124226738)
--(axis cs:8,1.00119935577326);

\addplot [semithick, blue, mark=asterisk, mark size=3, mark options={solid}, only marks]
table {%
0.5 0.485029699042631
1 0.626901375946924
1.5 0.744272282372458
2 0.82884140949045
2.5 0.893431222013662
3 0.929284483066284
3.5 0.954957634890501
4 0.969466792662468
4.5 0.981223571356152
5 0.98690711605625
5.5 0.992080398954709
6 0.99432367536128
6.5 0.996213146473129
7 0.997387630845617
7.5 0.997896407411706
8 0.998431206360773
};
\addlegendentry{AMP}
\addplot [semithick, red, dotted, mark=asterisk, mark size=3, mark options={solid}]
table {%
0.5 0.408955442277432
1 0.611526471853828
1.5 0.741666783489054
2 0.813968779859991
2.5 0.860511820154469
3 0.886521733702959
3.5 0.907125898887229
4 0.91911248706574
4.5 0.931522789942935
5 0.939508543550235
5.5 0.945880545476915
6 0.950610458620729
6.5 0.955901385349323
7 0.958480229032435
7.5 0.96135020924773
8 0.96411646817549
};
\addlegendentry{CVX}
\addplot [semithick, green, dotted, mark=asterisk, mark size=3, mark options={solid}]
table {%
0.5 0.18598148
1 0.30418152
1.5 0.48642024
2 0.70016192
2.5 0.83074932
3 0.88716128
3.5 0.92980516
4 0.9508412
4.5 0.96811476
5 0.9778754
5.5 0.98555204
6 0.9902398
6.5 0.99266464
7 0.99529376
7.5 0.99668844
8 0.99712624
};
\addlegendentry{IHT}
\end{axis}

\end{tikzpicture}
  \vspace{-2\baselineskip}
  \caption{$\sigma=0.3$}
\end{subfigure}
\caption{AMP vs. other algorithms for pooled data: squared overlap vs.~$\delta$, with $L=3$ and $\pi=[1/3,1/3,1/3]$. The plots are similar for the case of non-uniform priors.}
\label{fig:AMP_v_others_noiseless}
\end{figure}
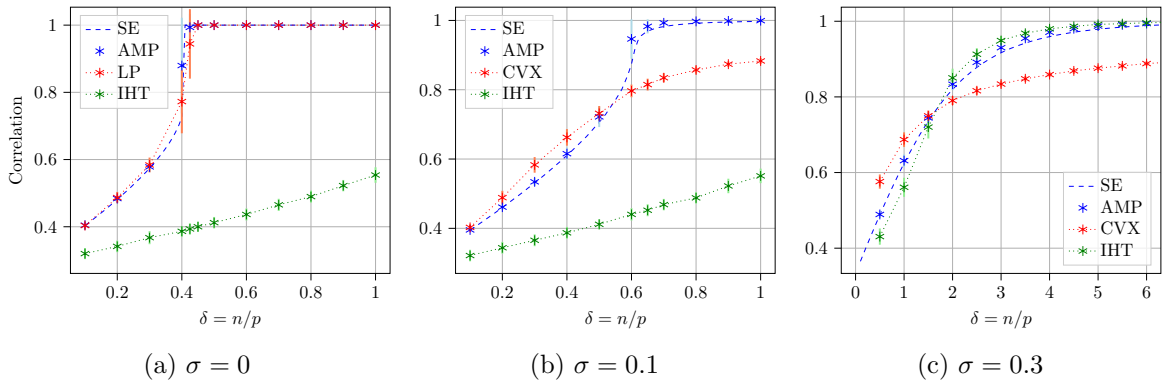

\begin{figure}[t]
\centering
\begin{subfigure}[b]{0.45\textwidth}
  \centering
  \include{new_corr_figs/pool_fig3a_new}
  \vspace{-2\baselineskip}
  \caption{$L=2$}
\end{subfigure}
\begin{subfigure}[b]{0.45\textwidth}
  \centering
  \include{new_corr_figs/pool_fig3b_new}
  \vspace{-2\baselineskip}
  \caption{$L=3$}
\end{subfigure}
\caption{AMP for pooled data: squared overlap vs.~$\delta$ for varying $\sigma$ with uniform prior $\pi$.}
\label{fig:noisy}
\end{figure}

\paragraph{Simulation results} Figure \ref{fig:noiseless} shows the performance of the matrix-AMP algorithm in the noiseless setting, for various $\pi$. The squared overlap is plotted as function of the sampling ratio $\delta=n/p$, for different categorical priors. The state evolution predictions closely match the performance of the matrix-AMP algorithm, validating the result of Theorem \ref{thm:GAMP}. As expected, the overlap improves with increasing $\delta$. Furthermore, for both $L=2$ and $L=3$, the more non-uniform the $\pi$, the better the performance of the matrix-AMP algorithm. Figure \ref{fig:AMP_v_others_noiseless} shows how the matrix-AMP algorithm compares with LP and IHT for different noise levels. We observe that the matrix-AMP algorithm consistently has the best performance, for all levels of noise $\sigma$ and sampling ratio $\delta$ considered in our experiments.
% We observe that the optimization-based methods (LP for noiseless, CVX for noisy) outperforms the matrix-AMP algorithm when there is noise present and $\delta$ is small, indicating the sub-optimality of AMP in those setting. For noise level $\sigma=0.3$, IHT slightly outperforms the AMP algorithm for larger values of $\delta$.
We note that there are currently no theoretical guarantees for LP, CVX, or IHT applied to pooled data. Furthermore, AMP has a few advantages: i) Under the assumption of Gaussian noise, the Bayes-optimal denoiser does not require the noise variance (see Appendix \ref{sec:imp_details_pooled_data}), and ii) AMP is significantly faster than the optimization-based methods for larger values of $n$ and $p$.
Figure \ref{fig:noisy} shows the performance curves of the matrix-AMP algorithm for different values of noise level $\sigma$; as expected, overlap improves as $\sigma$ decreases.

\begin{remark}
  In the journal version \cite{tan24} of this paper, instead of the squared overlap, we used the correlation $\frac{1}{p}\Tr\big((B^k)^\top B\big)$ in the simulations of pooled data. This led to the convex estimator performing slightly better than the matrix-AMP algorithm for noisy pooled data when the sampling ratio $\delta$ is small. It turns out that this is an artifact of the correlation, namely, the correlation is sensitive to the scaling of $\hB^k$ whereas the squared overlap is insensitive to the scaling of $\hB^k$. In all our experiments, we also found AMP has lower MSE compared to the convex and IHT estimators.  We would like to thank Galen Reeves and Kevin Xu for highlighting the issue of scaling.  
\end{remark} 

So far, we have assumed that the matrix-AMP algorithm has access to the empirical category proportions $\hat{\pi}$. We now study the performance when  $\hat{\pi}$ is not known exactly, and the algorithm uses an estimate $\pi_{\text{est}}$. We consider $L=2$ with  $\pi_{\text{est}}=[\hat{\pi}_1+\epsilon,\hat{\pi}_2-\epsilon]$, where $\epsilon >0$ quantifies the accuracy of the estimate.  Figure \ref{fig:est_pi} shows the performance for $\epsilon=0.01$ and $\epsilon=0.05$. We observe that the performance of the matrix-AMP algorithm is very sensitive to the accuracy of the estimate $\pi_{\text{est}}$, i.e., a small increase in $\epsilon$ can substantially degrade the performance. The underlying reason that AMP requires each entry of $\tY$ to be of constant order (as $p$ grows) with high probability.  This is true if $\hat{\pi}$ is known exactly; we show in Appendix \ref{sec:scaling_deriv} that $\E[\tY_{il}]=0$ and $\Var[\tY_{il}]=\frac{p\pi_l}{n}=\Theta(1)$, for $i \in [n], l \in [L]$. 
However, a constant shift of $\epsilon$ in the entries of $\hat{\pi}$ causes the  the entries of $\tY$ to jump from $\Theta(1)$ to $\Theta(\sqrt{p})$ -- see Appendix \ref{sec:scaling_deriv} for details. This indicates that it is worth performing additional (non-random) test(s) to obtain an accurate estimate of $\hat{\pi}$ before running the matrix-AMP algorithm.

\begin{figure}[t]
\centering
% This file was created with tikzplotlib v0.10.1.
\begin{tikzpicture}[scale=1]

\definecolor{coral}{RGB}{255,127,80}
\definecolor{darkgray176}{RGB}{176,176,176}
\definecolor{green}{RGB}{0,128,0}
\definecolor{lightblue}{RGB}{173,216,230}
\definecolor{lightgray204}{RGB}{204,204,204}
\definecolor{lightgreen}{RGB}{144,238,144}

\begin{axis}[
legend cell align={left},
legend style={
  fill opacity=0.8,
  draw opacity=1,
  text opacity=1,
  at={(0.03,0.97)},
  anchor=north west,
  draw=lightgray204
},
tick align=outside,
tick pos=left,
x grid style={darkgray176},
xlabel={\(\displaystyle \delta=n/p\)},
xmajorgrids,
xmin=0.055, xmax=1.045,
xtick style={color=black},
y grid style={darkgray176},
ylabel={Overlap},
ymajorgrids,
ymin=0.377199606697121, ymax=1.05805237674326,
ytick style={color=black}
]
\path [draw=lightblue, very thick]
(axis cs:0.1,0.539709068676808)
--(axis cs:0.1,0.559520195181957);

\path [draw=lightblue, very thick]
(axis cs:0.2,0.58252149019317)
--(axis cs:0.2,0.615853597755849);

\path [draw=lightblue, very thick]
(axis cs:0.3,0.639270807486914)
--(axis cs:0.3,0.670236651824084);

\path [draw=lightblue, very thick]
(axis cs:0.4,0.69978295204449)
--(axis cs:0.4,0.749454048571342);

\path [draw=lightblue, very thick]
(axis cs:0.5,0.810726218812206)
--(axis cs:0.5,1.02690693428899);

\path [draw=lightblue, very thick]
(axis cs:0.6,1)
--(axis cs:0.6,1);

\path [draw=lightblue, very thick]
(axis cs:0.7,1)
--(axis cs:0.7,1);

\path [draw=lightblue, very thick]
(axis cs:0.8,1)
--(axis cs:0.8,1);

\path [draw=lightblue, very thick]
(axis cs:0.9,1)
--(axis cs:0.9,1);

\path [draw=lightblue, very thick]
(axis cs:1,1)
--(axis cs:1,1);

\path [draw=coral, very thick]
(axis cs:0.1,0.526483935508361)
--(axis cs:0.1,0.551917623949018);

\path [draw=coral, very thick]
(axis cs:0.2,0.557876849538037)
--(axis cs:0.2,0.595432822346675);

\path [draw=coral, very thick]
(axis cs:0.3,0.595869743016167)
--(axis cs:0.3,0.632198866859257);

\path [draw=coral, very thick]
(axis cs:0.4,0.635746441944884)
--(axis cs:0.4,0.680103023069901);

\path [draw=coral, very thick]
(axis cs:0.5,0.668166060013925)
--(axis cs:0.5,0.725041765832378);

\path [draw=coral, very thick]
(axis cs:0.6,0.712713919296966)
--(axis cs:0.6,0.768622891051032);

\path [draw=coral, very thick]
(axis cs:0.7,0.743474719526259)
--(axis cs:0.7,0.830462378792025);

\path [draw=coral, very thick]
(axis cs:0.8,0.793841704674699)
--(axis cs:0.8,0.877920644087636);

\path [draw=coral, very thick]
(axis cs:0.9,0.842722680681687)
--(axis cs:0.9,0.911162037062779);

\path [draw=coral, very thick]
(axis cs:1,0.898609305475511)
--(axis cs:1,0.954251622199397);

\path [draw=lightgreen, very thick]
(axis cs:0.1,0.42877154172654)
--(axis cs:0.1,0.46780470661265);

\path [draw=lightgreen, very thick]
(axis cs:0.2,0.419767142670382)
--(axis cs:0.2,0.4641252271145);

\path [draw=lightgreen, very thick]
(axis cs:0.3,0.420391374533707)
--(axis cs:0.3,0.467369186312067);

\path [draw=lightgreen, very thick]
(axis cs:0.4,0.428828819380468)
--(axis cs:0.4,0.46939842769818);

\path [draw=lightgreen, very thick]
(axis cs:0.5,0.433227185281009)
--(axis cs:0.5,0.479358156881609);

\path [draw=lightgreen, very thick]
(axis cs:0.6,0.43954593428544)
--(axis cs:0.6,0.490018110192165);

\path [draw=lightgreen, very thick]
(axis cs:0.7,0.451764026181902)
--(axis cs:0.7,0.500180216831681);

\path [draw=lightgreen, very thick]
(axis cs:0.8,0.456220658215766)
--(axis cs:0.8,0.50357163211023);

\path [draw=lightgreen, very thick]
(axis cs:0.9,0.461041941545406)
--(axis cs:0.9,0.507950911887376);

\path [draw=lightgreen, very thick]
(axis cs:1,0.470380473451606)
--(axis cs:1,0.521910197844338);

\addplot [semithick, blue, dotted, mark=asterisk, mark size=3, mark options={solid}]
table {%
0.1 0.549614631929383
0.2 0.59918754397451
0.3 0.654753729655499
0.4 0.724618500307916
0.5 0.918816576550599
0.6 1
0.7 1
0.8 1
0.9 1
1 1
};
\addlegendentry{AMP, known $\hat{\pi}$}
\addplot [semithick, red, dotted, mark=asterisk, mark size=3, mark options={solid}]
table {%
0.1 0.539200779728689
0.2 0.576654835942356
0.3 0.614034304937712
0.4 0.657924732507392
0.5 0.696603912923151
0.6 0.740668405173999
0.7 0.786968549159142
0.8 0.835881174381167
0.9 0.876942358872233
1 0.926430463837454
};
\addlegendentry{AMP, $\hat{\pi}\pm 0.01$}
\addplot [semithick, green, dotted, mark=asterisk, mark size=3, mark options={solid}]
table {%
0.1 0.448288124169595
0.2 0.441946184892441
0.3 0.443880280422887
0.4 0.449113623539324
0.5 0.456292671081309
0.6 0.464782022238803
0.7 0.475972121506792
0.8 0.479896145162998
0.9 0.484496426716391
1 0.496145335647972
};
\addlegendentry{AMP, $\hat{\pi}\pm 0.05$}
\end{axis}

\end{tikzpicture}
\vspace{-2\baselineskip}
\caption{Pooled data with mismatched proportions: squared overlap vs.~$\delta$ with $\pi=[0.5,0.5]$ and $\sigma=0$. Red curves and green curves show AMP performance with the estimate $\pi_{\text{est}}=[\hat{\pi}_1+\epsilon,\hat{\pi}_2-\epsilon]$ for $\epsilon=0.01, 0.05$.}
\label{fig:est_pi}
\end{figure}
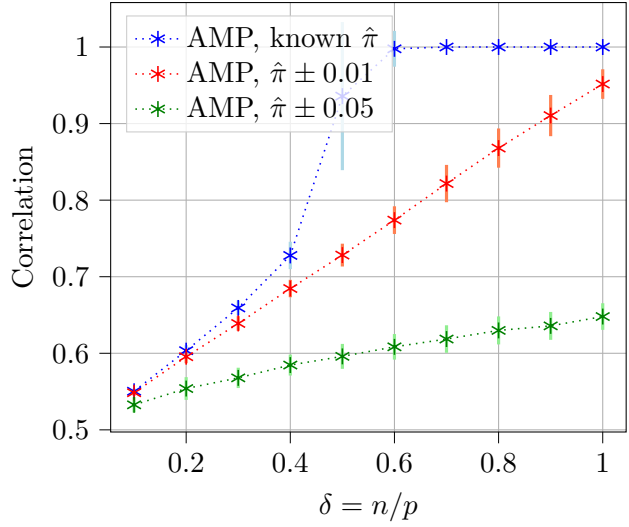

%%%%%
\section{AMP for Quantitative Group Testing}
\label{sec:AMP_QGT}

Recall that Quantitative Group Testing (QGT) is a special case of the pooled data problem with $L=2$, where items are either defective or non-defective. As mentioned in Section \ref{subsec:pooled_setup}, in QGT the signal is often represented by a vector $ \beta \in \{0,1\}^p$, where $1$ corresponds to a `defective' item and $0$ corresponds to a non-defective item. Using a vector instead of a $p \times 2$ matrix to represent the QGT signal has two major benefits: (i) It allows us to present guarantees for the false positive rate (FPR) and false negative rate (FNR) in a simpler way, and (ii) It provides a simpler and more efficient AMP algorithm.

We use $\beta$ (instead of $B$)  for the signal vector to avoid confusion between the pooled data problem and QGT.  The design matrix is the same as that for pooled data, i.e., $X_{ij}\sim_{\iid}\text{Bernoulli}(\alpha)$ where $\alpha\in(0,1)$. 
The vector of test outcomes $Y \in \reals^n$  
is given by
\begin{align}
 Y= X \beta \, + \, \Psi, \quad i \in [n],
    \label{eq:QGT}
\end{align}
where $\Psi \in \reals^n$ is the additive noise. Recall that $\hat{\pi}$ is the proportion of defective items in the population.  Substituting the decomposition of $X$ in \eqref{eq:X_decomp} into \eqref{eq:QGT} and rearranging, we get
\begin{align*}
    \frac{Y-\alpha\bs{1}_n\bs{1}_p^\top\beta}{\sqrt{n\alpha(1-\alpha)}}=\tX\beta+\frac{\Psi}{\sqrt{n\alpha(1-\alpha)}}.
\end{align*}
Since $\alpha\bs{1}_n\bs{1}_p^\top \beta=\alpha p \left(\frac{1}{p}\bs{1}_n\bs{1}_p^\top \beta\right)=\alpha p[\hat{\pi},\dots,\hat{\pi}]^\top$, we can rewrite the noisy model as
\begin{align}
    \frac{Y_i-\alpha p\hat{\pi}}{\sqrt{n\alpha(1-\alpha)}}
    =\tX_{i,:}^\top\beta+\frac{\Psi_i}{\sqrt{n\alpha(1-\alpha)}},
    \label{eq:QGT_scaling}
\end{align}
which can be further rewritten as
\begin{align}
    \tY_i=\tX_{i,:}^\top \, \beta \,  + \, \tPsi_i, \quad i \in [n], \quad \text{ or } \quad    \tY=\tX\beta+\tPsi.
    \label{eq:noisy_QGT_rescaled}
\end{align}
As discussed in Section \ref{sec:AMP_pooled_data}, choosing $\alpha=0.5$ minimizes the  variance of the rescaled noise $\tPsi$. The noiseless QGT model is obtained by setting $\Psi =\tPsi=0_n$.

\subsection{Algorithm} 

As in the pooled data problem, we assume knowledge of $\hat{\pi}$ and apply the AMP algorithm with the modified data $\tX$ and $\tY$. To differentiate the AMP algorithm for QGT from the pooled data setting, we change  notation, and  replace $B, B^k$, $\hB^k$, with $\beta, \beta^k$, $\hat{\beta}^k$, respectively, for $k \ge 1$. The notation for state evolution parameters is also changed to emphasize that in QGT, they are scalars. We let $\mu_\beta^k:=\Mu_\beta^k$, $(\sigma_\beta^t)^2:=\big\{\Tau_\beta^k\big\}_{t,t}$ for $t\in[k]$, $\mu_\Theta^k:=\Mu_\Theta^k$, and $(\sigma_\Theta^t)^2:=\big\{\Tau_\Theta^k\big\}_{t+1,t+1}$ for $t\in\{0,\dots,k\}$. In these definitions, the quantities on the right  are entries of the original state evolution matrices in \eqref{eq:SE_Mk1B}-\eqref{eq:Tk_theta_def}. 

We assume that the empirical distributions of $\beta \in \reals^{p}$ and $\tPsi \in \reals^n$ converge to well-defined limits as $p,n \to \infty$.
Specifically, assume that  $\beta  \stackrel{W}{\rightarrow} \bar{\beta}$ and 
$\tPsi  \stackrel{W}{\rightarrow} \bar{\Psi}$. Here $\bar{\beta}$ is a binary random variable whose law represents the limiting proportion of $1$s in $\beta$. 

The AMP algorithm is initialized with a random $\hat{\beta}^0$ whose components are generated i.i.d.~according to the law of $\bar{\beta}$. Then the initial covariance $\Sigma^0$ in \eqref{eq:Sig0_def} can be computed as:
\begin{align*}
 \frac{1}{n}
            \begin{bmatrix}
                \beta^\top \beta & \beta^\top\hat{\beta}^0 \\
                (\hat{\beta}^0)^\top \beta & (\hat{\beta}^0)^\top\hat{\beta}^0
            \end{bmatrix}
 \, \stackrel{a.s.}{\rightarrow} \, 
  \Sigma^0 =  \frac{1}{\delta} \begin{bmatrix}
        \E\{ \bar{\beta}^2 \} & (\E\{ \bar{\beta}\})^2  \\
       (\E\{ \bar{\beta}\})^2 &  \E\{ \bar{\beta}^2 \}
    \end{bmatrix}.
\end{align*}

With these assumptions, Theorem \ref{thm:GAMP} directly gives the following state evolution result.

\begin{theorem} \label{thm:GAMP_vec}
Consider the AMP algorithm in \eqref{eq:GAMP} for the QGT model in \eqref{eq:noisy_QGT_rescaled}, with the  notational changes,  assumptions, and initialization described above.  Assume that the denoising functions $f_{k+1}, g_k$ used in the AMP algorithm satisfy Assumption 
\textbf{(A2)} in Section \ref{sec:AMP_algo}. Then for each $k \ge 0$,
\begin{align}
    \big(\beta,\hbeta^0,\beta^1,\dots,\beta^{k+1}\big)
    &\stackrel{W_2}{\rightarrow}
    \big(\bar{\beta},\bar{\beta}^0,\mu^{1}_{\beta}\bar{\beta}+G^{1}_\beta,\dots,\mu^{k+1}_{\beta}\bar{\beta}+G^{k+1}_\beta\big) \label{eq:GAMP_vec1} \\
    \big(\tPsi,\Theta,\Theta^0,\dots,\Theta^k\big)
    &\stackrel{W_2}{\rightarrow}
    \big(\bar{\Psi},Z,\mu^{0}_{\Theta}Z+G^{0}_\Theta,\dots,\mu^{k}_{\Theta}Z+G^{k}_\Theta\big),
    \label{eq:GAMP_vec2} 
\end{align}
almost surely as $n,p \to \infty$ with $n/p \to \delta$.
Here $G_\beta^1,\dots,G_\beta^{k+1}\sim\normal(0,\Tau_\beta^{k+1})$ is independent of $\bar{\beta}$, and $G_\Theta^0,\dots,G_\Theta^k\sim\normal(0,\Tau_\Theta^k)$ is independent of $(Z,\bar{\Psi})$. 
\end{theorem}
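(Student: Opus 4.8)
The plan is to derive Theorem~\ref{thm:GAMP_vec} as a direct corollary of Theorem~\ref{thm:GAMP}, specialized to the case $L = 2$ with the vector-valued signal $\beta \in \reals^p$ replacing the matrix $B$. First I would observe that the QGT model \eqref{eq:noisy_QGT_rescaled} is literally the matrix GLM \eqref{eq:matrix-GLM} with $L = 1$ (a scalar ``signal value'' per coordinate), $L_\Psi = 1$, $\Lout = 1$, and $q(z, v) = z + v$: indeed the rescaled design $\tX$ defined via \eqref{eq:X_ij_decomp} is a generalized white noise matrix with variance profile $S_{ij} = 1$ (this was already checked in Section~\ref{sec:AMP_pooled_data} and the argument is identical here since only $\alpha$ and the Bernoulli structure matter). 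So all the matrix quantities $\Mu_B^k, \Tau_B^k, \Mu_\Theta^k, \Tau_\Theta^k, \Sigma^k$ collapse to scalars, which is exactly the notational substitution $\mu_\beta^k, (\sigma_\beta^k)^2, \mu_\Theta^k, (\sigma_\Theta^k)^2$ recorded before the statement.

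Next I would verify the hypotheses of Theorem~\ref{thm:GAMP} one by one in this reduced setting. For \textbf{(A1)}: the assumption $\beta \stackrel{W}{\to} \bar\beta$ with $\bar\beta$ a bounded (binary) random variable gives finite moments of all orders trivially, and with the i.i.d.\ initialization $\hat\beta^0 \sim \mathrm{law}(\bar\beta)$ independent of $\beta$, the block covariance converges almost surely (by the law of large numbers applied coordinatewise) to the stated $\Sigma^0 = \frac{1}{\delta}\bigl[\begin{smallmatrix} \E\bar\beta^2 & (\E\bar\beta)^2 \\ (\E\bar\beta)^2 & \E\bar\beta^2\end{smallmatrix}\bigr]$; density of polynomials in the relevant $L^2$ spaces holds because $\bar\beta$ (hence the noise, assumed sub-Gaussian) has all moments and is compactly/sub-exponentially supported. \textbf{(A2)} is assumed in the statement. \textbf{(A3)}: $\sqrt{n}\tX$ has i.i.d.\ bounded (hence sub-Gaussian) entries, so the operator-norm bound $\|\tX\|_{\mathrm{op}} < C$ a.s.\ for large $n$ follows from \cite[Theorem 4.4.5]{Ver18} plus Borel--Cantelli exactly as in the pooled data verification; the second part of \textbf{(A3)} is vacuous since $S_{ij} \equiv 1$. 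Finally, positive definiteness of $\Tau_B^1$ becomes positive definiteness of the scalar $(\sigma_\beta^1)^2 > 0$, which I would note holds under the standing assumptions on the denoising functions (or simply carry it as an implicit nondegeneracy condition, since $(\sigma_\beta^1)^2 = \E[\tg_0(\cdot)^2] > 0$ unless $g_0$ is degenerate).

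Having checked the hypotheses, Theorem~\ref{thm:GAMP} gives the joint Wasserstein-$2$ convergence for $(\beta, \hat\beta^0, \beta^{k+1})$ and $(\tPsi, \Theta, \Theta^k)$; to get the \emph{multi-iterate} statements \eqref{eq:GAMP_vec1}--\eqref{eq:GAMP_vec2} I would invoke the remark immediately following Theorem~\ref{thm:GAMP} that the same proof yields the joint law of $(\beta, \hat\beta^0, \beta^1, \ldots, \beta^{k+1})$ and $(\tPsi, \Theta, \Theta^1, \ldots, \Theta^k)$, with the limiting vector having the Gaussian-plus-signal form dictated by the state evolution recursion; specializing the $\Mu, \Tau$ matrices to their scalar counterparts produces exactly the right-hand sides displayed in the theorem, with $G_\beta^{k+1} \sim \normal(0,(\sigma_\beta^{k+1})^2)$ independent of $\bar\beta$ and $G_\Theta^k \sim \normal(0,(\sigma_\Theta^k)^2)$ independent of $(Z, \bar\Psi)$.

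There is essentially no serious obstacle here—the theorem is a corollary—so the ``hard part'' is purely bookkeeping: making sure the $L = 1$ scalarization of every object in Section~\ref{sec:AMP_GWN} is stated cleanly (in particular that $Z \sim \normal(0, \frac{1}{\delta}\E[\bar\beta^2])$ is the correct limiting law of the entries of $\Theta = \tX\beta$, and that the scalar $\mu_\Theta^k, \sigma_\Theta^k$ obtained from $\Sigma^k$ via the conditional-Gaussian formulas match the notational definitions given before the statement), and confirming that no hypothesis of Theorem~\ref{thm:GAMP} silently requires $L \ge 2$. If one wanted to be careful about the multi-iterate extension rather than citing the remark, the alternative is to re-run the induction in the proof of Theorem~\ref{thm:GAMP} tracking all iterates; but since that remark is already asserted in the excerpt, I would simply cite it.
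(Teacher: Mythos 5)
Your proposal matches the paper's own treatment: the paper likewise obtains Theorem~\ref{thm:GAMP_vec} as a direct specialization of Theorem~\ref{thm:GAMP} to the rescaled QGT model, with the scalar state-evolution notation, the i.i.d.\ initialization giving the stated $\Sigma^0$, assumption \textbf{(A2)} assumed, \textbf{(A3)} verified exactly as in the pooled-data case, and the multi-iterate form covered by the remark following Theorem~\ref{thm:GAMP}. Your verification of the hypotheses is consistent with what the paper does, so the argument is correct and essentially identical in route.
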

%%%
%%%
The optimal choices for the AMP denoising functions are given by $f_k^*$ and $g_k^*$ in \eqref{eq:fk_opt_def}-\eqref{eq:gk_opt_def}. 

\paragraph{Performance measures} Theorem \ref{thm:GAMP_vec} allows us to compute the limiting values of performance measures such as
the squared overlap, defined as:
\begin{align}
    \frac{\langle\hbeta^k,\beta\rangle^2}{\|\hbeta^k\|_2^2\|\beta\|_2^2}
    \stackrel{a.s.}{\rightarrow}
    \frac{(\E[f_k(\mu_\beta^k\bar{\beta}+G_\beta^k)\cdot\bar{\beta}])^2}{\E[f_k(\mu_\beta^k\bar{\beta}+G_\beta^k)^2]\cdot\E[\bar{\beta}^2]}.
    \label{eq:norm_corr_vec}
\end{align}
In practical applications of QGT, we may want to understand the false positive and false negative rates (FPR and FNR) separately, or weigh one more than the other based on the practical needs. We proceed to investigate the FPR and FNR of AMP estimates for QGT.  To get a final $K$th estimate in the signal domain $\{0,1\}^p$, we use thresholding to output a hard decision. Theorem \ref{thm:GAMP_vec} guarantees that the empirical distribution of $\beta^K$ converges to the law of $\mu_\beta^K\bar{\beta}+G_\beta^K$. We define the hard decision to be
\begin{align}
    \mathds{1}\left\{\hbeta^K>\zeta\cdot1_p\right\}
    =\mathds{1}\left\{f_K\left(\beta^K\right)>\zeta\cdot 1_p\right\},  \label{eq:threshold_def}
\end{align}
where $1_p$ is the all-ones vector of length $p$, and the indicator function is applied component-wise to $\hbeta^K$. The function \eqref{eq:threshold_def} sets the entries of $\hbeta^K$ larger than the threshold $\zeta$ to one (i.e., defective) and the others to zero (i.e., non-defective). Based on the above function, let us denote the estimated defective set by
\begin{align}
    \widehat{\mathcal{S}}=\left\{j:\hbeta_j^K>\zeta\right\}.
\end{align} 
The FPR and the FNR are then defined as follows:
\begin{align}
    \text{FPR}&=
    \frac{\sum_{j=1}^p\mathds{1}\{\beta_j=0\cap j\in\widehat{\mathcal{S}}\}}{p-\sum_{j=1}^p\beta_j}, \quad
    \text{ and } \quad
    \text{FNR}=
    \frac{\sum_{j=1}^p\mathds{1}\{\beta_j=1\cap j\notin\widehat{\mathcal{S}}\}}{\sum_{j=1}^p\beta_j}.
    \label{eq:FPR_and_FNR}
\end{align}

\begin{corollary} \label{cor:FPR_FNR}
Under the same assumptions as  Theorem \ref{thm:GAMP_vec}, applying the thresholding function in \eqref{eq:threshold_def} in the final iteration $K > 1$, we have
\begin{align}
    \textup{FPR}
    \stackrel{a.s.}{\rightarrow}
    \mb{P}\big[f_K(G_\beta^K)>\zeta\big] \quad 
    \text{ and } \quad 
    \textup{FNR}
    \stackrel{a.s.}{\rightarrow}
    \mb{P}\big[f_K(\mu_\beta^K+G_\beta^K)\leq\zeta\big].
\end{align}
\end{corollary}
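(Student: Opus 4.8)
The plan is to write both $\textup{FPR}$ and $\textup{FNR}$ as ratios of coordinate-wise empirical averages over $j\in[p]$ and pass to the limit using the joint convergence $(\beta,\beta^K)\stackrel{W_2}{\rightarrow}(\bar\beta,\,\mu_\beta^K\bar\beta+G_\beta^K)$, which is a marginal of \eqref{eq:GAMP_vec1} (taking $k+1=K$ and reading off the first and last coordinates). Here $G_\beta^K\sim\normal(0,(\sigma_\beta^K)^2)$ is independent of the binary variable $\bar\beta$. Note that the hard threshold $f_K$ of \eqref{eq:f_K_def} plays no role in this step: the iterate $\beta^K$ is formed before $f_K$ is applied, so \eqref{eq:GAMP_vec1} with $k+1=K$ only requires $f_1,\dots,f_{K-1}$ and $g_0,\dots,g_{K-1}$, which satisfy \textbf{(A2)} by hypothesis.

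Set $c:=\zeta\mu_\beta^K$. Since each $\beta_j\in\{0,1\}$ exactly, $\mathds{1}\{\beta_j=0\}=1-\beta_j$ and $\mathds{1}\{\beta_j=1\}=\beta_j$, and $j\notin\widehat{\mathcal{S}}$ iff $\beta_j^K\le c$; hence from \eqref{eq:FPR_and_FNR},
\begin{align*}
\textup{FPR}=\frac{\frac1p\sum_{j=1}^p(1-\beta_j)\,\mathds{1}\{\beta_j^K>c\}}{\frac1p\sum_{j=1}^p(1-\beta_j)},\qquad
\textup{FNR}=\frac{\frac1p\sum_{j=1}^p\beta_j\,\mathds{1}\{\beta_j^K\le c\}}{\frac1p\sum_{j=1}^p\beta_j}.
\end{align*}
Since the QGT model assumptions give $\beta\stackrel{W}{\rightarrow}\bar\beta$, applying the convergence to the linear test function $a\mapsto a$ shows that the two denominators converge almost surely to $\mathbb{P}(\bar\beta=0)=1-\E[\bar\beta]$ and $\mathbb{P}(\bar\beta=1)=\E[\bar\beta]$, respectively; both are strictly positive in the linear category regime, so the ratios are well defined for all large $p$.

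The main obstacle is that $b\mapsto\mathds{1}\{b>c\}$ is discontinuous, so Theorem \ref{thm:GAMP_vec} cannot be invoked on the numerators directly. I would resolve this by a Lipschitz sandwich: for $\epsilon>0$, choose piecewise-linear (hence bounded and Lipschitz) functions $\psi_\epsilon^-\le\mathds{1}\{\cdot>c\}\le\psi_\epsilon^+$ with $\psi_\epsilon^-=0$ on $(-\infty,c]$ and $\psi_\epsilon^-=1$ on $[c+\epsilon,\infty)$, while $\psi_\epsilon^+=0$ on $(-\infty,c-\epsilon]$ and $\psi_\epsilon^+=1$ on $[c,\infty)$. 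Then $(a,b)\mapsto(1-a)\psi_\epsilon^{\pm}(b)$ is continuous with at most linear growth, hence an admissible test function for the $W_2$ convergence, so by \eqref{eq:sum_to_exp}, $\frac1p\sum_{j=1}^p(1-\beta_j)\psi_\epsilon^{\pm}(\beta_j^K)\stackrel{a.s.}{\rightarrow}\E[(1-\bar\beta)\psi_\epsilon^{\pm}(\mu_\beta^K\bar\beta+G_\beta^K)]$. Bounding the FPR numerator between these two averages and then letting $\epsilon\downarrow0$, bounded convergence together with the fact that, conditional on $\bar\beta=0$, the variable $\mu_\beta^K\bar\beta+G_\beta^K=G_\beta^K$ has a continuous law (so the boundary $\{b=c\}$ has zero probability; this uses $\sigma_\beta^K>0$, which is also what makes the ratio $\mu_\beta^K/\sigma_\beta^K$ meaningful) forces both bounds to the common value $\E[(1-\bar\beta)\mathds{1}\{G_\beta^K>c\}]=\mathbb{P}(\bar\beta=0)\,\mathbb{P}(G_\beta^K>c)=\mathbb{P}(\bar\beta=0)\bigl(1-\Phi(c/\sigma_\beta^K)\bigr)$. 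Dividing by the denominator limit $\mathbb{P}(\bar\beta=0)$ gives $\textup{FPR}\stackrel{a.s.}{\rightarrow}1-\Phi(\zeta\mu_\beta^K/\sigma_\beta^K)$.

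The FNR is handled identically: applying the analogous sandwich to $\mathds{1}\{\cdot\le c\}$ and using that, conditional on $\bar\beta=1$, the variable $\mu_\beta^K+G_\beta^K$ has a continuous law, the FNR numerator over $p$ converges almost surely to $\mathbb{P}(\bar\beta=1)\,\mathbb{P}(\mu_\beta^K+G_\beta^K\le c)=\mathbb{P}(\bar\beta=1)\bigl(1-\Phi((1-\zeta)\mu_\beta^K/\sigma_\beta^K)\bigr)$; dividing by $\mathbb{P}(\bar\beta=1)$ yields the stated FNR limit. Apart from the discontinuity handling, every step is routine.
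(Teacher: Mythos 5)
Your proof is correct and follows the same overall route as the paper: write FPR and FNR as ratios of empirical averages of functions of $(\beta_j,\beta_j^K)$, invoke the joint convergence $(\beta,\beta^K)\stackrel{W_2}{\rightarrow}(\bar\beta,\mu_\beta^K\bar\beta+G_\beta^K)$ from Theorem \ref{thm:GAMP_vec} (correctly noting that the hard-threshold $f_K$ is irrelevant to the law of $\beta^K$ itself), compute the Gaussian limits, and divide by the denominator limits $\mathbb{P}(\bar\beta=0)$ and $\mathbb{P}(\bar\beta=1)$. The one place where you genuinely diverge is in the treatment of the discontinuous test function: the paper applies the convergence directly to the indicator, justifying this only by checking the polynomial growth condition \eqref{eq:poly_growth_cond}, even though the Wasserstein convergence as defined in Section \ref{sec:prelim} is stated for \emph{continuous} test functions. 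Your Lipschitz sandwich $\psi_\epsilon^-\le\mathds{1}\{\cdot>c\}\le\psi_\epsilon^+$, combined with the observation that the limit law has no atom at the threshold (since $G_\beta^K$ is Gaussian with $\sigma_\beta^K>0$), closes exactly this gap and yields the same limits; so your argument is, if anything, more careful than the paper's at that step, at the cost of an extra approximation layer.
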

The proof of the corollary is given in Appendix \ref{sec:FPR_FNR_proof}.

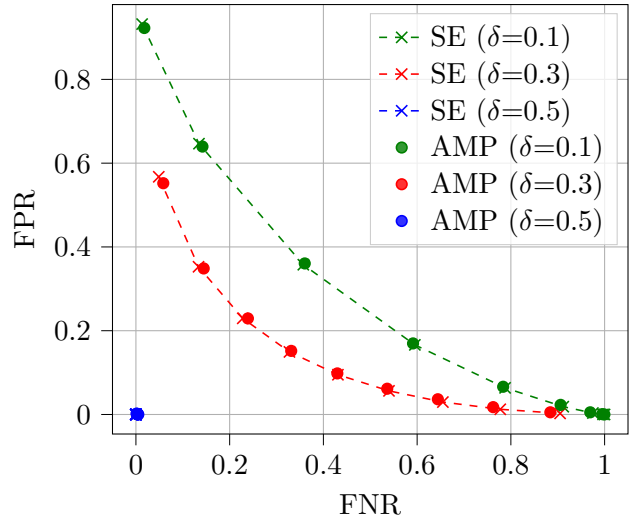
\begin{figure}[t]
    \centering
    % This file was created with tikzplotlib v0.10.1.
\begin{tikzpicture}

\definecolor{darkgray176}{RGB}{176,176,176}
\definecolor{green}{RGB}{0,128,0}
\definecolor{lightgray204}{RGB}{204,204,204}

\begin{axis}[
legend cell align={left},
legend style={fill opacity=0.8, draw opacity=1, text opacity=1, draw=lightgray204},
tick align=outside,
tick pos=left,
x grid style={darkgray176},
xlabel={FNR},
xmajorgrids,
xmin=-0.0499968347260539, xmax=1.04993352924713,
xtick style={color=black},
y grid style={darkgray176},
ylabel={FPR},
ymajorgrids,
ymin=-0.0466010733452594, ymax=0.978622540250447,
ytick style={color=black}
]
\addplot [semithick, green, dashed, mark=x, mark size=3, mark options={solid}]
table {%
0.0131975264217078 0.932021466905188
0.134527474577852 0.646499054107346
0.357149520877481 0.357544565546646
0.595168126024549 0.166187052415598
0.787643436887769 0.0635877051101065
0.911785481055002 0.018593506204027
0.975040981967934 0.00345065069413089
0.996614822811296 0.000275766287357044
0.999936694521077 2.85768173426989e-06
};
\addlegendentry{SE ($\delta$=0.1)}
\addplot [semithick, red, dashed, mark=x, mark size=3, mark options={solid}]
table {%
0.0487673560020332 0.567429611622784
0.133597742047674 0.352625226829185
0.227952192407502 0.229524029627601
0.327573637604002 0.149369158078543
0.431027448567377 0.0948930621654626
0.539958934160892 0.0563663391196174
0.654939939538243 0.0296418666757969
0.777385029026994 0.0124185412172881
0.905023542630889 0.00259498293788019
};
\addlegendentry{SE ($\delta$=0.3)}
\addplot [semithick, blue, dashed, mark=x, mark size=3, mark options={solid}]
table {%
0 0
0 0
0 0
0 0
0 0
0 0
0 0
0 0
0 0
};
\addlegendentry{SE ($\delta$=0.5)}
\addplot [semithick, green, mark=*, mark size=2, mark options={solid}, only marks]
table {%
0.0179063475840989 0.922688249421178
0.141967045850779 0.639607230763226
0.360224214909088 0.360583015465813
0.591255612266219 0.169654917694556
0.783555692712219 0.0662027397435085
0.906129095905726 0.0229534847103938
0.969373943142483 0.00519606825278394
0.99497756652836 0.000797555046422666
0.99953199560301 5.60224089635854e-05
};
\addlegendentry{AMP ($\delta$=0.1)}
\addplot [semithick, red, mark=*, mark size=2, mark options={solid}, only marks]
table {%
0.0583604480961251 0.552113319140856
0.144618289099965 0.348638953147389
0.238922242110587 0.229347646754549
0.331308572711236 0.151833083311463
0.429744984504734 0.0980529587600954
0.536184115246835 0.0614984226191368
0.644239130688889 0.0363597366026153
0.762236252016169 0.0175179207725796
0.884229420884733 0.00522886109927159
};
\addlegendentry{AMP ($\delta$=0.3)}
\addplot [semithick, blue, mark=*, mark size=2, mark options={solid}, only marks]
table {%
0.00130718954248366 0.00204610951008646
0.00222222222222222 0.00146974063400576
0.00248366013071895 0.00112391930835735
0.0026797385620915 0.000864553314121037
0.0030718954248366 0.000749279538904899
0.00333333333333333 0.000634005763688761
0.00366013071895425 0.00037463976945245
0.00411764705882353 0.000288184438040346
0.00503267973856209 0.000115273775216138
};
\addlegendentry{AMP ($\delta$=0.5)}
\end{axis}

\end{tikzpicture}
    \vspace{-2\baselineskip}
    \caption{QGT, FPR vs.~FNR: We use $\alpha=0.5$, $\pi=0.1$, $p=500$, $\zeta\in\{0.1, 0.2, \dots, 0.9\}$.}
    \label{fig:QGT_FPR_v_FNR}
    \vspace{-3pt}
\end{figure}

\subsection{Numerical Simulations} \label{sec:QGT_sims}

We present numerical simulation results for the QGT model in \eqref{eq:QGT} (and equivalently \eqref{eq:noisy_QGT_rescaled}). We take $X_{ij}\stackrel{\iid}{\sim}\text{Bernoulli}(0.5)$ for $i,j\in[n]\times[p]$ and $\beta_j\stackrel{\iid}{\sim}\text{Bernoulli}(\pi)$ for $\pi\in(0,1)$ and $j\in[p]$; the initializer $\hbeta^0\in\mb{R}^{p}$ is chosen randomly according to the same distribution as the signal vector $\beta$, but independent of it. We implement the AMP algorithm in \eqref{eq:memoryless_GAMP} with memoryless Bayes-optimal denoisers $g_k=g_k^*$ and $f_k=f_k^*$, with the thresholding function \eqref{eq:threshold_def} used in the final iteration $K$. Implementation details of the denoisers and their derivatives are given in Appendix \ref{sec:imp_details_GT}. The performance in all the plots is  measured via either the squared overlap between the AMP estimate and the signal (see \eqref{eq:norm_corr_vec}) or the FPR and FNR (see \eqref{eq:FPR_and_FNR}). We set the number of items to be $p=500$ and vary the value of the number of tests $n\leq p$ in our experiments. Each point on the plots is obtained from 100 independent runs, where in each run, the AMP algorithm is executed for 200 iterations. In the squared overlap plots, we also report the average and error bars at 1 standard deviation of the final iteration. For each setting, the state evolution (SE) recursion is computed using \eqref{eq:memoryless_Mu}-\eqref{eq:memoryless_Bk_conv} for a maximum of 200 iterations. 

\paragraph{QGT} For the noiseless setting, Figure \ref{fig:QGT_FPR_v_FNR} shows that the FPR vs.~FNR tradeoff curve improves (i.e., becomes lower) as $\delta$ increases (i.e., more tests are used). For all three curves, the theoretical results from Corollary \ref{cor:FPR_FNR} closely match the empirical result from the algorithm, giving an empirical verification of Theorem \ref{thm:GAMP_vec} and Corollary \ref{cor:FPR_FNR}. 

Our benchmark for the AMP algorithm is a  linear programming (LP) estimator for the QGT problem. The optimization problem for LP is written as
\begin{align*}
    \text{minimize}\quad & \|\beta\|_1 \\
    \text{subject to}\quad & y=X\beta,
    \text{ and }
    0\leq\beta_j\leq1 \text{ for } j\in[p].
\end{align*}
Similar reconstruction algorithms have been widely used in compressed sensing \cite{Fou13} and also studied for Boolean group testing \cite{Ald19}. Similar to the AMP algorithm, we can produce a FPR vs.~FNR trade-off curve for the LP estimator.  After running the linear program to obtain $\beta^{(\text{lp})}$, we set 
$\hbeta_j=\mathds{1}\big\{\beta_j^{\text{(lp})}>\zeta\big\}$ and vary the threshold $\zeta$ to obtain different points on the FPR vs.~FNR curve. Figures \ref{fig:QGT_AMP_v_others_corr_v_delta} and \ref{fig:QGT_AMP_v_others_FPR_FNR}
show the squared overlap and the FPR vs.~FNR tradeoff, respectively, for the signal prior $\bar{\beta} \sim \text{Bernoulli}(\pi)$. The AMP algorithm consistently outperforms LP for for both $\pi=0.1$ and $\pi=0.3$, and also allows for a wider range of FPR vs. FNR tradeoffs by varying the threshold.

Several other QGT algorithms have been proposed (some examples are \cite{Kar19a,Kar19b,Fei20,Hah22}) that were used for the sublinear category regime with no noise. We omit comparisons with them because they do not appear to offer a simple mechanism for controlling the trade-off between FPR and FNR.

\begin{figure}[t]
\centering
\begin{subfigure}[b]{0.45\textwidth}
  \centering
  % This file was created with tikzplotlib v0.10.1.
\begin{tikzpicture}[scale=0.8]

\definecolor{coral}{RGB}{255,127,80}
\definecolor{darkgray176}{RGB}{176,176,176}
\definecolor{lightblue}{RGB}{173,216,230}
\definecolor{lightgray204}{RGB}{204,204,204}

\begin{axis}[
legend cell align={left},
legend style={
  fill opacity=0.8,
  draw opacity=1,
  text opacity=1,
  at={(0.97,0.03)},
  anchor=south east,
  draw=lightgray204
},
tick align=outside,
tick pos=left,
x grid style={darkgray176},
xlabel={\(\displaystyle \delta=n/p\)},
xmajorgrids,
xmin=0.05, xmax=1.15,
xtick style={color=black},
y grid style={darkgray176},
ylabel={Overlap},
ymajorgrids,
ymin=-0.0151397563026971, ymax=1.1183189814821,
ytick style={color=black}
]
\addplot [semithick, blue, dashed]
table {%
0.1 0.222316517406069
0.12 0.255050358681003
0.14 0.291199680864292
0.16 0.331512015150056
0.18 0.377332761292543
0.2 0.431406337293157
0.22 0.501158158291396
0.24 0.999999426456898
0.26 0.999999426436836
0.28 0.999999426416836
0.3 0.999999426396836
0.32 0.999999426376836
0.34 0.999999426356836
0.36 0.999999426336836
0.38 0.999999426316836
0.4 0.999999426296836
0.42 0.999999426276836
0.44 0.999999426256836
0.46 0.999999426236836
0.48 0.999999426216836
0.5 0.999999426196836
0.52 0.999999426176836
0.54 0.999999426156836
0.56 0.999999426136836
0.58 0.999999426116836
0.6 0.999999426096836
0.62 0.999999426076836
0.64 0.999999426056836
0.66 0.999999426036836
0.68 0.999999426016836
0.7 0.999999425996836
0.72 0.999999425976836
0.74 0.999999425956836
0.76 0.999999425936836
0.78 0.999999425916836
0.8 0.999999425896836
0.82 0.999999425876836
0.84 0.999999425856836
0.86 0.999999425836836
0.88 0.999999425816836
0.9 0.999999425796836
0.92 0.999999425776836
0.94 0.999999425756836
0.96 0.999999425736836
0.98 0.999999425716776
1 0.999999425696776
1.02 0.999999425676776
1.04 0.999999425656776
1.06 0.999999425636776
1.08 0.999999425616776
1.1 0.999999425596776
};
\addlegendentry{SE}
\path [draw=lightblue, very thick]
(axis cs:0.1,0.180762150829881)
--(axis cs:0.1,0.251455015762505);

\path [draw=lightblue, very thick]
(axis cs:0.2,0.297938509459133)
--(axis cs:0.2,0.57845205229906);

\path [draw=lightblue, very thick]
(axis cs:0.3,0.931748389163788)
--(axis cs:0.3,1.05140605671163);

\path [draw=lightblue, very thick]
(axis cs:0.4,1)
--(axis cs:0.4,1);

\path [draw=lightblue, very thick]
(axis cs:0.5,1)
--(axis cs:0.5,1);

\path [draw=lightblue, very thick]
(axis cs:0.6,1)
--(axis cs:0.6,1);

\path [draw=lightblue, very thick]
(axis cs:0.7,1)
--(axis cs:0.7,1);

\path [draw=lightblue, very thick]
(axis cs:0.8,1)
--(axis cs:0.8,1);

\path [draw=lightblue, very thick]
(axis cs:0.9,1)
--(axis cs:0.9,1);

\path [draw=lightblue, very thick]
(axis cs:1,1)
--(axis cs:1,1);

\path [draw=lightblue, very thick]
(axis cs:1.1,1)
--(axis cs:1.1,1);

\path [draw=coral, very thick]
(axis cs:0.1,0.0363810954147936)
--(axis cs:0.1,0.0985092511963671);

\path [draw=coral, very thick]
(axis cs:0.2,0.164953734642845)
--(axis cs:0.2,0.48792520713622);

\path [draw=coral, very thick]
(axis cs:0.3,0.889366683611189)
--(axis cs:0.3,1.06679812976461);

\path [draw=coral, very thick]
(axis cs:0.4,1)
--(axis cs:0.4,1);

\path [draw=coral, very thick]
(axis cs:0.5,1)
--(axis cs:0.5,1);

\path [draw=coral, very thick]
(axis cs:0.6,1)
--(axis cs:0.6,1);

\path [draw=coral, very thick]
(axis cs:0.7,1)
--(axis cs:0.7,1);

\path [draw=coral, very thick]
(axis cs:0.8,1)
--(axis cs:0.8,1);

\path [draw=coral, very thick]
(axis cs:0.9,1)
--(axis cs:0.9,1);

\path [draw=coral, very thick]
(axis cs:1,1)
--(axis cs:1,1);

\path [draw=coral, very thick]
(axis cs:1.1,1)
--(axis cs:1.1,1);

\addplot [semithick, blue, mark=asterisk, mark size=3, mark options={solid}, only marks]
table {%
0.1 0.216108583296193
0.2 0.438195280879096
0.3 0.991577222937711
0.4 1
0.5 1
0.6 1
0.7 1
0.8 1
0.9 1
1 1
1.1 1
};
\addlegendentry{AMP}
\addplot [semithick, red, dotted, mark=asterisk, mark size=3, mark options={solid}]
table {%
0.1 0.0674451733055803
0.2 0.326439470889532
0.3 0.978082406687898
0.4 1
0.5 1
0.6 1
0.7 1
0.8 1
0.9 1
1 1
1.1 1
};
\addlegendentry{LP}
\end{axis}

\end{tikzpicture}
  \vspace{-2\baselineskip}
  \caption{Defective probability $\pi=0.1$}
\end{subfigure}
\begin{subfigure}[b]{0.45\textwidth}
  \centering
  % This file was created with tikzplotlib v0.10.1.
\begin{tikzpicture}[scale=0.8]

\definecolor{coral}{RGB}{255,127,80}
\definecolor{darkgray176}{RGB}{176,176,176}
\definecolor{lightblue}{RGB}{173,216,230}
\definecolor{lightgray204}{RGB}{204,204,204}

\begin{axis}[
legend cell align={left},
legend style={
  fill opacity=0.8,
  draw opacity=1,
  text opacity=1,
  at={(0.97,0.03)},
  anchor=south east,
  draw=lightgray204
},
tick align=outside,
tick pos=left,
x grid style={darkgray176},
xlabel={\(\displaystyle \delta=n/p\)},
xmajorgrids,
xmin=0.05, xmax=1.15,
xtick style={color=black},
y grid style={darkgray176},
ylabel={Overlap},
ymajorgrids,
ymin=0.067199144465342, ymax=1.07937330861929,
ytick style={color=black}
]
\addplot [semithick, blue, dashed]
table {%
0.1 0.372741206678121
0.12 0.388016026875203
0.14 0.403580159056105
0.16 0.419464288090143
0.18 0.435706237135899
0.2 0.452352983951659
0.22 0.469462231621668
0.24 0.487106592395222
0.26 0.505379989633053
0.28 0.524404946590207
0.3 0.544348542827835
0.32 0.565446200432558
0.34 0.588052363509111
0.36 0.612741733421028
0.38 0.640570160134388
0.4 0.673949522902346
0.42 0.72245407891497
0.44 0.999999426550178
0.46 0.999999426543511
0.48 0.999999426536844
0.5 0.999999426530178
0.52 0.999999426523511
0.54 0.999999426516844
0.56 0.999999426510178
0.58 0.999999426503511
0.6 0.999999426496844
0.62 0.999999426490178
0.64 0.999999426483511
0.66 0.999999426476844
0.68 0.999999426470178
0.7 0.999999426463511
0.72 0.999999426456898
0.74 0.999999426450169
0.76 0.999999426443503
0.78 0.999999426436836
0.8 0.999999426430169
0.82 0.999999426423503
0.84 0.999999426416836
0.86 0.999999426410169
0.88 0.999999426403503
0.9 0.999999426396836
0.92 0.999999426390169
0.94 0.999999426383503
0.96 0.999999426376836
0.98 0.999999426370169
1 0.999999426363503
1.02 0.999999426356836
1.04 0.999999426350169
1.06 0.999999426343503
1.08 0.999999426336836
1.1 0.999999426330169
};
\addlegendentry{SE}
\path [draw=lightblue, very thick]
(axis cs:0.1,0.347293954982066)
--(axis cs:0.1,0.398242138213869);

\path [draw=lightblue, very thick]
(axis cs:0.2,0.427453356975718)
--(axis cs:0.2,0.478851042299469);

\path [draw=lightblue, very thick]
(axis cs:0.3,0.492828078625986)
--(axis cs:0.3,0.577424834028512);

\path [draw=lightblue, very thick]
(axis cs:0.4,0.583166826491555)
--(axis cs:0.4,0.766594575730129);

\path [draw=lightblue, very thick]
(axis cs:0.5,1)
--(axis cs:0.5,1);

\path [draw=lightblue, very thick]
(axis cs:0.6,1)
--(axis cs:0.6,1);

\path [draw=lightblue, very thick]
(axis cs:0.7,1)
--(axis cs:0.7,1);

\path [draw=lightblue, very thick]
(axis cs:0.8,1)
--(axis cs:0.8,1);

\path [draw=lightblue, very thick]
(axis cs:0.9,1)
--(axis cs:0.9,1);

\path [draw=lightblue, very thick]
(axis cs:1,1)
--(axis cs:1,1);

\path [draw=lightblue, very thick]
(axis cs:1.1,1)
--(axis cs:1.1,1);

\path [draw=coral, very thick]
(axis cs:0.1,0.113207061017794)
--(axis cs:0.1,0.170471054866266);

\path [draw=coral, very thick]
(axis cs:0.2,0.199249762872767)
--(axis cs:0.2,0.271812214615834);

\path [draw=coral, very thick]
(axis cs:0.3,0.303220496176181)
--(axis cs:0.3,0.405161335990225);

\path [draw=coral, very thick]
(axis cs:0.4,0.473653764850401)
--(axis cs:0.4,0.682069061874782);

\path [draw=coral, very thick]
(axis cs:0.5,0.949915599941359)
--(axis cs:0.5,1.03336539206684);

\path [draw=coral, very thick]
(axis cs:0.6,1)
--(axis cs:0.6,1);

\path [draw=coral, very thick]
(axis cs:0.7,1)
--(axis cs:0.7,1);

\path [draw=coral, very thick]
(axis cs:0.8,1)
--(axis cs:0.8,1);

\path [draw=coral, very thick]
(axis cs:0.9,1)
--(axis cs:0.9,1);

\path [draw=coral, very thick]
(axis cs:1,1)
--(axis cs:1,1);

\path [draw=coral, very thick]
(axis cs:1.1,1)
--(axis cs:1.1,1);

\addplot [semithick, blue, mark=asterisk, mark size=3, mark options={solid}, only marks]
table {%
0.1 0.372768046597967
0.2 0.453152199637593
0.3 0.535126456327249
0.4 0.674880701110842
0.5 1
0.6 1
0.7 1
0.8 1
0.9 1
1 1
1.1 1
};
\addlegendentry{AMP}
\addplot [semithick, red, dotted, mark=asterisk, mark size=3, mark options={solid}]
table {%
0.1 0.14183905794203
0.2 0.235530988744301
0.3 0.354190916083203
0.4 0.577861413362592
0.5 0.991640496004099
0.6 1
0.7 1
0.8 1
0.9 1
1 1
1.1 1
};
\addlegendentry{LP}
\end{axis}

\end{tikzpicture}
  \vspace{-2\baselineskip}
  \caption{$\pi=0.3$}
\end{subfigure}
\caption{QGT, Squared overlap vs.~$\delta$: We use $\alpha=0.5$ and $p=500$.}
\label{fig:QGT_AMP_v_others_corr_v_delta}
\vspace{-5pt}
\end{figure}
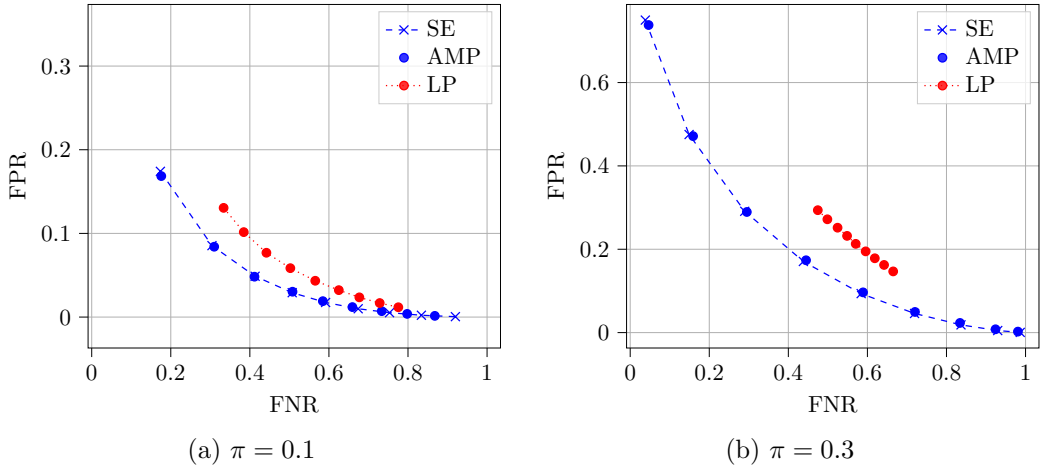
\begin{figure}[t]
\centering
\begin{subfigure}[b]{0.45\textwidth}
  \centering
  % This file was created with tikzplotlib v0.10.1.
\begin{tikzpicture}[scale=0.8]

\definecolor{darkgray176}{RGB}{176,176,176}
\definecolor{lightgray204}{RGB}{204,204,204}

\begin{axis}[
legend cell align={left},
legend style={fill opacity=0.8, draw opacity=1, text opacity=1, draw=lightgray204},
tick align=outside,
tick pos=left,
x grid style={darkgray176},
xlabel={FNR},
xmajorgrids,
xmin=-0.00918514233662787, xmax=1.03397561843212,
xtick style={color=black},
y grid style={darkgray176},
ylabel={FPR},
ymajorgrids,
ymin=-0.0369705906329958, ymax=0.37363540279663,
ytick style={color=black}
]
\addplot [semithick, blue, dashed, mark=x, mark size=3, mark options={solid}]
table {%
0.173719354195735 0.174189138118845
0.304185768387483 0.0853889247048885
0.413334662148694 0.0487828735300002
0.507125772373929 0.0291343396394193
0.59165836157066 0.0174454794033301
0.6740382698824 0.0099415334681991
0.753348614709986 0.00524976102083009
0.833924656169025 0.00218971611497677
0.919882399840542 0.000500188960273881
};
\addlegendentry{SE}
\addplot [semithick, blue, mark=*, mark size=2, mark options={solid}, only marks]
table {%
0.175986946003512 0.168525013407105
0.309811896306081 0.0840704023100851
0.411738167625881 0.048267170908554
0.507907973162762 0.0303061891620011
0.58463415843613 0.0190187685839839
0.65936665316808 0.011940376208161
0.733580910019587 0.00709485536818869
0.797900622161046 0.0038076541549703
0.867862567562312 0.00143933178439779
};
\addlegendentry{AMP}
\addplot [semithick, red, dotted, mark=*, mark size=2, mark options={solid}]
table {%
0.333638732692942 0.130517165466138
0.38460259916842 0.101536689352351
0.441877699090772 0.0768730382303667
0.502435043370988 0.058453330430212
0.56542480845554 0.0433975379323587
0.624993769610426 0.0321911222098418
0.676989324202324 0.02357068878619
0.728499343302476 0.0167671444096947
0.775562382485317 0.0117426470925562
};
\addlegendentry{LP}
\end{axis}

\end{tikzpicture}
  \vspace{-2\baselineskip}
  \caption{$\pi=0.1$}
\end{subfigure}
\begin{subfigure}[b]{0.45\textwidth}
  \centering
  % This file was created with tikzplotlib v0.10.1.
\begin{tikzpicture}[scale=0.8]

\definecolor{darkgray176}{RGB}{176,176,176}
\definecolor{lightgray204}{RGB}{204,204,204}

\begin{axis}[
legend cell align={left},
legend style={fill opacity=0.8, draw opacity=1, text opacity=1, draw=lightgray204},
tick align=outside,
tick pos=left,
x grid style={darkgray176},
xlabel={FNR},
xmajorgrids,
xmin=-0.00918514233662787, xmax=1.03397561843212,
xtick style={color=black},
y grid style={darkgray176},
ylabel={FPR},
ymajorgrids,
ymin=-0.0369705906329958, ymax=0.787363540279663,
ytick style={color=black}
]
\addplot [semithick, blue, dashed, mark=x, mark size=3, mark options={solid}]
table {%
0.0382312558801334 0.74989380705636
0.14898221747933 0.475283860531837
0.289686536946353 0.291101674916583
0.438027134218304 0.170649557565157
0.584216807789868 0.0935456143386933
0.719262901386631 0.0459125370602647
0.83676251076293 0.0185354956171276
0.929631217981323 0.00492563633529224
0.986559220215358 0.000499142590306908
};
\addlegendentry{SE}
\addplot [semithick, blue, mark=*, mark size=2, mark options={solid}, only marks]
table {%
0.0465290258083116 0.738029004839207
0.159065092305415 0.471447581598422
0.295006881475852 0.289434349718369
0.444826922084801 0.173569077781412
0.588974989783354 0.0965460266793533
0.720296612997666 0.0495552812847058
0.834201165005697 0.0231918018336855
0.924071012830774 0.00797619098985807
0.980705154801189 0.00238476629034648
};
\addlegendentry{AMP}
\addplot [semithick, red, dotted, mark=*, mark size=2, mark options={solid}]
table {%
0.474539332158354 0.293752416218673
0.498742273273905 0.271893928487739
0.524518750011839 0.251725673065315
0.548537781149809 0.232079483459782
0.570417355733028 0.212988472888602
0.595557152241838 0.194912893065992
0.618717306074094 0.178362736730264
0.642142574810885 0.162354157696528
0.665198190421895 0.146838887497477
};
\addlegendentry{LP}
\end{axis}

\end{tikzpicture}
  \vspace{-2\baselineskip}
  \caption{$\pi=0.3$}
\end{subfigure}
\caption{QGT, FPR vs.~FNR: $\alpha=0.5$, $\delta=0.2$, $p=500$, and threshold $\zeta\in\{0, 0.1, \dots, 1.0\}$ for both AMP and LP.}
\label{fig:QGT_AMP_v_others_FPR_FNR}
\end{figure}

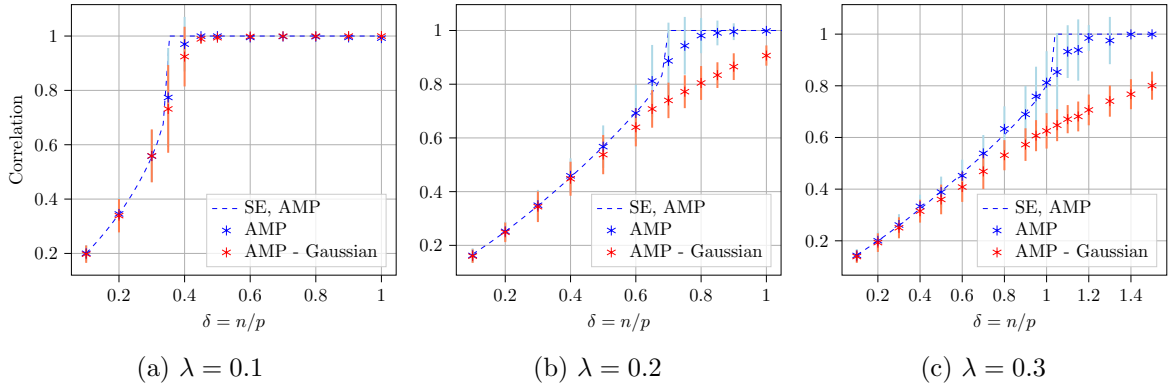
\begin{figure}[t]
\centering
\begin{subfigure}[b]{0.34\textwidth}
  \centering
  % This file was created with tikzplotlib v0.10.1.
\begin{tikzpicture}[scale=0.63]

\definecolor{coral}{RGB}{255,127,80}
\definecolor{darkgray176}{RGB}{176,176,176}
\definecolor{lightblue}{RGB}{173,216,230}
\definecolor{lightgray204}{RGB}{204,204,204}

\begin{axis}[
legend cell align={left},
legend style={
  fill opacity=0.8,
  draw opacity=1,
  text opacity=1,
  at={(0.97,0.03)},
  anchor= south east,
  draw=lightgray204
},
tick align=outside,
tick pos=left,
x grid style={darkgray176},
xlabel={\(\displaystyle \delta=n/p\)},
xmajorgrids,
xmin=0.055, xmax=1.045,
xtick style={color=black},
y grid style={darkgray176},
ylabel={Overlap},
ymajorgrids,
ymin=0.120024852454111, ymax=1.11561303024497,
ytick style={color=black}
]
\addplot [semithick, blue, dashed]
table {%
0.1 0.200767915328118
0.119565217391304 0.225493019707963
0.139130434782609 0.251793330516149
0.158695652173913 0.279823234714205
0.178260869565217 0.309610593772194
0.197826086956522 0.341392926159932
0.217391304347826 0.375373022166325
0.23695652173913 0.411876747433244
0.256521739130435 0.45152577876986
0.276086956521739 0.495076653948585
0.295652173913043 0.544052779814818
0.315217391304348 0.601492438400297
0.334782608695652 0.675578707859378
0.354347826086957 0.999999426342509
0.373913043478261 0.999999426322944
0.393478260869565 0.999999999592823
0.41304347826087 0.999999999586956
0.432608695652174 0.999999999436661
0.452173913043478 0.999999999547825
0.471739130434783 0.999999999466328
0.491304347826087 0.999999426205553
0.510869565217391 0.999999999489131
0.530434782608696 0.999999426166422
0.55 0.999999998406189
0.569565217391304 0.999999999430435
0.589130434782609 0.999999426107727
0.608695652173913 0.999999426088161
0.628260869565217 0.99999999776451
0.647826086956522 0.999999999352157
0.667391304347826 0.999999999332607
0.68695652173913 0.9999994260099
0.706521739130435 0.999999425990335
0.726086956521739 0.99999942597077
0.745652173913044 0.999999425951205
0.765217391304348 0.99999942593164
0.784782608695652 0.999999998986539
0.804347826086957 0.999999999194875
0.823913043478261 0.999999999176087
0.843478260869565 0.999999999156522
0.86304347826087 0.99999999913694
0.882608695652174 0.999999425814249
0.902173913043478 0.999999425794686
0.921739130434783 0.999999425775121
0.941304347826087 0.999999425755556
0.960869565217391 0.999999425735991
0.980434782608696 0.999999425716426
1 0.99999942569686
};
\addlegendentry{SE, AMP}
\path [draw=lightblue, very thick]
(axis cs:0.1,0.170861346040386)
--(axis cs:0.1,0.23010984146373);

\path [draw=lightblue, very thick]
(axis cs:0.2,0.285372252267378)
--(axis cs:0.2,0.405191303927538);

\path [draw=lightblue, very thick]
(axis cs:0.3,0.462862957429346)
--(axis cs:0.3,0.653588888188715);

\path [draw=lightblue, very thick]
(axis cs:0.35,0.591602341290123)
--(axis cs:0.35,0.956085430460179);

\path [draw=lightblue, very thick]
(axis cs:0.4,0.869844298303485)
--(axis cs:0.4,1.07035902216357);

\path [draw=lightblue, very thick]
(axis cs:0.45,0.997942575637495)
--(axis cs:0.45,1.00132445674746);

\path [draw=lightblue, very thick]
(axis cs:0.5,0.998946164397655)
--(axis cs:0.5,1.00053138890676);

\path [draw=lightblue, very thick]
(axis cs:0.6,0.994397577933382)
--(axis cs:0.6,1.00522585514881);

\path [draw=lightblue, very thick]
(axis cs:0.7,0.993765536885042)
--(axis cs:0.7,1.00271690910117);

\path [draw=lightblue, very thick]
(axis cs:0.8,0.990131676882765)
--(axis cs:0.8,1.00477943696252);

\path [draw=lightblue, very thick]
(axis cs:0.9,0.984880939069789)
--(axis cs:0.9,1.00491673125096);

\path [draw=lightblue, very thick]
(axis cs:1,0.98138897939016)
--(axis cs:1,1.00338885449069);

\path [draw=coral, very thick]
(axis cs:0.1,0.165278860535513)
--(axis cs:0.1,0.229358887932009);

\path [draw=coral, very thick]
(axis cs:0.2,0.277463860425126)
--(axis cs:0.2,0.400576752159132);

\path [draw=coral, very thick]
(axis cs:0.3,0.461623756108694)
--(axis cs:0.3,0.656519283708987);

\path [draw=coral, very thick]
(axis cs:0.35,0.570449630540761)
--(axis cs:0.35,0.893093965925668);

\path [draw=coral, very thick]
(axis cs:0.4,0.814510960311315)
--(axis cs:0.4,1.03424284459);

\path [draw=coral, very thick]
(axis cs:0.45,0.974659480995529)
--(axis cs:0.45,1.00425334639664);

\path [draw=coral, very thick]
(axis cs:0.5,0.985296338888141)
--(axis cs:0.5,1.00232652653411);

\path [draw=coral, very thick]
(axis cs:0.6,0.994392271791335)
--(axis cs:0.6,1.0025965074404);

\path [draw=coral, very thick]
(axis cs:0.7,0.996660516255559)
--(axis cs:0.7,1.00208308745558);

\path [draw=coral, very thick]
(axis cs:0.8,0.99886177357434)
--(axis cs:0.8,1.00068436605724);

\path [draw=coral, very thick]
(axis cs:0.9,0.999992040493579)
--(axis cs:0.9,1.0000052472494);

\path [draw=coral, very thick]
(axis cs:1,0.999915188053119)
--(axis cs:1,1.00006925501257);

\addplot [semithick, blue, mark=asterisk, mark size=3, mark options={solid}, only marks]
table {%
0.1 0.200485593752058
0.2 0.345281778097458
0.3 0.558225922809031
0.35 0.773843885875151
0.4 0.970101660233526
0.45 0.999633516192477
0.5 0.999738776652207
0.6 0.994811716541097
0.7 0.998241222993104
0.8 0.997455556922645
0.9 0.994898835160376
1 0.992388916940423
};
\addlegendentry{AMP}
\addplot [semithick, red, mark=asterisk, mark size=3, mark options={solid}, only marks]
table {%
0.1 0.197318874233761
0.2 0.339020306292129
0.3 0.559071519908841
0.35 0.731771798233214
0.4 0.924376902450659
0.45 0.989456413696085
0.5 0.993811432711123
0.6 0.998494389615866
0.7 0.999371801855571
0.8 0.99977306981579
0.9 0.999998643871488
1 0.999992221532845
};
\addlegendentry{AMP - Gaussian}
\end{axis}

\end{tikzpicture}
  \vspace{-2\baselineskip}
  \caption{$\lambda=0.1$}
\end{subfigure}
\begin{subfigure}[b]{0.32\textwidth}
  \centering
  % This file was created with tikzplotlib v0.10.1.
\begin{tikzpicture}[scale=0.63]

\definecolor{coral}{RGB}{255,127,80}
\definecolor{darkgray176}{RGB}{176,176,176}
\definecolor{lightblue}{RGB}{173,216,230}
\definecolor{lightgray204}{RGB}{204,204,204}

\begin{axis}[
legend cell align={left},
legend style={
  fill opacity=0.8,
  draw opacity=1,
  text opacity=1,
  at={(0.97,0.03)},
  anchor=south east,
  draw=lightgray204
},
tick align=outside,
tick pos=left,
x grid style={darkgray176},
xlabel={\(\displaystyle \delta=n/p\)},
xmajorgrids,
xmin=0.05, xmax=1.045,
xtick style={color=black},
y grid style={darkgray176},
ylabel={},
ymajorgrids,
ymin=0.0890194470020814, ymax=1.09631508166769,
ytick style={color=black}
]
\addplot [semithick, blue, dashed]
table {%
0.1 0.167388765053642
0.12 0.183081934404304
0.14 0.19934362570398
0.16 0.216272349266741
0.18 0.233636733197761
0.2 0.251506552614294
0.22 0.269968236689249
0.24 0.288851655877768
0.26 0.307972778079043
0.28 0.327784936148815
0.3 0.347685362855765
0.32 0.368426738380697
0.34 0.389141310772149
0.36 0.410154887867958
0.38 0.431667994842096
0.4 0.453824777431535
0.42 0.47573439103069
0.44 0.497994557276638
0.46 0.521324906044718
0.48 0.544249501082504
0.5 0.569253210034366
0.52 0.592148338250537
0.54 0.616710072470551
0.56 0.642153606158724
0.58 0.668335857285204
0.6 0.694360547213752
0.62 0.722002727416685
0.64 0.752224917605368
0.66 0.786689763913033
0.68 0.83326113173586
0.7 0.999999425996857
0.72 0.999999425976857
0.74 0.999999999129434
0.76 0.999999996136705
0.78 0.999999425916857
0.8 0.999999425896857
0.82 0.999999425876857
0.84 0.99999999916
0.86 0.999999425836857
0.88 0.99999999912
0.9 0.999999425796857
0.92 0.999999999073849
0.94 0.99999999906
0.96 0.999999425736857
0.98 0.999999425716857
1 0.99999999899287
1.02 0.99999999898
1.04 0.999999425656857
1.06 0.999999425636857
1.08 0.999999425616857
1.1 0.999999425596857
};
\addlegendentry{SE, AMP}
\path [draw=lightblue, very thick]
(axis cs:0.1,0.13607509203852)
--(axis cs:0.1,0.188951796159539);

\path [draw=lightblue, very thick]
(axis cs:0.2,0.216009343158277)
--(axis cs:0.2,0.287942319024618);

\path [draw=lightblue, very thick]
(axis cs:0.3,0.292278018768992)
--(axis cs:0.3,0.406586513950157);

\path [draw=lightblue, very thick]
(axis cs:0.4,0.393890790523901)
--(axis cs:0.4,0.523962403025165);

\path [draw=lightblue, very thick]
(axis cs:0.5,0.488385430750293)
--(axis cs:0.5,0.64658229097886);

\path [draw=lightblue, very thick]
(axis cs:0.6,0.587277578127393)
--(axis cs:0.6,0.797719028160318);

\path [draw=lightblue, very thick]
(axis cs:0.65,0.677271283074856)
--(axis cs:0.65,0.946135579358198);

\path [draw=lightblue, very thick]
(axis cs:0.7,0.744592004554677)
--(axis cs:0.7,1.02916255188944);

\path [draw=lightblue, very thick]
(axis cs:0.75,0.835945209536782)
--(axis cs:0.75,1.05052891645561);

\path [draw=lightblue, very thick]
(axis cs:0.8,0.915486099067018)
--(axis cs:0.8,1.04640737046978);

\path [draw=lightblue, very thick]
(axis cs:0.85,0.944735946651053)
--(axis cs:0.85,1.03691262449034);

\path [draw=lightblue, very thick]
(axis cs:0.9,0.964324190407467)
--(axis cs:0.9,1.02679718900058);

\path [draw=lightblue, very thick]
(axis cs:1,0.992641376828041)
--(axis cs:1,1.00437901187591);

\path [draw=coral, very thick]
(axis cs:0.1,0.134805612214154)
--(axis cs:0.1,0.184939153579853);

\path [draw=coral, very thick]
(axis cs:0.2,0.212163573427919)
--(axis cs:0.2,0.285067088628657);

\path [draw=coral, very thick]
(axis cs:0.3,0.286690813633544)
--(axis cs:0.3,0.401105700865256);

\path [draw=coral, very thick]
(axis cs:0.4,0.384669713172706)
--(axis cs:0.4,0.511386601252076);

\path [draw=coral, very thick]
(axis cs:0.5,0.464845941559608)
--(axis cs:0.5,0.610729469294443);

\path [draw=coral, very thick]
(axis cs:0.6,0.568574808004994)
--(axis cs:0.6,0.710897071952511);

\path [draw=coral, very thick]
(axis cs:0.65,0.638083746016916)
--(axis cs:0.65,0.777736111919288);

\path [draw=coral, very thick]
(axis cs:0.7,0.673615612975224)
--(axis cs:0.7,0.805775908054411);

\path [draw=coral, very thick]
(axis cs:0.75,0.71086374247939)
--(axis cs:0.75,0.833408002939852);

\path [draw=coral, very thick]
(axis cs:0.8,0.74152779405874)
--(axis cs:0.8,0.867613832177424);

\path [draw=coral, very thick]
(axis cs:0.85,0.785985705122386)
--(axis cs:0.85,0.881711929968839);

\path [draw=coral, very thick]
(axis cs:0.9,0.815611042972777)
--(axis cs:0.9,0.915141729002638);

\path [draw=coral, very thick]
(axis cs:1,0.86904901387252)
--(axis cs:1,0.944392806193673);

\addplot [semithick, blue, mark=asterisk, mark size=3, mark options={solid}, only marks]
table {%
0.1 0.16251344409903
0.2 0.251975831091448
0.3 0.349432266359575
0.4 0.458926596774533
0.5 0.567483860864576
0.6 0.692498303143856
0.65 0.811703431216527
0.7 0.886877278222056
0.75 0.943237062996198
0.8 0.980946734768399
0.85 0.990824285570698
0.9 0.995560689704024
1 0.998510194351976
};
\addlegendentry{AMP}
\addplot [semithick, red, mark=asterisk, mark size=3, mark options={solid}, only marks]
table {%
0.1 0.159872382897004
0.2 0.248615331028288
0.3 0.3438982572494
0.4 0.448028157212391
0.5 0.537787705427026
0.6 0.639735939978753
0.65 0.707909928968102
0.7 0.739695760514818
0.75 0.772135872709621
0.8 0.804570813118082
0.85 0.833848817545613
0.9 0.865376385987707
1 0.906720910033096
};
\addlegendentry{AMP - Gaussian}
\end{axis}

\end{tikzpicture}
  \vspace{-2\baselineskip}
  \caption{$\lambda=0.2$}
\end{subfigure}
\begin{subfigure}[b]{0.32\textwidth}
  \centering
  % This file was created with tikzplotlib v0.10.1.
\begin{tikzpicture}[scale=0.63]

\definecolor{coral}{RGB}{255,127,80}
\definecolor{darkgray176}{RGB}{176,176,176}
\definecolor{lightblue}{RGB}{173,216,230}
\definecolor{lightgray204}{RGB}{204,204,204}

\begin{axis}[
legend cell align={left},
legend style={
  fill opacity=0.8,
  draw opacity=1,
  text opacity=1,
  at={(0.97,0.03)},
  anchor=south east,
  draw=lightgray204
},
tick align=outside,
tick pos=left,
x grid style={darkgray176},
xlabel={\(\displaystyle \delta=n/p\)},
xmajorgrids,
xmin=0.03, xmax=1.57,
xtick style={color=black},
y grid style={darkgray176},
ylabel={},
ymajorgrids,
ymin=0.0670165888603375, ymax=1.11417284571177,
ytick style={color=black}
]
\addplot [semithick, blue, dashed]
table {%
0.1 0.145471533781863
0.12 0.155804032429363
0.14 0.166303078058109
0.16 0.177265530663628
0.18 0.188163874161815
0.2 0.199676126508217
0.22 0.211428079863602
0.24 0.222884145916128
0.26 0.235048672676634
0.28 0.247099663506174
0.3 0.259486951357391
0.32 0.272038101613317
0.34 0.28445587985113
0.36 0.297707418010628
0.38 0.310911818463266
0.4 0.323186144755895
0.42 0.336553258478236
0.44 0.349891666472557
0.46 0.363744662310949
0.48 0.377894060346176
0.5 0.390836738568797
0.52 0.405274977227472
0.54 0.418952087125595
0.56 0.433126268922599
0.58 0.446547661514091
0.6 0.46186720870675
0.62 0.47687411944744
0.64 0.490752741149837
0.66 0.504795751032941
0.68 0.518952614920568
0.7 0.534148517296541
0.72 0.54890692509121
0.74 0.565168406963057
0.76 0.579507737442374
0.78 0.596191176702609
0.8 0.609961573746034
0.82 0.627926626129858
0.84 0.643645147446782
0.86 0.660408526165675
0.88 0.678959049511933
0.9 0.696855729413485
0.92 0.714412985135643
0.94 0.73301564217518
0.96 0.752000629454473
0.98 0.776585459234443
1 0.801840195474521
1.02 0.834034885219759
1.04 0.999999425656857
1.06 0.999999425636857
1.08 0.999999425616857
1.1 0.999999425596857
1.12 0.999999425576857
1.14 0.99999999869036
1.16 0.99999999884
1.18 0.999999997948896
1.2 0.999999425496857
1.22 0.999999425476857
1.24 0.999999998456521
1.26 0.999999425436857
1.28 0.999999425416857
1.3 0.999999998662324
1.32 0.999999998673601
1.34 0.999999425356857
1.36 0.999999996604433
1.38 0.99999999862
1.4 0.999999998599983
1.42 0.999999425276857
1.44 0.999999425256857
1.46 0.999999985994678
1.48 0.999999998292652
1.5 0.99999999849999
};
\addlegendentry{SE, AMP}
\path [draw=lightblue, very thick]
(axis cs:0.1,0.119149938637834)
--(axis cs:0.1,0.167795436678741);

\path [draw=lightblue, very thick]
(axis cs:0.2,0.16509528313408)
--(axis cs:0.2,0.235214345950936);

\path [draw=lightblue, very thick]
(axis cs:0.3,0.219237432939064)
--(axis cs:0.3,0.30310921170873);

\path [draw=lightblue, very thick]
(axis cs:0.4,0.2885976643914)
--(axis cs:0.4,0.378624517974466);

\path [draw=lightblue, very thick]
(axis cs:0.5,0.328493918249691)
--(axis cs:0.5,0.448449595641861);

\path [draw=lightblue, very thick]
(axis cs:0.6,0.389859091028597)
--(axis cs:0.6,0.514216146437628);

\path [draw=lightblue, very thick]
(axis cs:0.7,0.467388850499524)
--(axis cs:0.7,0.609487837767526);

\path [draw=lightblue, very thick]
(axis cs:0.8,0.545963906580149)
--(axis cs:0.8,0.720508703543695);

\path [draw=lightblue, very thick]
(axis cs:0.9,0.579548883648805)
--(axis cs:0.9,0.799681236892151);

\path [draw=lightblue, very thick]
(axis cs:0.95,0.644366909644345)
--(axis cs:0.95,0.873838424482225);

\path [draw=lightblue, very thick]
(axis cs:1,0.690224803325846)
--(axis cs:1,0.933852383186672);

\path [draw=lightblue, very thick]
(axis cs:1.05,0.701929714298818)
--(axis cs:1.05,1.00451119705855);

\path [draw=lightblue, very thick]
(axis cs:1.1,0.830177959743951)
--(axis cs:1.1,1.03441483323743);

\path [draw=lightblue, very thick]
(axis cs:1.15,0.81980703839602)
--(axis cs:1.15,1.05696941588725);

\path [draw=lightblue, very thick]
(axis cs:1.2,0.934688511949112)
--(axis cs:1.2,1.03506440493505);

\path [draw=lightblue, very thick]
(axis cs:1.3,0.883672835798226)
--(axis cs:1.3,1.0665748340367);

\path [draw=lightblue, very thick]
(axis cs:1.4,0.99617943256614)
--(axis cs:1.4,1.00204692207522);

\path [draw=lightblue, very thick]
(axis cs:1.5,0.997319774482534)
--(axis cs:1.5,1.00142700099241);

\path [draw=coral, very thick]
(axis cs:0.1,0.114614600535402)
--(axis cs:0.1,0.163608009798318);

\path [draw=coral, very thick]
(axis cs:0.2,0.156927017598367)
--(axis cs:0.2,0.229137413122769);

\path [draw=coral, very thick]
(axis cs:0.3,0.209774564367658)
--(axis cs:0.3,0.291180116837689);

\path [draw=coral, very thick]
(axis cs:0.4,0.27049876892142)
--(axis cs:0.4,0.35906821883439);

\path [draw=coral, very thick]
(axis cs:0.5,0.302696795715861)
--(axis cs:0.5,0.417410409482436);

\path [draw=coral, very thick]
(axis cs:0.6,0.350607645529532)
--(axis cs:0.6,0.465803382659326);

\path [draw=coral, very thick]
(axis cs:0.7,0.40031563347332)
--(axis cs:0.7,0.536864716113957);

\path [draw=coral, very thick]
(axis cs:0.8,0.472991214864764)
--(axis cs:0.8,0.58974480262774);

\path [draw=coral, very thick]
(axis cs:0.9,0.509199506669326)
--(axis cs:0.9,0.63558035570497);

\path [draw=coral, very thick]
(axis cs:0.95,0.546839673710588)
--(axis cs:0.95,0.667083580230721);

\path [draw=coral, very thick]
(axis cs:1,0.557047714702641)
--(axis cs:1,0.693904944494445);

\path [draw=coral, very thick]
(axis cs:1.05,0.5856440822897)
--(axis cs:1.05,0.709171062617701);

\path [draw=coral, very thick]
(axis cs:1.1,0.61626532857334)
--(axis cs:1.1,0.725334517121934);

\path [draw=coral, very thick]
(axis cs:1.15,0.62354849874788)
--(axis cs:1.15,0.739123490466594);

\path [draw=coral, very thick]
(axis cs:1.2,0.6470660646831)
--(axis cs:1.2,0.766253519353365);

\path [draw=coral, very thick]
(axis cs:1.3,0.680412658195954)
--(axis cs:1.3,0.800905384405887);

\path [draw=coral, very thick]
(axis cs:1.4,0.709510806371832)
--(axis cs:1.4,0.825477486578648);

\path [draw=coral, very thick]
(axis cs:1.5,0.746108546863524)
--(axis cs:1.5,0.85495297694683);

\addplot [semithick, blue, mark=asterisk, mark size=3, mark options={solid}, only marks]
table {%
0.1 0.143472687658288
0.2 0.200154814542508
0.3 0.261173322323897
0.4 0.333611091182933
0.5 0.388471756945776
0.6 0.452037618733113
0.7 0.538438344133525
0.8 0.633236305061922
0.9 0.689615060270478
0.95 0.759102667063285
1 0.812038593256259
1.05 0.853220455678686
1.1 0.932296396490689
1.15 0.938388227141634
1.2 0.984876458442079
1.3 0.975123834917463
1.4 0.99911317732068
1.5 0.99937338773747
};
\addlegendentry{AMP}
\addplot [semithick, red, mark=asterisk, mark size=3, mark options={solid}, only marks]
table {%
0.1 0.13911130516686
0.2 0.193032215360568
0.3 0.250477340602673
0.4 0.314783493877905
0.5 0.360053602599149
0.6 0.408205514094429
0.7 0.468590174793638
0.8 0.531368008746252
0.9 0.572389931187148
0.95 0.606961626970654
1 0.625476329598543
1.05 0.6474075724537
1.1 0.670799922847637
1.15 0.681335994607237
1.2 0.706659792018232
1.3 0.740659021300921
1.4 0.76749414647524
1.5 0.800530761905177
};
\addlegendentry{AMP - Gaussian}
\end{axis}

\end{tikzpicture}
  \vspace{-2\baselineskip}
  \caption{$\lambda=0.3$}
\end{subfigure}
\caption{Noisy QGT with noise $\Psi_i\sim\text{Uniform}[-\lambda\sqrt{p},\lambda\sqrt{p}]$. Plots show squared overlap vs.~$\delta$: $\alpha=0.5$, $\pi=0.1$, and $p=500$.}
\label{fig:noisy QGT_corr_v_delta_and_FPR_v_FNR}
\end{figure}

\begin{figure}[t]
    \centering
    % This file was created with tikzplotlib v0.10.1.
\begin{tikzpicture}[scale=0.8]

\pgfkeys{
        /pgfplots/every y tick label/.append style  =
            { 
              /pgf/number format/.cd,
               precision = 2, 
               fixed
            }
    }

\definecolor{darkgray176}{RGB}{176,176,176}
\definecolor{green}{RGB}{0,128,0}
\definecolor{lightgray204}{RGB}{204,204,204}

\begin{axis}[
legend cell align={left},
legend style={fill opacity=0.8, draw opacity=1, text opacity=1, draw=lightgray204},
tick align=outside,
tick pos=left,
x grid style={darkgray176},
xlabel={FNR},
xmajorgrids,
xmin=0.0842561562793741, xmax=1.0408284688323,
xtick style={color=black},
y grid style={darkgray176},
ylabel={FPR},
ymajorgrids,
ymin=-0.0130195027346926, ymax=0.273409557428545,
ytick style={color=black}
]
\addplot [semithick, blue, dashed, mark=x, mark size=3, mark options={solid}]
table {%
0.127736715940871 0.12742587900609
0.21585097882541 0.0672267413432902
0.290751098681582 0.0415577632573232
0.360817019576508 0.027004711739343
0.430263683579704 0.0177345868338001
0.499540551338394 0.0113715162021603
0.576667998401918 0.00668755834111215
0.666640031961646 0.00341712228297106
0.785637235317619 0.00112681690892119
};
\addlegendentry{SE ($\lambda$=0.1)}
\addplot [semithick, red, dashed, mark=x, mark size=3, mark options={solid}]
table {%
0.211701881593371 0.211984058860957
0.38424328972113 0.0956668217548356
0.517181733157678 0.050104212587015
0.625622253286 0.0273373434173956
0.715639047839335 0.0148296121426263
0.796031658235868 0.00761967791259468
0.864326156259742 0.00345450568144594
0.924263604091032 0.00119702896965565
0.974234942727556 0.000199874967103912
};
\addlegendentry{SE ($\lambda$=0.2)}
\addplot [semithick, green, dashed, mark=x, mark size=3, mark options={solid}]
table {%
0.261522724537687 0.260390054693852
0.497793734205143 0.101189180303104
0.661779453648361 0.0449846937188709
0.775843395242489 0.0207167960332517
0.85919009988367 0.00916821802897752
0.917977054835733 0.00381101963380444
0.959284367603995 0.0012840392587228
0.984736652091941 0.000317677532867406
0.997312367122628 2.44367332974928e-05
};
\addlegendentry{SE ($\lambda$=0.3)}
\addplot [semithick, blue, mark=*, mark size=2, mark options={solid}, only marks]
table {%
0.142328390009646 0.127999238080547
0.225215974371336 0.0675562390185389
0.300921590106175 0.0425271259428626
0.365924101775772 0.0277399778085263
0.440477589647647 0.017715658472866
0.509812855657622 0.011892630282276
0.583220462640269 0.00687477205116529
0.673996715724133 0.00381516225573196
0.784701879489701 0.00145297328051485
};
\addlegendentry{AMP ($\lambda$=0.1)}
\addplot [semithick, red, mark=*, mark size=2, mark options={solid}, only marks]
table {%
0.215259783635366 0.216609036433892
0.389253774336801 0.0978589748714494
0.528982551087078 0.0533925738889482
0.638926621928401 0.0293428259405125
0.732825469963421 0.0155622820825804
0.808164971046077 0.0078644859548882
0.876875126548537 0.00378100029427542
0.92749588232487 0.00112339394489579
0.972029326421272 8.81653992093979e-05
};
\addlegendentry{AMP ($\lambda$=0.2)}
\addplot [semithick, green, mark=*, mark size=2, mark options={solid}, only marks]
table {%
0.277944559383438 0.258926310646385
0.515749090255591 0.101312685459262
0.67266997793248 0.043619471362081
0.776812788409188 0.0192895204232817
0.855100603557718 0.00833900575544106
0.915263044272147 0.00359220111524373
0.956125177178782 0.00133927144477003
0.982395999189813 0.000291616856459621
0.997347909170808 0
};
\addlegendentry{AMP ($\lambda$=0.3)}
\end{axis}

\end{tikzpicture}
    \vspace{-2\baselineskip}
    \caption{Noisy QGT with noise $\Psi_i\sim\text{Uniform}[-\lambda\sqrt{p},\lambda\sqrt{p}]$, FPR vs.~FNR: $\alpha=0.5$, $\pi=0.1$, $\delta=0.3$, $p=500$, $\zeta\in\{0.1, 0.2, \dots, 0.9\}$.}
    \label{fig:noisy_QGT_FPR_v_FNR}
\end{figure}
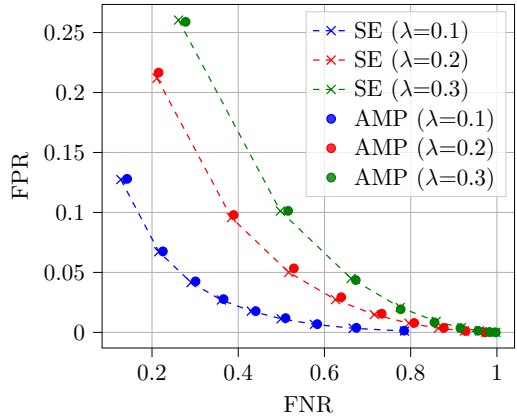

\paragraph{Noisy QGT} To highlight that the AMP algorithm works under different noise distributions, we consider QGT with uniformly distributed noise (recall that Gaussian noise was used for the pooled data simulations). Specifically, $\Psi_i\stackrel{\iid}{\sim}  \text{Uniform}[-\lambda\sqrt{p},\lambda\sqrt{p}]$ for some constant $\lambda$. After rescaling  as in \eqref{eq:noisy_QGT_rescaled},  we get $\tPsi_i={\tPsi_i}/{\sqrt{n\alpha(1-\alpha)}}$ whose empirical distribution converges to $\Bar{\Psi} \sim \text{Uniform}[-\tilde{\lambda},\tilde{\lambda}]$, where $\tilde{\lambda}=\lambda/\sqrt{\delta\alpha(1-\alpha)}$.
As expected, Figure \ref{fig:noisy QGT_corr_v_delta_and_FPR_v_FNR} and Figure \ref{fig:noisy_QGT_FPR_v_FNR} show that the performance of the AMP algorithm improves as the noise level $\lambda$ decreases.  The red curves in Figure \ref{fig:noisy QGT_corr_v_delta_and_FPR_v_FNR} show the performance AMP with a denoiser $g_k$ that is optimal for Gaussian noise of the same variance; the blue curves correspond to $g_k$ that is Bayes-optimal for the uniform noise distribution.   The comparison highlights the usefulness of tailoring the denoiser to the noise distribution, with the difference between the curves more pronounced as the noise level $\lambda$ increases.

We benchmark AMP against  a convex programming estimator that is a variant of basis pursuit denoising (BPDN) \cite{Che01}.  The convex program for BPDN can be written as
\begin{align*}
    \text{minimize}\quad & \|\beta\|_1 \\
    \text{subject to}\quad & \|y-X\beta\|_\infty\leq \lambda\sqrt{p},
    \text{ and }
    0\leq\beta_j\leq1 \text{ for } j\in[p],
\end{align*}
where the $\ell_\infty$-constraint is tailored to uniform noise. Similar to the AMP algorithm, we can obtain a FPR vs.~FNR trade-off curve by using a thresholding function, with a varying threshold.  Figure \ref{fig:noisy QGT_AMP_v_others_corr_v_delta} shows that the AMP algorithm outperforms BPDN for all values of $\delta\leq1$. Figure \ref{fig:noisy QGT_AMP_v_others_FPR_FNR} backs this up, showing that the FPR vs.~FNR curve for the AMP algorithm generally lies below that of BPDN.

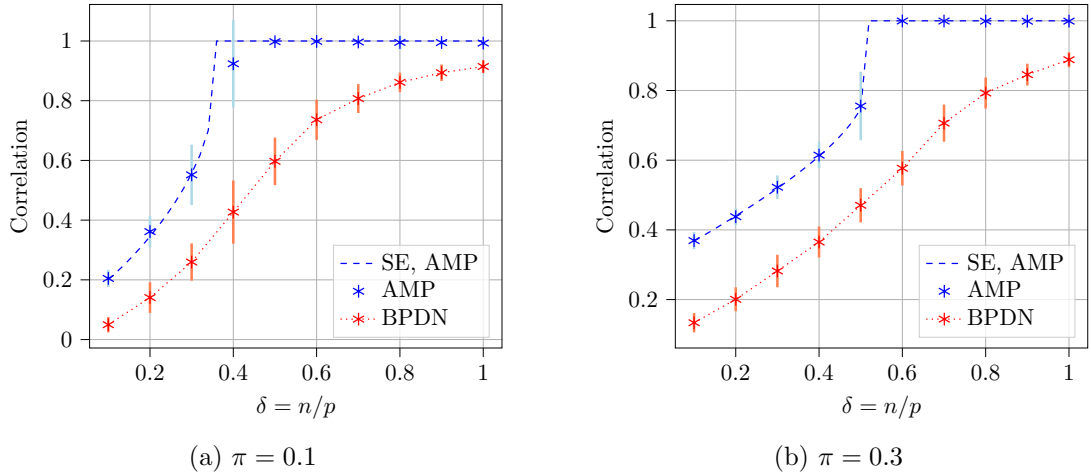
\begin{figure}[t]
\centering
\begin{subfigure}[b]{0.45\textwidth}
  \centering
  % This file was created with tikzplotlib v0.10.1.
\begin{tikzpicture}[scale=0.8]

\definecolor{coral}{RGB}{255,127,80}
\definecolor{darkgray176}{RGB}{176,176,176}
\definecolor{lightblue}{RGB}{173,216,230}
\definecolor{lightgray204}{RGB}{204,204,204}

\begin{axis}[
legend cell align={left},
legend style={
  fill opacity=0.8,
  draw opacity=1,
  text opacity=1,
  at={(0.97,0.03)},
  anchor=south east,
  draw=lightgray204
},
tick align=outside,
tick pos=left,
x grid style={darkgray176},
xlabel={\(\displaystyle \delta=n/p\)},
xmajorgrids,
xmin=0.055, xmax=1.045,
xtick style={color=black},
y grid style={darkgray176},
ylabel={Overlap},
ymajorgrids,
ymin=-0.028271425986096, ymax=1.12238741783134,
ytick style={color=black}
]
\addplot [semithick, blue, dashed]
table {%
0.1 0.20076568656703
0.12 0.22605214677253
0.14 0.253027484204135
0.16 0.28175767932255
0.18 0.31237221721954
0.2 0.345064753710109
0.22 0.380069028509801
0.24 0.417813509603197
0.26 0.45890987087389
0.28 0.504453869605085
0.3 0.55611198474379
0.32 0.617614261431547
0.34 0.701152095078864
0.36 0.999999426336857
0.38 0.999999426316857
0.4 0.999999426296857
0.42 0.999999998634709
0.44 0.999999426256857
0.46 0.999999426236857
0.48 0.999999999519998
0.5 0.999999426196857
0.52 0.999999426176857
0.54 0.999999426156857
0.56 0.999999999439914
0.58 0.999999426116857
0.6 0.999999426096857
0.62 0.999999426076857
0.64 0.999999999356976
0.66 0.99999999934
0.68 0.999999426016857
0.7 0.999999425996857
0.72 0.999999425976857
0.74 0.999999425956857
0.76 0.999999425936857
0.78 0.999999998544896
0.8 0.999999999196067
0.82 0.999999999179998
0.84 0.99999999916
0.86 0.99999999914
0.88 0.999999425816858
0.9 0.999999425796859
0.92 0.99999942577686
0.94 0.99999942575686
0.96 0.99999942573686
0.98 0.99999942571686
1 0.99999942569686
};
\addlegendentry{SE, AMP}
\path [draw=lightblue, very thick]
(axis cs:0.1,0.175946517641851)
--(axis cs:0.1,0.233261026746925);

\path [draw=lightblue, very thick]
(axis cs:0.2,0.308615143586235)
--(axis cs:0.2,0.414466842001875);

\path [draw=lightblue, very thick]
(axis cs:0.3,0.450480858082634)
--(axis cs:0.3,0.652879704008887);

\path [draw=lightblue, very thick]
(axis cs:0.4,0.777407990261076)
--(axis cs:0.4,1.07008474311237);

\path [draw=lightblue, very thick]
(axis cs:0.5,0.9904193514186)
--(axis cs:0.5,1.00519207934914);

\path [draw=lightblue, very thick]
(axis cs:0.6,0.995809087743473)
--(axis cs:0.6,1.00155320682603);

\path [draw=lightblue, very thick]
(axis cs:0.7,0.9870215808512)
--(axis cs:0.7,1.00637746401868);

\path [draw=lightblue, very thick]
(axis cs:0.8,0.970799082081444)
--(axis cs:0.8,1.01975546441646);

\path [draw=lightblue, very thick]
(axis cs:0.9,0.986098318447375)
--(axis cs:0.9,1.00361687391537);

\path [draw=lightblue, very thick]
(axis cs:1,0.983239838285682)
--(axis cs:1,1.00298225352815);

\path [draw=coral, very thick]
(axis cs:0.1,0.0240312487328785)
--(axis cs:0.1,0.0754834408120254);

\path [draw=coral, very thick]
(axis cs:0.2,0.0893563752269874)
--(axis cs:0.2,0.191978588325605);

\path [draw=coral, very thick]
(axis cs:0.3,0.197239302095149)
--(axis cs:0.3,0.321996954847857);

\path [draw=coral, very thick]
(axis cs:0.4,0.320984499350293)
--(axis cs:0.4,0.532709311902547);

\path [draw=coral, very thick]
(axis cs:0.5,0.517384963447208)
--(axis cs:0.5,0.676504482572392);

\path [draw=coral, very thick]
(axis cs:0.6,0.668671447049169)
--(axis cs:0.6,0.803858911019094);

\path [draw=coral, very thick]
(axis cs:0.7,0.758883656058897)
--(axis cs:0.7,0.855592131714777);

\path [draw=coral, very thick]
(axis cs:0.8,0.828644377995757)
--(axis cs:0.8,0.894183879193184);

\path [draw=coral, very thick]
(axis cs:0.9,0.865858640384381)
--(axis cs:0.9,0.921430551056449);

\path [draw=coral, very thick]
(axis cs:1,0.893939014643241)
--(axis cs:1,0.935196692408154);

\addplot [semithick, blue, mark=asterisk, mark size=3, mark options={solid}, only marks]
table {%
0.1 0.204603772194388
0.2 0.361540992794055
0.3 0.551680281045761
0.4 0.923746366686723
0.5 0.997805715383869
0.6 0.998681147284751
0.7 0.996699522434938
0.8 0.995277273248951
0.9 0.994857596181374
1 0.993111045906917
};
\addlegendentry{AMP}
\addplot [semithick, red, dotted, mark=asterisk, mark size=3, mark options={solid}]
table {%
0.1 0.049757344772452
0.2 0.140667481776296
0.3 0.259618128471503
0.4 0.42684690562642
0.5 0.5969447230098
0.6 0.736265179034132
0.7 0.807237893886837
0.8 0.86141412859447
0.9 0.893644595720415
1 0.914567853525698
};
\addlegendentry{BPDN}
\end{axis}

\end{tikzpicture}
  \vspace{-2\baselineskip}
  \caption{$\pi=0.1$}
\end{subfigure}
\hspace{0.5cm}
\begin{subfigure}[b]{0.45\textwidth}
  \centering
  % This file was created with tikzplotlib v0.10.1.
\begin{tikzpicture}[scale=0.8]

\definecolor{coral}{RGB}{255,127,80}
\definecolor{darkgray176}{RGB}{176,176,176}
\definecolor{lightblue}{RGB}{173,216,230}
\definecolor{lightgray204}{RGB}{204,204,204}

\begin{axis}[
legend cell align={left},
legend style={
  fill opacity=0.8,
  draw opacity=1,
  text opacity=1,
  at={(0.97,0.03)},
  anchor=south east,
  draw=lightgray204
},
tick align=outside,
tick pos=left,
x grid style={darkgray176},
xlabel={\(\displaystyle \delta=n/p\)},
xmajorgrids,
xmin=0.055, xmax=1.045,
xtick style={color=black},
y grid style={darkgray176},
ylabel={Overlap},
ymajorgrids,
ymin=0.0610502418809091, ymax=1.0468808065147,
ytick style={color=black}
]
\addplot [semithick, blue, dashed]
table {%
0.1 0.367783226218656
0.12 0.381836596660677
0.14 0.396075254236371
0.16 0.410515172252168
0.18 0.425173846664275
0.2 0.440066998227394
0.22 0.455232330070225
0.24 0.470689927763187
0.26 0.486471418949723
0.28 0.502625570640251
0.3 0.519201196856162
0.32 0.536263767708899
0.34 0.553893690642418
0.36 0.572195555000318
0.38 0.591295625948903
0.4 0.611402488416724
0.42 0.632747089100948
0.44 0.655795625565958
0.46 0.681259384741387
0.48 0.710448102450062
0.5 0.747029517067073
0.52 0.999999426523524
0.54 0.999999426516857
0.56 0.99999999970601
0.58 0.999999426503524
0.6 0.999999426496857
0.62 0.999999999793334
0.64 0.999999426483524
0.66 0.99999999978
0.68 0.99999942647019
0.7 0.999999426463524
0.72 0.999999426456857
0.74 0.999999999753333
0.76 0.999999426443524
0.78 0.999999998656426
0.8 0.999999999733334
0.82 0.999999426423524
0.84 0.999999426416857
0.86 0.99999942641019
0.88 0.999999999693575
0.9 0.9999999997
0.92 0.999999999691176
0.94 0.999999426383524
0.96 0.999999426376857
0.98 0.99999942637019
1 0.999999426363524
};
\addlegendentry{SE, AMP}
\path [draw=lightblue, very thick]
(axis cs:0.1,0.344607433416852)
--(axis cs:0.1,0.393977868033025);

\path [draw=lightblue, very thick]
(axis cs:0.2,0.413600035685119)
--(axis cs:0.2,0.461933428377274);

\path [draw=lightblue, very thick]
(axis cs:0.3,0.488695111057625)
--(axis cs:0.3,0.556241080294531);

\path [draw=lightblue, very thick]
(axis cs:0.4,0.577572421478383)
--(axis cs:0.4,0.651543292727169);

\path [draw=lightblue, very thick]
(axis cs:0.5,0.657233529639651)
--(axis cs:0.5,0.853674098436342);

\path [draw=lightblue, very thick]
(axis cs:0.6,0.998101360314041)
--(axis cs:0.6,1.00085250886666);

\path [draw=lightblue, very thick]
(axis cs:0.7,0.996560799363102)
--(axis cs:0.7,1.00207032630407);

\path [draw=lightblue, very thick]
(axis cs:0.8,0.996482833053805)
--(axis cs:0.8,1.0014720522762);

\path [draw=lightblue, very thick]
(axis cs:0.9,0.996464641777867)
--(axis cs:0.9,1.00129751222778);

\path [draw=lightblue, very thick]
(axis cs:1,0.996840486791841)
--(axis cs:1,1.0013012408448);

\path [draw=coral, very thick]
(axis cs:0.1,0.105860722091536)
--(axis cs:0.1,0.161323893158088);

\path [draw=coral, very thick]
(axis cs:0.2,0.166506463544948)
--(axis cs:0.2,0.234985198290698);

\path [draw=coral, very thick]
(axis cs:0.3,0.235422790012887)
--(axis cs:0.3,0.328499596114838);

\path [draw=coral, very thick]
(axis cs:0.4,0.320717309919406)
--(axis cs:0.4,0.409532976295338);

\path [draw=coral, very thick]
(axis cs:0.5,0.421391277425692)
--(axis cs:0.5,0.519683719604205);

\path [draw=coral, very thick]
(axis cs:0.6,0.527492711740712)
--(axis cs:0.6,0.626236596033471);

\path [draw=coral, very thick]
(axis cs:0.7,0.653154897260308)
--(axis cs:0.7,0.759453004452621);

\path [draw=coral, very thick]
(axis cs:0.8,0.747989951120786)
--(axis cs:0.8,0.837457372263822);

\path [draw=coral, very thick]
(axis cs:0.9,0.814282419572576)
--(axis cs:0.9,0.876733936105087);

\path [draw=coral, very thick]
(axis cs:1,0.866631659460966)
--(axis cs:1,0.910133052625631);

\addplot [semithick, blue, mark=asterisk, mark size=3, mark options={solid}, only marks]
table {%
0.1 0.369292650724939
0.2 0.437766732031196
0.3 0.522468095676078
0.4 0.614557857102776
0.5 0.755453814037996
0.6 0.999476934590351
0.7 0.999315562833587
0.8 0.998977442665
0.9 0.998881077002822
1 0.99907086381832
};
\addlegendentry{AMP}
\addplot [semithick, red, dotted, mark=asterisk, mark size=3, mark options={solid}]
table {%
0.1 0.133592307624812
0.2 0.200745830917823
0.3 0.281961193063863
0.4 0.365125143107372
0.5 0.470537498514948
0.6 0.576864653887092
0.7 0.706303950856465
0.8 0.792723661692304
0.9 0.845508177838832
1 0.888382356043299
};
\addlegendentry{BPDN}
\end{axis}

\end{tikzpicture}
  \vspace{-2\baselineskip}
  \caption{$\pi=0.3$}
\end{subfigure}
\caption{Noisy QGT with noise $\Psi_i\sim\text{Uniform}[-\lambda\sqrt{p},\lambda\sqrt{p}]$, squared overlap vs.~$\delta$: $\alpha=0.5$, $\lambda=0.1$, and $p=500$ for both AMP and BPDN.}
\label{fig:noisy QGT_AMP_v_others_corr_v_delta}
\end{figure}
\begin{figure}[t]
\centering
\begin{subfigure}[b]{0.45\textwidth}
  \centering
  % This file was created with tikzplotlib v0.10.1.
\begin{tikzpicture}[scale=0.8]

\definecolor{darkgray176}{RGB}{176,176,176}
\definecolor{lightgray204}{RGB}{204,204,204}

\pgfkeys{
        /pgfplots/every y tick label/.append style  =
            { 
              /pgf/number format/.cd,
               precision = 2, 
               fixed
            }
    }

\begin{axis}[
legend cell align={left},
legend style={fill opacity=0.8, draw opacity=1, text opacity=1, draw=lightgray204},
tick align=outside,
tick pos=left,
x grid style={darkgray176},
xlabel={FNR},
xmajorgrids,
xmin=0.0925709201874654, xmax=0.841382570287082,
xtick style={color=black},
y grid style={darkgray176},
ylabel={FPR},
ymajorgrids,
ymin=-0.00527713076642693, ymax=0.133654307762062,
ytick style={color=black}
]
\addplot [semithick, blue, dashed, mark=x, mark size=3, mark options={solid}]
table {%
0.126607813373812 0.127339242374403
0.215806503155708 0.0676457775841683
0.291164016936966 0.0418363219330921
0.361228728928657 0.0272218913080199
0.42995526084525 0.0177515713485593
0.500299592554126 0.0111294595627784
0.575996644563394 0.00656099143870629
0.665125429415994 0.00336828676333313
0.784453143724535 0.00103793462123165
};
\addlegendentry{SE, AMP}
\addplot [semithick, blue, mark=*, mark size=2, mark options={solid}, only marks]
table {%
0.13663389643522 0.124654049420179
0.224440835664118 0.0662900328041285
0.294618482033286 0.0408409035503691
0.359904315381454 0.0272518611716491
0.427786662816896 0.0186123229025989
0.490675101470979 0.0122419684793685
0.560192632586518 0.00754626399069147
0.649407751924307 0.00396426790947798
0.756517274257096 0.00156066202952662
};
\addlegendentry{AMP}
\addplot [semithick, red, dotted, mark=*, mark size=2, mark options={solid}]
table {%
0.394442967706135 0.125324232720871
0.448627322291919 0.0987007679487499
0.500891425985549 0.075168197447596
0.559332203849181 0.0571860726498079
0.617841415365903 0.0430170812242174
0.671904536421688 0.0319313664144723
0.72596863431238 0.0236069368271756
0.767316524655001 0.0168938201498267
0.807345677100736 0.0114352629500408
};
\addlegendentry{BPDN}
\end{axis}

\end{tikzpicture}
  \vspace{-2\baselineskip}
  \caption{$\pi=0.1$}
\end{subfigure}
\hspace{0.5cm}
\begin{subfigure}[b]{0.45\textwidth}
  \centering
  % This file was created with tikzplotlib v0.10.1.
\begin{tikzpicture}[scale=0.8]

\definecolor{darkgray176}{RGB}{176,176,176}
\definecolor{lightgray204}{RGB}{204,204,204}

\begin{axis}[
legend cell align={left},
legend style={fill opacity=0.8, draw opacity=1, text opacity=1, draw=lightgray204},
tick align=outside,
tick pos=left,
x grid style={darkgray176},
xlabel={FNR},
xmajorgrids,
xmin=0.00110286475209297, xmax=0.980240945326825,
xtick style={color=black},
y grid style={darkgray176},
ylabel={FPR},
ymajorgrids,
ymin=-0.0285961160691548, ymax=0.64358590582132,
ytick style={color=black}
]
\addplot [semithick, blue, dashed, mark=x, mark size=3, mark options={solid}]
table {%
0.0476344620682981 0.612831141312186
0.139843220749056 0.381637283445397
0.244980093907367 0.245092761689629
0.356250646572581 0.155963577846855
0.469179762194776 0.0960129536527102
0.58593992464718 0.0551943833236
0.704241849848327 0.027460829907631
0.823925354655489 0.0103834835216458
0.934768752273433 0.00195761219859412
};
\addlegendentry{SE, AMP}
\addplot [semithick, blue, mark=*, mark size=2, mark options={solid}, only marks]
table {%
0.0456091411418535 0.613032177553572
0.142909545517891 0.380004755640193
0.251254434463913 0.243186378202508
0.362666888460506 0.155831612180663
0.476376208707086 0.0955671422366251
0.585875382892488 0.0530010881900061
0.705675902264178 0.0265536138771784
0.825620981271123 0.0110704486170465
0.935734668937064 0.00250180894495603
};
\addlegendentry{AMP}
\addplot [semithick, red, dotted, mark=*, mark size=2, mark options={solid}]
table {%
0.405699719954067 0.288105764807341
0.438948691563893 0.260740385900421
0.468919238088942 0.237220815023775
0.503028415388955 0.214463644767347
0.537978671113723 0.192496574986101
0.571982669923217 0.172959099593896
0.601563086172324 0.153963376224921
0.632190060285685 0.135893235611623
0.660020613538654 0.119687180480514
};
\addlegendentry{BPDN}
\end{axis}

\end{tikzpicture}
  \vspace{-2\baselineskip}
  \caption{$\pi=0.3$}
\end{subfigure}
\caption{Noisy QGT with noise $\Psi_i\sim\text{Uniform}[-\lambda\sqrt{p},\lambda\sqrt{p}]$, FPR vs.~FNR: $\alpha=0.5$, $\lambda=0.1$, $\delta=0.3$, $p=500$, and $\zeta\in\{0, 0.1, \dots, 1.0\}$ for both AMP and BPDN.}
\label{fig:noisy QGT_AMP_v_others_FPR_FNR}
\end{figure}
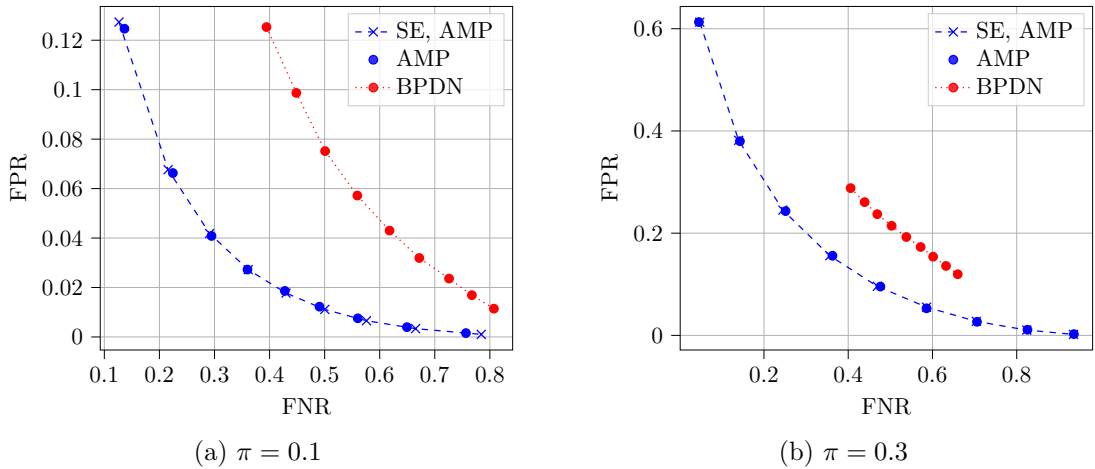

\section{Discussion and Future Directions} \label{sec:disc}

In this paper, we established a  state evolution result for the AMP algorithm applied to a matrix GLM with a generalized white noise design matrix. Using this result, we obtained rigorous performance guarantees for pooled data decoding and QGT, with i.i.d.~Bernoulli design matrices.

% \paragraph{AMP vs convex estimators} The numerical results in Section \ref{sec:sim_pool_data} and \ref{sec:QGT_sims} show that AMP outperforms the convex programming estimator for noiseless pooled data and QGT, but the convex estimator performs slightly better for noisy pooled data with three categories, up to a certain value of $\delta$.  The fact that the convex estimator beats AMP in any regime is surprising, since for linear regression with a vector-valued signal, it has been shown that  AMP with Bayes-optimal denoisers is superior to convex procedures for a large class of priors, and moreover, is conjectured to be optimal among polynomial-time algorithms \cite{Cel22}. However, this result does not apply to the pooled data problem where the signal is matrix-valued. (Even if we vectorize the pooled data problem in \eqref{eq:opt_defs}, the effective design matrix is then block-diagonal, which does not satisfy the assumptions in  \cite{Cel22} or other analyses of convex estimators \cite{Han23}.) Obtaining asymptotic performance  guarantees for the convex pooled data estimator, e.g. via Gaussian comparison theorems \cite{Thram15} and associated universality results \cite{Han23}, is an interesting direction for future work.
 
\paragraph{Beyond i.i.d. designs} Based on results for Gaussian linear and generalized linear models \cite{Don13,Cob23}, we expect that the performance of AMP can be improved using spatially coupled Bernoulli designs. 
%We expect that our analysis based on the universality result of \cite{Wan22} can be extended to spatially coupled designs. 
Our analysis can also be generalized to designs where the Bernoulli parameter may vary across tests, e.g.,  due to varying constraints on the number of items that can be included in each test.

\paragraph{Non-additive noise models} One could consider a general noise model for pooled data and QGT, where the observations can be represented as
$Y=q(XB, \Psi)$, for a \emph{nonlinear} function $q$ that acts row-wise. The main challenge in analyzing such a model is that to use AMP and obtain guarantees via Theorem \ref{thm:GAMP}, it  needs to be expressed in terms of an equivalent model  $\tY=\tilde{q}(\tX B, \tPsi)$, where $\tX$ is a centered and rescaled matrix  with independent zero-mean entries of order $1/n$. The additive noise model allows us to obtain the equivalent model in a straightforward way (see \eqref{eq:mod_model} and \eqref{eq:noisy_QGT_rescaled}). It would be interesting to investigate general noise models that can be expressed in terms of a suitable equivalent model through  transformations of the data.

\paragraph{AMP for Boolean group testing} Recall that in noiseless BGT, the outcome of test $i \in [n]$ is $Y_i=\mathds{1}\{X_{i,:}^\top\beta>0\}$. As in QGT, the signal $\beta \in  \reals^p$ is a binary vector, with ones in the entries corresponding to defective items. In the linear regime, where the number of defectives $d$ is proportional to $p$, it is known that individual testing is optimal for \emph{exact} recovery \cite{Ald18,Bay20}. On the other hand, precisely characterizing the performance under an approximate recovery criterion, specifically the trade-off between FPR and FNR  in the linear regime,  is an open question  \cite{Tan23b}.
The similarity between BGT and QGT suggests that the AMP analysis for the latter can be generalized to provide rigorous guarantees for BGT with i.i.d.~Bernoulli matrices, similar to those of Theorem \ref{thm:GAMP_vec} for QGT. However, this is not the case. We briefly describe  the challenges in extending our results to BGT.

Consider noiseless BGT with an i.i.d.~$\text{Bernoulli}(\alpha)$  design matrix $X$. Using the decomposition in \eqref{eq:X_ij_decomp}, we can rewrite \eqref{eq:BGT} as the following re-centered and rescaled model: 
$$Y_i=\mathds{1}\left\{\tX_{i,:}^\top\beta>\frac{-\alpha p \hat{\pi}}{\sqrt{n\alpha(1-\alpha)}}\right\}=q(\tX_{i,:}^\top\beta).$$
With $\tY_i:=Y_i$, this is a special case of the matrix-GLM model in \eqref{eq:matrix-GLM}. For each test in BGT to be informative (i.e., have a non-vanishing probability of being either positive or negative),  we need $\alpha=\Theta\big(\frac{1}{p}\big)$, since the number of defectives is a constant fraction of $p$ \cite[Section 2.1]{Ald19}. However, with this choice of $\alpha$, the re-centered and rescaled  matrix $\tX$ does not satisfy the second condition in Definition \ref{def:gen_white_noise_matrix}, which is required for rigorous AMP guarantees. In contrast, for QGT, the choice of $\alpha = \Theta(1)$ both leads to informative tests and gives an $\tX$ satisfying all the conditions in Definition \ref{def:gen_white_noise_matrix}. We can attempt to overcome this issue by choosing $\alpha=\Theta\big(\frac{\log n}{n}\big)$, which ensures that $\tX$ satisifes  Definition \ref{def:gen_white_noise_matrix}, but this choice leads to uninformative tests. Intuitively, when $\alpha = \Theta\big(\frac{\log n}{n}\big)$, the average number of defective items in each test is of order $\frac{p\log n}{n}=\Theta(\log n)$ (where we used $n=\Theta(p)$), causing the test outcome to be positive with probability tending to $1$. That is, each individual test provides almost no information as $p$ grows large.

A potential solution would be to shift away from the linear regime where the number of defectives $d=\Theta(p)$, and instead apply the AMP algorithm to the mildly sublinear regime $d=\frac{p}{\log p}$. This would lead to each test being informative with $\alpha=\Theta\big(\frac{\log n}{n}\big)$, but does not satisfy the AMP assumption that requires the empirical distribution of the signal $\beta$ to converge to a well-defined limiting distribution. As a result, we would need to develop a new AMP algorithm that works for signals with sublinear sparsity -- we leave this as an open problem.

\newpage

\bibliographystyle{siamplain}
\bibliography{NT_References}

\newpage

\appendix

%%%%%%

\section{Proof of Theorem \ref{thm:GAMP}} \label{sec:GAMP_thm_proof}
We begin by stating the abstract AMP iteration from \cite{Wan22} for a generalized white noise matrix $\tX\in\mb{R}^{n\times p}$, as defined in Definition \ref{def:gen_white_noise_matrix}. For $t \ge 1$, the iterates of the  abstract AMP, denoted by $h^t\in\mb{R}^{p}$ and $e^t\in\mb{R}^n$, are produced using functions $f_t^v:\mb{R}^{t+2L}\rightarrow\mb{R}$, $f_{t+1}^u:\mb{R}^{t+L_\Psi}\rightarrow\mb{R}$.
Given an initializer $u^1\in\mb{R}^n$, side information $c^1,\dots,c^{L_\Psi}\in\mb{R}^n$ and $d^1,\dots,d^{2L}\in\mb{R}^p$, all independent of $\tX$, the iterates of the abstract AMP are computed as:
\begin{align}
\begin{split}
    h^t&=\sqrt{\delta}\tX^\top u^t-\sum_{s=1}^{t-1}b_{s}^tv^s, \qquad 
    v^t=f_t^v(h^1,\dots,h^t,d^1,\dots,d^{2L}) \\
    e^t&=\sqrt{\delta}\tX v^t-\sum_{s=1}^ta_{s}^tu^s, \qquad
    u^{t+1}=f_{t+1}^u(e^1,\dots,e^t,c^1,\dots,c^{L_\Psi}),
\end{split} \label{eq:abs_AMP}
\end{align}
where the memory coefficients  $\{b_{s}^t\}_{s<t}$ and $\{a_{s}^t\}_{s\leq t}$ are defined below in \eqref{eq:abstract_AMP_onsager}. We note that the matrix $\tX$ used in \eqref{eq:abs_AMP} has a different scaling to the one used in \cite{Wan22}: the $(i,j)$th entry of $\tX$  has variance $n^{-1}S_{ij}$,  whereas in \cite{Wan22} the variance is $p^{-1}S_{ij}$. 
This difference is accounted for by the $\sqrt{\delta}$ factor multiplying $\tX$ and $\tX^\top$ in \eqref{eq:abs_AMP}.

Next, we have the following assumptions on the initializer and the side information vectors: 
\begin{assumption} \label{ass:universality_result}
When $n,p\rightarrow\infty$, we have $n/p=\delta\in(0,\infty)$, for fixed $L$. Furthermore, we have
\begin{align*}
    (u^1,c^1,\dots,c^{L_\Psi})\stackrel{W}{\rightarrow}
    (\bu^1,\bc^1,\dots,\bc^{L_\Psi})
    \text{ and }
    (d^1,\dots,d^{2L})\stackrel{W}{\rightarrow}
    (\bd^1,\dots,\bd^{2L}),
\end{align*}
for joint limit laws $(\bu^1,\bc^1,\dots,\bc^{L_\Psi})$ and $(\bd^1,\dots,\bd^{2L})$ having finite moments of all orders, where $\E[(\bu^1)^2]\geq0$. Multivariate polynomials are dense in the real $L^2$-spaces
of functions $f:\mb{R}^{L_\Psi+1}\rightarrow\mb{R}$, $g:\mb{R}^{2L}\rightarrow\mb{R}$ with the inner-products
\begin{align*}
    \langle f,\tilde{f} \rangle
    :=\E\big[f(\bu^1,\bc^1,\dots,\bc^{L_\Psi})\tilde{f}(\bu^1,\bc^1,\dots,\bc^{L_\Psi})\big]
    \text{ and }
    \langle g,\tilde{g} \rangle
    :=\E\big[g(\bd^1,\dots,\bd^{2L})\tilde{g}(\bd^1,\dots,\bd^{2L})\big].
\end{align*}
\end{assumption}

The state evolution result below states that the joint empirical distribution of $(h^1, \ldots, h^t)$ converges to a Gaussian law $\normal(0, \Omega^t)$. Similarly, the joint empirical distribution of $(e^1, \ldots, e^t)$ converges to $\normal(0, \Gamma^t)$. The covariance matrices $\Omega^t, \Gamma^t \in \reals^{t \times t}$ are iteratively defined as follows, starting from $\Omega^1=\delta\E[(\bu^1)^2]\in\mb{R}^{1\times 1}$. Given $\Omega^t$, for $t \ge 1$, let $(\bh^1,\dots,\bh^t)\sim\normal(0,\Omega^t)$ independent of $(\bd^1,\dots,\bd^{2L})$ and define
\begin{align*}
    \bv^s=f_s^v(\bh^1,\dots,\bh^s,\bd^1,\dots,\bd^{2L}), \quad  s\in[t].
\end{align*}
Then, $\Gamma^t=(\E[\bv^r\bv^s])_{r,s=1}^t\in\mb{R}^{t\times t}$. Next, let 
$(\be^1,\dots,\be^t)\sim\normal(0,\Gamma^t)$ independent of $(\bu^1,\bc^1,\dots,$ $\bc^{L_\Psi})$ and define
\begin{align*}
    \bu^{s+1}=f_{s+1}^u(\be^1,\dots,\be^s,\bc^1,\dots,\bc^{L_\Psi}), \quad s \in [t].
\end{align*}
Then,  $\Omega^{t+1}=(\delta\cdot\E[\bu^r\bu^s])_{r,s=1}^{t+1}\in\mb{R}^{(t+1)\times(t+1)}$.
We then define the memory coefficients in \eqref{eq:abs_AMP} as
\begin{align}
    a_{s}^t=\E\big[\partial_sf_t^v(\bh^1,\dots,\bh^t,\bd^1,\dots,\bd^{2L})\big]
    \text{ and }
    b_{s}^t=\delta\cdot\E\big[\partial_sf_t^u(\be^1.\dots,\be^{t-1},\bc^1,\dots,\bc^{L_\Psi})\big],
    \label{eq:abstract_AMP_onsager}
\end{align}
where $\partial_s$ denotes partial derivative in the $s$th argument. The following theorem gives the state evolution result for the abstract AMP iteration.
\begin{theorem} \label{thm:abs_AMP} \textup{\cite[Theorem 2.21]{Wan22}}
Let $\tX\in\mb{R}^{n\times p}$ be a generalized white noise matrix (as defined in Definition \ref{def:gen_white_noise_matrix}) with variance profile $S\in\mb{R}^{n\times p}$, and let $u^1,c^1,\dots,c^{L_\Psi},d^1,\dots,d^{2L}$ be independent of $\tX$ and satisfy Assumption \ref{ass:universality_result}. Suppose that
\begin{enumerate}
    \item Each function $f_t^v:\mb{R}^{t+2L}\rightarrow\mb{R}$ and $f_{t+1}^u:\mb{R}^{t+L_\Psi}\rightarrow\mb{R}$ is continuous, is Lipschitz in its first $t$ arguments, and satisfies the polynomial growth condition in \eqref{eq:poly_growth_cond} for some order $r\geq 1$.
    \item $\|\tX\|_{\textup{op}}<C$, for some constant $C$ almost surely for all large $n$ and $p$.
    \item For any fixed polynomial functions $f^\dag:\mb{R}^{L_\Psi+1\rightarrow\mb{R}}$ and $f^\ddag:\mb{R}^{2L}\rightarrow\mb{R}$, almost surely as $n,p\rightarrow\infty$,
    \begin{align*}
        \max_{i=1}^n\left|\langle f^\dag(u^1,c^1,\dots,c^{L_\Psi})\odot S_{i,:}\rangle-\langle f^\dag(u^1,c^1,\dots,c^{L_\Psi})\rangle\cdot\langle S_{i,:}\rangle\right|
        \rightarrow0 \\
        \max_{j=1}^p\left|\langle f^\ddag(d^1,\dots,d^{2L})\odot S_{:,j}\rangle-\langle f^\ddag(d^1,\dots,d^{2L})\rangle\cdot\langle S_{:,j}\rangle\right|
        \rightarrow0,
    \end{align*}
\end{enumerate}
Also assume that the matrices $\Omega^t$ and $\Gamma^t$ are non-singular, for $t \ge 1$. Then for any fixed $t\geq1$, almost surely as $n,p\rightarrow\infty$ with $n/p=\delta\in(0,\infty)$, the iterates of the abstract AMP in \eqref{eq:abs_AMP} satisfy
\begin{align*}
    & (u^1,c^1,\dots,c^{L_\Psi},e^1,\dots,e^t)
    \stackrel{W_2}{\rightarrow}(\bu^1,\bc^1,\dots,\bc^{L_\Psi},\be^1,\dots,\be^t) \\
    & (d^1,\dots,d^{2L},h^1,\dots,h^t)
    \stackrel{W_2}{\rightarrow}
    (\bd^1,\dots,\bd^{2L},\bh^1,\dots,\bh^t),
\end{align*}
where $(\bh^1,\dots,\bh^t)\sim\normal(0,\Omega^t)$ and $(\be^1,\dots,\be^t)\sim\normal(0,\Gamma^t)$ are independent of $(\bu^1,\bc^1.\dots,\bc^{L_\Psi})$ and $\bd^1,\dots,\bd^{2L}$.
\end{theorem}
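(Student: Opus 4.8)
The statement to be proved is an AMP \emph{universality} result: the joint empirical distributions of the abstract-AMP iterates $\{h^t\}$, $\{e^t\}$ in \eqref{eq:abs_AMP} driven by a generalized white noise matrix $\tX$ agree, in the limit $n,p\to\infty$ with $n/p\to\delta$, with those obtained when $\tX$ is replaced by a Gaussian matrix with the same variance profile $S$, for which the recursions defining $\Omega^t,\Gamma^t$ and the memory coefficients \eqref{eq:abstract_AMP_onsager} are the classical state evolution. Concretely I would proceed in four stages: (i) reduce to polynomial nonlinearities; (ii) with polynomial nonlinearities, write the iterates as explicit finite-degree polynomials in the entries of $\tX$ and reduce the claim to convergence of mixed moments of these polynomials; (iii) evaluate those moments by a combinatorial/tensor-network expansion, showing that only the variances $n^{-1}S_{ij}$ of $\tX_{ij}$ survive in the limit; and (iv) control the (deterministic) Onsager coefficients and close the argument by induction on $t$.

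For stage (i), the density hypotheses in Assumption \ref{ass:universality_result} (multivariate polynomials dense in the relevant $L^2$ spaces), together with the Lipschitz-in-the-first-$t$-arguments and polynomial-growth conditions on $f_t^v,f_{t+1}^u$, allow each nonlinearity to be approximated by polynomials with an error that is small in the relevant $L^2$ sense. One then shows this error propagates through \eqref{eq:abs_AMP} in a controlled way: the operator-norm bound $\|\tX\|_{\mathrm{op}}<C$ (hypothesis 2) bounds the amplification of perturbations by $\sqrt\delta\tX$ and $\sqrt\delta\tX^\top$, the boundedness and normalization of $S$ in Definition \ref{def:gen_white_noise_matrix} keep the memory terms bounded, and hypothesis 3 (the delocalization condition on the side-information vectors against $S$) ensures the memory coefficients converge. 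A standard $\veps/3$ argument reduces the theorem to the case where every $f_t^v,f_{t+1}^u$ is a multivariate polynomial.

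For stages (ii)--(iii), unrolling \eqref{eq:abs_AMP} with polynomial nonlinearities expresses $h^t$ and $e^t$ as finite-degree polynomials in the entries of $\tX$, the initializer $u^1$, and the side-information vectors. A (polynomial) test function of the iterates, averaged as in \eqref{eq:sum_to_exp}, then becomes a sum of monomials in the $\tX_{ij}$ weighted by products of side-information entries; taking expectations and organizing the index patterns as walks on a bipartite graph (a tensor-network diagram), one argues that (a) any term in which some entry $\tX_{ij}$ appears exactly once vanishes by mean-zero independence; (b) any term with an entry of multiplicity $\ge 3$ is $o(1)$ because $p\cdot\max_{i,j}\E[|\tX_{ij}|^m]\to0$ for $m\ge 3$; and (c) the remaining pairing terms depend on $\tX$ only through $n^{-1}S_{ij}$, and the row/column normalizations $\tfrac1p\sum_j S_{ij}\to1$, $\tfrac1n\sum_i S_{ij}\to1$ together with hypothesis 3 collapse them to exactly the moments of the corresponding Gaussian-variance-profile recursion. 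Hence the limiting moments coincide with those of the Gaussian AMP, whose iterates are jointly asymptotically Gaussian with covariances $\Omega^t,\Gamma^t$ (the variance-profile generalization of Bayati--Montanari, provable by the conditioning technique or invoked as known); upgrading convergence of all moments to $W_2$ convergence uses the finite-moments-of-all-orders assumptions for uniform integrability. The base case $t=1$ is immediate: $h^1=\sqrt\delta\tX^\top u^1$ is a weighted sum of independent mean-zero terms, so a Lindeberg CLT (again using the moment conditions) gives asymptotic Gaussianity with variance $\Omega^1=\delta\,\E[(\bu^1)^2]$.

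For stage (iv), the coefficients $a_s^t,b_s^t$ in \eqref{eq:abstract_AMP_onsager} are the deterministic state-evolution values, and the recursion \eqref{eq:abs_AMP} subtracts precisely these; one must verify that this is the correct debiasing, i.e.\ that the realized empirical averages $\tfrac1n\sum_i\partial_s f_t^v(h^1_i,\dots)$ and their analogues concentrate on $a_s^t$. This is handled inductively: assuming the $W_2$ statement through iteration $t-1$, \eqref{eq:sum_to_exp} gives convergence of these empirical averages to their state-evolution expectations, so the memory terms at iteration $t$ are correctly centered and the moment computation of stage (iii) for iteration $t$ goes through. The main obstacle I anticipate is stage (iii): making the tensor-network expansion rigorous and uniform — identifying which graph structures contribute at leading order, bounding the number and combinatorial weight of the subleading ones uniformly in $n,p$ and in the polynomial degrees, and showing that the higher-moment and off-diagonal variance-profile contributions genuinely vanish. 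This is where the moment-decay and normalization hypotheses of Definition \ref{def:gen_white_noise_matrix} and hypothesis 3 of the theorem are used essentially, and it is the part that truly separates the generalized-white-noise setting from the i.i.d.\ Gaussian one.
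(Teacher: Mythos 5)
This theorem is not proved in the paper at all: it is quoted verbatim from \cite[Theorem 2.17]{Wan22} and used as a black box, so there is no in-paper proof to compare your attempt against. That said, your outline does track the strategy actually used in \cite{Wan22}: the density-of-polynomials clauses in Assumption \ref{ass:universality_result} exist precisely to enable the reduction to polynomial nonlinearities in your stage (i); the core of \cite{Wan22} is a tensor-network/moment expansion of exactly the kind you describe in stages (ii)--(iii), where single occurrences of an entry vanish by mean-zero independence, multiplicities $\ge 3$ are killed by the decay $p\cdot\max_{i,j}\E[|\tX_{ij}|^m]\to 0$, and the surviving pairings reduce to the variance profile; and the memory coefficients in \eqref{eq:abstract_AMP_onsager} are already defined as deterministic state-evolution quantities, so your stage (iv) is the right closing move.

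As a proof, however, the proposal is a plan rather than an argument. The combinatorial heart --- classifying which index patterns contribute at leading order, bounding the subleading ones uniformly in $n,p$ and in the polynomial degree, and exhibiting exactly where hypothesis 3 and the row/column normalizations of $S$ enter --- is precisely the content of \cite{Wan22}, and you explicitly defer it (``the main obstacle I anticipate''). Two further inputs are treated too lightly: (a) the Gaussian baseline with a general variance profile, whose state evolution you propose to ``invoke as known,'' is itself a nontrivial extension of the i.i.d.\ Bayati--Montanari analysis and is part of what \cite{Wan22} establishes; and (b) the stability step in stage (i) needs more than $\|\tX\|_{\mathrm{op}}<C$, since the polynomial approximants are only close in $L^2$ against the limiting law, and one must show that this translates into a small normalized $\ell_2$ error of the iterates uniformly over the finitely many iterations, together with a uniform-integrability argument to upgrade convergence of moments to $W_2$ convergence. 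So the approach is the right one and matches the cited source, but the proposal does not constitute an independent proof of the theorem.
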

The theorem above shows that the state evolution result for an abstract AMP iteration defined via a generalized white noise matrix is the same  as the one for an i.i.d. Gaussian matrix (where $\tX_{ij}\stackrel{\iid}{\sim}\normal(0,\frac{1}{n})$), which was previously given in \cite{Jav13}.

\subsection{Reduction of matrix-AMP to abstract AMP} We now inductively map the iterates of matrix-AMP \eqref{eq:GAMP} to iterates of the abstract AMP.  At each step of the induction, there are two main reductions:
\begin{itemize}
    \item Reduction of the matrix-AMP iterates to the abstract AMP iterates;
    \item Reduction of the matrix-AMP SE parameters to the abstract AMP SE parameters.
\end{itemize}
Recalling that each matrix-AMP iterate has $L$ columns, we will use $4L$ abstract AMP iterations to represent the first iteration of matrix-AMP. We will then represent each of the subsequent matrix-AMP iterations using $2L$ abstract AMP iterations.

We will prove Theorem \ref{thm:GAMP} for a slightly modified version of the matrix-AMP algorithm, where the matrices $C^{k,m+1}$ and $F^{k,m+1}$ in \eqref{eq:CkF_k1_def} are replaced by their deterministic versions, defined in \eqref{eq:CkF_k1_det_def} below. The state evolution result remains valid for the matrix-AMP algorithm in \eqref{eq:GAMP}, by standard arguments in Remark 2.19 of \cite{Wan22}.

Starting with an initialization $\hB^0\in\mb{R}^{p\times L}$ and defining $\hR^{-1} := 0\in\mb{R}^{n\times L}$, for iteration $k \ge 0$ the matrix-AMP algorithm computes:
\begin{align}
\begin{split}
        & \Theta^k=\tX\hB^k-\sum_{m=0}^{k-1}\hR^{m}(F^{k,m+1})^\top, \quad \hR^k=g_k(\Theta^0,\dots,\Theta^k,\tY), \\ 
        & B^{k+1}=\tX^\top\hR^k-\sum_{m=0}^k\hB^m (C^{k,m+1})^\top, \quad \hB^{k+1}=f_{k+1}(B^1,\dots,B^{k+1}).
\end{split}
\label{eq:GAMP_deterministic}
\end{align}
The deterministic matrices $C^{k,m+1}, F^{k,m+1} \in \reals^{L \times L}$ are defined as
\begin{align}
    C^{k,m+1}=\E\big[\partial_{m+1}g_k(Z^0,\dots,Z^k,\bar{Y})\big], \ \  
    F^{k,m+1}=\frac{1}{\delta}\E\big[\partial_{m+1}f_{k}(\Mu_B^{1}\bar{B}+G_B^{1},\dots,\Mu_B^{k}\bar{B}+G_B^{k})\big], 
    \label{eq:CkF_k1_det_def}
\end{align}
where $\partial_{m+1} g_k, \partial_{m+1}f_{k}$ denote the Jacobians of $g_k, f_{k}$, with respect to their $(m+1)$th argument.

\textbf{Initialization of abstract AMP:} We initialize $u^1=0$, and  define the side information vectors
\begin{align}
    & d^1=B_{:,1},\dots,\,d^L=B_{:,L},\,
    d^{L+1}=\hB^0_{:,1},\dots,\,d^{2L}=\hB^0_{:,L} \, ,  \nonumber \\
   & c^1=\tPsi_{:,1},\dots,\,c^{L_\Psi}=\tPsi_{:,L_\Psi}.  \label{eq:abs_AMP_reduction0}
\end{align}

\textbf{Base case:} We consider the case $k=0$, and our goal is to reduce $\hB^0$, $\Theta^0$, $B^1$, $\hR^0$, $\hB^1$, and $\Theta^1$ to abstract AMP iterates, via careful choices of the functions $f^v_t$ and $f^u_{t+1}$. This requires $4L$ abstract AMP iterations. We give a summary of the reductions below, and then  provide detailed derivations:
\begin{itemize}
    \item For $t=1$:
    \begin{align}
    \begin{split}
        h^1=0;\quad
        v^1=\frac{1}{\sqrt{\delta}}B_{:,1};\quad
        e^1=\Theta_{:,1};\quad
        u^{2}=0
    \end{split}
    \label{eq:abs_AMP_reduction1}
    \end{align}
    \item For $t=2,\dots,L$:
    \begin{equation}
    \begin{split}
       &  h^2,\dots,h^L =0;  \quad 
        \big(v^2,\dots,v^L\big)
        =\frac{1}{\sqrt{\delta}}\big(B_{:,2},\dots,B_{:,L}); \\
        & \big(e^2,\dots,e^L)=\big(\Theta_{:,2},\dots,\Theta_{:,L}\big); \quad 
        u^{3},\dots,u^{L+1} =0.
    \end{split}
    \label{eq:abs_AMP_reduction2}
    \end{equation}
    \item For $t=L+1,\dots,2L-1$:
    \begin{align}
    \begin{split}
       &  h^{L+1},\dots,h^{2L-1} =0;  \quad 
        \big(v^{L+1},\dots,v^{2L-1}\big)=\frac{1}{\sqrt{\delta}}\Big(\hB_{:,1}^0,\dots,\hB_{:,L-1}^0\Big); \\
        & \big(e^{L+1},\dots,e^{2L-1}\big) =\Big(\Theta_{:,1}^0,\dots,\Theta_{:,L-1}^0\Big); \quad 
        u^{L+1},\dots,u^{2L}=0.
    \end{split}
    \label{eq:abs_AMP_reduction3}
    \end{align}
    \item For $t=2L$:
    \begin{align}
    \begin{split}
        h^{2L}=0;\quad
        v^{2L}=\frac{1}{\sqrt{\delta}}\hB_{:,L}^0;\quad
        e^{2L}=\Theta_{:,L}^0;\quad 
        u^{2L+1}=\frac{1}{\sqrt{\delta}}\hR_{:,1}^0.
    \end{split}
    \label{eq:abs_AMP_reduction4}
    \end{align}
    \item For $t=2L+1,\dots,3L-1$:
    \begin{align}
    \begin{split}
        &\big(h^{2L+1},\dots,h^{3L-1}\big)
        =\Big(B_{:,1}^1-\{B(\Mu_B^1)^\top\}_{:,1},\dots,B_{:,L-1}^1-\{B(\Mu_B^1)^\top\}_{:,L-1}\Big); \\ 
        & v^{2L+1},\dots,v^{3L-1} =0; \\
        & e^{2L+1},\dots,e^{3L-1} =0; \quad 
        \big(u^{2L+2},\dots,u^{3L}\big)
        =\frac{1}{\sqrt{\delta}}\Big(\hR_{:,2}^0,\dots,\hR_{:,L}^0\Big).
    \end{split}
    \label{eq:abs_AMP_reduction5}
    \end{align}
    \item For $t=3L$:
    \begin{align}
    \begin{split}
        h^{3L}=B_{:,L}^1-\{B(\Mu_B^1)^\top\}_{:,L};\quad 
        v^{3L}=0;\quad 
        e^{3L}=0;\quad 
        u^{3L+1}=0.
    \end{split}
    \label{eq:abs_AMP_reduction6}
    \end{align}
    \item For $t=3L+1,\dots,4L-1$:
    \begin{align}
    \begin{split}
        & h^{3L+1},\dots,h^{4L-1}=0; \quad 
        \big(v^{3L+1},\dots,v^{4L-1}\big)=\frac{1}{\sqrt{\delta}}\Big(\hB_{:,1}^1,\dots,\hB_{:,L-1}^{1}\Big);  \\
        & \big(e^{3L+1},\dots,e^{4L-1}\big) =\Big(\Theta_{:,1}^1,\dots,\Theta_{:,L-1}^1\Big); \quad 
        u^{3L+2},\dots,u^{4L}=0.
    \end{split}
    \label{eq:abs_AMP_reduction7}
    \end{align}
    \item For $t=4L$:
    \begin{align}
    \begin{split}
        h^{4L}&=0;\quad 
        v^{4L}=\frac{1}{\sqrt{\delta}}\hB_{:,L}^1;\quad
        e^{4L}=\Theta_{:,L}^1;\quad 
        u^{4L+1}=\frac{1}{\sqrt{\delta}}\hR_{:,1}^1.
    \end{split}
    \label{eq:abs_AMP_reduction8}
    \end{align}
\end{itemize}

We now provide the detailed derivations of \eqref{eq:abs_AMP_reduction1}-\eqref{eq:abs_AMP_reduction8}. For $t=1,\dots,L+1$,  define 
\begin{align}
    f_t^v(h^1,\dots,h^L,d^1,\dots,d^{2L})
    =\frac{1}{\sqrt{\delta}}d^t,  \ 
    \text{ and } \ 
    f_{t+1}^u(e^1,\dots,e^t,c^1,\dots,c^{L_\Psi})=0.
    \label{eq:fufv_choice1}
\end{align}
For $t=1$, using $u^1=0$, this gives
\begin{align*}
   &  h^1=0; \quad v^1=f_1^v(h^1,d^1,\dots,d^{2L}) 
    =\frac{1}{\sqrt{\delta}}d^1 
    =\frac{1}{\sqrt{\delta}}B_{:,1}; \\
   &  e^1=\sqrt{\delta}\tX v^1-a_{1}^1u^1=\tX B_{:,1} \, ; \quad 
    u^2=f_2^u(e^1,c^1,\dots,c^{L_\Psi})=0,
\end{align*}
giving us \eqref{eq:abs_AMP_reduction1}. For $t=2$, using \eqref{eq:fufv_choice1} we have
\begin{align*}
    & h^2=\sqrt{\delta}\tX^\top u^2-b_{1}^2v^1=0 \, ; \quad 
    v^2=f_2^v(h^1,h^2,d^1,\dots,d^{2L})
    =\frac{1}{\sqrt{\delta}}d^2=\frac{1}{\sqrt{\delta}}B_{:,2} \, ; \\
    & e^2=\sqrt{\delta}\tX v^2-\sum_{s=1}^2a_{s}^2u^s
    =\tX B_{:,2} \, ;  \quad 
     u^3=f_3^u(e^1,e^2,c^1,\dots,c^{L_\Psi})=0.
\end{align*}
A similar derivation follows for $t=3,\dots,L$ to give
\begin{align*}
    h^t=0;\quad v^t=\frac{1}{\sqrt{\delta}}B_{:,t};\quad
    e^t=\tX B_{:,t};\quad u^{t+1}=0.
\end{align*}
This completes the derivation of \eqref{eq:abs_AMP_reduction2}. For $t=L+1$, we have
\begin{align*}
    h^{L+1}&=\sqrt{\delta}\tX^\top u^{L+1}-\sum_{s=1}^Lb^{L+1}_sv^s
    \stackrel{(a)}{=}0 \, ;\\
    v^{L+1}&=f_{L+1}^v(h^1,\dots,h^{L+1},d^1,\dots,d^{2L})
    =\frac{1}{\sqrt{\delta}}d^{L+1}
    =\frac{1}{\sqrt{\delta}}\hB^0_{:,1} \, ;\\
    e^{L+1}&=\sqrt{\delta}\tX v^{L+1}-\sum_{s=1}^{L+1}a^{L+1}_su^s
    \stackrel{(b)}{=}\tX\hB_{:,1}^0 \, ;\\
    u^{L+2}&=f_{L+2}^u(e^1,\dots,e^{L+1},c^1,\dots,c^{L_\Psi})=0 \, ;
\end{align*}
Here (a) uses $u^{L+1}=0$ and $b_{s}^{L+1}=0$ (because $f_{L+1}^u=0$), and (b) uses $u^1,\dots,u^{L+1}=0$.

For $t=L+2,\dots,2L-1$, let us define
\begin{align}
    f_t^v(h^1,\dots,h^L,d^1,\dots,d^{2L})
    =\frac{1}{\sqrt{\delta}}d^{t-L}
    \text{ and }
    f_{t+1}^u(e^1,\dots,e^t,c^1,\dots,c^{L_\Psi})=0.
     \label{eq:fufv_choice2}
\end{align}
Then, a similar derivation follows for $t=L+2,\dots,2L-1$ to give
\begin{align*}
    h^t=0;\quad v^t=\frac{1}{\sqrt{\delta}}\hB^0_{:,t-L};\quad
    e^t=\tX\hB^0_{:,t-L};\quad u^{t+1}=0.
\end{align*}
This completes the derivation of \eqref{eq:abs_AMP_reduction3}. For $t=2L$, define
\begin{align}
\begin{split}
    & f_{2L}^v(h^1,\dots,h^{2L},d^1,\dots,d^{2L})
    =\frac{1}{\sqrt{\delta}}d^{2L},    \\
    & f_{2L+1}^u(\underbrace{e^1,\dots,e^L}_{\tX B=\Theta},\underbrace{e^{L+1},\dots,e^{2L}}_{\tX\hB^0=\Theta^0},\underbrace{c^1,\dots,c^{L_\Psi}}_{\tPsi})
    =\frac{1}{\sqrt{\delta}}\big\{\tg_{0}(\Theta,\Theta^0,\tPsi)\big\}_{:,1}.
\end{split}
      \label{eq:fufv_choice3}
\end{align}
Using this, we have
\begin{align*}
    & h^{2L} =\sqrt{\delta}\tX^\top u^{2L}-\sum_{s=1}^{2L-1}b_{s}^{2L}v^s
    \stackrel{(a)}{=}0; \\
    & v^{2L} =f_{2L}^v(h^1,\dots,h^{2L},d^1,\dots,d^{2L})
    =\frac{1}{\sqrt{\delta}}d^{2L}
    =\frac{1}{\sqrt{\delta}}\hB^0_{:,L}; \\
    & e^{2L} =\sqrt{\delta}\tX v^{2L}-\sum_{s=1}^{2L}a_{s}^{2L}u^s
    \stackrel{(b)}{=}\tX\hB^0_{:,L}; \\
    & u^{2L+1} =f_{2L+1}^u( e^1,\dots,e^L , e^{L+1},\dots,e^{2L}, c^1,\dots,c^{L_\Psi})
    =\frac{1}{\sqrt{\delta}}\big\{g_{0}(\Theta^0,q(\Theta,\tPsi))\big\}_{:,1}
    =\frac{1}{\sqrt{\delta}}\hR_{:,1}^0 \, .
\end{align*}
Here (a) uses $u^{2L}=0$ and $b_{s}^{2L}=0$ (because $f_{2L}^u=0$), and (b) holds because $u^1,\dots,u^{2L}=0$. This completes the derivation of \eqref{eq:abs_AMP_reduction4}.

For $t=2L+1, \ldots, 3L-1$, we define
\begin{align}
    \begin{split}
        & f_{t}^v(h^1,\dots,h^{t},d^1,\dots,d^{2L})=0,  \\ 
        & f_{t+1}^u (e^1,\dots,e^{t},c^1,\dots,c^{L_\Psi}) = \frac{1}{\sqrt{\delta}}\big\{g_{0}(\Theta^0,q(\Theta,\tPsi))\big\}_{:, \, t+1-2L}.
    \end{split}
    \label{eq:fufv_choice4}
\end{align}
For $t=2L+1$, using \eqref{eq:CkF_k1_det_def}  we first write the first column of the term $\{\hB^0(C^{0,1})^\top\}$   as
\begin{align}
    \{\hB^0(C^{0,1})^\top\}_{:,1}
    &=\hB^0\cdot
    \begin{bmatrix}
        \E[\tpartial_1g_{0,1}(Z^0,\bar{Y})] \\
        \vdots \\
        \E[\tpartial_Lg_{0,1}(Z^0,\bar{Y})]
    \end{bmatrix}
    =\sum_{l=1}^L\hB_{:,l}^0 \, \E[\tpartial_lg_{0,1}(Z^0,\bar{Y})],
    \label{eq:exp_form}
\end{align}
where $g_{0,1}$ refers to the first entry of the output of $g_0$, and $\tpartial_lg_{0,1}$ means we are taking derivative w.r.t.~the $l$-th entry of the input of $g_{0,1}$, e.g., $\tpartial_1 \, g_{0,1}$ is taking derivative w.r.t.~$Z^0_{1}$. We then have
\begin{align*}
    h^{2L+1}&=\sqrt{\delta}\tX^\top u^{2L+1}-\sum_{s=1}^Lb_{s}^{2L+1}v^s-\sum_{s=L+1}^{2L}b_{s}^{2L+1}v^s \\
    &=\sqrt{\delta}\tX^\top u^{2L+1}-\sum_{s=1}^L\delta\E[\partial_sf_{2L+1}^u(\be^1,\dots,\be^{2L},\bc^1,\dots,\bc^{L_\Psi})]v^s \\
    &\qquad-\sum_{s=L+1}^{2L}\delta\E\big[\partial_sf_{2L+1}^u(\be^1,\dots,\be^{2L},\bc^1,\dots,\bc^{L_\Psi})\big]v^s \\
    &\stackrel{(a)}{=}\tX\hR_{:,1}^0-\sum_{l=1}^L\E\big[\tpartial_l\tg_{0,1}(Z,Z^0,\bar{\Psi})\big]B_{:,l}-\sum_{l=1}^L\E\big[\tpartial_l\tg_{0,1}(Z^0,\bar{Y})\big]\hB_{:,l}^0 \\
    &\stackrel{(b)}{=}B_{:,1}^1-\{B(\Mu_B^k)\}_{:,1},
\end{align*}
where (a) applies $u^{2L+1}=\frac{1}{\sqrt{\delta}}\hR_{:,1}^0$, $v^s=\frac{1}{\sqrt{\delta}}B_{:,s}$ for $s\in[L]$, $v^s=\frac{1}{\sqrt{\delta}}\hB^0_{:,s}$ for $s\in[L+1:2L]$, and
\begin{align*}
    u^{2L+1}
    &=f_{2L+1}^u(e^1,\dots,e^{2L},c^1,\dots,c^{L_\Psi})
    =\frac{1}{\sqrt{\delta}}g_{0,1}(\Theta^0,q(\Theta,\tPsi))
    =\frac{1}{\sqrt{\delta}}\tg_{0,1}(\Theta,\Theta^0,\tPsi) \\
    \implies
    \bu^{2L+1}
    &=f_{2L+1}^u(\be^1,\dots,\be^{2L},\bc^1,\dots,\bc^{L_\Psi})
    =\frac{1}{\sqrt{\delta}}\tg_{0,1}(Z, Z^0,\bar{\Psi}).
\end{align*}
The equality (b) applies \eqref{eq:exp_form} and $B_{:,1}^1=\tX\hR_{:,1}^0-\{\hB^0(C^{0,1})^\top\}_{:,1}$.  Next, using \eqref{eq:fufv_choice4} we have
\begin{align*}
    v^{2L+1}&=f_{2L+1}^v(h^1,\dots,h^{2L+1},d^1,\dots,d^{2L})=0 \, ,   \\
    e^{2L+1}&=\tX v^{2L+1}-\sum_{s=1}^{2L+1}a^{2L+1}_su^s\stackrel{(a)}{=}0  \, ,\\
    u^{2L+2}&=f_{2L+2}^u(e^1,\dots,e^{2L+1},c^1,\dots,c^{L_\Psi})
    :=\frac{1}{\sqrt{\delta}}\big\{g_{0}(\Theta^0,q(\Theta,\tPsi))\big\}_{:,2}
    =\frac{1}{\sqrt{\delta}}\hR_{:,2}^0 \, ,
\end{align*}
where (a) applies $v^{2L+1}=0$ and $a^{2L+1,s}=0$ (because $f_{2L+1}^v=0$). A similar derivation follows for $t=2L+2,\dots,3L-1$ to give
\begin{align*}
    h^t=B^1_{:,t-2L}-\{B(\Mu_B^k)^\top\}_{:,t-2L};\quad
    v^t=0;\quad
    e^t=0;\quad
    u^{t+1}=\frac{1}{\sqrt{\delta}}\hR_{:,t+1-2L}^0.
\end{align*}
This completes the derivation of \eqref{eq:abs_AMP_reduction5}. For $t=3L$, we have
\begin{align*}
    h^{3L}&=\sqrt{\delta}\tX^\top u^{3L}-\sum_{s=1}^{3L-1}b^{3L}_sv^s \\
    &=\sqrt{\delta}\tX^\top u^{3L}-\sum_{s=1}^{2L}b^{3L}_sv^s-\sum_{s=2L+1}^{3L-1}b^{3L}_s\underbrace{v^s}_{=0} \\
    &=B_{:,L}^1-\{B(\Mu_B^k)^\top\}_{:,L},
\end{align*}
where the final equality applies \eqref{eq:exp_form} and $B_{:,L}^1=\tX\hR_{:,L}^0-\{\hB^0(C^{0,1})^\top\}_{:,L}$. Next, we define both $f_{3L}^v$ and $f_{3L+1}^u$ to be 0, so that
\begin{align*}
    v^{3L}&=f_{3L}^v(h^1,\dots,h^{3L},d^1,\dots,d^{2L})=0 \\
    e^{3L}&=\tX \underbrace{v^{3L}}_{=0}-\sum_{s=1}^{3L}\underbrace{a^{3L}_s}_{=0}u^s=0 \\
    u^{3L+1}&=f_{3L+1}^u(e^1,\dots,e^{3L},c^1,\dots,c^{L_\Psi})=0.
\end{align*}
This completes the derivation of \eqref{eq:abs_AMP_reduction6}. 

For $t=3L+1, \ldots, 4L-1$, we define
\begin{align}
    \begin{split}
        & f_{t}^v(h^1,\dots,h^{2L},\underbrace{h^{2L+1},\dots,h^{3L}}_{B^1-B(\Mu_B^1)^\top},h^{3L+1}, \ldots, h^t,\underbrace{d^1,\dots,d^{2L}}_{B,\hB^0})  =\frac{1}{\sqrt{\delta}}\{f_1(B^1)\}_{:,t-3L}, \\ 
        &f_{t+1}^u(e^1, \ldots, e^t, c^1,\dots,c^{L_\Psi})=0.
    \end{split}
    \label{eq:fufv_choice5}
\end{align}
In the first definition above,  we note that $\Mu_B^1$ is a constant matrix.  
Using this, for $t=3L+1$  we obtain
\begin{align*}
    h^{3L+1}&=\sqrt{\delta}\tX^\top \underbrace{u^{3L+1}}_{=0}-\sum_{s=1}^{3L}\underbrace{b^{3L+1}_s}_{=0}v^s=0 \\
    v^{3L+1}&=f_{3L+1}^v(h^1,\dots,h^{2L},\underbrace{h^{2L+1},\dots,h^{3L}}_{B^1-B(\Mu_B^1)^\top},h^{3L+1},\underbrace{d^1,\dots,d^{2L}}_{B,\hB^0})
    =\frac{1}{\sqrt{\delta}}\{f_1(B^1)\}_{:,1}
    =\frac{1}{\sqrt{\delta}}\hB^1_{:,1},
\end{align*}
At this juncture, we pause to write the first column of  $\{\hR^0(F^{1,1})^\top\}$ (recall that this term comes from the matrix-AMP algorithm \eqref{eq:GAMP_deterministic}) as follows:
\begin{align*}
    \{\hR^0(F^{1,1})^\top\}_{:,1}
    &=\hR^0\cdot\frac{1}{\delta}
    \begin{bmatrix}
        \E[\tpartial_1f_{1,1}(\Mu_B^1\bar{B}+G_B^1)] \\
        \vdots \\
        \E[\tpartial_Lf_{1,1}(\Mu_B^1+G_B^1)]
    \end{bmatrix}
    =\sum_{l=1}^L\hR_{:,l}^0\cdot\frac{1}{\delta}\E\big[\tpartial_lf_{1,1}(\Mu_B^1\bar{B}+G_B^1)\big],
\end{align*}
where $f_{1,1}$ refers to the first entry of the output of $f_1$, and $\tpartial_l \, f_{1,1}$ means we are taking derivative w.r.t.~the $l$-th entry of the input of $f_{1,1}$, for $l \in [L]$.
%i.e., $\tpartial_1f_{1,1}$ is taking derivative w.r.t.~$\big\{\Mu_B^1\bar{B}+G_B^1\big\}_1$. 
We then have
\begin{align*}
    e^{3L+1}&=\sqrt{\delta}\tX v^{3L+1}-\sum_{s=1}^{2L}a^{3L+1}_s\underbrace{u^s}_{=0}
    -\sum_{s=2L+1}^{3L}a^{3L+1}_su^s-a_{3L+1}^{3L+1}\underbrace{u_{3L+1}}_{=0} \\
    &=\sqrt{\delta}\tX v^{3L+1}-\sum_{s=2L+1}^{3L}\E[\partial_sf_{3L+1}^v(\bh^1,\dots,\bh^{3L+1},\bd^1,\dots,\bd^{2L})]u^s \\
    &\stackrel{(a)}{=}\tX\hB^1_{:,1}-\sum_{l=1}^L\frac{1}{\delta}\E[\partial_lf_{1,1}(\Mu_B^1\bar{B}+G_B^1)]\hR_{:,l}^0 \\
    &=\tX\hB_{:,1}^1-\{\hR^0(F^{1,1})^\top\}_{:,1}
    =\Theta_{:,1}^1,
\end{align*}
where the switch from $\partial_s$ to $\partial_l$ in (a) is via the chain rule of differentiation. From \eqref{eq:fufv_choice5}, we also have
\begin{align*}
    u^{3L+2}&=f_{3L+2}^u(e^1,\dots,e^{3L},c^1,\dots,c^{L_\Psi})=0.
\end{align*}
A similar derivation follows for $t=3L+2,\dots,4L-1$ to give
\begin{align*}
    h^t=0;\quad 
    v^t=\frac{1}{\sqrt{\delta}}\hB_{:,t-3L}^1;\quad
    e^t=\Theta_{:,t-3L}^1;\quad
    u^{t+1}=0.
\end{align*}
This completes the derivation of \eqref{eq:abs_AMP_reduction7}. For $t=4L$, we define
\begin{align}
    \begin{split}
        & f_{4L}^v(h^1,\dots,h^{2L},\underbrace{h^{2L+1},\dots,h^{3L}}_{B^1-B(\Mu_B^1)^\top},h^{3L+1}, \ldots, h^{4L},\underbrace{d^1,\dots,d^{2L}}_{B,\hB^0})  =\frac{1}{\sqrt{\delta}}\{f_1(B^1)\}_{:,L}, \\ 
        &f_{4L+1}^u(\underbrace{e^1,\dots,e^L}_{\Theta},e^{L+1},\dots,e^{3L},\underbrace{e^{3L+1},\dots,e^{4L}}_{\Theta^1},\underbrace{c^1,\dots,c^{L_\Psi}}_{\tPsi}) =\frac{1}{\sqrt{\delta}}g_{1,1}(\Theta^1,q(\Theta,\tPsi)).
    \end{split}
    \label{eq:fufv_choice6}
\end{align}
Using this, we obtain
\begin{align*}
    & h^{4L} =\sqrt{\delta}\tX^\top\underbrace{u^{4L}}_{=0}-\sum_{s=1}^{4L-1}\underbrace{b^{4L}_s}_{=0}v^s=0 \, , \\
    & v^{4L} =f_{4L}^v(h^1,\dots,h^{4L},d^1,\dots,d^{2L})
    = \frac{1}{\sqrt{\delta}}\hB_{:,L}^1 \, , \\
    & e^{4L} =\sqrt{\delta}\tX v^{4L}-\sum_{s=1}^{4L}a^{4L}_su^s=\Theta_{:,L}^1 \, , \\
    & u^{4L+1}
    =f_{4L+1}^u(e^1, \ldots, e^{4L}, c^1,\dots,c^{L_\Psi} ) =\frac{1}{\sqrt{\delta}}g_{1,1}(\Theta^1,q(\Theta,\tPsi))
    =\frac{1}{\sqrt{\delta}}\hR_{:,1}^1 \, ,
\end{align*}
giving us \eqref{eq:abs_AMP_reduction8}. This concludes the reduction of the matrix-AMP iterates to the abstract AMP iterates for the case of $k=0$. 

We now  reduce the matrix-AMP SE parameters to the abstract AMP SE parameters. Recall from Theorem \ref{thm:abs_AMP} that $(e^1,\dots,e^{4L})  \stackrel{W_2}{\rightarrow}(\be^1,\dots,\be^{4L})\sim\normal(0,\Gamma^{4L})$, where $\Gamma^{4L}=\big(\E[\bv^r\bv^s]\big)_{r,s=1}^{4L}$.  
Since we have shown in \eqref{eq:abs_AMP_reduction1}-\eqref{eq:abs_AMP_reduction3}  that $(v^1,\dots,v^L)=\frac{1}{\sqrt{\delta}}B$ and $(v^{L+1},\dots,\\v^{2L})=\frac{1}{\sqrt{\delta}}\hB^0$, from Assumption \textbf{(A1)} of Theorem \ref{thm:GAMP}, we must have that $(\bv^1,\dots,\bv^L)=\frac{1}{\sqrt{\delta}}\bar{B}$ and $(\bv^{L+1},\dots,\bv^{2L})=\frac{1}{\sqrt{\delta}}\bar{B}^0$.
Furthermore, Assumption \textbf{(A1)} guarantees that the top left $2L \times 2L$ submatrix of $\Gamma^{4L}$ equals $\Sigma^0$, i.e., \[ \Gamma^{4L}_{[2L],[2L]} = \big(\E[\bv^r\bv^s]\big)_{r,s=1}^{2L} = \Sigma^0.
\]
Letting $Z \equiv (\be^1,\dots,\be^L)$ and $Z^0 \equiv (\be^{L+1},\dots,\be^{2L})$, we have that $(Z,Z^0)\sim\normal(0,\Sigma^0)$. Since we have shown in \eqref{eq:abs_AMP_reduction0}-\eqref{eq:abs_AMP_reduction4} that $(c^1, \ldots, c^{L_\Psi}) = \tPsi$,
$(e^1,\dots,e^L)=\Theta$ and $(e^{L+1},\dots,e^{2L})=\Theta^0$, by applying Theorem \ref{thm:abs_AMP} to this collection of vectors we obtain 
\begin{align}
\label{eq:Theta0_proved}
    (\tPsi, \Theta, \Theta^0) \stackrel{W_2}{ \to} (\bar{\Psi}, Z, Z^0) 
\stackrel{d}{=}(\bar{\Psi}, Z,\Mu^{0}_{\Theta}Z+G^0_\Theta),
\end{align}
where the equality in distribution follows from \eqref{eq:ZZk_joint}. 

Next, recall from Theorem \ref{thm:abs_AMP} that the joint empirical distribution of $(h^1,\dots,h^{4L})$ converges to the law of $(\bh^1,\dots,\bh^{4L})\sim\normal(0,\Omega^{4L})$, where $\Omega^{4L} = \big(\delta \E[\bu^r\bu^s]\big)_{r,s=1}^{4L+1}$. Since we have shown in \eqref{eq:abs_AMP_reduction5} that
\begin{align*}
    (u^{2L+1},\dots,u^{3L})
    &=\frac{1}{\sqrt{\delta}}\hR^0
    =\frac{1}{\sqrt{\delta}}g_0\big(\Theta^0,q(\Theta,\tPsi)\big)
    =\frac{1}{\sqrt{\delta}}\tg_0\big(\Theta,\Theta^0,\tPsi\big),
\end{align*}
and also that $(\tPsi, \Theta, \Theta^0) \stackrel{W_2}{ \to} (\bar{\Psi}, Z, Z^0)$, we  have $(\bu^{2L+1},\dots,\bu^{3L})=\frac{1}{\sqrt{\delta}}\tg_0(Z,Z^0,\bar{\Psi})$. Using this,  we obtain
\begin{align*}
    \Omega^{4L}_{[2L+1:3L],[2L+1:3L]} = \big(\delta \E[\bu^r\bu^s]\big)_{r,s=2L+1}^{3L}=\Tau_B^{1},
\end{align*}
where the last equality follows from \eqref{eq:SE_Tk1B}. Let $G_B^1 \equiv (\bh^{2L+1},\dots,\bh^{3L}) \sim \normal(0, \Tau_B^{1})$, and recall that $h^{2L+1},\dots,h^{3L}=B^1-B(\Mu_B^1)^\top$, \, $(d^1, \ldots, d^L) = B$, and $ (v^{L+1}, \ldots, v^{2L})= \hB^0$. Applying Theorem \ref{thm:abs_AMP} to this collection of vectors, it follows that 
\begin{align}
    (B, \hB^0, B^1) \stackrel{W_2}{ \to} (\bar{B}, \bar{B}^0, \Mu_B^1 \bar{B} + G_B^1).
    \label{eq:B1_proved}
\end{align}
Using $B^1\stackrel{W_2}{\rightarrow}\Mu_B^1\bar{B}+G_B^1$ (from above) and     
$\hB^1
    =f_1(B^1)$ with $f_1$ satisfying the polynomial growth condition in \eqref{eq:poly_growth_cond}, we have
\begin{align}
    (B, \hB^1)
    \stackrel{W_2}{\rightarrow}
    (\bar{B}, f_1(\Mu_B^1\bar{B}+G_B^1)),
    \label{eq:B_hat_1_convergence}
\end{align}
via \eqref{eq:sum_to_exp}. 
Next, we define
\begin{align*}
    \mathcal{I}_1
    =[L]\cup[3L+1:4L].
\end{align*}
Recall from Theorem \ref{thm:abs_AMP} that $(e^1,\dots,e^{4L})   \stackrel{W_2}{\rightarrow} (\be^1,\dots,\be^{4L})\sim\normal(0,\Gamma^{4L})$, where $\Gamma^{4L}=\big(\E[\bv^r\bv^s]\big)_{r,s=1}^{4L}$.  
Since we have shown in \eqref{eq:abs_AMP_reduction1}-\eqref{eq:abs_AMP_reduction2}  that $(v^1,\dots,v^L)=\frac{1}{\sqrt{\delta}}B$, and in \eqref{eq:abs_AMP_reduction7}-\eqref{eq:abs_AMP_reduction8} that $(v^{3L+1},\dots,v^{4L})=\frac{1}{\sqrt{\delta}}\hB^1$, from \eqref{eq:B_hat_1_convergence} and Theorem \ref{thm:abs_AMP}, we must have that $(\bv^1,\dots,\bv^L)=\frac{1}{\sqrt{\delta}}\bar{B}$ and $(\bv^{3L+1},\dots,\bv^{4L})=\frac{1}{\sqrt{\delta}}f_1(\Mu_B^1\bar{B}+G_B^1)$. This guarantees that
$$
\Gamma_{\mathcal{I}_1,\mathcal{I}_1}^{4L}
=\big(\E[\bv^r\bv^s]\big)_{r,s\in\mathcal{I}_1}
=\Sigma^1.
$$
Letting $Z^1 \equiv (\be^{3L+1},\dots,\be^{4L})$, we have that $(Z,Z^1)\sim\normal(0,\Sigma^1)$. Since we have shown in \eqref{eq:abs_AMP_reduction0}-\eqref{eq:abs_AMP_reduction2} that $(c^1, \ldots, c^{L_\Psi}) = \tPsi$,
$(e^1,\dots,e^L)=\Theta$, and in \eqref{eq:abs_AMP_reduction7}-\eqref{eq:abs_AMP_reduction8} that $(e^{3L+1},\dots,e^{4L})=\Theta^1$, by applying Theorem \ref{thm:abs_AMP} to this collection of vectors we obtain 
\begin{align}
(\tPsi, \Theta, \Theta^1) \stackrel{W_2}{ \to} (\bar{\Psi}, Z, Z^1) 
\stackrel{d}{=}(\bar{\Psi}, Z,\Mu^{1}_{\Theta}Z+G^1_\Theta),
      \label{eq:Theta1_proved}
\end{align} 
where the equality in distribution follows from \eqref{eq:ZZk_joint}. Equation \eqref{eq:B1_proved} proves the first claim of Theorem \ref{thm:GAMP} for $k=0$, and  \eqref{eq:Theta0_proved} and \eqref{eq:Theta1_proved} prove the second claim for $k=0,1$.

\textbf{Inductive hypothesis:} Assume that we have a reduction for the matrix-AMP up to iteration $(k-1)$, for $k > 1$. In words, this means that we can reduce $B^k$, $\hR^{k-1}$, $\hB^k$, and $\Theta^k$ to abstract AMP iterates in iterations $t=2kL+1,\dots,(2k+2)L$. In formulas, this means that we have the following:
\begin{itemize}
    \item For $t=2kL+1,\dots,(2k+1)L-1$: 
    \begin{align*}
        & \Big(h^{2kL+1},\dots,h^{(2k+1)L-1}\Big)
        =\Big(B^k_{:,1}-\{B(\Mu_B^k)^\top\}_{:,1},\dots,B^k_{:,L-1}-\{B(\Mu_B^k)^\top\}_{:,L-1}\Big) \, ;  \\
        & v^{2kL+1},\dots,v^{(2k+1)L-1}
        =0 \, ;\\
        & e^{2kL+1},\dots,e^{(2k+1)L-1}
        =0 \, ; \quad 
        \Big(u^{2kL+2},\dots,u^{(2k+1)L}\Big)
        =\frac{1}{\sqrt{\delta}}\Big(\hR^{k-1}_{:,2},\dots,\hR^{k-1}_{:,L}\Big).
    \end{align*}
    \item For $t=(2k+1)L$: 
    \begin{align*}
        h^{(2k+1)L}=B^k_{:,L}-\{B(\Mu_B^k)^\top\}_{:,L};\quad
        v^{(2k+1)L}=0; \quad
        e^{(2k+1)L}=0;\quad
        u^{(2k+1)L+1}=0.
    \end{align*}
    \item For $t=(2k+1)L+1,\dots,(2k+2)L-1$: 
    \begin{align*}
        & h^{(2k+1)L+1},\dots,h^{(2k+2)L-1} =0 ;  \ \ 
        \Big(v^{(2k+1)L+1},\dots,v^{(2k+2)L-1}\Big) = \frac{1}{\sqrt{\delta}}\Big(\hB^k_{:,1}\dots,\hB^k_{:,L-1}\Big)\\
        & \Big(e^{(2k+1)L+1},\dots,e^{(2k+2)L-1}\Big)=\Big(\Theta^k_{:,1},\dots,\Theta^k_{:,L-1}\Big) \, ; \quad 
        u^{(2k+1)L+2},\dots,u^{(2k+2)L}=0.
    \end{align*}
    \item For $t=(2k+2)L$: 
    \begin{align*}
        h^{(2k+2)L}=0;\quad
        v^{(2k+2)L}=\frac{1}{\sqrt{\delta}}\hB^k_{:,L};\quad
        e^{(2k+2)L}=\Theta^k_{:,L};\quad
        u^{(2k+2)L+1}=\frac{1}{\sqrt{\delta}}\hR^k_{:,1}.
    \end{align*}
\end{itemize}
We now reduce the matrix-AMP SE parameters to the abstract AMP SE parameters. Define the index sets 
\[ \mathcal{I}_{k} = [L]\cup\left\{\bigcup_{r=0}^k[(2r+1)L+1:(2r+2)L]\right\}, 
\quad 
\mathcal{J}_{k} = \bigcup_{r=1}^k[2rL+1:(2r+1)L]. \]
By Theorem \ref{thm:abs_AMP}, we have 
\begin{align*}
    \big(h^{2rL+1},\dots,h^{(2r+1)L}\big)_{r=1}^k
    \stackrel{W_2}{\rightarrow}
    \big(\bar{h}^{2rL+1},\dots,\bar{h}^{(2r+1)L}\big)_{r=1}^k
    \sim\normal(0,\Omega_{\mathcal{J}_k,\mathcal{J}_k}^{(2k+2)L}) \, ,
\end{align*}
where $\Omega_{\mathcal{J}_k,\mathcal{J}_k}^{(2k+2)L}=\{\delta\E[\bar{u}^r\bar{u}^s]\}_{r,s\in\mathcal{J}_k}$. By the inductive hypothesis and Theorem \ref{thm:abs_AMP}, we have
$$
\{u^r\}_{r\in\mathcal{J}_k}
=\frac{1}{\sqrt{\delta}}\Big(\tilde{g}_{0}(\Theta,\Theta^{0},\tPsi),\dots,\tilde{g}_{k-1}(\Theta,\Theta^0,\dots,\Theta^{k-1},\tPsi)\Big)
$$
converging to 
$$
\{\bu^r\}_{r\in\mathcal{J}_k}
=\frac{1}{\sqrt{\delta}}\Big(\tilde{g}_{0}(Z,Z^{0},\bar{\Psi}),\dots,\tilde{g}_{k-1}(Z,Z^0,\dots,Z^{k-1},\bar{\Psi})\Big).
$$
This implies that $\Omega_{\mathcal{J}_k,\mathcal{J}_k}^{(2k+2)L}=\Tau_B^k$. Letting $G_B^k\equiv\big(\bar{h}^{2rL+1},\dots,\bar{h}^{(2r+1)L}\big)_{r=1}^k\sim\normal(0,\Tau_B^k)$. From the induction hypothesis we have $\big(h^{2rL+1},\dots,h^{(2r+1)L}\big)_{r=1}^k=B^1 - B(\Mu_B^1)^\top, \dots, B^k - B(\Mu_B^k)^\top $, and we have already shown that  $(d^1, \ldots, d^L) = B$, and $ (v^{L+1}, \ldots, v^{2L})= \hB^0$. Applying Theorem \ref{thm:abs_AMP} to this collection of vectors, we obtain 
\begin{align}
(B, \hB^0, B^1, \dots, B^k) \stackrel{W_2}{\to} (\bar{B}, \bar{B}^0, \Mu_B^1 \bar{B} + G_B^1, \dots, \Mu_B^k \bar{B} + G_B^k). 
    \label{eq:Bk_proved}
\end{align} 
Using $B^k\stackrel{W_2}{\rightarrow}\Mu_B^k\bar{B}+G_B^k$ (from above) and $\hB^k
    =f_k(B^k)$ with  $f_k$ satisfying the polynomial growth condition in \eqref{eq:poly_growth_cond}, we have
\begin{align}
    (B, \hB^1, \dots, \hB^k)
    \stackrel{W_2}{\rightarrow}
    \Big( \bar{B}, f_1(\Mu_B^1\bar{B}+G_B^1), \dots, f_k(\Mu_B^1\bar{B}+G_B^1,\dots,\Mu_B^k\bar{B}+G_B^k)\Big),
    \label{eq:B_hat_k_convergence}
\end{align}
via \eqref{eq:sum_to_exp}. 
%This implies that $(v^{(2k+1)L+1},\dots,v^{(2k+2)L})=\frac{1}{\sqrt{\delta}}\hB^k$ converges to 
%$$
%(\bv^{(2k+1)L+1},\dots,\bv^{(2k+2)L})=\frac{1}{\sqrt{\delta}}f_k(\Mu_B^k\bar{B}+G_B^k).
%$$
By Theorem \ref{thm:abs_AMP}, we have
\begin{align*}
  \big(e^t\big)_{t\in\mathcal{I}_k}
  \stackrel{W_2}{\rightarrow}
  \big(\be^t\big)_{t\in\mathcal{I}_k}
   \sim\normal\big(0,\Gamma_{\mathcal{I}_k,\mathcal{I}_k}^{(2k+2)L}\big)
\end{align*}
where $\Gamma_{\mathcal{I}_k,\mathcal{I}_k}^{(2k+2)L}=(\E[\bar{v}^r\bar{v}^s])_{r,s\in\mathcal{I}_k}$.
Since we have shown that $(v^1,\dots,v^L)=\frac{1}{\sqrt{\delta}}B$ and by the inductive hypothesis, we have $(v^{(2k+1)L+1},\dots,v^{(2k+2)L})=\frac{1}{\sqrt{\delta}}\hB^k$, from  \eqref{eq:B_hat_k_convergence} and Theorem \ref{thm:abs_AMP}, we must have that $(\bv^1,\dots,\bv^L)=\frac{1}{\sqrt{\delta}}\bar{B}$ and
$$
\big(\bv^{(2r+1)L+1},\dots,\bv^{(2r+2)L}\big)_{r=1}^k
=\frac{1}{\sqrt{\delta}}\Big(f_1\big(\Mu_B^1\bar{B}+G_B^1\big),\dots,f_k\big(\Mu_B^1\bar{B}+G_B^1,\dots,\Mu_B^k\bar{B}+G_B^k\big)\Big).
$$
This guarantees that
$$
\Gamma_{\mathcal{I}_k,\mathcal{I}_k}^{(2k+2)L}
=\big(\E[\bv^r\bv^s]\big)_{r,s\in\mathcal{I}_k}
=\Sigma^k.
$$
Letting $Z^k \equiv (\be^{(2k+1)L+1},\dots,\be^{(2k+2)L})$, we have that $(Z,Z^0,\dots,Z^k)\sim\normal(0,\Sigma^k)$. Since we have shown that $(c^1, \ldots, c^{L_\Psi}) = \tPsi$,
$(e^1,\dots,e^L)=\Theta$, and have
$$
\big(e^{(2r+1)L+1},\dots,e^{(2r+2)L}\big)_{r=1}^k
=\big(\Theta^1,\dots,\Theta^k\big)
$$
via the inductive hypothesis, by applying Theorem \ref{thm:abs_AMP} to this collection of vectors we obtain 
\begin{align}
    (\tPsi, \Theta, \Theta^0, \dots, \Theta^k) 
    \stackrel{W_2}{ \to} (\bar{\Psi}, Z, Z^0, \dots, Z^k) 
    \stackrel{d}{=}(\bar{\Psi}, Z, \Mu^{0}_{\Theta}Z+G^0_\Theta, \dots, \Mu^{k}_{\Theta}Z+G^k_\Theta).
    \label{eq:Thetak_proved}
\end{align} 
%where the equality in distribution follows from \eqref{eq:ZZk_joint}. 
Thus, under the inductive hypothesis for the reduction, by \eqref{eq:Bk_proved}  the first claim of Theorem \ref{thm:GAMP} holds with $k$ replaced by $(k-1)$, and  by \eqref{eq:Thetak_proved}   the second claim holds for $k$.

\textbf{Induction step:} Here need to show that for iteration $k$ of the matrix-AMP, where $k>1$, $B^{k+1}$, $\hR^k$, $\hB^{k+1}$, $\Theta^{k+1}$ can be reduced to abstract AMP iterates using steps $t=(2k+2)L+1,\dots,(2k+4)L$, and that the corresponding SE matrices match. We provide the summary of the steps below before giving the full derivations:
\begin{itemize}
    \item For $t=(2k+2)L+1,\dots,(2k+3)L-1$: We have
    \begin{align}
    \begin{split}
        & \Big(h^{(2k+2)L+1},\dots,h^{(2k+3)L-1}\Big) \\
        &\qquad\qquad=\Big(B^{k+1}_{:,1}-\{B(\Mu_B^{k+1})^\top\}_{:,1},\dots,B^{k+1}_{:,L-1}-\{B(\Mu_B^{k+1})^\top\}_{:,L-1}\Big) \, ;  \\
        & v^{(2k+2)L+1},\dots,v^{(2k+3)L-1}
        =0 \, ;\\
        & e^{(2k+2)L+1},\dots,e^{(2k+3)L-1}
        =0 \, ; \quad 
        \Big(u^{(2k+2)L+2},\dots,u^{(2k+3)L}\Big)
        =\frac{1}{\sqrt{\delta}}\Big(\hR^{k}_{:,2},\dots,\hR^{k}_{:,L}\Big).
    \end{split}
    \label{eq:abs_AMP_ind_reduction1}
    \end{align}
    \item For $t=(2k+3)L$: We have
    \begin{align}
    \begin{split}
        h^{(2k+3)L}=B^{k+1}_{:,L}-\{B(\Mu_B^{k+1})^\top\}_{:,L};\quad
        v^{(2k+3)L}=0; \quad
        e^{(2k+3)L}=0;\quad
        u^{(2k+3)L+1}=0.
    \label{eq:abs_AMP_ind_reduction2}
    \end{split}
    \end{align}
    \item For $t=(2k+3)L+1,\dots,(2k+4)L-1$, we have
    \begin{align}
    \begin{split}
        & h^{(2k+3)L+1},\dots,h^{(2k+4)L-1} =0 ; \\
        &\Big(v^{(2k+3)L+1},\dots,v^{(2k+4)L-1}\Big) = \frac{1}{\sqrt{\delta}}\Big(\hB^{k+1}_{:,1}\dots,\hB^{k+1}_{:,L-1}\Big) \\
        & \Big(e^{(2k+3)L+1},\dots,e^{(2k+4)L-1}\Big)=\Big(\Theta^{k+1}_{:,1},\dots,\Theta^{k+1}_{:,L-1}\Big) \, ; \quad 
        u^{(2k+3)L+2},\dots,u^{(2k+4)L}=0.
    \end{split}
    \label{eq:abs_AMP_ind_reduction3}  
    \end{align}
    \item For $t=(2k+4)L$, we have
    \begin{align}
    \begin{split}
        h^{(2k+4)L}=0;\quad
        v^{(2k+4)L}=\frac{1}{\sqrt{\delta}}\hB^{k+1}_{:,L};\quad
        e^{(2k+4)L}=\Theta^{k+1}_{:,L};\quad
        u^{(2k+4)L+1}=\frac{1}{\sqrt{\delta}}\hR^{k+1}_{:,1}.
    \end{split}
    \label{eq:abs_AMP_ind_reduction4}  
    \end{align}
\end{itemize}
We proceed to provide the derivations of \eqref{eq:abs_AMP_ind_reduction1}--\eqref{eq:abs_AMP_ind_reduction4}. For $t=(2k+2)L+1,\dots,(2k+3)L-1$, choose
\begin{align*}
    f_t^v(h^1,\dots,h^t,d^1,\dots,d^{2L})
    &=0 \ \text{ and } \\
    f_{t+1}^u(e^1,\dots,e^t,c^1,\dots,c^{L_\Psi})
    &=\frac{1}{\sqrt{\delta}}\Big\{g_{k}\big(\Theta^0,\dots,\Theta^{k},q(\Theta,\tPsi)\big)\Big\}_{:,t+1-(2k+2)L} \\
    &=\frac{1}{\sqrt{\delta}}\Big\{\tg_{k}\big(\Theta,\Theta^0,\dots,\Theta^{k},\tPsi\big)\Big\}_{:,t+1-(2k+2)L}.
\end{align*} 
We first look at $t=(2k+2)L+1$. From the definition of $C^{k,m+1}$ in \eqref{eq:CkF_k1_det_def}, we have the following identity:
\begin{align}
    \left\{\sum_{m=0}^k\hB^m\big(C^{k,m+1}\big)^\top\right\}_{:,1}
    &=\sum_{m=0}^k\left\{\hB^m\big(C^{k,m+1}\big)^\top\right\}_{:,1}
    =\sum_{m=0}^k\hB^m
    \begin{bmatrix}
        \E[\tpartial_{mL+1} \, g_{k,1}(Z^0,\dots,Z^k,\bar{Y})] \\
        \vdots \\
        \E[\tpartial_{(m+1)L} \, g_{k,1}(Z^0,\dots,Z^k,\bar{Y})]
    \end{bmatrix} \nonumber \\
    &=\sum_{m=0}^k\sum_{l=1}^L\hB^m_{:,l}\cdot\E\Big[\tpartial_{mL+l} \, g_{k,1}(Z^0,\dots,Z^k,\bar{Y})\Big],
    \label{eq:identity1_k}
\end{align}
where $\tpartial_{mL+l} \, g_{k,1}$ denotes the derivative w.r.t.~the $(mL+l)$-th entry of the input of $g_{k,1}$. From the definition of  $f^u_{(2k+2)L+1}$, we have  $u^{(2k+2)L+1}=\frac{1}{\sqrt{\delta}}\hR_{:,1}^{k}$.  We therefore have:
\begin{equation}
    \begin{split}
    & h^{(2k+2)L+1}
    =\tX^\top\hR_{:,1}^k-\sum_{s=1}^{(2k+2)L}\delta\E[\partial_sf_{(2k+2)L+1}^u]v^s  \\
    &\stackrel{(a)}{=}\tX^\top\hR_{:,1}^k-\sum_{s=1}^{L}\delta\E[\partial_sf_{(2k+2)L+1}^u]v^s-\sum_{r=0}^k\sum_{s\in[(2r+1)L+1:(2r+2)L]}\delta\E[\partial_sf_{(2k+2)L+1}^u]v^s  \\
    &=\tX^\top\hR_{:,1}^k-\sum_{l=1}^L\E[\tpartial_l\tg_{k,1}(Z,Z^0,\dots,Z^k,\bar{\Psi})]B_{:,l}
    -\sum_{m=0}^k\sum_{l=1}^L\hB^m_{:,l}\E\Big[\tpartial_{mL+l} \, g_{k,1}(Z^0,\dots,Z^k,\bar{Y})\Big]  \\
    &\stackrel{(b)}{=}B_{:,1}^{k+1}-\big\{B(\Mu_B^{k+1})^\top\big\}_{:,1},
    \label{eq:ind_h_deriv}
    \end{split}
\end{equation}
where (a) uses the fact that the other terms of the summation equals to zero since $f_s^v=0$ for those $s$, and (b) applies the definition in \eqref{eq:SE_Mk1B} and the identity in \eqref{eq:identity1_k}. Next, we have $v^{(2k+2)L+1}=0$ from the choice of $f_{(2k+2)L+1}^v=0$, and 
\begin{align*}
    e^{(2k+2)L+1}
    =\sqrt{\delta}\tX\underbrace{v^{(2k+2)L+1}}_{=0}-\sum_{s=1}^t\underbrace{a_s^{(2k+2)L+1}}_{=0}u^s
    =0.
\end{align*}
Next, we have $u^{(2k+2)L+2}=\frac{1}{\sqrt{\delta}}\hR^k_{:,2}$ from the choice of $f_{(2k+2)L+1}^u$. A similar set of steps will give us the reduction for $t=(2k+2)L+2,\dots,(2k+3)L-1$. This proves \eqref{eq:abs_AMP_ind_reduction1}.

For $t=(2k+3)L$, choose $f_t^v$ and $f_{t+1}^u$ to be zero. Using the steps provided in \eqref{eq:ind_h_deriv} we obtain $h^t$ in \eqref{eq:abs_AMP_ind_reduction2}. The remaining reductions can be obtained by direct substitution of $f_t^v=0$ and $f_{t+1}^u=0$. This proves \eqref{eq:abs_AMP_ind_reduction2}.

For $t=(2k+3)L+1,\dots,(2k+4)L-1$, choose
\begin{align*}
    f_t^v(h^1,\dots,h^t,d^1,\dots,d^{2L})
    &=\frac{1}{\sqrt{\delta}}\big\{f_{k+1}(B^{k+1})\big\}_{:,t-(2k+3)L} \, \text{ and } \\
    f_{t+1}^u(e^1,\dots,e^t,c^1,\dots,c^{L_\Psi})
    &=0.
\end{align*}
For $t=(2k+3)L+1$, we have
\begin{align*}
    h^{(2k+3)L+1}
    =\sqrt{\delta}\tX^\top \underbrace{u^{(2k+3)L+1}}_{=0}-\sum_{s=1}^{(2k+3)L}\underbrace{b_s^{(2k+3)L+1}}_{=0}v^s
    =0.
\end{align*}
 Let $\bar{B}^m:=\Mu_B^m\bar{B}+G_B^m$.  From the definition of $F^{k+1,m+1}$ in 
\eqref{eq:CkF_k1_det_def}, we have the following identity:
\begin{align}
    \left\{\sum_{m=0}^k\hR^m\big(F^{k+1,m+1}\big)^\top\right\}_{:,1}
    & %=\sum_{m=0}^k\big\{\hR^m\big(F^{k+1,m+1}\big)^\top\big\}_{:,1}
    =\sum_{m=0}^k\hR^m\frac{1}{\delta}
    \begin{bmatrix}
        \E[\tpartial_{mL+1} \, f_{k+1,1}(\bar{B}^1,\dots,\bar{B}^{k+1})] \\
        \vdots \\
        \E[\tpartial_{(m+1)L} \, f_{k+1,1}(\bar{B}^1,\dots,\bar{B}^{k+1})] \\
    \end{bmatrix} \nonumber \\
    &=\sum_{m=0}^k\sum_{l=1}^L\hR_{:,l}^m\cdot
    \frac{1}{\delta}\E[\tpartial_{mL+l} \, f_{k+1,1}(\bar{B}^1,\dots,\bar{B}^k)],
    \label{eq:identity2_k}
\end{align}
where $f_{k+1,1}$ denotes the first entry of $f_{k+1}$, and $\tpartial_{mL+l} \, f_{k+1,1}$ denotes the derivative w.r.t.~the $(mL+l)$-th entry of the input of $f_{k+1,1}$. From the definition of $f_{(2k+3)L+1}^v$, we have that $v^{(2k+3)L+1}=\frac{1}{\sqrt{\delta}}\hB_{:,1}^{k+1}$.  We therefore have:
\begin{align}
    e^{(2k+3)L+1}
    &=\tX\hB_{:,1}^{k+1}-\sum_{s=1}^{(2k+3)L+1}\E[\partial_sf_{(2k+3)L+1}^v]u^s \nonumber \\
    &\stackrel{(a)}{=}\tX\hB_{:,1}^{k+1}-\sum_{r=1}^{k+1}\sum_{s\in[2rL+1:(2r+1)L]}\E[\partial_sf_{(2k+3)L+1}^v]u^s \nonumber \\
    &=\tX\hB_{:,1}^{k+1}-\sum_{m=0}^k\sum_{l=1}^L\hR_{:,l}^m\frac{1}{\delta}\E[\tpartial_{mL+l}f_{k+1,1}(\bar{B}^1,\dots,\bar{B}^k)] \nonumber \\
    &\stackrel{(b)}{=}\tX\hB_{:,1}^{k+1}-\left\{\sum_{m=0}^k\hR^m\big(F^{k+1,m+1}\big)^\top\right\}_{:,1}
    =\Theta_{:,1}^{k+1},
    \label{eq:Theta_k+1_reduction}
\end{align}
where (a) uses the fact that the other terms of the summation equals to zero since $f_s^v=0$ for those $s$, and (b) applies the identity in \eqref{eq:identity2_k}. A similar set of steps will give us the reduction for $t=(2k+3)L+2,\dots,(2k+4)L-1$. This proves \eqref{eq:abs_AMP_ind_reduction3}.

For $t=(2k+4)L$, choose
\begin{align*}
    f_t^v(h^1,\dots,h^t,d^1,\dots,d^{2L})
    &=\frac{1}{\sqrt{\delta}}\big\{f_{k+1}(B^1,\dots,B^{k+1})\big\}_{:,L}  \ \text{ and } \\
    f_{t+1}^u(e^1,\dots,e^t,c^1,\dots,c^{L_\Psi})
    &=\frac{1}{\sqrt{\delta}}\big\{g_{k+1}(\Theta^0,\dots,\Theta^{k+1},q(\Theta,\tPsi))\big\}_{:,1}.
\end{align*}
By similar techniques displayed above in \eqref{eq:Theta_k+1_reduction} and by direct substitution of $f_t^v$ and $f_{t+1}^u$, we get
\begin{align*}
    h^{(2k+4)L}=0;\quad
    v^{(2k+4)L}=\frac{1}{\sqrt{\delta}}\hB_{:,L}^{k+1};\quad
    e^{(2k+4)L}=\Theta_{:,L}^{k+1}; \quad
    u^{(2k+4)L+1}=\frac{1}{\sqrt{\delta}}\hR_{:,1}^{k+1}.
\end{align*}
This proves \eqref{eq:abs_AMP_ind_reduction4}.

We now reduce the matrix-AMP SE parameters to those of the abstract AMP. Define the index sets 
\[
\mathcal{I}_{k+1} = [L]\cup\left\{\bigcup_{r=0}^{k+1}[(2r+1)L+1:(2r+2)L]\right\}, \quad 
\mathcal{J}_{k+1} = \bigcup_{r=1}^{k+1}[2rL+1:(2r+1)L]. 
\]
By Theorem \ref{thm:abs_AMP}, we have 
\begin{align*}
  (h^{(2rL+1},\dots,h^{(2k+1)L})_{r=1}^{k+1} \stackrel{W_2}{\to}  (\bar{h}^{(2rL+1},\dots,\bar{h}^{(2r+1)L})_{r=1}^k
  \sim\normal\big(0,\Omega_{\mathcal{J}_{k+1},\mathcal{J}_{k+1}}^{(2k+4)L}\big) \, ,
\end{align*}
where $\Omega_{\mathcal{J}_{k+1},\mathcal{J}_{k+1}}^{(2k+4)L}=\{\delta\E[\bar{u}^r\bar{u}^s]\}_{r,s\in\mathcal{J}_{k+1}}$. By the inductive hypothesis and Theorem \ref{thm:abs_AMP}, we have 
$$
\{u^r\}_{r\in\mathcal{J}_{k+1}}
=\frac{1}{\sqrt{\delta}}\Big(\tg(\Theta,\Theta^0,\tPsi),\dots,\tg_{k}(\Theta,\Theta^0,\dots,\Theta^{k},\tPsi)\Big)
$$ 
converging to 
$$
\{\bu^r\}_{r\in\mathcal{J}_{k+1}}
=\frac{1}{\sqrt{\delta}}\Big(\tg_0(Z,Z^0,\bar{\Psi}),\dots,\tg_{k}(Z,Z^0,\dots,Z^{k},\bar{\Psi})\Big).
$$
This implies that $\Omega_{\mathcal{J}_{k+1},\mathcal{J}_{k+1}}^{(2k+2)L}=\Tau_B^{k+1}$. Letting $G_B^{k+1}\equiv\big(\bar{h}^{2rL+1},\dots,\bar{h}^{(2r+1)L}\big)_{r=1}^{k+1}\sim\normal(0,\Tau_B^{k+1})$. From the induction hypothesis we have $\big(h^{2rL+1},\dots,h^{(2r+1)L}\big)_{r=1}^{k+1}=B^1-B(\Mu_B^1)^\top,\dots,B^{k+1} - B(\Mu_B^{k+1})^\top $, and we have already shown that  $(d^1, \ldots, d^L) = B$, and $ (v^{L+1}, \ldots, v^{2L})= \hB^0$. Applying Theorem \ref{thm:abs_AMP} to this collection of vectors, we obtain 
\begin{align}
(B, \hB^0, B^1, \dots, B^{k+1}) \stackrel{W_2}{\to} (\bar{B}, \bar{B}^0, \Mu_B^{1} \bar{B} + G_B^{1}, \dots, \Mu_B^{k+1} \bar{B} + G_B^{k+1}).
\label{eq:Bk1_proved}
\end{align} 
Using $B^{k+1}\stackrel{W_2}{\rightarrow}\Mu_B^{k+1}\bar{B}+G_B^{k+1}$ (from above) and $\hB^{k+1}=f_{k+1}(B^1,\dots,B^{k+1})$ with  $f_{k+1}$ satisfying the polynomial growth condition in \eqref{eq:poly_growth_cond}, we have
\begin{align}
    (B,\hB^1,\dots,\hB^{k+1})
    \stackrel{W_2}{\rightarrow}
    \Big(\bar{B},f_{1}(\Mu_B^{1}\bar{B}+G_B^{1}),\dots,f_{k+1}(\Mu_B^{1}\bar{B}+G_B^{1},\dots,\Mu_B^{k+1}\bar{B}+G_B^{k+1})\Big),
    \label{eq:B_hat_kplus1_convergence}
\end{align}
via \eqref{eq:sum_to_exp}. 
%This implies that $(v^{(2k+3)L+1},\dots,v^{(2k+4)L})=\frac{1}{\sqrt{\delta}}\hB^{k+1}$ converges to 
%$$
%(\bv^{(2k+3)L+1},\dots,\bv^{(2k+4)L})=\frac{1}{\sqrt{\delta}}f_{k+1}(\Mu_B^{k+1}\bar{B}+G_B^{k+1}).
%$$
By Theorem \ref{thm:abs_AMP}, we have
\begin{align*}
   \big(e^t\big)_{t\in\mathcal{I}_{k+1}}
   \stackrel{W_2}{\rightarrow}
   \big(\be^t\big)_{t\in\mathcal{I}_{k+1}}
   \sim\normal\big(0,\Gamma_{\mathcal{I}_{k+1},\mathcal{I}_{k+1}}^{(2k+4)L}\big),
\end{align*}
where $\Gamma_{\mathcal{I}_{k+1},\mathcal{I}_{k+1}}^{(2k+4)L}=(\E[\bar{v}^r\bar{v}^s])_{r,s\in\mathcal{I}_{k+1}}$.
Since we have 
$$
(v^1,\dots,v^L)=\frac{1}{\sqrt{\delta}}B
\text{ and }
\big(v^{(2r+1)L+1},\dots,v^{(2r+2)L}\big)_{r=1}^{k+1}=\frac{1}{\sqrt{\delta}}\big(\hB^1,\dots,\hB^{k+1}\big),
$$
from \eqref{eq:B_hat_kplus1_convergence} and Theorem \ref{thm:abs_AMP}, we must have that
$$
(\bv^1,\dots,\bv^L)=\frac{1}{\sqrt{\delta}}\bar{B}
\text{ and }
\big(\bv^{(2r+1)L+1},\dots,\bv^{(2r+2)L}\big)_{r=1}^{k+1}=\frac{1}{\sqrt{\delta}}f_{k+1}\big(\Mu_B^{1}\bar{B}+G_B^{1},\dots,\Mu_B^{k+1}\bar{B}+G_B^{k+1}\big).
$$
This guarantees that
$$
\Gamma_{\mathcal{I}_{k+1},\mathcal{I}_{k+1}}^{(2k+4)L}
=\big(\E[\bv^r\bv^s]\big)_{r,s\in\mathcal{I}_{k+1}}
=\Sigma^{k+1}.
$$
Letting $Z^{k+1} \equiv (\be^{(2k+3)L+1},\dots,\be^{(2k+4)L})$, we have that $(Z,Z^0,\dots,Z^{k+1})\sim\normal(0,\Sigma^{k+1})$. Since $(c^1, \ldots, c^{L_\Psi}) = \tPsi$,
$(e^1,\dots,e^L)=\Theta$, and we have $(e^{(2k+3)L+1},\dots,e^{(2k+2)L})=\Theta^{k+1}$ via the inductive hypothesis, by applying Theorem \ref{thm:abs_AMP} to this collection of vectors we obtain 
\begin{align}
    (\tPsi, \Theta, \Theta^0, \dots, \Theta^{k+1}) \stackrel{W_2}{ \to} (\bar{\Psi}, Z, Z^0, \dots, Z^{k+1}) 
    \stackrel{d}{=}(\bar{\Psi}, Z,\Mu^{0}_{\Theta}Z+G^{0}_\Theta,\dots,\Mu^{k+1}_{\Theta}Z+G^{k+1}_\Theta),
\label{eq:Thetak1_proved}
\end{align} 
%where the equality in distribution follows from \eqref{eq:ZZk_joint}. 
Thus, \eqref{eq:Bk1_proved} shows that the first claim of Theorem \ref{thm:GAMP} holds, and \eqref{eq:Thetak1_proved}  shows that the second claim holds for $(k+1)$. This completes the proof of the induction step.

We conclude the proof by observing that the assumptions in Theorem \ref{thm:abs_AMP} are satisfied by those of the matrix-AMP algorithm:
\begin{itemize}
    \item Assumption \ref{ass:universality_result} is satisfied by the model assumptions in Section \ref{sec:prelim} and assumption \textbf{(A1)} of Theorem \ref{thm:GAMP}.
    \item Assumptions 1, 2, and 3 in Theorem \ref{thm:abs_AMP} are directly satisfied by the matrix-AMP algorithm via assumptions \textbf{(A2)} and \textbf{(A3)} of Theorem \ref{thm:GAMP} respectively.
\end{itemize}

\section{Proof of Proposition \ref{prop:eqv_of_AMPs}} \label{sec:eqv_of_AMP_and_SE}

Our goal is to show that a  matrix-AMP algorithm in \eqref{eq:memoryless_GAMP} with memoryless Bayes-optimal denoisers is equivalent to the AMP algorithm proposed by El Alaoui et al. in \cite{Ala18}. The Bayes-optimal denoisers $g_k:\mb{R}^L\times\mb{R}^L\rightarrow\mb{R}^L$ and $f_{k+1}:\mb{R}^L\rightarrow\mb{R}^L$ for\eqref{eq:memoryless_GAMP}  act row-wise on their matrix inputs and are defined as
\begin{align}
\begin{split}
    f_k(s^k)
    &=\E[\bar{B}|\Mu_B^k\bar{B}+G_B^k=s^k] \\
    g_k(z^k,y)
    &=\Cov[Z \mid Z^k=u]^{-1}\left(\E[Z|Z^k=z^k,\bar{Y}=y]-\E[Z|Z^k=z^k]\right),
\end{split}
\label{eq:memoryless_denoisers}    
\end{align}
where we recall from the discussion below \eqref{eq:gk_opt_def} that left multiplication of a positive definite matrix to $(\E[Z|Z^k,\bar{Y}]-\E[Z|Z^k])$ does not affect the optimality of $g_k$.

With some abuse of notation, we reuse the notations of the original state evolution parameters defined in \eqref{eq:SE_Mk1B}-\eqref{eq:Sigma_22_def} for a simplified version described below. For the matrix-AMP in \eqref{eq:memoryless_GAMP} with memoryless denoisers, we use the simplified state evolution parameter $\Sigma^k\in\mb{R}^{2L\times 2L}$, where $(Z,Z^k)^\top\sim\normal(0,\Sigma^k)$ is independent of $\bar{\Psi}\sim P_{\bar{\Psi}}$. Using $\Sigma^k$, state evolution computes:
\begin{align}
    \Mu_B^{k+1}
    &=\E[\partial_1\tg_k(Z,Z^k,\bar{\Psi})]  \quad \in \, \reals^{L \times L},
    \label{eq:memoryless_Mu}   
    \\
    \Tau_B^{k+1}
    &=\E[\tg_k(Z,Z^k,\bar{\Psi})\tg_k(Z,Z^k,\bar{\Psi})^\top]  \quad \in \, \reals^{L \times L}, 
    \label{eq:memoryless_Tau}   
    \\
    \Sigma^{k+1}
    &=\begin{bmatrix}
        \Sigma_{(11)}^{k+1} & \Sigma_{(12)}^{k+1} \\
        \Sigma_{(21)}^{k+1} & \Sigma_{(22)}^{k+1}
    \end{bmatrix}  \quad \in \,  \reals^{2L \times 2L}, 
    \label{eq:memoryless_Sigma}   
\end{align}
where the four $L\times L$ matrices constituting $\Sigma^{k+1}$ are given by:
\begin{align}
    \Sigma_{(11)}^\top
    &=\frac{1}{\delta}\E[\bar{B}\bar{B}^\top] 
    \label{eq:memoryless_Sigma11}   
    \\
    \Sigma_{(12)}^{k+1}
    &=\left(\Sigma_{(21)}^{k+1}\right)^\top
    =\frac{1}{\delta}\E\left[\bar{B}f_{k+1}(\Mu_B^{k+1}\bar{B}+G_B^{k+1})^\top\right] 
    \label{eq:memoryless_Sigma12}   
    \\
    \Sigma_{(22)}^{k+1}
    &=\frac{1}{\delta}\E\left[f_{k+1}(\Mu_B^{k+1}\bar{B}+G_B^{k+1})f_{k+1}(\Mu_B^{k+1}\bar{B}+G_B^{k+1})^\top\right].
    \label{eq:memoryless_Sigma22}   
\end{align}
Under the same assumptions as Theorem \ref{thm:GAMP}, we have Theorem \ref{thm:GAMP} specializing to give us
\begin{align}
    (B,\hB^0,B^{k+1})
    \stackrel{W_2}{\rightarrow}
    (\bar{B},\bar{B}^0,\Mu_B^{k+1}\bar{B}+G_B^{k+1}) 
    \label{eq:memoryless_Bk_conv}   
    \\
    (\tPsi,\Theta,\Theta^k)
    \stackrel{W_2}{\rightarrow}
    (\bar{\Psi},Z,\Mu_\Theta^kZ+G_\Theta^k)
    \label{eq:memoryless_Thetak_conv}   
\end{align}
almost surely as $n,p\rightarrow\infty$ with $n/p\rightarrow\delta$, where $G_B^{k+1}\sim\normal(0,\Tau_B^{k+1})$ is independent of $\bar{B}$, and $G_\Theta^k\sim\normal(0,\Tau_\Theta^k)$ is independent of $(Z,\bar{\Psi})$.

We first review the AMP algorithm and state evolution proposed by El Alaoui et al.~\cite{Ala18} in Section \ref{subsec:ElA_AMP}. In Section \ref{subsec:bayes_denoiser_SE}, we specify the Bayes-optimal denoisers and the resulting state evolution for our matrix-AMP. Then, in Section \ref{subsec:SE_equiv}, we show the equivalence of the two state evolution recursions. Finally, in Section \ref{sec:eqv_of_AMPs} we show how the AMP in \eqref{eq:GAMP} can be obtained from the one in ~\cite{Ala18} via a change of variables and large-sample approximations. 

\subsection{AMP and state evolution of El Alaoui et al.~\cite{Ala18}} \label{subsec:ElA_AMP}

For clarity, we refer to the AMP algorithm of \cite{Ala18} as \AlaouiAMP, and refer to the  AMP studied in this paper (in \eqref{eq:memoryless_GAMP}) as matrix-AMP. 
\AlaouiAMP is written in terms of rescaled matrices $\bX, \bY$ defined as:  
\begin{align}
\begin{split}
    \bX&=\sqrt{\alpha(1-\alpha)\delta}\,\tX
   \  \text{ and } \ 
    \bY=\sqrt{\alpha(1-\alpha)\delta}\, \tY,
\end{split}
\label{eq:bar_X_and_Y}
\end{align}
where $\tX, \tY$ are given by \eqref{eq:X_ij_decomp} and \eqref{eq:tYtPsi_def}.
%We begin by presenting \AlaouiAMP: 
For $k\geq0$, \AlaouiAMP iteratively produces estimates $\hz^{k+1}$ of the signal $B\in\mb{R}^{p\times L}$, via iterates $w^{k+1} \in \reals^{n \times L}$ and $z^{k+1} \in \reals^{p \times L}$. The rows of these iterates, indexed by $i \in [n]$ and $j \in [p]$, are computed as follows:
\begin{align}
\begin{split}
& \hz_{j,:}^{k+1}=\eta(z_{j,:}^{k},\Gamma_j^{k}); \\
    &w_{i,:}^{k+1}
    =\sum_{j=1}^{p}\bX_{ij}\hz^{k+1}_{j,:}-\Xi_i^{k+1}(\Xi_i^{k})^{-1}(\bY_{i,:}-w_{i,:}^{k}); \quad
    \Xi_i^{k+1}
    =\sum_{j=1}^p\bX_{ij}^2\Big(\diag(\hz_{j,:}^{k+1})-\hz_{j,:}^{k+1}(\hz_{j,:}^{k+1})^\top\Big) \\
    &z_{j,:}^{k+1}
    =\hz_{j,:}^{k+1}+\Gamma_j^{k+1}\sum_{i=1}^n\bX_{ij}(\Xi_i^{k+1})^{-1}(\bY_{i,:}-w_{i,:}^{k+1}); \quad
    \Gamma_j^{k+1}
    =\left(\sum_{i=1}^n\bX_{ij}^2\big(\Xi_i^{k+1}\big)^{-1}\right)^{-1},
\end{split}
\label{eq:Alaoui_AMP}
\end{align}
where 
\begin{align}
    \eta(z_{j,:}^{k},\Gamma_j^{k})
    :=\frac{\sum_{l=1}^L\pi_le_l\exp[-\frac{1}{2}(z_{j,:}^{k}-e_l)^\top(\Gamma_j^{k})^{-1}(z_{j,:}^{k}-e_l)]}{\sum_{l=1}^L\pi_l\exp[-\frac{1}{2}(z_{j,:}^{k}-e_l)^\top(\Gamma_j^{k})^{-1}(z_{j,:}^{k}-e_l)]}
    \in\mb{R}^L.
    \label{eq:Ala_AMP_fk}
\end{align}
The algorithm is initialized with $\hz_{j,:}^{1} = \pi$, for $j \in [p]$.

\paragraph{State evolution for \AlaouiAMP }
For $k\geq0$, El-Alaoui et al.~\cite{Ala18} introduced the following $L \times L$ matrices to measure the performance of \AlaouiAMP:
\begin{align}
    M_p^{k+1}
    =\frac{1}{p}\sum_{j=1}^p\hz_{j,:}^{k+1}B_{j,:}^\top\in\mb{R}^{L\times L}, \ 
    \text{ and } \
    Q_p^{k+1}
    =\frac{1}{p}\sum_{j=1}^p\hz_{j,:}^{k+1}\big(\hz_{j,:}^{k+1}\big)^\top\in\mb{R}^{L\times L}.
    \label{eq:Mpk1_Qpk1}
\end{align}
Here $M_p^{k+1}$ is the average alignment (overlap) between the true signal and the estimate at iteration $(k+1)$, and $Q_p^{k+1}$ is the empirical covariance of the estimate. In \cite{Ala18}, the matrices $M_p^{k+1}$ and $Q_p^{k+1}$ are assumed to   converge to well-defined limits $M^{k+1}$ and $Q^{k+1}$,  respectively, defined by the following state evolution recursion. Starting with $M^1= Q^1 = \pi \pi^{\top}  \in \reals^{L \times L}$, for $k \ge 0$ we recursively compute:
\begin{align}
\begin{split}
    A^{k+1}
    &=\frac{1}{\delta}\Big(\diag(\pi)-M^{k+1}-(M^{k+1})^\top+Q^{k+1}\Big) \,  \\
    R^{k+1}
    &=\diag(Q^{k+1}\cdot1_L)-Q^{k+1}  \, , \\
    M^{k+2}
    &=\sum_{l=1}^L\pi_l\E_G\Big[\eta\Big(e_l+(A^{k+1})^{1/2}G, \, \delta^{-1}R^{k+1}\Big)\Big]\cdot e_l^\top \,  ,\\
    Q^{k+2}
    &=\sum_{l=1}^L\pi_l\E_G\Big[\eta\Big(e_l+(A^{k+1})^{1/2}G, \, \delta^{-1}R^{k+1}\Big)\eta\Big(e_l+(A^{k+1})^{1/2}G, \, \delta^{-1}R^{k+1}\Big)^\top\Big] \, ,
\end{split}
\label{eq:Alaoui_SE}
\end{align}
where $G \sim \normal(0, I_L)$.
\begin{remark} \label{rmk:non_rig_hypo_Alaoui}
    The main hypothesis behind the derivations and results of \cite{Ala18} -- which they did not verify rigorously -- is that the empirical distribution  $\big(z_{j,:}^{k+1}\big)_{j \in [p]}$ converges to the law of $\bar{B}+\big(A^{k+1}\big)^{1/2}G$.
\end{remark}

\subsection{Bayes-optimal denoisers and state evolution for matrix-AMP} \label{subsec:bayes_denoiser_SE}
We compute the Bayes-optimal AMP denoisers for the noiseless pooled data setting. Using \eqref{eq:memoryless_denoisers} we have
\begin{align}
    f_k(s)
    &=\E[\bar{B} \mid \Mu^k_B\bar{B}+G_B^k = s]
    =\sum_{l=1}^Le_l\mb{P}[\bar{B}=e_l|s]
    =\sum_{l=1}^Le_l\frac{\mb{P}[\bar{B}=e_l]\mb{P}[s|\bar{B}=e_l]}{\mb{P}[s]} \nonumber \\
    &=\frac{\sum_{l=1}^L\pi_le_l\mb{P}[s|\bar{B}=e_l]}{\sum_{l=1}^L\pi_l\mb{P}[s|\bar{B}=e_l]}
    =\frac{\sum_{l=1}^L\pi_l e_l \, \normal(s;\Mu_B^ke_l,\Mu_B^k)}{\sum_{l=1}^L\pi_l \, \normal(s;\Mu_B^ke_l,\Mu_B^k)},
    \label{eq:Bayes_AMP_fk}
\end{align}
where $\normal(s;\Mu_B^ke_l,\Mu_B^k)$ denotes the multivariate Gaussian probability density with mean $\Mu_B^ke_l$ and covariance $\Mu_B^k$. (Here, we used the fact that $\Tau_B^k=\Mu_B^k$, which is shown below in \eqref{eq:deriv_gk_w.r.t._Z}.)  
Comparing \eqref{eq:Ala_AMP_fk} and \eqref{eq:Bayes_AMP_fk}, we observe that
\begin{align}
    f_{k}(s)=\eta\Big((\Mu_B^{k})^{-1}s,(\Mu_B^{k})^{-1}\Big).
    \label{eq:fk_eta_eqv}
\end{align}

Next, for $k\geq1$, using  \eqref{eq:memoryless_denoisers}, we have
\begin{align}
    g_k(u,\,y)
    &=\Cov[Z \mid Z^k=u]^{-1}\big(\E[Z \mid Z^k=u, \bar{Y}=y]-\,\E[Z \mid Z^k=u] \big) \nonumber \\
    &\stackrel{(a)}{=}\Big(\Sigma^k_{(11)}-\Sigma^k_{(12)}(\Sigma^k_{(22)})^{-1}\Sigma^k_{(21)}\Big)^{-1}\Big(y-\Sigma^k_{(12)}(\Sigma_{(22)}^k)^{-1}u\Big) \nonumber \\
    &\stackrel{(b)}{=}\big(\Sigma_{(11)}^k-\Sigma_{(21)}^k\big)^{-1}(y-u),
    \label{eq:Bayes_AMP_gk}
\end{align}
where (a) applies $\Cov[Z|Z^k]=\Sigma^k_{(11)}-\Sigma^k_{(12)}(\Sigma^k_{(22)})^{-1}\Sigma^k_{(21)}$, $\E[Z|Z^k=u]=\Sigma^k_{(12)}(\Sigma_{(22)}^k)^{-1}u$ (recall that $[Z,Z^k]^\top\sim\normal(0,\Sigma^k)$, and $\E[Z|Z^k,\Bar{Y}=y]=y$ (since $\bar{Y}=Z$ for the noiseless pooled data problem), and (b) applies $\Sigma_{(12)}^k=\Sigma_{(21)}^k=\Sigma_{(22)}^k$ since for $f_k=\E[\bar{B}|\Mu_k^B\bar{B}+G_B^k]$, we have
\begin{align}
\begin{split}
    \Sigma_{(12)}^k
    &=\left(\Sigma_{(21)}^k\right)^\top
    \stackrel{(a)}{=}\frac{1}{\delta}\E[\bar{B}f_k(\Mu_B^k\bar{B}+G_B^k)^\top] \\
    &\stackrel{(b)}{=}\frac{1}{\delta}\E\Big[\bar{B}\, \E[\bar{B}^\top|\Mu_B^k\bar{B}+G_B^k]\Big] \\
    &\stackrel{(c)}{=}\E\Big[\E\Big[\bar{B} \, \E[\bar{B}^\top|\Mu_B^k\bar{B}+G_B^k]\Big|\Mu_B^k\bar{B}+G_B^k\Big]\Big] \\
    &=\E\Big[\E[\bar{B}|\Mu_B^k\bar{B}+G_B^k] \, \E[\bar{B}|\Mu_B^k\bar{B}+G_B^k]^\top\Big] \\
    &=\E[f_kf_k^\top]
    =\Sigma_{(22)}^k,
\end{split}
\label{eq:eqv_of_Sigma_12_21_22}
\end{align}
where (a) uses the definition in \eqref{eq:memoryless_Sigma12}, (b) uses the definition of Bayes-optimal $f_k$  in \eqref{eq:memoryless_denoisers}, and (c) uses the tower property of expectation.

\paragraph{State evolution of matrix-GAMP with Bayes-optimal denoisers} To match the initialization of \AlaouiAMP, we set
$\hB_{j,:} = \pi$, for $j \in [p]$, which using  \eqref{eq:Sig0_def}, gives:
\begin{align}
     \Sigma^0 =   \frac{1}{\delta} \begin{bmatrix}
        \text{Diag}(\pi) &   \pi \pi^{\top} \\
        \pi \pi^{\top} & \pi \pi^{\top}
    \end{bmatrix}.
    \label{eq:Sig0_bayes}
\end{align}
With the Bayes-optimal choice of denoisers, the state evolution recursion in \eqref{eq:memoryless_Mu}-\eqref{eq:memoryless_Sigma} takes the following simpler form. For $k \ge 0$, given $\Mu^k_B$, $\Sigma^k$, and $(Z,Z^k)^\top \sim \normal(0,\Sigma^k)$ independent of $\bar{\Psi}$, we have:
\begin{align}
\begin{split}
        & \Mu_B^{k+1} = \Tau_B^{k+1} =  \E[g_k(Z^k,\bar{Y})g_k(Z^k,\bar{Y})^\top], \qquad 
     \Sigma^{k+1} =
    \begin{bmatrix}
 \Sigma_{(11)}^{k+1} & \Sigma_{(12)}^{k+1} \\ \Sigma_{(21)}^{k+1} & \Sigma_{(22)}^{k+1}
    \end{bmatrix},
\end{split}
\label{eq:MkB_TkB_eqv} 
\end{align}
where 
\begin{align}
    &  \Sigma_{(11)}^{k+1} = \frac{1}{\delta} \E[\bar{B}\bar{B}^\top],  \nonumber \\
    & \Sigma_{(12)}^{k+1} = \Sigma_{(21)}^{k+1} = \Sigma_{(22)}^{k+1}
    = \frac{1}{\delta}  \E[f_{k+1}(\Mu^{k+1}_{B}\bar{B}+G^{k+1}_B)f_{k+1}(\Mu^{k+1}_{B}\bar{B}+G^{k+1}_B)^\top]. \label{eq:Sig_k1_Bayes}
\end{align}
We have already shown \eqref{eq:Sig_k1_Bayes} in \eqref{eq:eqv_of_Sigma_12_21_22}. To show the first equality in \eqref{eq:MkB_TkB_eqv}, we need the multivariate version of Stein's lemma:
    \begin{lemma}\textup{\cite{Tan23c}} \label{lem:gen_steins}
    Let $x=(x_1,\dots,x_L) \in \reals^L$, and $g:\mb{R}^L\rightarrow\mb{R}^L$ be such that for $j, l \in [L]$, the function $x_j\rightarrow g_l(x_1,\dots,x_L)$ 
    is absolutely continuous for Lebesgue almost every $(x_i: \, i\neq j)\in\mb{R}^{L-1}$, with weak derivative $\partial_{x_j} g_l:\mb{R}^L\rightarrow\mb{R}$ satisfying $\E[|\partial_{x_j}g_l(x)|]<\infty$. Let $\nabla g(x)=(\nabla g_1(x),\dots,\nabla g_L(x))^\top\in\mb{R}^{L\times L}$ where $\nabla g_l(x)=\big(\partial_{x_1}g_l(x),\dots,\partial_{x_L}g_l(x)\big)^\top$ for $x\in\mb{R}^L$. If $X\sim \normal(\mu,\Sigma)$ with $\Sigma$ positive definite, then
    \begin{align*}
        \E[\nabla g(X)]=\Big(\Sigma^{-1}\E\big[(X-\mu)g(X)^\top\big]\Big)^\top = \, \E\big[ g(X) (X-\mu)^\top  \big] \Sigma^{-1}.
    \end{align*}
\end{lemma}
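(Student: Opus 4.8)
The plan is to deduce the matrix identity from the scalar Stein identities
$\E[(X_i-\mu_i)g_l(X)] = \sum_{j=1}^L \Sigma_{ij}\,\E[\partial_{x_j}g_l(X)]$, one for each pair $i,l\in[L]$. Indeed, the $(i,l)$ entry of $\E[(X-\mu)g(X)^\top]$ is $\E[(X_i-\mu_i)g_l(X)]$ while the $(i,l)$ entry of $\Sigma\,\E[\nabla g(X)]^\top$ is $\sum_j\Sigma_{ij}\E[\partial_{x_j}g_l(X)]$, so these scalar identities say exactly that $\E[(X-\mu)g(X)^\top] = \Sigma\,\E[\nabla g(X)]^\top$. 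Left-multiplying by $\Sigma^{-1}$ and transposing gives the first displayed equality $\E[\nabla g(X)] = (\Sigma^{-1}\E[(X-\mu)g(X)^\top])^\top$, and transposing once more together with the symmetry of $\Sigma^{-1}$ gives the second, $\E[g(X)(X-\mu)^\top]\Sigma^{-1}$. So the whole content is the scalar identity.

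For the scalar identity I would integrate by parts against the Gaussian density. Let $\phi$ be the density of $\normal(\mu,\Sigma)$; a one-line computation gives the \emph{score identity} $\nabla_x\phi(x) = -\Sigma^{-1}(x-\mu)\phi(x)$, i.e.\ $\partial_{x_j}\phi(x) = -[\Sigma^{-1}(x-\mu)]_j\,\phi(x)$. Multiplying by $\Sigma$ coordinatewise yields $(x_i-\mu_i)\phi(x) = -\sum_j\Sigma_{ij}\,\partial_{x_j}\phi(x)$, hence
\[
\E[(X_i-\mu_i)g_l(X)] = -\sum_{j=1}^L\Sigma_{ij}\int_{\reals^L}g_l(x)\,\partial_{x_j}\phi(x)\,dx = \sum_{j=1}^L\Sigma_{ij}\int_{\reals^L}\partial_{x_j}g_l(x)\,\phi(x)\,dx,
\]
the second step being integration by parts in $x_j$ with the remaining coordinates held fixed, followed by Fubini. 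The hypothesis that $x_j\mapsto g_l(x)$ is absolutely continuous for a.e.\ choice of the other coordinates is precisely what licenses the one-dimensional fundamental theorem of calculus along each line, and $\E[|\partial_{x_j}g_l(X)|]<\infty$ (together with the finiteness of $\E[|(X_i-\mu_i)g_l(X)|]$, implicitly needed for the statement to be meaningful) justifies the use of Fubini. An equivalent route, which I would fall back on if the hypotheses require more care, is to write $X=\mu+\Sigma^{1/2}Z$ with $Z\sim\normal(0,I_L)$, apply the classical one-dimensional Stein identity coordinatewise to $z\mapsto g(\mu+\Sigma^{1/2}z)$ (exploiting independence of the $Z_j$), and change variables back via the chain rule.

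The main obstacle is the vanishing of the boundary terms in the integration by parts: under only absolute continuity along coordinate lines and $L^1$ weak derivatives one cannot directly assert that $g_l(x)\phi(x)\to0$ as $|x_j|\to\infty$. The standard remedy is to integrate by parts on the slab $\{|x_j|\le R\}$, control the boundary integral $\int (g_l\phi)\big|_{x_j=\pm R}\,dx_{-j}$ using the integrability hypotheses, and pass to the limit along a suitable sequence $R_m\to\infty$; alternatively one first proves the identity for compactly supported smooth $g$ and extends it by mollification and dominated convergence. Once this limiting step is in place, the rest — the score computation and the entry-by-entry passage to matrix form described above — is routine. Since Lemma~\ref{lem:gen_steins} is quoted verbatim from \cite{Tan23c}, the detailed boundary-term argument can be deferred to that reference.
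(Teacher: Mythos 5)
Your argument is correct; the only caveat in the comparison is that the paper itself contains no proof of Lemma \ref{lem:gen_steins} — it is quoted verbatim from \cite{Tan23c} and used as a black box — so there is no in-paper derivation to measure you against. What you wrote is the standard self-contained proof, and the details check out: with the convention $(\nabla g(x))_{lj}=\partial_{x_j}g_l(x)$ used in the lemma, the entrywise reduction to the scalar identities $\E[(X_i-\mu_i)g_l(X)]=\sum_j\Sigma_{ij}\E[\partial_{x_j}g_l(X)]$ is exactly the right bookkeeping, and the scalar identity via the score relation $\nabla_x\phi(x)=-\Sigma^{-1}(x-\mu)\phi(x)$ followed by integration by parts in the $j$th coordinate is the classical route. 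You also correctly flag the one genuinely delicate point, the boundary terms under mere absolute continuity along coordinate lines with $L^1$ weak derivatives, and your fixes (truncation to slabs with a subsequence $R_m\to\infty$, or mollification plus dominated convergence) are adequate. A small refinement: in your fallback route $X=\mu+\Sigma^{1/2}Z$ with the one-dimensional Gaussian Stein identity applied conditionally coordinate by coordinate, one can use the Fubini representation $\phi(z)=\int_z^{\infty}t\phi(t)\,dt$ (and its mirror for $z<0$), which avoids boundary terms entirely and shows that $\E[\lvert(X_i-\mu_i)g_l(X)\rvert]<\infty$ is a consequence of $\E[\lvert\partial_{x_j}g_l(X)\rvert]<\infty$ rather than an extra implicit hypothesis. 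Deferring these measure-theoretic details to \cite{Tan23c} is consistent with how the paper treats the result.
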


Using the tower property of expectation, we can write  $\Mu^{k+1}_B $ in \eqref{eq:memoryless_Mu} as
\begin{align}
    \Mu^{k+1}_B 
    &= \E\Big[\E[\partial_{Z}\tg_k(Z,Z^k,\bar{\Psi})|Z^k]\Big] \nonumber \\
    &\stackrel{(a)}{=}\E\Big[\Cov[Z|Z^k]^{-1}\E\big[(Z-\E[Z|Z^k])\tg_k(Z,Z^k,\bar{\Psi})^\top\big\lvert \,  Z^k \big]\Big]^\top \nonumber \\
    & = \E[\Cov[Z|Z^k]^{-1}(Z-\E[Z|Z^k])\tg_k(Z,Z^k,\bar{\Psi})^\top]^\top \nonumber \\
    & = \E\Big[\E\big[\Cov[Z|Z^k]^{-1}(Z-\E[Z|Z^k])g_k(Z^k,\bar{Y})^\top \big\lvert \, Z^k,\bar{Y}\big]\Big]^\top \nonumber \\
    & \stackrel{(b)}{=} \E\big[g_k(Z^k,\bar{Y})g_k(Z^k,\bar{Y})^\top\big] 
    \,  =\Tau_B^{k+1}, \label{eq:deriv_gk_w.r.t._Z}
\end{align}
where (a) applies Lemma \ref{lem:gen_steins}, (b) follows from the definition of Bayes-optimal $g_k$ in \eqref{eq:memoryless_denoisers}.

\subsection{Equivalence of state evolution recursions}
\label{subsec:SE_equiv}

In this section, we prove the following result, showing the equivalence between the state evolution recursion of \AlaouiAMP in \eqref{eq:Alaoui_SE} and matrix-AMP in \eqref{eq:MkB_TkB_eqv}-\eqref{eq:Sig_k1_Bayes} for noiseless pooled data. 
\begin{proposition}
For $k \ge 0$,  we have
\begin{align}
\begin{split}
&A^{k+1} = \delta^{-1}R^{k+1} = \Mu_B^{k+1,} \\
& M^{k+2} =  \E[f_k(\Mu_B^{k+1}\bar{B}+G_B^{k+1})\bar{B}^\top], \\
&Q^{k+2} =  \E[f_k(\Mu_B^{k+1}\bar{B}+G_B^{k+1})f_k(\Mu_B^{k+1}\bar{B}+G_B^{k+1})^\top].
\end{split}
\label{eq:Ala_SE_v_our_SE_eqv}
\end{align}
\label{prop:SE_equiv_app}
\end{proposition}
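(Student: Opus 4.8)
The plan is to prove Proposition~\ref{prop:SE_equiv_app} by induction on $k\geq 0$, showing at each step that the two state evolution recursions track the same matrix sequences. The key identification is the claim in the first line of \eqref{eq:Ala_SE_v_our_SE_eqv}, namely that $A^{k+1} = \delta^{-1}R^{k+1} = \Mu_B^{k+1}$; once this is established, the claimed expressions for $M^{k+2}$ and $Q^{k+2}$ follow by substituting into the \AlaouiAMP recursion \eqref{eq:Alaoui_SE} and using the denoiser identity \eqref{eq:fk_eta_eqv} that rewrites $\eta((\Mu_B^{k+1})^{-1}s, (\Mu_B^{k+1})^{-1})$ as $f_{k+1}(s)$. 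Concretely, in the expressions for $M^{k+2}, Q^{k+2}$ in \eqref{eq:Alaoui_SE} one sees $\eta\big(e_l + (A^{k+1})^{1/2}G,\,\delta^{-1}R^{k+1}\big)$; with $A^{k+1}=\delta^{-1}R^{k+1}=\Mu_B^{k+1}$ this is $\eta\big(e_l + (\Mu_B^{k+1})^{1/2}G,\,(\Mu_B^{k+1})\cdot(\Mu_B^{k+1})^{-1}(\Mu_B^{k+1})^{-1}\big)$ --- wait, more carefully, writing $s = \Mu_B^{k+1}e_l + G_B^{k+1}$ with $G_B^{k+1}\sim\normal(0,\Mu_B^{k+1})$ (using $\Tau_B^{k+1}=\Mu_B^{k+1}$ from \eqref{eq:deriv_gk_w.r.t._Z}), we have $(\Mu_B^{k+1})^{-1}s = e_l + (\Mu_B^{k+1})^{-1}G_B^{k+1}$, whose covariance is $(\Mu_B^{k+1})^{-1}$, matching the second argument of $\eta$ in \eqref{eq:fk_eta_eqv}; and $(\Mu_B^{k+1})^{-1}G_B^{k+1} \stackrel{d}{=} (A^{k+1})^{1/2}G$ precisely when $A^{k+1}=(\Mu_B^{k+1})^{-1}\Mu_B^{k+1}(\Mu_B^{k+1})^{-1} = (\Mu_B^{k+1})^{-1}$, which is \emph{not} $\Mu_B^{k+1}$ --- so I must be careful about which normalization \cite{Ala18} uses; the right statement, to be checked against \eqref{eq:Ala_AMP_fk}, is that the $\eta$ argument pairing $(e_l + (A^{k+1})^{1/2}G, \delta^{-1}R^{k+1})$ corresponds under \eqref{eq:fk_eta_eqv} to $f_{k+1}$ evaluated at a Gaussian with the correct mean and covariance, and the induction then propagates $A^{k+2} = \delta^{-1}R^{k+2} = \Mu_B^{k+2}$.

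The inductive step proceeds as follows. Assume $A^{k+1} = \delta^{-1}R^{k+1} = \Mu_B^{k+1}$ (the base case $k=0$ uses the matched initializations: $M^1 = Q^1 = \pi\pi^\top$ from \eqref{eq:Alaoui_SE}, and $\Sigma^0$ from \eqref{eq:Sig0_bayes}, together with the first-iteration computation of $\Mu_B^1$). First I verify the $M^{k+2}, Q^{k+2}$ formulas: substitute $A^{k+1}=\delta^{-1}R^{k+1}=\Mu_B^{k+1}$ into \eqref{eq:Alaoui_SE}, apply the identity \eqref{eq:fk_eta_eqv}, and recognize the sum over $l$ weighted by $\pi_l$ with $G\sim\normal(0,I_L)$ as the expectation $\E[\,\cdot\,]$ over $\bar{B}\sim P_{\bar{B}}$ and $G_B^{k+1}\sim\normal(0,\Tau_B^{k+1})$ in \eqref{eq:Sig_k1_Bayes} --- this uses $\Tau_B^{k+1}=\Mu_B^{k+1}$ from \eqref{eq:deriv_gk_w.r.t._Z} to get the Gaussian covariance right. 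This yields $M^{k+2} = \delta\,\Sigma_{(21)}^{k+2}$ and $Q^{k+2} = \delta\,\Sigma_{(22)}^{k+2}$, and by \eqref{eq:eqv_of_Sigma_12_21_22} these are equal: $M^{k+2} = Q^{k+2} = \E[f_{k+1}(\cdot)f_{k+1}(\cdot)^\top]$ (I will align the iteration index $k$ vs $k+1$ between the two papers carefully). Then I close the induction: compute $A^{k+2}$ from \eqref{eq:Alaoui_SE}, i.e. $A^{k+2} = \delta^{-1}(\diag(\pi) - M^{k+2} - (M^{k+2})^\top + Q^{k+2})$; using $M^{k+2}=Q^{k+2}=(M^{k+2})^\top$ (symmetry of $Q$) this collapses to $\delta^{-1}(\diag(\pi) - M^{k+2})$. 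Separately, $\delta^{-1}R^{k+2} = \delta^{-1}(\diag(Q^{k+2}1_L) - Q^{k+2})$. I then need $\diag(Q^{k+2}1_L) = \diag(\pi)$, which holds because each row $f_{k+1}(\cdot)$ of the estimate lies on the simplex (it is a posterior mean of a one-hot vector, hence $f_{k+1}(\cdot)^\top 1_L = 1$ a.s.), so $Q^{k+2}1_L = \E[f_{k+1}(\cdot)(f_{k+1}(\cdot)^\top 1_L)] = \E[f_{k+1}(\cdot)] = \pi$ (the last equality by the tower property: $\E[\E[\bar B \mid \cdot]] = \E[\bar B] = \pi$). Hence $A^{k+2} = \delta^{-1}R^{k+2}$. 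Finally, to match these to $\Mu_B^{k+2}$, I use the matrix-AMP side: from \eqref{eq:MkB_TkB_eqv} and \eqref{eq:Bayes_AMP_gk} with $\bar Y = Z$ (noiseless), $\Mu_B^{k+2} = \E[g_{k+1}(Z^{k+1},\bar Y)g_{k+1}(Z^{k+1},\bar Y)^\top] = (\Sigma_{(11)}^{k+1} - \Sigma_{(21)}^{k+1})^{-1}\E[(Z-Z^{k+1})(Z-Z^{k+1})^\top](\Sigma_{(11)}^{k+1} - \Sigma_{(21)}^{k+1})^{-1}$, and $\E[(Z-Z^{k+1})(Z-Z^{k+1})^\top] = \Sigma_{(11)}^{k+1} - \Sigma_{(12)}^{k+1} - \Sigma_{(21)}^{k+1} + \Sigma_{(22)}^{k+1} = \Sigma_{(11)}^{k+1} - \Sigma_{(21)}^{k+1}$ (again using \eqref{eq:eqv_of_Sigma_12_21_22}), so $\Mu_B^{k+2} = (\Sigma_{(11)}^{k+1} - \Sigma_{(21)}^{k+1})^{-1} = \delta(\diag(\pi) - M^{k+2})^{-1}$ --- hmm, this gives an \emph{inverse}, so the correct identification is likely $A^{k+1} = \delta^{-1}R^{k+1} = (\Mu_B^{k+1})^{-1}$ or the $\eta$ argument in \eqref{eq:Alaoui_SE} should be read with $\delta^{-1}R^{k+1}$ as the covariance of the \emph{denoiser input} rather than of $s$; I will reconcile the exact normalization by tracing through \eqref{eq:fk_eta_eqv}, \eqref{eq:Bayes_AMP_gk}, and \eqref{eq:deriv_gk_w.r.t._Z} simultaneously, which fixes all constants consistently.

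The main obstacle I anticipate is exactly this bookkeeping of normalizations and iteration-index offsets: the two papers use different scalings (\AlaouiAMP works with $\bX = \sqrt{\alpha(1-\alpha)\delta}\,\tX$ per \eqref{eq:bar_X_and_Y}, and its state evolution \eqref{eq:Alaoui_SE} is phrased in terms of $M,Q$ which are covariances of the \emph{estimate} $\hz$, whereas matrix-AMP's $\Sigma^k$ is the covariance of the \emph{pre-denoising} iterate pair $(Z,Z^k)$), and the denoiser correspondence \eqref{eq:fk_eta_eqv} has $\Mu_B^k$ appearing both as a mean-scaling and, via $\Tau_B^k = \Mu_B^k$, as a covariance. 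Getting every $\Mu_B^{k+1}$ versus $(\Mu_B^{k+1})^{-1}$ versus $\delta^{\pm 1}$ factor right --- and in particular confirming whether the shared quantity is $\Mu_B^{k+1}$ or its inverse --- is the delicate part; once the dictionary between $(M^k, Q^k, A^k, R^k)$ and $(\Sigma^k, \Mu_B^k, \Tau_B^k)$ is pinned down on iteration $1$, the inductive propagation is mechanical, relying only on the simplex constraint $f_k(\cdot)^\top 1_L = 1$, the symmetry/equality relations \eqref{eq:eqv_of_Sigma_12_21_22}, the Bayes-$g_k$ formula \eqref{eq:Bayes_AMP_gk}, and Stein's identity \eqref{eq:deriv_gk_w.r.t._Z}.
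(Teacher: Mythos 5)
Your proposal follows essentially the same route as the paper's proof: induction on $k$, the base case from the matched initializations $M^1=Q^1=\pi\pi^\top$ and $\Sigma^0$, the Bayes-optimality identities $\Sigma^k_{(12)}=\Sigma^k_{(21)}=\Sigma^k_{(22)}$, the Stein/tower identity $\Mu_B^{k+1}=\Tau_B^{k+1}=\E[g_k g_k^\top]$, the simplex argument (posterior means of one-hot vectors satisfy $f_k(\cdot)^\top 1_L=1$, so $\diag(Q^{k+1}1_L)=\diag(\pi)$), and the correspondence $f_k(s)=\eta\big((\Mu_B^k)^{-1}s,(\Mu_B^k)^{-1}\big)$ to convert the $\eta$-expectations in \eqref{eq:Alaoui_SE} into the claimed formulas for $M^{k+2}$ and $Q^{k+2}$. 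The one point you leave open---whether the shared quantity is $\Mu_B^{k+1}$ or its inverse---resolves exactly as your own computations indicate: the paper proves $A^{k+1}=\delta^{-1}R^{k+1}=(\Mu_B^{k+1})^{-1}$ via the identity $(\Mu_B^{k+1})^{-1}=\Sigma_{(11)}^k-\Sigma_{(21)}^k$ in \eqref{eq:inv_M_formula}, the un-inverted $\Mu_B^{k+1}$ in the displayed statement is a typo, so you should simply commit to the inverse identification and your induction then closes without further modification.
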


\begin{remark}
    Theorem \ref{thm:GAMP} implies that for $k \ge 0$, the overlap between the signal $B$ and the matrix-AMP estimate $\hB^{k+1}$ converges as:
    \[  \frac{1}{p}\sum_{j=1}^p\hB_{j,:}^{k+1}B_{j,:}^\top \,  \stackrel{a.s.}{\to} 
    \, \E[f_k(\Mu_B^{k+1}\bar{B}+G_B^{k+1})\bar{B}^\top] =  M^{k+2},
    \]
    where the equality follows from Proposition \ref{prop:SE_equiv_app}. This is a rigorous version of the claim in \cite{Ala18} that $M^{k+2}_p :=  \frac{1}{p}\sum_{j=1}^p\hz_{j,:}^{k+1}B_{j,:}^\top  \to M^{k+2}$.  
Theorem \ref{thm:GAMP},  together with Proposition \ref{prop:SE_equiv_app},  also provides a rigorous version of the claim in Remark \ref{rmk:non_rig_hypo_Alaoui}, showing 
the empirical distribution of $\big( (\Mu_B^{k+1})^{-1} B^{k+1}_{j, :} \big)_{j \in [p]}$ converges to the law of $\bar{B}+\big(A^{k+1}\big)^{1/2}G$.
\end{remark}

We will need the following elementary result in the proof of Proposition \ref{prop:SE_equiv_app}.
\begin{lemma} \label{lem:E_BB_transpose}
For a random one-hot vector $\bar{B}\in\{0,1\}^L$ with the position of the entry one following the distribution $\textup{Categorical}(\pi)$ where $\pi\in\mb{R}^L$ is a probability vector, we have $\E[\bar{B}\bar{B}^\top]=\diag(\pi)$.
\end{lemma}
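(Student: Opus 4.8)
The plan is to compute the expectation directly by conditioning on which coordinate of $\bar{B}$ equals one. By the definition of a one-hot vector drawn from $\textup{Categorical}(\pi)$, we may write $\bar{B} = e_l$ with probability $\pi_l$, for $l \in [L]$, where $e_l \in \mb{R}^L$ is the standard basis vector with a one in position $l$. Then $\bar{B}\bar{B}^\top = e_l e_l^\top$ is the matrix whose $(l,l)$ entry is one and all of whose other entries are zero, so that by linearity of expectation
\begin{align*}
\E[\bar{B}\bar{B}^\top] = \sum_{l=1}^L \pi_l \, e_l e_l^\top = \diag(\pi),
\end{align*}
which is the claim.

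Alternatively, one can argue entrywise: for $j, k \in [L]$ the $(j,k)$ entry of $\bar{B}\bar{B}^\top$ is $\bar{B}_j \bar{B}_k$. Since at most one coordinate of $\bar{B}$ is nonzero, $\bar{B}_j \bar{B}_k = 0$ whenever $j \neq k$, while $\bar{B}_j \bar{B}_j = \bar{B}_j^2 = \bar{B}_j$ because $\bar{B}_j \in \{0,1\}$. Taking expectations gives $\E[(\bar{B}\bar{B}^\top)_{jk}] = 0$ for $j \neq k$ and $\E[(\bar{B}\bar{B}^\top)_{jj}] = \E[\bar{B}_j] = \mb{P}[\bar{B}_j = 1] = \pi_j$, i.e., $\E[\bar{B}\bar{B}^\top] = \diag(\pi)$.

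There is no genuine obstacle here; the statement is purely a bookkeeping fact about one-hot vectors. The only point worth recording explicitly before either computation is the defining property of $\textup{Categorical}(\pi)$, namely that $\mb{P}[\bar{B} = e_l] = \pi_l$ for each $l \in [L]$, after which both arguments are immediate.
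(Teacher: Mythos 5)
Your proof is correct, and your entrywise argument is essentially identical to the paper's proof (off-diagonal products vanish because $\bar{B}$ is one-hot, and $\E[\bar{B}_l^2]=\mb{P}[\bar{B}_l=1]=\pi_l$ on the diagonal); the summation over $e_l e_l^\top$ is just a repackaging of the same computation.
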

\begin{proof}
By the definition of a one-hot vector, for  $l,l'\in[L]$ such that $l\neq l'$, we have $\E[\bar{B}_l\bar{B}_{l'}]=0$. Furthermore,
$ \E[\bar{B}_l\bar{B}_l]  =\mb{P}[\bar{B}_l\bar{B}_l=1]  =\mb{P}[\bar{B}_l =1 ]
    =\pi_l$. The lemma follows.
\end{proof}
We will also use the following identity, which holds in the noiseless pooled data setting with Bayes-optimal matrix-AMP denoisers:
\begin{align}
(\Mu_B^{k+1})^{-1} =     \Sigma_{(11)}^k-\Sigma_{(21)}^k, \quad k \ge 0.
\label{eq:inv_M_formula}
\end{align}
To show \eqref{eq:inv_M_formula}, starting from the formula for 
$\Mu_B^{k+1}$ in \eqref{eq:memoryless_Mu}, we have
\begin{align*}
    (\Mu_B^{k+1})^{-1}
    &=\E[\partial_1\tg_k(Z,Z^k,\bar{\Psi})]^{-1}
    \stackrel{(a)}{=}\left(\frac{\partial}{\partial Z}\Cov[Z|Z^k]^{-1}\big(\E[Z|Z^k,\bar{Y}]-\E[Z|Z^k]\big)\right)^{-1} \\
    &\stackrel{(b)}{=}\left(\frac{\partial}{\partial Z}\big(\Sigma_{(11)}^k-\Sigma_{(21)}^k\big)^{-1}(Z-Z^k)\right)^{-1}
    =\Sigma_{(11)}^k-\Sigma_{(21)}^k,
\end{align*}
where (a) applies \eqref{eq:memoryless_denoisers} and (b) uses the fact that $(Z,Z^k)^\top\sim\normal(0_{2L},\Sigma^k)$ and $\bar{Y}=Z$ for the noiseless pooled data problem.

\begin{proof}[Proof of Proposition \ref{prop:SE_equiv_app}]
We prove the result by induction.
For $k=0$, we have
\begin{align*}
    A^1
    &\stackrel{(a)}{=}\frac{1}{\delta}\Big(\diag(\pi)-M^1-(M^1)^\top+Q^1\Big) \\
    &\stackrel{(b)}{=}\frac{1}{\delta}\bigg(\diag(\pi)-\pi\pi^\top-(\pi\pi^\top)^\top+\pi\pi^\top\bigg) \\
    &\stackrel{(c)}{=}\frac{1}{\delta}\Big(\E\big[\bar{B}\bar{B}^\top\big] - \E\big[\pi\pi^\top\big]\Big) \\
    &\stackrel{(d)}{=}\Sigma_{(11)}^0-\Sigma_{(22)}^0
    \stackrel{(e)}{=}\big(\Mu_B^{1}\big)^{-1},
\end{align*}
where (a) uses \eqref{eq:Alaoui_SE},  (b) uses the initialization $M^1$ specified above \eqref{eq:Alaoui_SE}, (c) applies Lemma \ref{lem:E_BB_transpose}, (d) applies \eqref{eq:Sig0_bayes}, and (e) uses \eqref{eq:inv_M_formula} along with $\Sigma_{(21)}^0=\Sigma_{(22)}^0$. Similarly, using \eqref{eq:Alaoui_SE} and the initialization $Q^1= \pi \pi^{\top}$, we have
\begin{align*}
    \delta^{-1}R^1
    = \frac{1}{\delta}\Big(\diag\big(Q^1\cdot 1_L\big)-Q^1\Big) 
    = \frac{1}{\delta}\Big(\diag(\pi)- \pi\pi^\top \Big) 
     = \Sigma_{(11)}^0-\Sigma_{(22)}^0 = \big(\Mu_B^1\big)^{-1}.
\end{align*}

\textbf{Induction step.} Assume that  the identities \eqref{eq:Ala_SE_v_our_SE_eqv} hold for $A^k$, $M^{k+1}$, and $Q^{k+1}$. Now we want to show that they hold for $A^{k+1}$, $M^{k+2}$, $Q^{k+2}$.  We begin by deriving alternative expressions for 
$\E[f_k(\Mu_B^k\bar{B}+G_B^k)\bar{B}^\top]$ and $\E\Big[f_k(\Mu_B^k\bar{B}+G_B^k) f_k(\Mu_B^k\bar{B}+G_B^k)^\top\Big]$ in terms of $\eta(\cdot, \cdot)$. For $k \ge 1$, we have
\begin{align}
    \E[f_k(\Mu_B^k\bar{B}+G_B^k)\bar{B}^\top]
    &=\E_{G_B^k}\Big[\E_{\bar{B}}\Big[f_k(\Mu_B^k\bar{B}+G_B^k)\bar{B}^\top\Big|G_B^k\Big]\Big] \nonumber \\
    &\stackrel{(a)}{=}\E_{G_B^k}\Big[\sum_{l=1}^L\pi_l f_k(\Mu_B^ke_l+G_B^k)e_l^\top\Big] \nonumber \\
    &\stackrel{(b)}{=}\sum_{l=1}^L\pi_l\E_G\Big[f_k(\Mu_B^ke_l+(\Mu_B^k)^{1/2}G)\Big]\cdot e_l^\top  \nonumber \\
    &\stackrel{(c)}{=}\sum_{l=1}^L\pi_l\E_G\Big[\eta\Big(e_l+(\Mu_B^k)^{-1/2}G, \, (\Mu_B^k)^{-1}\Big)\Big]\cdot e_l^\top,
    \label{eq:M^k_ours}
\end{align}
where $G \sim \normal(0, I_L)$. Here (a) uses the prior $\pi$ on the location of the one  in $\bar{B}$, (b) uses $G_B^{k+1}\sim\normal(0_L,\Tau_B^{k+1})$ and $\Tau_B^{k+1}=\Mu_B^{k+1}$ from \eqref{eq:MkB_TkB_eqv}, and (c) uses the equivalence in \eqref{eq:fk_eta_eqv}.  By a similar argument,
\begin{align}
    \E\Big[f_k(\Mu_B^k\bar{B}+G_B^k)&f_k(\Mu_B^k\bar{B}+G_B^k)^\top\Big]
    = \E_{G_B^k}\Big[\E_{\bar{B}}\Big[f_k(\Mu_B^k\bar{B}+G_B^k)f_k(\Mu_B^k\bar{B}+G_B^k)^\top\Big|G_B^k\Big]\Big] \nonumber \\
    &= \E_{G_B^k}\Big[\sum_{l=1}^L\pi_lf_k(\Mu_B^ke_l+G_B^k)f_k(\Mu_B^ke_l+G_B^k)^\top\Big] \nonumber \\
    &\stackrel{(a)}{=}\sum_{l=1}^L\pi_l\E_G\Big[f_k(\Mu_B^k\bar{B}+(\Mu_B^k)^{1/2}G)f_k(\Mu_B^k\bar{B}+(\Mu_B^k)^{1/2}G)^\top\Big] \nonumber \\
    &\stackrel{(b)}{=}\sum_{l=1}^L\pi_l\E_G\Big[\eta\Big(e_l+(\Mu_B^k)^{-1/2}G,(\Mu_B^k)^{-1}\Big)\eta\Big(e_l+(\Mu_B^k)^{-1/2}G,(\Mu_B^k)^{-1}\Big)^\top\Big],
    \label{eq:Q^k_ours}
\end{align}
Here (a) uses $G_B^{k+1}\sim\normal(0_L,\Tau_B^{k+1})$ and $\Tau_B^{k+1}=\Mu_B^{k+1}$, and (b) uses the equivalence in \eqref{eq:fk_eta_eqv}.

Let us use the shorthand $f_k:=f_k(\Mu_B^k\bar{B}+G_B^k)$.
Starting from \eqref{eq:Alaoui_SE}, we write $ A^{k+1}$ as
\begin{align*}
    A^{k+1}
    &=\frac{1}{\delta}\Big(\diag(\pi)-M^{k+1}-(M^{k+1})^\top+Q^{k+1}\Big) \\
    &\stackrel{(a)}{=}\frac{1}{\delta}\Big(\E[\bar{B}\bar{B}^\top]-\E[f_k\bar{B}^\top]-\E[f_k\bar{B}^\top]^\top+\E[f_kf_k^\top]\Big) \\
    &\stackrel{(b)}{=}\frac{1}{\delta}\Big(\E[\bar{B}\bar{B}^\top]-\E[f_kf_k^\top]\Big) \\
    &\stackrel{(c)}{=}\Sigma_{(11)}^k-\Sigma_{(22)}^k
    \stackrel{(d)}{=}\big(\Mu_B^{k+1}\big)^{-1},
\end{align*}
where (a) applies Lemma \ref{lem:E_BB_transpose}, and uses \eqref{eq:M^k_ours} and the inductive hypothesis to get $M^{k+1}=\E[f_k\bar{B}^\top]$, as well as \eqref{eq:Q^k_ours} and the inductive hypothesis to get $Q^{k+1}=\E[f_kf_k^\top]$, (b) uses the fact that $\E[f_k\bar{B}^\top]=\E[f_kf_k^\top]$ for Bayes-optimal $f_k$, (c) uses the formulas in \eqref{eq:Sig_k1_Bayes}, and (d) uses \eqref{eq:inv_M_formula} with $\Sigma_{(21)}^k=\Sigma_{(22)}^k$ by \eqref{eq:Sig_k1_Bayes}. Next, we have
\begin{align*}
    \delta^{-1}R^{k+1}
    &=\frac{1}{\delta}\Big(\diag(Q^{k+1}\cdot1_L)-Q^{k+1}\Big) \\
    &\stackrel{(a)}{=}\frac{1}{\delta}\Big(\diag\Big(\E[f_kf_k^\top]\cdot1_L\Big)-\E[f_kf_k^\top]\Big) \\
    &\stackrel{(b)}{=}\Sigma_{(11)}^k-\Sigma_{(22)}^k
    \stackrel{(c)}{=}\big(\Mu_B^{k+1}\big)^{-1},
\end{align*}
where (a) uses \eqref{eq:Q^k_ours} and the inductive hypothesis to get $Q^{k+1}=\E[f_kf_k^\top]$,  and (b) is obtained as follows:
\begin{align*}
    \diag\Big(\E[f_kf_k^\top]\cdot1_L\Big)
    &\stackrel{\rm (i)}{=}\diag\Big(\E[f_k\bar{B}^\top1_L]\Big)
    \stackrel{\rm (ii)}{=}\diag\Big(\E\big[\E\big[\bar{B}\big|\Mu_B^k\bar{B}+G_B^k\big]\big]\Big) \\
    &\stackrel{\rm (iii)}{=}\diag\big(\E[\bar{B}]\big)
    =\diag\big(\pi\big)
    \stackrel{\rm (iv)}{=}\E[\bar{B}\bar{B}^\top],
\end{align*}
where (i) applies $\E[f_kf_k^\top]=\E[f_k\bar{B}^\top]$ for Bayes-optimal $f_k$, (ii) uses the definition of $f_k$ in \eqref{eq:memoryless_denoisers} and the fact that $\bar{B}^\top1_L=1$ since $\bar{B}$ is a one-hot vector, (iii) uses the law of total expectation, and (iv) applies Lemma \ref{lem:E_BB_transpose}. 

Substituting $A^{k+1}=\delta^{-1}R^{k+1}=\big(\Mu_B^{k+1}\big)^{-1}$ into the definitions of $M^{k+2}$ and $Q^{k+2}$ in \eqref{eq:Alaoui_SE} gives us the desired formulas of $M^{k+2}$ and $Q^{k+2}$ in \eqref{eq:Ala_SE_v_our_SE_eqv}. This completes the proof of the induction step, and the proposition follows.
\end{proof}

\subsection{Deriving matrix-AMP from \AlaouiAMP} \label{sec:eqv_of_AMPs}
We start with the matrix-AMP and rewrite it in terms of $\bX$ 
defined in \eqref{eq:bar_X_and_Y}. Using \eqref{eq:bar_X_and_Y} in the matrix-AMP equations in \eqref{eq:memoryless_GAMP}, 
and rearranging, we obtain:
\begin{align}
\begin{split}
    & \sqrt{\alpha(1-\alpha)\delta}\cdot\Theta^k
     =\bX\hB^k-\hR^{k-1}(F^k)^\top\sqrt{\alpha(1-\alpha)\delta},  \qquad \hR^k=g_k(\Theta^k,\tY),  \\
    & \sqrt{\alpha(1-\alpha)\delta}\cdot B^{k+1}
    =\bX^\top\hR^k-\hB^k(C^k)^\top\sqrt{\alpha(1-\alpha)\delta}, \qquad 
    \hB^{k+1} =f_{k+1}(B^{k+1}).
\end{split}
\label{eq:modified_AMP1}
\end{align}
Here the functions $g_k, f_k$ are as defined in \eqref{eq:Bayes_AMP_fk}-\eqref{eq:Bayes_AMP_gk}, and act row-wise on their matrix inputs. We recall that the matrices $C^k, F^{k+1} \in \reals^{L \times L}$ are defined as 
\begin{align*}
    C^k= \frac{1}{n}\sum_{i=1}^ng_k'(\Theta_{i,:}^k,\tY_{i,:}), \   \ 
    F^{k+1}=\frac{1}{n}\sum_{j=1}^pf_{k+1}'(B_{j,:}^{k+1}),
\end{align*}
where $g_k', f_{k+1}'$ are the Jacobians with respect to their first arguments.

To match \AlaouiAMP, we define $\tw^{k+1}\stackrel{\Delta}{=}\sqrt{\alpha(1-\alpha)\delta}\cdot\Theta^k$. This gives
\begin{align}
    g_k(\Theta_{i,:}^k,\tY_{i,:})
    &=\big(\Sigma_{(11)}^k-\Sigma_{(21)}^k\big)^{-1}(\tY_{i,:}-\Theta_{i,:}^k) \nonumber \\
    &=\big(\Sigma_{(11)}^k-\Sigma_{(21)}^k\big)^{-1}\frac{\bY_{i,:}-\sqrt{\alpha(1-\alpha)\delta}\cdot\Theta_{i,:}^k}{\sqrt{\alpha(1-\alpha)\delta}} \nonumber \\
    &=\big(\Sigma_{(11)}^k-\Sigma_{(21)}^k\big)^{-1}\frac{\bY_{i,:}-\tw_{i,:}^{k+1}}{\sqrt{\alpha(1-\alpha)\delta}} \nonumber \\
    &=:\bar{g}_k\big(\tw_{i,:}^{k+1},\bY_{i,:}\big)
    =\hR_{i,:}^k, \label{eq:R_hat_eqv}
\end{align}
and
\begin{align}
    C^k
    &=\frac{1}{n}\sum_{i=1}^ng_k'\big(\Theta_{i,:}^k,\tY_{i,:}\big)
    =\frac{1}{n}\sum_{i=1}^n\bar{g}_k'\big(\tw_{i,:}^{k+1},\bY_{i,:}\big)\sqrt{\alpha(1-\alpha)\delta} \nonumber \\
    &=\frac{1}{n}\sum_{i=1}^n\bigg(-\frac{(\Sigma_{(11)}^k-\Sigma_{(21)}^k)^{-1}}{\sqrt{\alpha(1-\alpha)\delta}}\bigg)\sqrt{\alpha(1-\alpha)\delta} \nonumber \\
    &=-\big(\Sigma_{(11)}^k-\Sigma_{(21)}^k\big)^{-1}.
    \label{eq:Ck_eqv}
\end{align}
Furthermore, we have
\begin{align}
    F^{k+1}
    &=\frac{1}{n}\sum_{j=1}^pf_{k+1}'(B_{j,:}^{k+1})
    =\frac{1}{n}\sum_{j=1}^p\eta'\Big((\Mu_B^{k+1})^{-1}B_{j,:}^{k+1},(\Mu_B^{k+1})^{-1}\Big)(\Mu_B^{k+1})^{-1},
    \label{eq:Fk_eqv}
\end{align}
where the last equality is by the chain rule of differentiation. Note that $\eta'(\cdot,\cdot)$ is the Jacobian with respect to the first argument. Substituting \eqref{eq:R_hat_eqv}, \eqref{eq:Ck_eqv}, and \eqref{eq:Fk_eqv} into \eqref{eq:modified_AMP1} and rearranging gives
\begin{align}
    \tw^{k+1}
    &=\bX\hB^k-(\bY-\tw^{k})\big(\Sigma_{(11)}^{k-1}-\Sigma_{(21)}^{k-1}\big)^{-1}\big(\Mu_B^k\big)^{-1}\cdot\frac{1}{n}\sum_{j=1}^p\eta'\Big((\Mu_B^k)^{-1}B_{j,:}^k,(\Mu_B^k)^{-1}\Big)^\top
    \label{eq:tw_k1}
\end{align}
and 
\begin{align}
    & \sqrt{\alpha(1-\alpha)\delta}\cdot B^{k+1}
    = \Bigg(\frac{\bX^\top(\bY-\tw^{k+1})}{\sqrt{\alpha(1-\alpha)\delta}} + \hB^k  \sqrt{\alpha(1-\alpha)\delta} \Bigg) \big[\big(\Sigma_{(11)}^k-\Sigma_{(21)}^k\big)^{-1}\big]^\top \nonumber \\
    & \implies
    B^{k+1}\big(\Sigma_{(11)}^k-\Sigma_{(21)}^k\big)^{\top}
    = \, \hB^k \, + \, \frac{\bX^\top(\bY-\tw^{k+1})}{\alpha(1-\alpha)\delta}.
    \label{eq:Bk1_adj}
\end{align}
Let us set $\tilz^{k+1}\triangleq B^{k+1}\big(\Sigma_{(11)}^k-\Sigma_{(21)}^k\big)^{\top}=B^{k+1}\big( \Mu^{k+1}_B \big)^{-1}$, where the second quality follows from \eqref{eq:inv_M_formula}, and noting from \eqref{eq:deriv_gk_w.r.t._Z} that $\Mu^{k+1}_B $ is symmetric. For notation simplicity, we also denote
\begin{align}
    V^k=\alpha(1-\alpha)\delta (\Mu_B^{k+1})^{-1}
  \,   \text{ and } \,
    U^k=\frac{\alpha(1-\alpha)\delta}{n}\sum_{j=1}^p \eta'\Big((\Mu_B^k)^{-1}B_{j,:}^k,(\Mu_B^k)^{-1}\Big) (\Mu_B^k)^{-1}
    \label{eq:UkVk_equivalences}
\end{align}
Looking at the matrix-AMP equations \eqref{eq:tw_k1} and \eqref{eq:Bk1_adj} row by row, and using \eqref{eq:fk_eta_eqv}, we obtain the following for $j \in [p]$ and $i \in [n]$:
\begin{align}
\begin{split}
   & \hB_{j,:}^k = \eta(\tilz_{j,:}^k, \, (\Mu_B^k)^{-1}), \\ 
   &  \tw_{i,:}^{k+1} =\sum_{j=1}^p\bX_{ij}\hB_{j,:}^k-U^k(V^{k-1})^{-1}(\bY_{i,:}-\tw_{i,:}^{k}),  \\
    & \tilz_{j,:}^{k+1}
    =\hB_{j,:}^k+\frac{V^k}{\alpha(1-\alpha)\delta}\sum_{i=1}^n\bX_{ij}(V^k)^{-1}(\bY_{i,:}-\tw_{i,:}^{k+1}).
\end{split}
\label{eq:modified_AMP2}
\end{align}

Comparing  the equations for matrix-AMP in \eqref{eq:modified_AMP2}  and \AlaouiAMP in \eqref{eq:Alaoui_AMP}, we observe that \AlaouiAMP can be reduced to matrix-AMP, i.e. $( \hz^{k+1}, w^{k+1}, z^{k+1}) = (\hB^k, \tw^{k+1}, \tilz^{k+1})$ for $k \ge 0$, by making the following substitutions in \eqref{eq:Alaoui_AMP}:
\begin{align}
    \Gamma_j^{k+1} \to   (\Mu_B^{k+1})^{-1} \  \text{ for } j \in [p], \  \  \text{ and } \quad 
      \Xi_i^{k+1} \to     U^{k},  \quad 
      \Xi_i^{k} \to  V^{k-1} \  \text{ for } i \in [n].
      \label{eq:eqv_needed}
\end{align}
Both matrix-AMP and  \AlaouiAMP are intialized in the same way, i.e.,  $\hB^0_{j,:} =
\hz^{1}_{j,:}=\pi$.

To justify the substitutions in \eqref{eq:eqv_needed}, we first show that $U^k$ can be approximated by $V^{k}$, for $k \ge 1$. Recalling the definition of $U^k$ from \eqref{eq:UkVk_equivalences}, we approximate the empirical average of $\eta'$ using the expectation as follows:
\begin{align}
\begin{split}
    & \frac{1}{p}\sum_{j=1}^p\eta'\big((\Mu_B^k)^{-1}B_{j,:}^k \, , \, (\Mu_B^k)^{-1} \big)  (\Mu_B^k)^{-1}
    \approx \E[\eta'\big(\bar{B}+(\Mu_B^k)^{-1}G_B^k \, , \, (\Mu_B^k)^{-1} \big)] (\Mu_B^k)^{-1} \\
    &\stackrel{(a)}{=}\E\Big[\E[\eta'\big(\bar{B}+(\Mu_B^k)^{-1}G_B^k,(\Mu_B^k)^{-1}\big)|\bar{B}]\Big](\Mu_B^k)^{-1} \\
    &\stackrel{(b)}{=} \E\left[\eta(\bar{B}+(\Mu_B^k)^{-1}G_B^k \, , \, (\Mu_B^k)^{-1}) (G_B^k)^\top (\Mu_B^k)^{-1}  \right] 
    \big((\Mu_B^k)^{-1}\big)^{-1} (\Mu_B^k)^{-1} \\
    &\stackrel{(c)}{=}\E\left[f_k(\Mu_B^k\bar{B}+G_B^k)\, (\bar{B}^\top\Mu_B^k+(G_B^k)^\top-\bar{B}^\top\Mu_B^k)\right](\Mu_B^k)^{-1} \\
    &=\delta\left\{\frac{1}{\delta}\E[f_k(\Mu_B^k\bar{B}+G_B^k)(\bar{B}^\top+(G_B^k)^\top(\Mu_B^k)^{-1})]-\frac{1}{\delta}\E[f_k(\Mu_B^k\bar{B}+G_B^k)\, \bar{B}^\top]\right\} \\
    &\stackrel{(d)}{=}\delta\big(\Sigma_{(11)}^k-\Sigma_{(21)}^k\big).
    \end{split}
    \label{eq:eta_pr_approx}
\end{align}
Here  (a) uses the tower property of expectation, (b) applies Lemma \ref{lem:gen_steins} (recall  $\eta'(\cdot)$ is the Jacobian with respect to the first argument), (c) uses \eqref{eq:fk_eta_eqv}, and (d) uses $\frac{1}{\delta}\E[f_k\bar{B}^\top]=\Sigma_{(21)}^k$ (see \eqref{eq:eqv_of_Sigma_12_21_22}) and the following:
\begin{align*}
    \frac{1}{\delta}\E\Big[f_k(\cdot)\Big(\bar{B}^\top+(G_B^k)^\top(\Mu_B^k)^{-1}\Big)\Big]
    &=\frac{1}{\delta}\E\Big[\E[\bar{B}|\Mu_B^k\bar{B}+G_B^k]\cdot(\bar{B}^\top(\Mu_B^k)+(G_B^k)^\top)\Big](\Mu_B^k)^{-1} \\
    &=\frac{1}{\delta}\E\Big[\E\Big[\bar{B}\Big(\bar{B}^\top(\Mu_B^k)+(G_B^k)^\top\Big)\Big|\Mu_B^k\bar{B}+G_B^k\Big]\Big](\Mu_B^k)^{-1} \\
    &\stackrel{(a)}{=}\frac{1}{\delta}\E\left[\bar{B}\bar{B}^\top(\Mu_B^k)+\bar{B}(G_B^k)^\top\right](\Mu_B^k)^{-1} \\
    &\stackrel{(b)}{=}\frac{1}{\delta}\E[\bar{B}\bar{B}^\top]+\frac{1}{\delta}\E[\bar{B}]\cdot\E[G_B^k]^\top(\Mu_B^k)^{-1}
     \stackrel{(c)}{=}\Sigma_{(11)}^k,
\end{align*}
where (a) uses the tower property of expectation, (b) uses the independence of $\bar{B}$ and $G_B^k$ (mentioned below \eqref{eq:memoryless_Thetak_conv}), and (c) uses $\Sigma_{(11)}^k=\frac{1}{\delta}\E[\bar{B}\bar{B}^\top]$ (defined in \eqref{eq:memoryless_Sigma11}) and $\E[G_B^k]=0$ (distribution of $G_B^k$ is mentioned below \eqref{eq:memoryless_Thetak_conv}). Using \eqref{eq:eta_pr_approx} in the definition of $V^k$ in \eqref{eq:eqv_needed} and recalling the identity in \eqref{eq:inv_M_formula}, we have that $V^k$ can be approximated by $U^k$  for $k \ge 1$.

We inductively justify the first two substitutions in \eqref{eq:eqv_needed}, assuming  that they have been justified up to step $(k-1)$ for some $k \ge 1$. 
By the induction hypothesis, we have $\Gamma_j^{k} \approx (\Mu_B^{k})^{-1} $ and $z^k_{j,:} \approx \tilz^k_{j,:}$, for $j \in [p]$. Using this in the first lines of \eqref{eq:Alaoui_AMP} and \eqref{eq:modified_AMP2} gives that $\hz^{k+1}_{j, :} \approx \hB^{k}_{j, :}$, for $j \in [p]$. We also recall that the entries of $\bX$ are i.i.d.~with
\begin{align*}
    \bX_{ij}
    =
    \begin{cases}
        \frac{1-\alpha}{\sqrt{p}} &\text{with probability $\alpha$} \\
        -\frac{\alpha}{\sqrt{p}} &\text{with probability $1-\alpha$}.
    \end{cases}
\end{align*}
From the definition of $\Xi_i^{k+1}$ in \eqref{eq:Alaoui_AMP},  we have, for $i \in [n]$:
\begin{align}
\begin{split}
    \Xi_i^{k+1}
    &=\sum_{j=1}^p\bX_{ij}^2\Big(\diag(\hz_{j,:}^{k+1})-\hz_{j,:}^{k+1}(\hz_{j,:}^{k+1})^\top\Big) \\
    &\stackrel{(a)}{\approx}\sum_{j=1}^p\frac{\alpha(1-\alpha)}{p}\Big(\diag(\hB_{j,:}^k)-\hB_{j,:}^k(\hB_{j,:}^k)^\top\Big) \\
    &\stackrel{(b)}{\approx}\alpha(1-\alpha)\E[\diag(f_k)-f_kf_k^\top] \\
    &=\alpha(1-\alpha)\delta\left(\frac{1}{\delta}\E[\diag(f_k)]-\frac{1}{\delta}\E[f_kf_k^\top]\right) \\
    &\stackrel{(c)}{=}\alpha(1-\alpha)\delta(\Sigma_{(11)}^k-\Sigma_{(21)}^k) = \alpha(1-\alpha)\delta(\Mu_B^{k+1})^{-1}
    =V^k.
    \label{eq:Xi1_equiv}
\end{split}
\end{align}
Here (a) uses $\bX_{ij}^2\approx\E[\bX_{ij}^2]=\frac{\alpha(1-\alpha)}{p}$ and $\hz_{j,:}^{k+1}\approx \hB_{j,:}^k$, (b) uses replaces the empirical average of the AMP iterates $\hB_{j,:}^k$ by the state evolution limit (with $f_k$ shorthand for $f_k(\Mu_B^k\bar{B}+G_B^k)$), and (c) uses $\Sigma_{(22)}^k=\Sigma_{(21)}^k =\frac{1}{\delta}\E[f_kf_k^\top]$ and
\begin{align*}
    \frac{1}{\delta}\E[\diag(f_k)]
    &= \frac{1}{\delta}\diag\Big(\E\Big[\E[\bar{B}|\Mu_B^k\bar{B}+G_B^k]\Big]\Big)
    = \frac{1}{\delta}\diag(\E[\bar{B}]) \\
    &=\frac{1}{\delta}\diag(\pi)
    \stackrel{(a)}{=}\frac{1}{\delta}\E[\bar{B}\bar{B}^\top]
    =\Sigma_{(11)}^k,
\end{align*}
where (a) follows from Lemma \ref{lem:E_BB_transpose}.

Next, using the definition of $ \Gamma_j^{k+1}$ in \eqref{eq:Alaoui_AMP} we have 
\begin{align}
    \Gamma_j^{k+1}
    &=\left(\sum_{i=1}^n \bX_{ij}^2(\Xi_i^{k+1})^{-1}\right)^{-1}
    \stackrel{(a)}{\approx}\left(\sum_{i=1}^n \bX_{ij}^2(V^{k})^{-1}\right)^{-1}
     \stackrel{(b)}{\approx} \frac{V^{k}}{\alpha(1-\alpha)\delta} = (\Mu_B^{k+1})^{-1},
    \label{eq:Gam_k1}
\end{align}
where in step (a) we  have used the induction hypothesis together with $U^k \approx V^k$, and step (b) uses the law of large numbers.
 This completes the justification of the substitutions in \eqref{eq:eqv_needed} to obtain  matrix-AMP from \AlaouiAMP.

\section{Proof of Corollary \ref{cor:FPR_FNR}} \label{sec:FPR_FNR_proof}

Observe that any indicator function satisfies the polynomial growth condition in \eqref{eq:poly_growth_cond} with order $r=2$. %(the 2 here comes from the order in \eqref{eq:GAMP_vec1} 
%and \eqref{eq:GAMP_vec2}), since 
Indeed, taking $\phi$ in \eqref{eq:poly_growth_cond}  to be an indicator function, the left hand side takes values in $\{0,1\}$ while the right hand side of \eqref{eq:poly_growth_cond} is always greater than one for $C=1$. Hence, using Theorem \ref{thm:GAMP_vec}
and recalling from \eqref{eq:sum_to_exp} that $W_2$ convergence implies the convergence of the empirical average of $\phi$, the numerator of the FPR in \eqref{eq:FPR_and_FNR} satisfies
\begin{align*}
    \frac{1}{p}\sum_{j=1}^p\mathds{1}\big\{\beta_j=0\cap j\in\widehat{\mathcal{S}}\big\}  
    &  =   \frac{1}{p}\sum_{j=1}^p\mathds{1}\big\{\beta_j=0\cap f_K(\beta_j^K)>\zeta\big\} \\
    &\stackrel{a.s.}{\rightarrow}\E\left[\mathds{1}\left\{\bar{\beta}=0\cap f_K(\mu_\beta^K \bar{\beta} +  G_\beta^K)>\zeta\right\}\right] \\
    & = \mb{P}[\bar{\beta}=0]\cdot\mb{P}[f_K(G_\beta^K)>\zeta] \\
    &= (1-\pi)\cdot\mb{P}[f_K(G_\beta^K)>\zeta].
\end{align*}
For the denominator of FPR in \eqref{eq:FPR_and_FNR}, we note that $\frac{1}{p}\sum_{j=1}^p\beta_j\stackrel{a.s.}{\rightarrow}\E[\bar{\beta}]=\pi$.  Putting the above together gives the convergence result for the FPR.  Following a similar series of steps for the FNR, we get
\begin{align*}
    \text{FNR}
    \stackrel{a.s.}{\rightarrow}
    \mb{P}\big[f_K(\mu_\beta^K+G_\beta^K)\leq\zeta\big]. 
\end{align*}

\section{Implementation of AMP denoisers and their derivatives} \label{sec:imp_details}

\subsection{Pooled Data} \label{sec:imp_details_pooled_data}

The implementation details for the AMP pooled data simulations in Section \ref{sec:sim_pool_data} are largely the same as those presented in \cite[Appendix A.1]{Tan23c} for the mixed linear regression problem, with the only difference being the implementation details of the denoisers $f_k^*$ and $g_k^*$. Hence, we will only focus on them.

The optimal denoiser $f_k^*$ depends only on the signal prior and was derived in Section \ref{sec:eqv_of_AMPs}: see \eqref{eq:Bayes_AMP_fk}. For $g_k^*$, we have:
\begin{align}
   g_k^*(u,\,y)
    &=\E[Z \mid Z^k=u, \bar{Y}=y]-\,\E[Z \mid Z^k=u]
    = \E[Z \mid Z^k=u, \bar{Y}=y] \, - \, u,
\end{align}
where the equality is obtained from \eqref{eq:gk_opt_def}. We now evaluate $\E[Z \mid Z^k=u, \bar{Y}=y]$. For the noiseless setting we have $\E[Z|Z^k=u,\bar{Y}=y]=y$ giving us
\begin{align}
    g_k^*(u,y)
    =y-u.
    \label{eq:gk_noiseless_pooled_data}
\end{align}
From \eqref{eq:SE_Mk1B}, this choice of $g_k^*$ will result in $\Mu_B^k = I_L$.
In the noisy setting,  only the evaluation of $\E[Z|Z^k=u,\Bar{Y}=y]$ changes from the noiseless case. If we have  $\Psi_i \stackrel{\iid}{\sim} \normal(0, p\sigma^2I_L)$, after scaling according to \eqref{eq:tYtPsi_def} this corresponds to the state evolution random vector $\bar{\Psi} \sim \normal\big(0,\frac{\sigma^2}{\delta\alpha(1-\alpha)} I_L \big)$. Since $\bar{Y} = Z + \bar{\Psi}$, the random vectors $(Z, Z^k, \bar{Y})$ are jointly Gaussian.
We denote the covariance of $(Z,Z^k,\Bar{Y})$ by 
\begin{align*}
    \Sigma^{k,+}=
    \begin{bmatrix}
        \Sigma^{k,+}_{(11)} & \Sigma^{k,+}_{(12)} & \Sigma^{k,+}_{(13)} \\
        \Sigma^{k,+}_{(21)} & \Sigma^{k,+}_{(22)} & \Sigma^{k,+}_{(23)} \\
        \Sigma^{k,+}_{(31)} & \Sigma^{k,+}_{(32)} & \Sigma^{k,+}_{(33)}
    \end{bmatrix}
    \in\mb{R}^{3L\times 3L},
\end{align*}
where each sub-matrix above is in $\mb{R}^{L\times L}$. Since the covariance of $(Z, Z^k)$ equals $\Sigma^k$ (using the simplified state evolution notation in \eqref{eq:memoryless_Sigma}), we have
\begin{align*}
    \begin{bmatrix}
        \Sigma^{k,+}_{(11)} & \Sigma^{k,+}_{(12)} \\
        \Sigma^{k,+}_{(21)} & \Sigma^{k,+}_{(22)}
    \end{bmatrix}
    =\Sigma^k.
\end{align*}
Next,
\begin{align*}
    \Sigma^{k,+}_{(13)}
    &=\Sigma^{k,+}_{(31)}
    =\Cov[Z,\Bar{Y}]
    =\Cov[Z,Z+\Bar{\Psi}]
    =\Cov[Z,Z]+\Cov[Z,\Bar{\Psi}]
    \stackrel{(a)}{=}\Sigma^{k}_{(11)} \\
    \Sigma^{k,+}_{(23)}
    &=\Sigma^{k,+}_{(32)}
    =\Cov[Z^k,\Bar{Y}]
    =\Cov[Z^k,Z+\Bar{\Psi}]
    =\Cov[Z^k,Z]+\Cov[Z^k,\Bar{\Psi}]
    \stackrel{(b)}{=}\Sigma_{(12)}^k,
\end{align*}
where (a) and (b) use the fact that $(Z,Z^k)$ is independent of $\Psi$. Using the independence of $Z$ and $\Psi$, we also obtain
\begin{align*}
    \Sigma^{k,+}_{(33)}
    =\Cov[\Bar{Y}]
    =\Cov[Z+\Bar{\Psi}]
    \stackrel{(a)}{=}\Cov[Z]+\Cov[\Psi]
    =\Sigma_{(11)}^k+\frac{\sigma^2}{\delta\alpha(1-\alpha)}I_L.
\end{align*}
%where (a) uses the fact that $Z$ is independent to $\Psi$. 
Finally, using a standard property of jointly Gaussian vectors, we can compute
\begin{align*}
    \E[Z \mid Z^k=u,\Bar{Y}=y]
    &=\Sigma_{[L],[L+1,3L]}^{k,+}
    \left(\Sigma_{[L+1:3L],[L+1:3L]}^{k,+}\right)^{-1}
    \begin{bmatrix}
        u \\
        y
    \end{bmatrix} \\
    &\stackrel{(a)}{=}
    \begin{bmatrix}
        \Sigma_{(22)}^k & \Sigma_{(11)}^k
    \end{bmatrix}
    \begin{bmatrix}
        \Sigma_{(22)}^k & \Sigma_{(22)} \\
        \Sigma_{(22)}^k & \Sigma_{(11)}^k+\frac{\sigma^2}{\delta\alpha(1-\alpha)I_L}
    \end{bmatrix}^{-1}
    \begin{bmatrix}
        u \\
        y
    \end{bmatrix},
\end{align*}
where (a) applies $\Sigma_{(12)}^k=\Sigma_{(21)}^k=\Sigma_{(22)}^k$ since we are using Bayes-optimal denoisers. Using the block inversion formula for the inverse on the RHS, we obtain:
\begin{align}
    g_k^*(u,y)
    &=\bigg( \Sigma_{(11)}^k -\Sigma_{(22)}^k \bigg)
    \Big(\Sigma_{(11)}^k +\frac{\sigma^2}{\delta\alpha(1-\alpha)}I_L-\Sigma_{(22)}^k\Big)^{-1}(y-u).
    \label{eq:gk_noisy_pooled_data_w_factor}
\end{align}
Note that $(\Sigma_{(11)}^k -\Sigma_{(22)}^k) = \frac{1}{\delta} \E[ (\bar{B} - f_k^*(\Mu_B^{k}\bar{B}+G_B^{k}))(\bar{B} - f_k^*(\Mu_B^{k}\bar{B}+G_B^{k}))^\top]$  is positive definite, and hence its inverse is as well. Recalling from the discussion below \eqref{eq:gk_opt_def} that left multiplying $\E[Z|Z^k,\bar{Y}]-\E[Z|Z^k]$ by any positive definite matrix does not change the optimality of $g_k$, we can therefore take
\begin{align}
    g_k^*(u,y)
    &=y-u,
    \label{eq:gk_noisy_pooled_data}
\end{align}
which  is the same denoiser as in the noiseless setting; this also implies $\Mu_B^k = I_L$ -- see \eqref{eq:gk_noiseless_pooled_data}. Therefore, the derivative term $C^k$ in \eqref{eq:memoryless_GAMP}
 is $C^k =-I_L$,
for both the noiseless and noisy setting. Recall from \eqref{eq:memoryless_GAMP} that 
$$
F^{k}
=\frac{1}{n}\sum_{j=1}^p f_{k}'(B_{j,:}^{k})
\approx\frac{1}{\delta}\E\big[f_k'(\Mu_B^k\bar{B}+G_B^k)\big].
$$
Denoting $f_{k,l}^*$ as the $l$th entry of the output of $f_k^*$, we have
\begin{align}
    f_k'(s^k)
    &=\begin{bmatrix}
        \frac{\partial f_{k,1}}{\partial s_1^{k}} & \dots & \frac{\partial f_{k,1}}{\partial s_L^{k}} \\
        \vdots & & \vdots \\
        \frac{\partial f_{k,L}}{\partial s_1^{k}} & \dots & \frac{\partial f_{k,L}}{\partial s_L^{k}}
    \end{bmatrix}
    =\begin{bmatrix}
        (\nabla_{s^{k}}f_{k,1})^\top \\
        \vdots \\
        (\nabla_{s^{k}}f_{k,L})^\top
    \end{bmatrix}.
    \label{eq:fk_jacobian}
\end{align}
Denote
$$
f_{k,l}^*
=\frac{\pi_l\normal(s^k;\Mu_B^ke_l,\Tau_B^k)}{\sum_{l^*=1}^L\pi_{l^*}\normal(s^k;\Mu_B^ke_{l^*},\Tau_B^k)}
=:\frac{\text{num}_l}{\text{denom}}.
$$
By Quotient rule, we have
\begin{align}
    \nabla_{s^{k}}f_{k,l}^*(s^k)
    &=\frac{(\nabla_{s^{k}}\text{num}_l)(\text{denom})-(\text{num}_l)(\nabla_{s^{k}}\text{denom})}{\text{denom}^2}.\label{eq:quotient}
\end{align}
Using vector calculus, we know that the derivative with relation to $s^k$ is
\begin{align}
    \nabla_{s^k}\normal\left(
    s^k;\Mu_B^ke_l,\Tau_B^k
    \right)
    &=\big(\Tau_B^k\big)^{-1}
    \left(\Mu_B^ke_l-s^k\right)
    \normal\left(
    s^k;\Mu_B^ke_l,\Tau_B^k
    \right).
    \label{eq:pdf_deriv_wrt_all}
\end{align}
Applying it to $\text{num}_l$ and $\text{denom}$, we get
\begin{align*}
    \nabla_{s^{k}}\text{num}_l
    &=\pi_l\cdot\big(\Tau_B^k\big)^{-1}
    \left(\Mu_B^ke_l-s^k\right)
    \normal\left(
    s^k;\Mu_B^ke_l,\Tau_B^k
    \right) \\
    \nabla_{s^{k}}\text{denom}
    &=\sum_{l^*=1}^L\pi_{l^*}\cdot\big(\Tau_B^k\big)^{-1}
    \left(\Mu_B^ke_l-s^k\right)
    \normal\left(
    s^k;\Mu_B^ke_l,\Tau_B^k
    \right).
\end{align*}
substituting the above into \eqref{eq:fk_jacobian} completes the implementation of $f_k'(s^k)$. We use a Monte Carlo approximation for $F^k$.

\subsection{Quantitative Group Testing} \label{sec:imp_details_GT}

For QGT, the AMP algorithm in \eqref{eq:GAMP} has vector iterates, and the implementation is straightforward once $f_k^*$ and $g_k^*$ are specified. As above, $f_k^*$ only depends on the signal distribution, and can be computed using Bayes theorem:
\begin{align}
    f_k^*\big(s\big)
    &=\E\big[\bar{\beta}\big|\mu_\beta^k\bar{\beta}+\sigma_\beta^k G_k=s\big] \nonumber \\
    & = \frac{\mb{P}[\bar{\beta}=1]\cdot\mb{P}[\mu_\beta^k\bar{\beta}+\sigma_\beta^k G_k=s|\bar{\beta}=1]}{\sum_{\bar{\beta}\in\{0,1\}}\mb{P}[\bar{\beta}]\cdot\mb{P}[\mu_\beta^k\bar{\beta}+\sigma_\beta^k G_k=s|\bar{\beta}]} \nonumber \\
    & = \frac{\pi\phi\big( (s-\mu_\beta^k)/\sigma_{k}^\beta\big)}{\pi\phi\big( (s-\mu_\beta^k)/\sigma_{k}^\beta\big)+(1-\pi)\phi(s/\sigma_{k}^\beta)}, \label{eq:f_k_bayes}
\end{align}
where in the last line we have used $\mb{P}[\bar{\beta}=1]=\pi$, and $\phi(x)=\frac{1}{\sqrt{2\pi}}\exp\big(-\tfrac{x^2}{2}\big)$ is the standard normal density. For QGT, following from \eqref{eq:memoryless_GAMP}, $F^{k}=\frac{1}{n}\sum_{j=1}^p f_{k}'(\beta_{j}^{k})$,
where the derivative can be obtained from \eqref{eq:f_k_bayes} by applying the Quotient rule, as done in \eqref{eq:quotient}-\eqref{eq:pdf_deriv_wrt_all}.

For $g_k^*$, noting that $\Sigma^k \in \reals^{2 \times 2}$, we have from \eqref{eq:gk_opt_def} that
\begin{align}
    g_k^*\big(Z^k,\bar{Y}\big)
    =\E[Z|Z^k,\bar{Y}]-\E[Z|Z^k]
    &  \stackrel{(a)}{=} \E[Z|Z^k,\bar{Y}]-(\Sigma^k_{21}/\Sigma^k_{22})Z^k
    \stackrel{(b)}{=} \E[Z|Z^k,\bar{Y}]- Z^k, \label{eq:gk_gen_exp}
\end{align}
where in (a) we used $\E[Z|Z^k]=\frac{\Sigma_{21}^k}{\Sigma_{22}^k}Z^k$, and (b) follows the arguments in \eqref{eq:eqv_of_Sigma_12_21_22} since the AMP uses the Bayes-optimal choice $f_k^*$. In the noiseless setting since $Y=Z$, we have $\E[Z|Z^k,\bar{Y}] = Y$ and $g_k^*\big(Z^k,\bar{Y}\big) = \bar{Y} - Z^k$. From \eqref{eq:memoryless_GAMP}, $C^k$ is given by
\begin{align}
    C^k=\frac{1}{n}\sum_{i=1}^n g_k'(\Theta_{i}^k,\tY_{i}), \label{eq:C^k} 
\end{align}
where $g_{k}'(u, y)$ is the derivative of $g_k$ with respect to its first argument $u$. For the noiseless model, the derivative $g_k^{'*}\big(\Theta_{i}^k,\tY_{i}\big)=-1$ for all $\Theta_{i}^k$ and therefore $C^k=-1$.

\paragraph{$g_k^*$ for  Uniform Noise Distribution} For $g_k^*$ in the noisy case,  from \eqref{eq:gk_gen_exp} we need to  compute $\E[Z|Z^k,\bar{Y}]$
Using $\mb{P}[\cdot]$ to denote the relevant density function, we can write 
\begin{align}
\begin{split}
       \E[Z|Z^k,\bar{Y}]
    &=\E\Big[\E[Z|Z^k,\bar{Y},\Bar{\Psi}]\Big] \\
    &=\int_{\bar{\Psi}}\E[Z|Z^k,\bar{Y},\bar{\Psi}]\cdot \mb{P}[\bar{\Psi}|Z^k,\bar{Y}] \, d\bar{\Psi} \\
    &=\int_{\bar{\Psi}}(\bar{Y}-\bar{\Psi})\cdot\frac{\mb{P}[\bar{\Psi}] \cdot 
    \mb{P}[Z^k,\bar{Y}|\bar{\Psi}]}{\mb{P}[Z^k,\bar{Y}]} \, d\bar{\Psi}. 
\end{split}
\label{eq:Z_Zk_Y_exp1}
\end{align}
Since $\bar{Y} = Z + \bar{\Psi}$, the conditional density
$\mb{P}[Z^k,\bar{Y}|\bar{\Psi}]$ is jointly Gaussian, with $(\bar{Y},Z^k|\bar{\Psi})\sim\normal_2\big([\bar{\Psi},0]^\top,\Sigma^k\big)$. For the simulations in 
Section \ref{sec:QGT_sims}, we used $\Psi_i \stackrel{\iid}{\sim} \text{Uniform}[-\lambda\sqrt{p},\lambda\sqrt{p}]$, which after scaling according to \eqref{eq:QGT_scaling} gives $\mb{P}[\bar{\Psi}]=\Big(2\lambda\sqrt{\frac{1}{\delta\alpha(1-\alpha)}} \Big)^{-1}$.

Under this uniform noise distribution, defining $U:= \lambda\sqrt{\frac{1}{\delta\alpha(1-\alpha)}}$, from \eqref{eq:Z_Zk_Y_exp1} we have
\begin{align}
    \E[Z|Z^k,\bar{Y}]
    &=\int_{-U}^U(\bar{Y}-\bar{\Psi})\cdot \frac{1}{2\lambda/\sqrt{{\delta\alpha(1-\alpha)}}} \cdot\frac{ 
    \mb{P}[Z^k,\bar{Y}|\bar{\Psi}]}{\mb{P}[Z^k,\bar{Y}]} \, d\bar{\Psi}\nonumber\\
    &\stackrel{(a)}{=} \int_{-U}^U(\bar{Y}-\bar{\Psi})\cdot \frac{1}{2\lambda/\sqrt{{\delta\alpha(1-\alpha)}}} \cdot\frac{ 
    \mb{P}[Z^k|Z=\bar{Y}-\bar{\Psi}]\cdot\mb{P}[Z=\bar{Y}-\bar{\Psi}]}{\mb{P}[Z^k,\bar{Y}]} \, d\bar{\Psi} \nonumber \\
    &\stackrel{(b)}{=}\int_{\bar{Y}-U}^{\bar{Y}+U}Z\cdot \frac{1}{2\lambda/\sqrt{{\delta\alpha(1-\alpha)}}} \cdot\frac{ 
    \mb{P}[Z^k|Z]\cdot\mb{P}[Z]}{\mb{P}[Z^k,\bar{Y}]} \, dZ\nonumber\\
    &{=}\int_{\bar{Y}-U}^{\bar{Y}+U}Z\cdot \frac{1}{2\lambda/\sqrt{{\delta\alpha(1-\alpha)}}} \cdot\frac{ 
    \mb{P}[Z|Z^k]}{\mb{P}[\bar{Y}|Z^k]} \, dZ, \label{eq:E_Z_Z^k_Y}
\end{align}
where (b) uses the change of variables $Z=\bar{Y}-\bar{\Psi}$, and (a) follows from 
\begin{align*}
\mb{P}[Z^k,\bar{Y}|\bar{\Psi}]
&= \mb{P}[\bar{Y}|\bar{\Psi}] \cdot \mb{P}[Z^k|\bar{Y},\bar{\Psi}] \\
&= \mb{P}[Z=\bar{Y}-\bar{\Psi}] \cdot \mb{P}[Z^k|Z=\bar{Y}-\bar{\Psi},\bPsi] \\
&= \mb{P}[Z=\bar{Y}-\bar{\Psi}] \cdot \mb{P}[Z^k|Z=\bar{Y}-\bar{\Psi}],
\end{align*}
where the last inequality holds because $(Z,Z^k)$ is independent of $\bPsi$ (as stated below \eqref{eq:g_tilde_k_def}).

Similar to \eqref{eq:Bayes_AMP_gk}, with $[Z,Z^k]^\top\sim\normal(0,\Sigma^k)$, we observe that $\E[Z|Z^k]=\Sigma_{(12)}^k/\Sigma_{(22)}^k$, which gives $\E[Z|Z^k]=Z^k$ for the optimal denoiser $f_k^*$. Consequently, we have $(Z, Z^k)\stackrel{d}{=}(Z^k + \sigma_{Z}^k \widetilde{G}^k,\, Z^k)$, where
\begin{align*}
    (\sigma_{Z}^k)^2 = \Var(Z|Z^k) = \Sigma_{(11)}^k - \frac{(\Sigma_{(12)}^k)^2}{\Sigma_{(22)}^k}\stackrel{(a)}{=}\Sigma_{(11)}^k - \Sigma_{(12)}^k, \quad \widetilde{G}^k \sim\normal(0,1).
\end{align*}
Equality (a) follows from \eqref{eq:eqv_of_Sigma_12_21_22}, since we use the Bayes-optimal denoiser $f_k^*$. Under this distribution of $(Z, Z^k)$, the conditional density  is given by
\begin{align}
    \mb{P}[Z=z|Z^k]
    =\frac{1}{\sqrt{2\pi(\sigma_{Z}^k)^2}}\exp\left(-\frac{(z- Z^k)^2}{2(\sigma_{Z}^k)^2}\right).\label{eq:pZ_Z^k}
\end{align}
The conditional density   $\mb{P}[\bar{Y}|Z^k]$ can be calculated as
\begin{align}
   \mb{P}[\bar{Y}|Z^k] 
   &= \int  \mb{P}[Z, \bar{Y}|Z^k]\, dZ 
   \stackrel{(a)}{=} \int  \mb{P}[\bar{Y}|Z]\cdot \mb{P}[Z|Z^k]\, dZ \nonumber \\
   &= \int_{\bar{Y}-U}^{\bar{Y}+U}\frac{1}{2\lambda/\sqrt{{\delta\alpha(1-\alpha)}}} \cdot  \frac{1}{\sqrt{2\pi(\sigma_{Z}^k)^2}}\exp\left(-\frac{(Z- Z^k)^2}{2(\sigma_{Z}^k)^2}\right) \, dZ, \label{eq:P_Y_Z^k}
\end{align}
 where (a) holds because $\bar{Y}=Z+\bPsi$ is conditionally independent of  $Z^k$, given $Z$. Substituting the expressions in \eqref{eq:pZ_Z^k} and \eqref{eq:P_Y_Z^k} into \eqref{eq:E_Z_Z^k_Y}, we obtain
\begin{align}
    g_k^*\big(Z^k,\bar{Y}\big)
    &=\E[Z|Z^k,\bar{Y}]-Z^k 
    = \frac{\int_{\bar{Y}-U}^{\bar{Y}+U} Z \cdot  \phi\left(\frac{Z-Z^k}{\sigma_{Z}^k}\right) \, dZ}{\int_{\bar{Y}-U}^{\bar{Y}+U} \phi\left(\frac{Z-Z^k}{\sigma_{Z}^k}\right) \, dZ} -Z^k. \label{eq:g_k*Unif}
\end{align}

To compute the term $C^k$ in \eqref{eq:C^k}, the partial derivative 
$\partial_{1}g_k^{*}$ can  computed by applying the Quotient rule on the expression in \eqref{eq:g_k*Unif} and using
\begin{align*}
    &\frac{\partial}{\partial u}\phi\left(\frac{z - u}{\sigma_{Z}^k}\right)
 = \left(\frac{z - u}{(\sigma_{Z}^k)^2}\right)\phi\left(\frac{z - u}{\sigma_{Z}^k}\right),\nonumber\\
 &\frac{\partial}{\partial u}\left(\int_{\bar{Y}-U}^{\bar{Y}+U}   \phi\left(\frac{z-u}{\sigma_{Z}^k}\right) \, dz\right)= \frac{1}{(\sigma_{Z}^k)^2}\left[\int_{\bar{Y}-U}^{\bar{Y}+U} z \phi\left(\frac{z-u}{\sigma_{Z}^k}\right) \, dz - u\int_{\bar{Y}-U}^{\bar{Y}+U} \phi\left(\frac{z-u}{\sigma_{Z}^k}\right) \, dz\right], \nonumber\\
 &\frac{\partial}{\partial u}\left(\int_{\bar{Y}-U}^{\bar{Y}+U} z   \phi\left(\frac{z-u}{\sigma_{Z}^k}\right) \, dz\right)= \frac{1}{(\sigma_{Z}^k)^2}\left[\int_{\bar{Y}-U}^{\bar{Y}+U} z^2 \phi\left(\frac{z-u}{\sigma_{Z}^k}\right) \, dz - u\int_{\bar{Y}-U}^{\bar{Y}+U} z  \phi\left(\frac{z-u}{\sigma_{Z}^k}\right) \, dz\right],\nonumber
\end{align*}
resulting in
\begin{align}
    \partial_1g_k^{*}\big(Z^k,\bar{Y}\big) &= \frac{\partial}{\partial Z^k}\left(\E[Z|Z^k,\bar{Y}]-Z^k\right)\nonumber\\
    &= \frac{1}{(\sigma_{Z}^k)^2}\left[\frac{\int_{\bar{Y}-U}^{\bar{Y}+U} Z^2 \cdot  \phi\left(\frac{Z-Z^k}{\sigma_{Z}^k}\right) \, dZ}{\int_{\bar{Y}-U}^{\bar{Y}+U} \phi\left(\frac{Z-Z^k}{\sigma_{Z}^k}\right) \, dZ} - {\left(\frac{\int_{\bar{Y}-U}^{\bar{Y}+U} Z \cdot  \phi\left(\frac{Z-Z^k}{\sigma_{Z}^k}\right) \, dZ}{\int_{\bar{Y}-U}^{\bar{Y}+U} \phi\left(\frac{Z-Z^k}{\sigma_{Z}^k}\right) \, dZ}\right)}^2\, \right]\, - 1. \label{eq:deriv_g_k*Unif}
\end{align}
The integrals in \eqref{eq:g_k*Unif}, \eqref{eq:deriv_g_k*Unif} can be expressed using the standard normal cumulative distribution function:
$\Phi(x)=\int_{-\infty}^x \phi(x)\, dx$ 
\begin{align*}
    &I_1 = \int_{\bar{Y}-U}^{\bar{Y}+U} \phi\left(\frac{Z-Z^k}{\sigma_{Z}^k}\right) \, dZ = \Phi\left(\frac{\bar{Y}+ U - Z^k}{\sigma_{Z}^k}\right) - \Phi\left(\frac{\bar{Y}- U - Z^k}{\sigma_{Z}^k}\right),\\
    &I_2 = \int_{\bar{Y}-U}^{\bar{Y}+U} Z\cdot\phi\left(\frac{Z-Z^k}{\sigma_{Z}^k}\right) \, dZ = Z^k I_1 + \sigma_{Z}^k\left[\phi\left(\frac{\bar{Y}- U - Z^k}{\sigma_{Z}^k}\right) - \phi\left(\frac{\bar{Y}+ U - Z^k}{\sigma_{Z}^k}\right)\right], \\
    &I_3 = \int_{\bar{Y}-U}^{\bar{Y}+U} Z^2\cdot\phi\left(\frac{Z-Z^k}{\sigma_{Z}^k}\right) \, dZ \\
    &\hspace{1em}=\left[(Z^k)^2 + (\sigma_{Z}^k)^2\right] I_1  + \left[2Z^k\sigma_{Z}^k + (\sigma_{Z}^k)^2\left(\frac{\bar{Y}- U - Z^k}{\sigma_{Z}^k}\right)\right]\phi\left(\frac{\bar{Y}- U - Z^k}{\sigma_{Z}^k}\right) \\
    & \hspace{6em} - \left[2Z^k\sigma_{Z}^k + (\sigma_{Z}^k)^2\left(\frac{\bar{Y}+ U - Z^k}{\sigma_{Z}^k}\right)\right]\phi\left(\frac{\bar{Y}+ U - Z^k}{\sigma_{Z}^k}\right).
\end{align*}

\section{Derivation of Optimization Methods in Section \ref{sec:sim_pool_data}} \label{sec:optm_derivations}

For the linear program (designed for the noiseless setting), we start with maximizing the log-likelihood of the signal:
\begin{align*}
    \text{maximize}&\quad \log \mb{P}[B]\quad(\text{w.r.t.~ $B_\opt$}) \\
    \text{subject to}
    &\quad B_{\opt}\in\{0,1\}^{pL}, 
    \quad Y_{\opt}=X_{\opt}B_{\opt},
    \quad C_{\opt}B_{\opt}=1_p,
\end{align*}
where $1_p$ is the $p$-length vector of ones, and the other matrices are defined in \eqref{eq:opt_defs}. Note that while this objective is optimal for exact recovery, it is computationally challenging due to the integer constraints and the large dimension $p$. Hence, this suggests the need for relaxation. Let us denote $\mathcal{I}_l$ as the set of items in category $l$, $p_l$ as the number of items in the $l$th category, and $I_{pL}^{(pl)}$ as the sub-matrix of $I_{pL}$ obtained by taking the $p(l-1)$-th to $pl$-th rows of $I_{pL}$. Recalling that each item belongs to category $l \in [L]$ with probability $\pi_l$, we  simplify the objective function as follows:
\begin{align*}
    \log\mb{P}[B]
    &\stackrel{(a)}{=}\sum_{j=1}^p\log\mb{P}[B_{j,:}]
    =\sum_{j\in\mathcal{I}_1}\log\mb{P}[B_{j,:}]+\dots+\sum_{j\in\mathcal{I}_L}\log\mb{P}[B_{j,:}] \\
    &=\sum_{l=1}^L\sum_{j\in\mathcal{I}_l}\log\pi_l
    =\sum_{l=1}^Lp_l\log\pi_l,
\end{align*}
where (a) follows from the independence in distribution between items. We then relax the integer program to a linear program. This is done by relaxing $p_l$ from a count to the sum of the $l$th column of $B$, which gives
\begin{align}
    \sum_{l=1}^Lp_l\log\pi_l
   \,  \stackrel{\text{relaxed to}}{\implies} \, 
    \sum_{l=1}^L(\log\pi_l)1_p^\top I_{pL}^{(pl)}B_\opt,
    \label{eq:P[B]_relax}
\end{align}
and changing the integer constraints, which gives
\begin{align}
    B_{\opt}\in\{0,1\}^{pL} \, 
    \stackrel{\text{relaxed to}}{\implies} \, 
    0_p\leq B_{\opt}\leq 1_p.
    \label{eq:B_opt_relax}
\end{align}
This gives us the simplified linear program:
\begin{align*}
    \text{minimize}&\quad -\sum_{l=1}^L(\log\pi_l)1_p^\top I_{pL}^{(pl)}B_\opt\quad(\text{w.r.t.~ $B_\opt$}) \\
    \text{subject to}
    &\quad 0_p\leq B_{\opt}\leq 1_p, 
    \quad Y_{\opt}=X_{\opt}B_{\opt}, 
    \quad C_{\opt}B_{\opt}=1_p.
\end{align*}
For the convex program (designed for Gaussian additive noise), using the MAP rule for estimating the signal $B$ from the observations $Y$ gives
\begin{align*}
    \text{maximize}&\quad \log ( \mb{P}[B]\cdot\mb{P}[Y|B])
    \quad(\text{w.r.t.~ $B_\opt$}) \\
    \text{subject to}&
    \quad B_{\opt}\in\{0,1\}^{pL}, \quad C_{\opt}B_{\opt}=1_p.
\end{align*}
The objective can be further simplified as
\begin{align*}
    \log\mb{P}[B]+\log\mb{P}[Y|B]
    &\stackrel{(a)}{=}\log\mb{P}[B]+\sum_{i,l}\log\mb{P}[Y_{i,l}|B] \\
    &=\log\mb{P}[B]+\sum_{i,l}\log\left\{\frac{1}{\sigma\sqrt{2p\pi}}\exp\left[-\frac{1}{2}\bigg(\frac{Y_{i,l}-X_{i,:}^\top B_{:,l}}{\sigma\sqrt{p}}\bigg)^2\right]\right\} \\
    &=\log\mb{P}[B]-\frac{1}{2p\sigma^2}\|Y_{\opt}-X_{\opt}B_{\opt}\|_2^2+\text{constant},
\end{align*}
where (a) is because given $B$, each entries of $Y$ are independent following the distribution $Y_{il}|B\sim \normal(X_{i,:}^\top B_{:,l},p\sigma^2)$. Using the simplification and relaxation of $\mb{P}[B]$ in \eqref{eq:P[B]_relax} and the relaxation in \eqref{eq:B_opt_relax}, gives us the convex program:
\begin{align*}
    \text{minimize}&\quad \frac{1}{2p\sigma^2}\|Y_{\opt}-X_{\opt}B_{\opt}\|_2^2-\sum_{l=1}^L(\log\pi_l)1_p^\top I_{pL}^{(pl)}B_\opt
    \quad(\text{w.r.t.~ $B_\opt$}) \\
    \text{subject to}&\quad 0\leq B_\opt \leq 1,  \quad C_{\opt}B_{\opt}=1_p.
\end{align*}

\section{The scaling of $\tY$ for pooled data} \label{sec:scaling_deriv}

Consider the noiseless pooled data problem where $Y =XB \in \reals^{n \times L}$. The entries of $\tY$, defined in \eqref{eq:tYtPsi_def}, are given by $\tY_{il} = \frac{Y_{il}-\alpha p\hat{\pi}_l}{\sqrt{n\alpha(1-\alpha)}} $, for $i \in [n], l \in [L]$. We show that $\tY_{il}=\Theta(1)$ with high probability. Using the independence of $X$ and $B$ we have
\begin{align}
    \E[Y_{il}-\alpha p \hat{\pi}_l]
   % &=\E[Y_{il}]-\alpha p\E[\hat{\pi}_l] \nonumber \\
    &=\E\Big[\sum_{j=1}^pX_{ij}B_{jl}\Big]-\alpha p\E\Big[\frac{1}{p}\sum_{j=1}^pB_{jl}\Big] \nonumber \\
    &=\sum_{j=1}^p\E[X_{ij}]\E[B_{jl}]-\alpha p\bigg(\frac{1}{p}\sum_{j=1}^p\E[B_{jl}]\bigg) \nonumber \\
    &=\sum_{j=1}^p\alpha\E[\bar{B}_{l}]-\alpha p\E[\bar{B}_l]
    =0. \label{eq:tY_num_mean}
\end{align}
Next, we compute the variance of $Y_{il}$:
\begin{align}
    \Var[Y_{il}-\alpha p\hat{\pi}_l]
    &=\Var\Big[\sum_{j=1}^pX_{ij}B_{jl}-\alpha\sum_{j=1}^pB_{jl}\Big]
    =\Var\Big[\sum_{j=1}^pB_{jl}(X_{ij}-\alpha)\Big] \nonumber \\
    &=\sum_{j=1}^p\Var[B_{jl}(X_{ij}-\alpha)]+2\sum_{1\leq j'<j\leq p}\Cov[B_{jl}(X_{ij}-\alpha),B_{j'l}(X_{ij'}-\alpha)] \nonumber \\
    &\stackrel{(\rm a)}{=}\sum_{j=1}^p\Var[B_{jl}X_{ij}-B_{jl}\alpha] \nonumber \\
    &=\sum_{j=1}^p\bigg(\Var[B_{jl}X_{ij}]+\Var[B_{jl}\alpha]-2\Cov[B_{jl}X_{ij},B_{jl}\alpha]\bigg) \nonumber \\
    &=\sum_{j=1}^p\bigg(\E[B_{jl}^2]\E[X_{ij}^2]-(\E[B_{jl}])^2(\E[X_{ij}])^2+\alpha^2\Var[B_{jl}] \nonumber \\
    &\qquad-2\alpha\Big(\E[B_{jl}X_{ij}B_{jl}]-\E[B_{jl}X_{ij}]\E[B_{jl}]\Big)\bigg) \nonumber \\
    &\stackrel{\rm (b)}{=}\sum_{j=1}^p\bigg(\pi_l\alpha-(\pi_l)^2\alpha^2+\alpha^2\pi_l(1-\pi_l)-2\alpha(\pi_l\alpha-(\pi_l)^2\alpha)\bigg) \nonumber \\
    &=p\pi_l\alpha(1-\alpha)
    =\Theta(p), \label{eq:tY_num_var}
\end{align}
where in (a), the second term is zero because $B_{jl}$, $B_{j'l}$, $X_{ij}$, and $X_{ij'}$ are all generated independently from each other, and in (b) we substituted $\E[B_{jl}^2]=\pi_l$ and $\E[X_{ij}^2]=\alpha$. Since $\tY_{i,:}=\frac{Y_{i,:}-\alpha p \hat{\pi}}{\sqrt{n\alpha(1-\alpha)}}$ and $n/p \to \delta$, \eqref{eq:tY_num_mean} and \eqref{eq:tY_num_var} imply that $\E[\tY_{il}]=0$ and $\Var[\tY_{il}]=\Theta(1)$, for each $i \in [n], l \in [L]$.
%This further implies that $\tY=\Theta(1)$ with high probability.

Next, we show that a constant shift in the entries of $\hat{\pi}$ leads to the scaling of $\tY_{il}$ to jump from $\Theta(1)$ to $\Theta(\sqrt{p})$. Let us consider the case where we replace $\hat{\pi}_l$ with $\hat{\pi}_l+\epsilon$ for some positive constant $\epsilon$, so that  $\tY_{il} = \frac{Y_{il}-\alpha p(\hat{\pi}_l + \epsilon)}{\sqrt{n\alpha(1-\alpha)}} $. We  then have
\begin{align*}
    \E[Y_{il}-\alpha p (\hat{\pi}_l+\epsilon)]=-\alpha p\epsilon=-\Theta(p),
\end{align*}
and 
\begin{align*}
    \Var[Y_{il}-\alpha p (\hat{\pi}_l+\epsilon)]
    =\Var[Y_{il}-\alpha p \hat{\pi}_l]
    =\Theta(p).
\end{align*}
This implies that $\E[\tY_{il}]=\Theta(\sqrt{p})$ and $\Var[\tY_{il}]=\Theta(1)$, which implies that $\tY_{il}=\Theta(\sqrt{p})$ with high probability.

\end{document}